\numberwithin{equation}{section}
\declaretheoremstyle[bodyfont=\it,qed=\qedsymbol]{noproofstyle}
\declaretheorem[numberlike=equation]{observation}
\declaretheorem[name=Observation,numbered=no]{observation*}
\declaretheorem[numberlike=equation]{fact}
\declaretheorem[numberlike=equation]{problem}
\declaretheorem[numberlike=equation]{theorem}
\declaretheorem[name=Theorem,numbered=no]{theorem*}
\declaretheorem[numberlike=equation]{lemma}
\declaretheorem[name=Lemma,numbered=no]{lemma*}
\declaretheorem[numberlike=equation]{corollary}
\declaretheorem[name=Corollary,numbered=no]{corollary*}
\declaretheorem[numberlike=equation]{proposition}
\declaretheorem[name=Proposition,numbered=no]{proposition*}
\declaretheorem[numberlike=equation]{claim}
\declaretheorem[name=Claim,numbered=no]{claim*}
\declaretheorem[numberlike=equation]{conjecture}
\declaretheorem[name=Conjecture,numbered=no]{conjecture*}
\declaretheorem[name=Question,numbered=no]{question*}
\declaretheoremstyle[bodyfont=\it,qed=$\lozenge$]{defstyle} 
\declaretheorem[numberlike=equation,style=defstyle]{definition}
\declaretheorem[unnumbered,name=Definition,style=defstyle]{definition*}
\declaretheorem[unnumbered,name=Example,style=defstyle]{example*}
\declaretheorem[unnumbered,name=Notation=defstyle]{notation*}
\declaretheorem[unnumbered,name=Construction,style=defstyle]{construction*}
\declaretheorem[numberlike=equation,style=defstyle]{remark}
\declaretheorem[unnumbered,name=Remark,style=defstyle]{remark*}
\declaretheorem[numberlike=equation,style=defstyle]{assumption}
\declaretheorem[unnumbered,name=Assumption,style=defstyle]{assumption*}
\newcommand{\shortECCC}[2]{\texttt{\href{http://eccc.hpi-web.de/report/\ifnumcomp{#1}{>}{93}{19}{20}#1/#2/}{eccc:TR#1-#2}}}
\newcommand{\parseECCC}[1]{% Takes a string of the form TRxx/xxx or
%                          % TRxx-xxx and returns short ECCC link
\StrSubstitute{#1}{TR}{}[\tmpstring]%
\IfSubStr{\tmpstring}{/}{ %assuming string is of the form TRxx/xxx
\StrBefore{\tmpstring}{/}[\ecccyear]%
\StrBehind{\tmpstring}{/}[\ecccreport]%
}{% assuming string is of the form TRxx-xxx
\StrBefore{\tmpstring}{-}[\ecccyear]%
\StrBehind{\tmpstring}{-}[\ecccreport]%
}%
\shortECCC{\ecccyear}{\ecccreport}}
\algrenewcommand\algorithmicindent{1.0em}%
\newcommand{\vecalpha}{\boldsymbol{\alpha}}
\newcommand{\vecbeta}{\boldsymbol{\beta}}
\newcommand{\eqdef}{\vcentcolon=}
	\renewcommand{\vec}[1]{{\mathbf{#1}}}
	\newcommand{\va}{{\vec{a}}\@ifnextchar{^}{\!\:}{}}
	\newcommand{\vb}{{\vec{b}}\@ifnextchar{^}{\!\:}{}}
	\newcommand{\vc}{{\vec{c}}\@ifnextchar{^}{\!\:}{}}
	\newcommand{\vd}{{\vec{d}}\@ifnextchar{^}{\!\:}{}}
	\newcommand{\ve}{{\vec{e}}\@ifnextchar{^}{\!\:}{}}
	\newcommand{\vy}{{\vec{y}}\@ifnextchar{^}{\!\:}{}}
	\newcommand{\vs}{{\vec{s}}\@ifnextchar{^}{\!\:}{}}
	\newcommand{\vt}{{\vec{t}}\@ifnextchar{^}{\!\:}{}}
	\newcommand{\vx}{{\vec{x}}\@ifnextchar{^}{}{}}		%\vec{x} seems fine already
	\newcommand{\vz}{{\vec{z}}\@ifnextchar{^}{\!\:}{}}
	\newcommand{\vv}{{\vec{v}}\@ifnextchar{^}{\!\:}{}}
	\newcommand{\vu}{{\vec{u}}\@ifnextchar{^}{\!\:}{}}
	\newcommand{\vf}{{\vec{f}}\@ifnextchar{^}{\!\:}{}}
	\newcommand{\vg}{{\vec{g}}\@ifnextchar{^}{\!\:}{}}
	\newcommand{\vr}{{\vec{r}}\@ifnextchar{^}{\!\:}{}}
	\newcommand{\vw}{{\vec{w}}\@ifnextchar{^}{\!\:}{}}
	\newcommand{\vY}{{\vec{Y}}\@ifnextchar{^}{\!\:}{}}
	\newcommand{\vX}{{\vec{X}}\@ifnextchar{^}{}{}}		%\vec{x} seems fine already
	\newcommand{\vZ}{{\vec{Z}}\@ifnextchar{^}{\!\:}{}}
	\newcommand{\vG}{{\vec{G}}\@ifnextchar{^}{\!\:}{}}
	\newcommand{\vaa}{{\vecalpha}}
	\newcommand{\vbb}{{\vecbeta}}
\renewcommand{\C}{\mathbb{C}}
\renewcommand{\N}{\mathbb{N}}
\newcommand{\cA}{{\mathcal{A}}}
\newcommand{\cB}{{\mathcal{B}}}
\newcommand{\cD}{{\mathcal{D}}}
\newcommand{\cG}{{\mathcal{G}}}
\newcommand{\cF}{{\mathcal{F}}}
\newcommand{\cI}{{\mathcal{I}}}
\newcommand{\cJ}{{\mathcal{J}}}
\newcommand{\cK}{{\mathcal{K}}}
\newcommand{\calL}{{\mathcal{L}}}
\newcommand{\cQ}{{\mathcal{Q}}}
\newcommand{\calP}{{\mathcal{P}}}
\newcommand{\cT}{{\mathcal{T}}}
\newcommand{\cS}{{\mathcal{S}}}
\newcommand{\calS}{{\mathcal{S}}}
\newcommand{\calW}{{\mathcal{W}}}
\newcommand{\ideal}[1]{\left \langle{#1}\right \rangle}
\newcommand{\MVar}[2]{{#1}_1,\ldots, {#1}_{#2}}
\newcommand{\PRing}[3]{\mathbb{#1}[ \MVar{#2}{#3}]}
\newcommand{\CRing}[2]{\PRing{C}{#1}{#2}}
\newcommand{\spn}[1]{\operatorname{span}\{{#1}\}}
\renewcommand{\line}[1]{\operatorname{Line}({#1})}
\newcommand{\plane}[1]{\operatorname{Plane}({#1})}
\def\epsilon{\varepsilon} %% RP: I prefer this to the usual epsilon
\newcommand{\rkq}{1000}
\newcommand{\dimV}{100}
\newcommand{\distV}{20}
\newcommand{\MS}{\text{Lin}}
\newenvironment{case}{%
	\let\olditem\item% 
	\renewcommand\item[1][]{\olditem {\textbf{##1}} }%
	\begin{enumerate}[label={\textbf{Case (\roman*):}},itemindent=*,leftmargin=0em]}{\end{enumerate}%
}
\newcommand{\spsp}{\Sigma^{[3]}\Pi\Sigma\Pi^{[2]}}
\newif\ifEK
\date{}
\title{Polynomial time deterministic identity testing algorithm for $\spsp$ circuits via Edelstein-Kelly type theorem for quadratic polynomials}
\author{Shir Peleg\thanks{Department of Computer Science, Tel Aviv University, Tel Aviv, Israel, E-mail: \texttt{shirpele@tauex.tau.ac.il, shpilka@tauex.tau.ac.il}. The research leading to these results has received funding from the  Israel Science Foundation (grant number 552/16) and from the Len Blavatnik and the Blavatnik Family foundation. 
}   \and  Amir Shpilka\footnotemark[1]}
\begin{document}
\maketitle
\begin{abstract}

In this work we resolve conjectures of Beecken, Mitmann and Saxena \cite{DBLP:journals/iandc/BeeckenMS13} and   Gupta \cite{Gupta14},
by proving an analog of a theorem of Edelstein and Kelly for quadratic polynomials. As immediate corollary we obtain the first deterministic polynomial time black-box algorithm for testing zeroness of $\Sigma^{[3]}\Pi\Sigma\Pi^{[2]}$ circuits. 
\end{abstract}

\thispagestyle{empty}
\newpage

\tableofcontents

\thispagestyle{empty}
\newpage
\pagenumbering{arabic}

\section{Introduction}\label{sec:intro}

The polynomial identity testing problem (PIT) asks to determine, given an arithmetic circuit as input, whether the circuit computes the identically zero polynomial. The problem can be studied both in the black-box model where the algorithm can access the circuit only via querying its value at different inputs or in the white-box model where the algorithm also has access to the graph of computation and labeling of nodes.
While there is a well known and simple randomized black-box algorithm for the problem -- simply evaluate the circuit at a random input -- no efficient\footnote{Whenever we mention running-time we always express it as a function of the size of the input circuit and the number of variables. For simplicity we shall assume here that the size is polynomial in the number of variables, $n$.} deterministic algorithm for it is known, even in the white-box model, except for some special cases.

Devising an efficient deterministic algorithm for PIT is one of the main challenges of theoretical computer science due to the fundamental nature of the problem and its relation to other basic questions such as proving lower bounds for arithmetic circuits \cite{DBLP:conf/stoc/HeintzS80,DBLP:conf/fsttcs/Agrawal05,DBLP:journals/cc/KabanetsI04,DBLP:journals/siamcomp/DvirSY09,DBLP:journals/toc/ForbesSV18,DBLP:conf/coco/ChouKS18} and other derandomization problems \cite{DBLP:journals/cc/KoppartySS15,Mulmuley-GCT-V,DBLP:conf/approx/ForbesS13,DBLP:journals/cacm/FennerGT19,DBLP:conf/stoc/GurjarT17,DBLP:conf/focs/SvenssonT17}. For more on the PIT problem see  \cite{DBLP:journals/fttcs/ShpilkaY10,Saxena09,Saxena14, ForbesThesis}. 

Remarkable results by Agrawal and Vinay \cite{DBLP:conf/focs/AgrawalV08} and by Gupta et al. \cite{DBLP:journals/siamcomp/0001KKS16} show that in order to solve the PIT problem for general arithmetic circuits, it is sufficient to solve it for  low depth circuits -- unrestricted depth-$3$ circuits or homogeneous depth-$4$ circuits. Specifically, a polynomial time PIT algorithm for depth-$4$ circuits, denoted $\Sigma\Pi\Sigma\Pi$ circuits -- polynomials are represented as sums of products of sparse polynomials in the model --  implies a quasi-polynomial time PIT algorithm for general arithmetic circuits. Thus, from the point of view of PIT (and circuit lower bounds), small depth arithmetic circuits are as interesting a model as general arithmetic circuits. Because of that, those classes attracted a lot of attention in recent years and many lower bounds and PIT algorithms were devised for restricted models of low depth circuits.  

In this paper we give the first deterministic, polynomial time, black-box PIT algorithm for  $\Sigma^{[3]}\Pi\Sigma\Pi^{[2]}$ circuits. We achieve this by proving a generalization of a theorem due to Edelstein and Kelly, which is itself an extension of the Sylvester-Gallai theorem, to quadratic polynomials, thus resolving conjectures that were raised in the works of Beecken, Mitmann and Saxena \cite{DBLP:journals/iandc/BeeckenMS13} and   Gupta \cite{Gupta14}. We next survey known results for depth-$4$ circuits and explain the connection of PIT for small depth circuits and Sylvester-Gallai type theorems. 

\paragraph{Known results:}
%The first bounded-depth model that received a lot of attention is the $\Sigma\Pi$ model. Here polynomials are expressed as sum of monomials and thus the complexity of a polynomial, in this model, is roughly equivalent to the number of its monomials. Many polynomial-time black-box algorithms were devised for this model including \cite{Ben-OrTiwari88,BlaserHLV09,KlivansSpielman01}.

We shall give a brief overview of known results for depth-$3$ and depth-$4$ circuits, as PIT for these models is tightly connected to Sylvester-Gallai type questions. Furthermore, by the results of Agrawal and Vinay \cite{DBLP:conf/focs/AgrawalV08} and Gupta et al. \cite{DBLP:journals/siamcomp/0001KKS16}, we know that resolving PIT in these models will resolve the question in the general setting.

%Recall that the output of a $\Sigma^{[k]}\Pi\Sigma$ circuit can be expressed as a sum of $k$ products of linear functions.
Dvir and Shpilka \cite{DBLP:journals/siamcomp/DvirS07} gave the first quasi-polynomial time deterministic white-box algorithm for $\Sigma^{[k]}\Pi\Sigma$ circuits, for $k=O(1)$. Their main idea was bounding the rank of the linear forms appearing in \emph{simple and minimal} identities. The rank-based approach of \cite{DBLP:journals/siamcomp/DvirS07} led Karnin and Shpilka to devise a black-box algorithm for the problem of the same complexity \cite{DBLP:journals/combinatorica/KarninS11}. The work \cite{DBLP:journals/siamcomp/DvirS07} also highlighted the relation between PIT for depth-$3$ circuits and colored version of the Sylvester-Gallai problem and suggested that studying the relation between the two problems could lead to improved algorithms. This was carried out by Kayal and Saraf \cite{DBLP:conf/focs/KayalS09} who greatly improved Dvir and Shpilka's original result by applying high dimensional versions of the Sylvester-Gallai Theorem to the problem.
Currently, the best known PIT algorithm is due to Saxena and Seshadhri \cite{DBLP:journals/siamcomp/SaxenaS12} who gave a deterministic black-box algorithm running in time $n^{O(k)}$ for $\Sigma^{[k]}\Pi\Sigma$ circuits. 

For depth-$4$ circuits even less is known. Recall that $\Sigma^{[k]}\Pi\Sigma\Pi^{[r]}$ circuits compute polynomials that can be expressed in the form 
\[ P(x_1,\ldots,x_n)= \sum_{i=1}^{k}\prod_jQ_{i,j} \;,
\]
where $\deg(Q_{i,j})\leq r$. When we drop the superscript $r$ and write $\Sigma^{[k]}\Pi\Sigma\Pi$ then we mean that the degree of the $Q_{i,j}$s is unrestricted. The size of a depth-$4$ circuit is the number of wires in the circuit. 
%Due to the results of Agrawal and Vinay \cite{DBLP:conf/focs/AgrawalV08}, this model is important for the understanding of PIT for general circuits. 

Karnin et al. \cite{DBLP:journals/siamcomp/KarninMSV13} gave a quasi-polynomial time black-box PIT algorithm for \emph{multilinear}\footnote{A circuit model is called \emph{multilinear} if every subcomputation computes a multilinear polynomial.} $\Sigma^{[k]}\Pi\Sigma\Pi$ circuits. This was later improved by Saraf and Volkovich to an $n^{O(k^2)}$ algorithm \cite{DBLP:journals/combinatorica/SarafV18}. Beecken et al. \cite{DBLP:journals/iandc/BeeckenMS13} and Kumar and Saraf \cite{DBLP:journals/toc/0001S17}  considered circuits in which the \emph{algebraic rank} of the irreducible factors in each multiplication gate is bounded, and gave a quasi-polynomial time deterministic PIT algorithm for such  $\Sigma^{[k]}\Pi\Sigma\Pi$ circuits, when the bottom fan-in is also bounded by polylog$(n)$.
Thus, prior to this work no subexponential PIT algorithm was known for $\Sigma^{[k]}\Pi\Sigma\Pi$ circuits without multilinearity restriction or without a bound on the local algebraic rank.

\autoref{cor:main:PIT} gives the first polynomial time deterministic black-box PIT algorithm for $\spsp$ circuits. We obtain it by resolving conjectures of Beecken et al. \cite{DBLP:journals/iandc/BeeckenMS13}  and of Gupta \cite{Gupta14} regarding the algebraic rank of the quadratic polynomials appearing at the bottom of such identically zero circuits. 
We next explain the conjectures of \cite{DBLP:journals/iandc/BeeckenMS13,Gupta14} and their relation to Sylvester-Gallai type theorems.

\paragraph{Sylvester-Gallai type theorems and PIT:}

Many of the algorithms mentioned above \cite{DBLP:journals/siamcomp/DvirS07,DBLP:conf/focs/KayalS09,DBLP:journals/siamcomp/KarninMSV13,DBLP:journals/combinatorica/SarafV18,DBLP:journals/iandc/BeeckenMS13,DBLP:journals/toc/0001S17} work by first bounding some algebraic quantity related to the model and then using variable-reduction, to reduce the number of variables in the circuit to the bound of the relevant algebraic quantity.
For example, for depth-$3$ circuits, Kayal and Saraf \cite{DBLP:conf/focs/KayalS09} obtained improved bounds on the \emph{linear rank} the linear functions appearing at the bottom of identically zero circuits, via a colored version of the Sylvester-Gallai theorem due to Edelstein and Kelly \cite{EdelsteinKelly66}. 

Recall that the Sylvester-Gallai theorem asserts that if a finite set of points in $\R^n$ has the property that every line passing through any two points in the set also contains a third point in the set, then all the points in the set are colinear. Kelly extended the theorem to points in $\C^n$ and proved that if a finite set of points satisfy the Sylvester-Gallai condition then  the points in the set are coplanar. Edelstein and Kelly proved that if we have $k>2$ disjoint sets of point such that every line that intersects any two of the sets must also intersect a third set, then there is a $3$-dimensional affine space containing all the points in all the sets (see \autoref{thm:ek-not-disjoint} for an extension). These theorems can also be stated algebraically as results concerning linear forms rather than points (where the condition that a line contains three points is replaced with the condition that three forms are linearly dependent). 

\sloppy To understand the connection to PIT consider the PIT problem for homogeneous\footnote{When studying the PIT problem we may assume without loss of generality that the circuit is homogeneous. See e.g. \cite{DBLP:journals/fttcs/ShpilkaY10}.} $\Sigma^{[3]}\Pi^{[d]}\Sigma$ circuits in $n$  variables. Such circuits compute polynomials of the  form
\begin{equation}\label{eq:sps3}
\Phi(x_1,\ldots,x_n) = \prod_{j=1}^{d}\ell_{1,j}(x_1,\ldots,x_n)+\prod_{j=1}^{d}\ell_{2,j}(x_1,\ldots,x_n)+\prod_{j=1}^{d}\ell_{3,j}(x_1,\ldots,x_n)\;.
\end{equation}
If $\Phi$ computes the zero polynomial then for every $j,j'\in[d]$.
$$\prod_{i=1}^{d}\ell_{1,i} \equiv 0 \mod \ideal{\ell_{2,j},\ell_{3,j'}}\;.\footnote{By $\ideal{\ell_{2,j},\ell_{3,j'}}$ we mean the ideal generated by $\ell_{2,j}$ and $\ell_{3,j'}$ (see \autoref{sec:prelim}).}$$
This means that the sets $\cT_i = \{\ell_{i,1}, \ldots, \ell_{i,d}\}$ satisfy the conditions of the Edelstein-Kelly theorem for sets of linear functions. Thus, if $\Phi\equiv 0$ then, assuming that no linear form belongs to all three sets (which is a simple case to handle), we can rewrite the expression for $\Phi$ using only constantly many variables (after a suitable invertible linear transformation). This easily leads to an efficient PIT algorithms for such $\Sigma^{[3]}\Pi^{[d]}\Sigma$ identities. The case of more than three multiplication gates is more complicated but it also satisfies a similar higher dimensional condition.

For depth-$4$ circuits the situation is different. 
As before, homogeneous  $\Sigma^{[3]}\Pi^{[d]}\Sigma\Pi^{[2]}$ circuits compute polynomials of the form
\begin{equation}\label{eq:spsp}
\Phi(x_1,\ldots,x_n) = \prod_{j=1}^{d}Q_{1,j}(x_1,\ldots,x_n)+\prod_{j=1}^{d}Q_{2,j}(x_1,\ldots,x_n)+\prod_{j=1}^{d}Q_{3,j}(x_1,\ldots,x_n)\;,
\end{equation}
where each $Q_{i,j}$ is a homogeneous quadratic polynomial. If we wish to check whether $\Phi \equiv 0$ and try to reason as before then we get 
\begin{eqnarray}\label{eq:quad-pit}
\prod_{j=1}^{d}Q_{1,j}(x_1,\ldots,x_n) = 0 \mod Q_{2,j},Q_{3,j'}.
\end{eqnarray}
However, unlike the linear case it is not clear what  can be concluded now. Indeed, if a product of linear functions vanishes modulo two linear functions, then we know that one function in the product must be in the linear span of those two linear functions. For quadratic polynomials this is not necessarily the case. For example, note that if for a quadratic $Q$ we have that $Q=0$ and $Q + x^2=0$ then also $Q+xy=0$, and, clearly, we can find $Q$ such that $Q+xy$ is not spanned by $Q$ and $Q+x^2$. An even more problematic difference is that it may be the case that \autoref{eq:quad-pit} holds but that no $Q_{1,j}$ always vanishes when, say, $Q_{2,1},Q_{3,1}$ vanish. For example, let 
$$Q_1 = xy+zw \quad,\quad Q_2 = xy-zw \quad,\quad Q_3 = xw \quad,\quad Q_4 = yz.$$
Then, it is not hard to verify that 
$$Q_3 \cdot Q_4 \equiv 0 \mod Q_1,Q_2.$$
but neither $Q_3$ nor $Q_4$ vanish identically modulo $Q_1,Q_2$.
%Hence, we cannot in general assume that whenever \autoref{eq:quad-pit} holds some polynomial in $\{Q_{1,j}\}$ vanishes. 
%Thus, the PIT problem for sums of products of quadratics seem much harder than the corresponding problem for depth-$3$ circuits. Indeed, currently no efficient deterministic PIT algorithm is known for $\Sigma^{[3]}\Pi^{[d]}\Sigma\Pi^{[2]}$ circuits.

In spite of the above, Beecken et al.  \cite{DBLP:journals/iandc/BeeckenMS13,Gupta14} and Gupta \cite{Gupta14} conjectured that perhaps the difference between the quadratic case and the linear case is not so dramatic. In fact, they suggested that this may be the case for any constant degree and not just for degree $2$. 
Specifically, Beecken et al. conjectured in \cite{DBLP:journals/iandc/BeeckenMS13} that whenever a $\Sigma^{[k]}\Pi\Sigma\Pi^{[r]}$ circuit is identically zero and also \emph{simple} (no polynomial appears in all multiplication gates) and \emph{minimal} (no subset of the multiplication gates sums to zero), then the algebraic rank of the $Q_{i,j}$'s (the polynomials computed by the bottom two layers, as in Equation~\eqref{eq:spsp}) is bounded by  $\poly(r,k)$. 

In \cite{Gupta14} Gupta took a more general approach and stated vast algebraic generalization of Sylvester-Gallai and Edelstein-Kelly type theorems. 
Specifically, Gupta observed that, whenever \autoref{eq:quad-pit} holds, it must be the case that there are four
%\footnote{Gupta's argument gives more structure than just an upper bound of $4$ on the number of the polynomials.} 
polynomials in $\{Q_{1,j}\}$ whose product vanishes identically. That is, for every $(j,j')\in[d]^2$ there are $i_{1,j,j'}, i_{2,j,j'},i_{3,j,j'}, i_{4,j,j'}\in [d]$ so that 
$$Q_{1,i_{1,j,j'}}\cdot Q_{1,i_{2,j,j'}} \cdot Q_{1,i_{3,j,j'}}\cdot Q_{1,i_{4,j,j'}} \equiv 0 \mod Q_{2,j},Q_{3,j'}.$$
Gupta then raised the conjecture that whenever this holds, for every $j,j'$ and for every two of the multiplication gates, then it must be the case that the \emph{algebraic} rank of the set $\{Q_{i,j}\}$ is $O(1)$. More generally, Gupta conjectured that this is the case for any fixed number of sets.

\begin{conjecture}[Conjecture 1 in \cite{Gupta14}]\label{con:gupta-general}
Let $\cF_1,\ldots, \cF_k$ be finite sets of irreducible homogeneous polynomials in $\C[x_1,\ldots, x_n]$ of degree $\leq r$ such that $\cap_i \cF_i = \emptyset$ and for every $k-1$ polynomials $Q_1,\ldots,Q_{k-1}$, each from a distinct set, there are $P_1,\ldots,P_c$ in the remaining set such that  whenever $Q_1,\ldots,Q_{k-1}$ vanish then also the product $\prod_{i=1}^{c}P_i$ vanishes. Then, $\text{trdeg}_\C(\cup_i \cF_i) \leq \lambda(k, r, c)$ for some function $\lambda$, where trdeg stands for the transcendental degree (which is the same as algebraic rank).
\end{conjecture}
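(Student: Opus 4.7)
The plan is a double induction on $(k, r)$, with the linear case $r = 1$ at any $k$ handled by the classical Edelstein--Kelly theorem (and its extension to $k$ disjoint sets due to Hansen), and the $k = 2$ case at any $r$ handled directly from irreducibility together with $\cF_1 \cap \cF_2 = \emptyset$. The first step is to translate the geometric vanishing hypothesis into algebra: whenever $Q_1, \ldots, Q_{k-1}$ are chosen one per set, Hilbert's Nullstellensatz gives $\prod_{i=1}^{c} P_i \in \sqrt{(Q_1, \ldots, Q_{k-1})}$, and for each irreducible component of $V(Q_1, \ldots, Q_{k-1})$ at least one of the $P_j$'s vanishes identically on it. This yields a componentwise colored closure condition that replaces the ``linear dependency'' appearing in classical Edelstein--Kelly with an ``algebraic dependency via ideal membership'', and it is the abstract hypothesis I would try to propagate through the induction.

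The second step reduces $k$. Fix a generic $P_0 \in \cF_k$, pass to the coordinate ring $\C[\vx]/(P_0)$ of the irreducible hypersurface it defines, and study the images of the polynomials from the remaining sets. I would show that a colored closure condition with $k-1$ sets and a controllably larger parameter $c' = c'(k, r, c)$ is inherited by the irreducible factors of these images, using that intersecting $V(Q_1, \ldots, Q_{k-2})$ inside $V(P_0)$ still exhibits the required vanishing. A delicate issue is verifying the ``empty common intersection'' hypothesis after restriction, which I would handle by a pigeonhole argument on the common irreducible factors of the images. Induction then bounds the trdeg of the restriction by $\lambda(k-1, r, c')$, and adding the trdeg at most $1$ contributed by $P_0$ recovers the desired bound $\lambda(k, r, c)$.

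The main obstacle is controlling $\sqrt{(Q_1, \ldots, Q_{k-1})}$ for $r \geq 2$ quantitatively enough to make Step~2 effective. For $r = 2$ the variety $V(Q_1, \ldots, Q_{k-1})$ is governed by the projective pencil of the $Q_i$'s, so radical membership essentially forces each $P_j$ into an explicit low-dimensional family of quadrics; this is the structural engine behind the present paper's treatment of $(k, r) = (3, 2)$ and is precisely what converts the colored closure condition into something on which a linear-style Edelstein--Kelly argument can operate (after contracting the pencil). For $r \geq 3$ no such clean description of $\sqrt{(Q_1, \ldots, Q_{k-1})}$ is known. The real hard part of proving the full conjecture is therefore to develop a ``generalized pencil'' of dimension bounded by some $\lambda'(k, r, c)$ that contains every polynomial the closure condition can produce---plausibly via syzygy bounds, Koszul-complex estimates, or Fitting-ideal analysis of the generators---and then to invoke an algebraic variant of Hansen's theorem inside this low-dimensional ambient family to close the induction. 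Without such a structure theorem for ideals generated by few higher-degree forms, the induction has no mechanism to propagate bounds depending only on $k, r, c$ and not on the ambient dimension $n$, and this is where I expect the argument to require the most genuinely new ideas.
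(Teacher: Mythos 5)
The statement you are trying to prove is \autoref{con:gupta-general}, which is stated in the paper as a \emph{conjecture} (Gupta's Conjecture~1) and is not proved there: the paper only resolves the special case $k=3$, $r=2$ (for arbitrary $c$), namely \autoref{thm:main}, and in fact proves the stronger statement that the \emph{linear} span of $\cup_i\cT_i$ has constant dimension. Your proposal is likewise not a proof. You candidly identify the two places where it is incomplete, and both are genuine gaps rather than routine verifications. First, the engine you need for $r\ge 3$ --- a structure theorem describing $\sqrt{\ideal{Q_1,\ldots,Q_{k-1}}}$ for forms of degree $r$, analogous to \autoref{thm:structure} for quadratics --- does not exist, and without it the induction on $r$ never gets off the ground. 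Second, the reduction in $k$ by restricting to $V(P_0)$ is not just ``delicate'': the inductive hypothesis is a statement about irreducible homogeneous polynomials in a polynomial ring, whereas after restriction you are in $\C[\vx]/(P_0)$, the images need not be irreducible (so you must pass to factors, which can collide across sets and break pairwise independence), and lifting a transcendence-degree bound from the quotient back to $\C[\vx]$ requires an argument you have not supplied. So the proposal is a reasonable research program, but every step that would constitute actual mathematical content is left open.

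It is also worth noting that, even on the case the paper does settle, its route is quite different from the one you sketch. There is no induction on $k$ and no restriction to a hypersurface. Instead, the authors use \autoref{thm:structure} to classify each pair of polynomials from distinct sets into one of three cases (a third polynomial in their span, a reducible quadratic in their span, or containment in an ideal generated by two linear forms), and then run an elaborate case analysis: when every polynomial satisfies the first or third case with a constant fraction of another set, they reduce to robust and colored Sylvester--Gallai/Edelstein--Kelly theorems (\autoref{thm:partial-EK-robust}, \autoref{thm:ek-not-disjoint}) via projection maps $T_{\vaa,V}$; otherwise they show all polynomials are ``close'' to a fixed pair $Q_0,P_0$ modulo a constant-dimensional space of linear forms (\autoref{prop:P,Q_0,C[V],rank1}) and again invoke Edelstein--Kelly on the residual rank-one parts. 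Your intuition that the pencil structure of quadrics is what makes $r=2$ tractable is consistent with this, but the mechanism that converts it into a dimension bound in the paper is the projection-plus-EK machinery, not a contraction of the pencil followed by induction.
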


%Here $V(Q_1,\ldots,Q_{k-1})$ stands for the variety of zeroes of the polynomials $\{Q_i\}$. 
%The condition in the conjecture can be stated equivalently as $$\prod_{i=1}^{c}P_i \in \sqrt{(Q_1,\ldots,Q_{k-1})},$$ where the object on the right hand side is the radical of the ideal generated by $\{Q_i\}_{i=1}^{k-1}$ (see \autoref{sec:res}). 

Note that when $r=1$ we can assume that $c=1$ and therefore, from the Edelstein-Kelly theorem, we have $\lambda(k,1,c) \leq 3$ in this case (and we can replace algebraic rank with linear rank).

We remark that Gupta's conjecture is stronger than the one made by Beecken et al. as every zero $\Sigma^{[k]}\Pi\Sigma\Pi^{[r]}$ circuit gives rise to a structure satisfying the conditions of Gupta's conjecture, but the other direction is not necessarily true.

In \cite{DBLP:conf/stoc/Shpilka19} the second author proved a special case of \autoref{con:gupta-general} by showing that $\lambda(3,2,1)=O(1)$, regardless of the number of variables or polynomials involved.

In an earlier paper \cite{DBLP:journals/corr/abs-2003-05152} we proved a non-colored version of the conjecture for the case $r=2$ and unbounded $c$.   

\begin{theorem}[Theorem 1.7 of \cite{DBLP:journals/corr/abs-2003-05152}]\label{thm:Peleg-Shpilka-SG}
There exists a universal constant $\Lambda$ such that the following holds. 
Let ${\cQ} = \{Q_i\}_{i\in \{1,\ldots,m\}}\subset\C[x_1,\ldots,x_n]$ be a finite set of pairwise linearly independent irreducible polynomials of degree at most $2$. Assume  that, for every $i\neq j$, whenever $Q_i$ and $Q_j$ vanish then so does $\prod_{k\in  \{1,\ldots,m\} \setminus\{i,j\}} Q_k$. Then,  $\dim(\spn{\cQ})\leq \Lambda$.
%\shir{here it's $A$, in the new result this is $c$, but in Gupta conj. $c$ means something else...}
\end{theorem}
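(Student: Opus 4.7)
The plan is to prove the theorem by combining algebraic geometry (Nullstellensatz and structure of quadric intersections) with a careful induction. First I would translate the hypothesis into ideal-theoretic language: by Hilbert's Nullstellensatz over $\C$, for every pair $i\neq j$ the product $\prod_{k\notin\{i,j\}}Q_k$ lies in $\sqrt{\ideal{Q_i,Q_j}}$, so since the $Q_k$ are irreducible, at least one $Q_k$ (depending on $i,j$) already lies in $\sqrt{\ideal{Q_i,Q_j}}$. Call such a triple a \emph{dependency}. The central task is to control how the various $Q_k$'s can simultaneously realize many dependencies.

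Next I would introduce a dichotomy on each dependency $(Q_i,Q_j;Q_k)$: either the dependency is \emph{strong}, meaning $Q_k\in\ideal{Q_i,Q_j}$ which by a degree count forces $Q_k\in\operatorname{span}_\C\{Q_i,Q_j\}$; or it is \emph{weak}, meaning $Q_k$ vanishes set-theoretically on $V(Q_i,Q_j)$ but is not an algebraic combination. Analyzing the weak case requires a structural lemma for quadrics: if a quadratic $Q_k$ vanishes on $V(Q_i,Q_j)$ but is not in the ideal, then some $\C$-combination $\alpha Q_i+\beta Q_j$ must factor and share a common linear factor with $Q_k$, or $Q_i,Q_j$ themselves share a linear factor. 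This step reduces the weak situation to the presence of a shared linear form $\ell$ among many members of $\cQ$.

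Then I would split the proof into two regimes. If strong dependencies dominate, a large subfamily of $\cQ$ lives in two-dimensional linear spans of pairs $(Q_i,Q_j)$; propagating this and iterating, the entire set fits in a constant-dimensional linear space, giving the bound directly. If weak dependencies dominate, a pigeonhole argument over pairs produces a single linear form $\ell$ appearing in many $Q_k$'s. I would then restrict to the hyperplane $\{\ell=0\}$: the quadratics containing $\ell$ reduce to linear forms (giving a usual Edelstein-Kelly/Sylvester-Gallai type bound by \cite{EdelsteinKelly66} or \cite{DBLP:conf/focs/KayalS09}), and the remaining quadratics form a smaller instance on which induction applies. Careful bookkeeping combines the two bounds into a universal constant $\Lambda$.

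The main obstacle will be the weak case. The structural lemma describing how $Q_k\in\sqrt{\ideal{Q_i,Q_j}}\setminus\ideal{Q_i,Q_j}$ can look is delicate, because one must cover degenerate situations such as $V(Q_i,Q_j)$ containing a linear or planar component of high multiplicity and combinations $\alpha Q_i+\beta Q_j$ factoring in unexpected ways. Moreover, extracting a single common linear form $\ell$ valid for many pairs is nontrivial: individual dependencies can produce different factors, and one needs a combinatorial/robustness argument to find a linear form that appears a constant fraction of the time, so that the induction decreases dimension by one while preserving the hypothesis in strong enough form. Getting the induction to close with an absolute constant, rather than one growing with $n$ or $m$, is the crux of the argument.
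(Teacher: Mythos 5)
Your opening reduction is where the argument breaks. From $\prod_{k\notin\{i,j\}}Q_k\in\sqrt{\ideal{Q_i,Q_j}}$ you conclude that, since the $Q_k$ are irreducible, some single $Q_k$ already lies in $\sqrt{\ideal{Q_i,Q_j}}$. That would require $\sqrt{\ideal{Q_i,Q_j}}$ to be prime, which it need not be for two quadratics: $V(Q_i,Q_j)$ can split into several components, and different factors of the product can vanish on different components. Concretely, for $Q_i=xy+zw$ and $Q_j=xy-zw$ the variety is the union of the four planes $\{x=z=0\}$, $\{x=w=0\}$, $\{y=z=0\}$, $\{y=w=0\}$, and the irreducible quadrics $Q_3=xw+zy$ and $Q_4=xz+wy$ satisfy $Q_3Q_4\in\sqrt{\ideal{Q_i,Q_j}}$ while neither factor alone is in the radical. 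This failure of ``one factor suffices'' is exactly the obstruction the present paper emphasizes when contrasting the quadratic case with the linear one; the most one can salvage (Gupta's Claim 11) is a sub-product of at most four of the $Q_k$'s in the radical. Since your entire dependency structure --- one witness $Q_k$ per pair $(i,j)$ --- rests on this false step, the strong/weak dichotomy and the pigeonhole over pairs do not get off the ground.

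The structural lemma you posit for the weak case is also not the right classification. The correct statement (Theorem 1.8 of the cited paper, reproduced here as \autoref{thm:structure}) says that if $\prod_k Q_k\in\sqrt{\ideal{A,B}}$ then either (i) some $Q_k\in\spn{A,B}$, or (ii) some nontrivial combination $\alpha A+\beta B$ is a product of two linear forms, with no guarantee that these forms have anything to do with any $Q_k$, or (iii) $A$, $B$ and one $Q_k$ all lie in $\ideal{c,d}$ for two linear forms $c,d$. Your version omits case (iii) entirely and, in case (ii), incorrectly ties the reducible combination to a linear factor shared with $Q_k$; since the $Q_k$ are irreducible, ``a linear form $\ell$ appearing in many $Q_k$'s'' cannot mean divisibility, and restricting to $\{\ell=0\}$ does not turn any of them into linear forms, so the proposed induction on a hyperplane does not apply. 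The actual argument handles case (ii) quite differently: when some $Q_0$ satisfies it with most of the other polynomials, those polynomials are written as $Q_0+\ell_1\ell_2$, and one shows that the resulting linear forms satisfy an Edelstein--Kelly type condition modulo a constant-dimensional space of linear forms. Beyond the initial false step, your case analysis would therefore need to be replaced wholesale by the trichotomy of \autoref{thm:structure} and the rank-based analysis built on it.
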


\subsection{Our results}

In this paper we prove a special case of \autoref{con:gupta-general}. Specifically, we prove that $\lambda(3,2,c)=O(1)$, for any $c$ (and by the discussion above it is sufficient to prove that $\lambda(3,2,4)=O(1)$). In fact, we prove a more general statement showing that the \emph{linear rank} can be bounded from above by a constant rather than the algebraic rank. As the algebraic rank is at most the linear rank our result is indeed stronger.  

%\begin{theorem}[Main theorem]\label{thm:main}
%There exists a constant $A$ such that the following holds. Assume that $\cF_1,\cF_2$ and $\cF_3$ are as in \autoref{con:gupta-general} with $k=3$, $r=2$ and any $c$. 
%Then, $\dim\left(\spn{\cup_i \cF_i}\right)\leq A$.
%\end{theorem}

\begin{theorem}\label{thm:main}%\label{thm:main-ek-intro}
	There exists a universal constant $\Lambda$ such that the following holds. 
Let $\cT_1,\cT_2, \cT_3\subset\C[x_1,\ldots,x_n]$ be finite sets of pairwise linearly independent homogeneous polynomials satisfying the following properties:
\begin{itemize}
\item Each $Q\in\cup_{j\in[3]}\cT_j$ is either irreducible quadratic  or a square of a linear function.
\item Every two polynomials $Q_1$ and $Q_2$ from distinct sets satisfy that whenever they vanish then the product of all the polynomials in the third set  vanishes as well. Equivalently, for every two polynomials $Q_1$ and $Q_2$ from distinct sets the product of all the polynomials in the third set is in the radical of the ideal generated by $Q_1$ and $Q_2$. 
\end{itemize}
 Then,  $\dim(\spn{\cup_{j\in[3]}\cT_j})\leq \Lambda$.
\end{theorem}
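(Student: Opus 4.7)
The plan is to combine the non-colored version \autoref{thm:Peleg-Shpilka-SG} with a careful classification of how a product of quadratics can belong to the radical of the ideal generated by two quadratics from different color classes, and then pigeonhole this classification to reduce the colored setting to the uncolored one.

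The first technical step I would carry out is a \emph{structure lemma} for radical membership. Fix irreducible quadratics $Q_1 \in \cT_i$ and $Q_2 \in \cT_j$ with $i \neq j$. The hypothesis says $\prod_{P \in \cT_k} P \in \sqrt{\ideal{Q_1,Q_2}}$ for the remaining index $k$, which forces at least one \emph{irreducible factor} of some $P \in \cT_k$ to lie in $\sqrt{\ideal{Q_1,Q_2}}$ (since the radical is a prime-intersection). I would then enumerate the possibilities for an irreducible quadratic (or a linear form, coming from a square $\ell^2$) $P$ with $P \in \sqrt{\ideal{Q_1,Q_2}}$: (a) $Q_1$ and $Q_2$ share a common linear factor $\ell$ and $\ell \mid P$; (b) the intersection $V(Q_1) \cap V(Q_2)$ is a complete intersection and $P \in \ideal{Q_1,Q_2}$, i.e.\ $P = \alpha Q_1 + \beta Q_2$; or (c) degenerate configurations where $P - \alpha Q_1 - \beta Q_2$ factors as a product $\ell_1 \ell_2$ of linear forms constrained by the factorization behavior of $Q_1,Q_2$ (this is the analogue of the authors' example $Q_3 Q_4 \equiv 0 \bmod Q_1,Q_2$). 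Each case assigns $P$ a \emph{type} relative to $(Q_1,Q_2)$ that records at most a constant amount of data.

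The second step is an iterated pigeonhole to transport the colored hypothesis to an uncolored one. I would fix a reference pair $(Q_2^\star, Q_3^\star) \in \cT_2 \times \cT_3$ and, for each other pair in $\cT_2 \times \cT_3$, apply the classification lemma to locate a witness factor inside $\cT_1$ together with its type relative to $(Q_2^\star,Q_3^\star)$. Since there are only constantly many types, a large subfamily $\cT_1' \subseteq \cT_1$ shares a single type, and after an invertible linear change of variables this subfamily lives inside an affine-structured subspace determined by the shared type. Iterating the argument with a second reference point should produce a subfamily in which every two polynomials $Q,Q'$ already satisfy the hypothesis of \autoref{thm:Peleg-Shpilka-SG}: whenever $Q$ and $Q'$ vanish, the product over the rest of the subfamily vanishes as well. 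Invoking \autoref{thm:Peleg-Shpilka-SG} then bounds $\dim(\spn{\cT_1'})$ by a universal constant, and pushing the bound back through the linear changes of variables and the constantly many pigeonhole classes bounds $\dim(\spn{\cT_1})$. Symmetry handles $\cT_2$ and $\cT_3$.

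The main obstacle will be case (c) of the classification, where the witness factor is not a linear combination of $Q_1,Q_2$ and the ``extra'' piece $\ell_1\ell_2$ couples the linear factors of the witness with those of $Q_1$ and $Q_2$ in a non-trivial way. Here the ambiguity of \emph{which} factor of the product $\prod_{P \in \cT_k} P$ serves as the witness is most severe, and one cannot hope for a single $P$ to play this role for all pairs $(Q_1,Q_2)$. I expect to control this by a second tier of pigeonholing on the linear forms appearing in the $\ell_1\ell_2$ piece: among the many pairs $(Q_1,Q_2)$ producing a type-(c) witness, many share a common linear form, which pins down a low-dimensional subspace that absorbs the non-generic behavior. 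A further isolated sub-case is that of squares $\ell^2$ inside the $\cT_j$'s; after collecting all such linear forms across the three sets, membership of $\ell^2$ in $\sqrt{\ideal{Q_1,Q_2}}$ is equivalent to membership of $\ell$, so the classical Edelstein--Kelly theorem for linear forms applies directly and contributes only a constant to the final bound $\Lambda$.
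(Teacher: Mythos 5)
Your first step already contains a false claim: from $\prod_{P\in\cT_k}P\in\sqrt{\ideal{Q_1,Q_2}}$ you cannot conclude that some irreducible factor of some $P\in\cT_k$ lies in $\sqrt{\ideal{Q_1,Q_2}}$. The radical is an intersection of the minimal primes over $\ideal{Q_1,Q_2}$, and while each prime must contain \emph{some} factor, different primes may be served by different factors, so no single factor need lie in the intersection. The paper's own example makes this concrete: for $Q_1=xy+zw$, $Q_2=xy-zw$, $Q_3=xw$, $Q_4=yz$ one has $Q_3Q_4\in\sqrt{\ideal{Q_1,Q_2}}$ yet neither $Q_3$ nor $Q_4$ vanishes modulo $Q_1,Q_2$. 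Because of this, the correct classification (\autoref{thm:structure}) is not a trichotomy on where a witness $P$ sits, but a statement that either a witness is in the \emph{linear span} of $Q_1,Q_2$, or else $Q_1,Q_2$ themselves are degenerate (they span a reducible quadratic, or they both lie in an ideal generated by two linear forms). Your types (a)--(c) conflate these, and in particular your type (a) (a common linear factor of $Q_1,Q_2$) cannot occur since the $Q_i$ are irreducible.

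The second and more fundamental gap is the proposed reduction to \autoref{thm:Peleg-Shpilka-SG}. The colored hypothesis gives a vanishing condition only for pairs taken from \emph{distinct} sets, and the product is over the \emph{third} set; it says nothing whatsoever about two polynomials $Q,Q'$ lying in the same $\cT_i$. Consequently no amount of pigeonholing can produce a subfamily $\cT_1'\subseteq\cT_1$ satisfying the hypothesis of the uncolored theorem, which requires every pair \emph{inside} the set to force the vanishing of the product of the rest. This is precisely why the paper does not invoke its earlier uncolored result as a black box but instead redoes the entire case analysis for three sets: it partitions each $\cQ_j$ according to which case of \autoref{thm:structure} a polynomial satisfies with a constant fraction of the larger of the other two sets, handles the ``span or low-codimension'' regime via an iterative covering argument plus a robust colored Sylvester--Gallai theorem (\autoref{thm:partial-EK-robust}), and in the remaining regime establishes the normal form $\alpha Q_0+\beta P_0+F(V)+ab$ for all polynomials before projecting down to an Edelstein--Kelly configuration of linear forms. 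Your closing remark about squares reducing to the linear Edelstein--Kelly theorem is in the right spirit (it mirrors the projection step $T_{\vaa,V}$ in the paper), but the two gaps above mean the proposal as written does not yield a proof.
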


This result can be seen as an extension of \autoref{thm:Peleg-Shpilka-SG} to the case of three sets, as in \autoref{con:gupta-general}.

\begin{remark}
The requirement that the polynomials are homogeneous is not essential as homogenization does not affect the property stated in the theorem.
\end{remark}

\begin{remark}
As mentioned before, Claim 11 in \cite{Gupta14} implies that for every two polynomials $Q_1$ and $Q_2$ from distinct sets there is a subset of the third set, $\cK$, such that  $|\cK| \leq 4$, and whenever $Q_1$ and $Q_2$ vanish then so does $\prod_{k\in \cK} Q_k$. Equivalently, $\prod_{k\in \cK} Q_k \in \sqrt{\ideal{Q_1,Q_2}}$, the radical of the ideal generated by $Q_1$ and $Q_2$.
\end{remark}

As an immediate corollary from earlier works (see e.g. \cite{DBLP:journals/iandc/BeeckenMS13,Gupta14}) we obtain the first black-box polynomial time PIT algorithm for $\spsp$ circuits. As mentioned above, prior to our work no subexponential time algorithm was known even in the white-box model.
Recall that a hitting set for a class of circuits is a set of inputs that intersects the set of nonzeros of any nonzero circuit in the class. Thus a hitting set provides certificates for zeroness/nonzeroness of circuits in the class.

\begin{corollary}[PIT for $\spsp$ circuits]\label{cor:main:PIT}
There is an explicit hitting set ${\mathcal H} \subset \C^n$ of size $(nd)^{O(1)}$ for the class of $n$-variate $\Sigma^{[3]}\Pi^{[d]}\Sigma\Pi^{[2]}$ circuits.
\end{corollary}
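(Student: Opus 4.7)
The plan is to derive the hitting set via the standard paradigm ``bounded linear rank at the bottom implies a small explicit hitting set,'' using \autoref{thm:main} as the crucial rank-bounding ingredient and the structural reductions already set up in \cite{DBLP:journals/iandc/BeeckenMS13, Gupta14}. First I would homogenize the input circuit (this preserves zeroness) and factor each quadratic $Q_{i,j}$ into its irreducible components, rewriting the circuit in the form \eqref{eq:spsp} where each bottom polynomial is either an irreducible quadratic or a square of a linear form, at only polynomial blow-up in size.

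Next I would set up a simple-and-minimal reduction: call the circuit \emph{simple} if no polynomial divides all three products and \emph{minimal} if no proper subset of the three products already sums to zero. Non-simple circuits factor out a common polynomial and recurse on a smaller-degree instance; non-minimal circuits reduce to $\Sigma^{[k]}\Pi\Sigma\Pi^{[2]}$ identities with $k \leq 2$, both of which admit efficient black-box PIT by direct factorization arguments (for $k=2$, uniqueness of factorization in $\C[\vx]$ forces the two products to be scalar multiples, yielding an easily checkable structure). The recursion introduces only polynomial overhead and reduces the problem to the simple-and-minimal case.

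Now assuming $\Phi \equiv 0$ is simple and minimal, I would invoke \autoref{thm:main} directly. Setting $\cT_i := \{Q_{i,j}\}_j$ for $i \in \{1,2,3\}$, the identity $\Phi \equiv 0$ implies that for any pair $Q_1 \in \cT_a$, $Q_2 \in \cT_b$ with $a \neq b$, the third product vanishes modulo $\ideal{Q_1, Q_2}$; that is, $\prod_{P \in \cT_c} P \in \sqrt{\ideal{Q_1, Q_2}}$ where $c$ is the third index. After discarding duplicates and normalizing the $\cT_i$'s to consist of pairwise linearly independent polynomials (a bounded loss in dimension, guaranteed by the simplicity assumption), \autoref{thm:main} applies and yields $\dim(\spn{\cup_i \cT_i}) \leq \Lambda$ for the universal constant $\Lambda$.

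Finally, the rank bound yields the hitting set: since every bottom-level polynomial lies in a common $\Lambda$-dimensional subspace of linear forms, the circuit can be pulled back along any explicit $\Lambda$-rank-preserving polynomial map (such as the generators used in \cite{DBLP:conf/focs/KayalS09, DBLP:journals/iandc/BeeckenMS13}), producing an equivalent $\Lambda$-variate polynomial of degree $2d$. A trivial $(2d+1)^{\Lambda} = (nd)^{O(1)}$-size grid hitting set for $\Lambda$-variate polynomials of degree $2d$ then suffices, and pulling this back through the explicit map yields the required explicit hitting set $\mathcal{H} \subset \C^n$ of size $(nd)^{O(1)}$. The main obstacle I foresee is keeping the simple-and-minimal reduction uniform across recursive sub-instances so that hitting-set sizes remain polynomial; the core dimension bound is a black-box invocation of \autoref{thm:main}, and the conversion of linear-rank bounds into explicit hitting sets is by now standard in this literature.
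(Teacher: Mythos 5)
Your plan reconstructs essentially the route the paper takes: it defers to the machinery in \cite{DBLP:journals/iandc/BeeckenMS13,Gupta14} and plugs in \autoref{thm:main} as the new rank-bounding ingredient (the paper itself gives no further argument for the corollary beyond citing that prior work). So the overall approach is the right one.

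One conceptual slip in your final paragraph is worth correcting, because it conflates two distinct notions of rank. You write that ``every bottom-level polynomial lies in a common $\Lambda$-dimensional subspace of linear forms,'' but that is not what \autoref{thm:main} gives you. The theorem bounds $\dim\bigl(\spn{\cup_j \cT_j}\bigr)\leq \Lambda$, where the span is taken in the vector space of \emph{quadratic polynomials}. A single rank-$n$ quadratic such as $\sum_i x_i^2$ spans a one-dimensional space of quadratics yet involves $n$ linearly independent linear forms, so a constant bound on the quadratic span does not cap the number of essential linear forms. What the theorem does give you is that every bottom-level polynomial is a linear combination of $\Lambda$ fixed quadratics $P_1,\dots,P_\Lambda$, hence their transcendence degree (algebraic rank) is at most $\Lambda$. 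The appropriate variable-reduction tool is therefore the faithful-homomorphism construction of \cite{DBLP:journals/iandc/BeeckenMS13} and its relatives for bounded algebraic rank, not the linear-form-based rank reduction of \cite{DBLP:conf/focs/KayalS09}, which is tailored to $\Sigma\Pi\Sigma$ circuits whose bottom gates really are linear forms. Because the bottom fan-in here is $2$, the bottom polynomials are $O(n^2)$-sparse, and with constant trdeg the relevant constructions produce a polynomial-size hitting set rather than the quasi-polynomial bound of the general low-algebraic-rank setting. With that correction, the rest of your outline --- homogenization, factoring into irreducibles (replacing each linear factor by its square so that the $\cT_j$'s consist of irreducible quadratics and squares of linear forms, and pruning to pairwise linear independence), handling non-simple and non-minimal circuits via separate hitting sets in the black-box model rather than literal recursion, invoking \autoref{thm:main}, and pulling a low-dimensional grid back through the faithful homomorphism --- is the standard reduction the paper is pointing at, and it does go through.
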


\subsection{Proof outline}\label{sec:proof-idea}

Our proof, as well as the proof of \cite{DBLP:journals/corr/abs-2003-05152}, follow the blueprint of the proof in \cite{DBLP:conf/stoc/Shpilka19}. The starting point is a theorem classifying the possible cases in which a product of quadratic polynomials belong to the radical ideal generated by two other quadratics. We state here the more general theorem of \cite{DBLP:journals/corr/abs-2003-05152}, that we will use in our proof.

\begin{theorem}[Theorem 1.8 in \cite{DBLP:journals/corr/abs-2003-05152}]\label{thm:structure}
	Let $\{Q_k\}_{k\in \cK},A,B$ be homogeneous polynomials of degree $2$ such that $\prod_{k\in \cK}Q_k \in \sqrt{\ideal{A,B}}$. Then one of the following cases hold:
	\begin{enumerate}[label={(\roman*)}]
		\item There is $k\in \cK$ such that $Q_k$ is in the linear span of $A,B$  \label{case:span}
		
		\item \label{case:rk1}
		There exists a non trivial linear combination of the form $\alpha A+\beta B = c\cdot d$ where  $c$ and $d$ are linear forms.
		
		\item There exist two linear forms $c$ and $d$ such that when setting $c=d=0$ we get that $A,B$ and one of $\{Q_k\}_{k\in \cK}$ vanish. \label{case:2}
	\end{enumerate}
\end{theorem}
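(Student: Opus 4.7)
The plan is to analyze the projective variety $V = V(A) \cap V(B) \subseteq \mathbb{P}^{n-1}$ together with the pencil of quadrics $\{\alpha A + \beta B : (\alpha,\beta) \in \C^2 \setminus \{0\}\}$. By the Nullstellensatz, the hypothesis $\prod_k Q_k \in \sqrt{\ideal{A,B}}$ says exactly that $V \subseteq \bigcup_k V(Q_k)$, so each irreducible component of $V$ is contained in some $V(Q_k)$. I will show that if neither (ii) nor (iii) holds then (i) must.

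First I dispose of degenerate cases. If $A$ and $B$ are linearly dependent then $\sqrt{\ideal{A,B}}$ depends only on $A$: if $A$ is irreducible, some $Q_k$ is a scalar multiple of $A$, which is case (i); if $A = c\cdot d$ for linear forms $c,d$ (allowing $c=d$), then case (ii) already holds with $\beta=0$. Similarly, if either $A$ or $B$ is itself a product of two linear forms, case (ii) is immediate. So I may assume $A$ and $B$ are linearly independent irreducible quadrics.

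The heart of the argument is a dichotomy on the pencil. If some $\alpha A + \beta B$ has rank $\leq 2$ as a quadratic form, it factors as $c \cdot d$ and case (ii) holds, so assume henceforth that every member of the pencil is an irreducible quadric. Then $A,B$ share no common factor and form a regular sequence, so by Krull's height theorem every irreducible component $W$ of $V$ has codimension exactly $2$; by B\'ezout the total degree of $V$ is at most $4$. For each such $W$, some $Q_{k(W)}$ contains $W$. If $W$ is a linear subspace $V(c,d)$ for independent linear forms $c,d$, then setting $c=d=0$ makes $A$, $B$, and $Q_{k(W)}$ vanish simultaneously and case (iii) holds. Otherwise $W$ is a non-linear component of degree $\geq 2$.

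To finish, I would argue that when the pencil has no rank-two member \emph{and} $V$ has no linear component, the scheme-theoretic intersection defined by $\ideal{A,B}$ is reduced and irreducible of degree $4$, so any $Q_{k(W)}$ vanishing on $W$ vanishes on all of $V$. Then $Q_{k(W)} \in \sqrt{\ideal{A,B}} = \ideal{A,B}$, and writing $Q_{k(W)} = PA + RB$ and comparing degrees forces $P,R$ to be scalars, giving $Q_{k(W)} \in \spn{A,B}$, which is case (i). The main obstacle is precisely this reducedness-irreducibility step: one must classify the possible degenerations of a complete intersection of two quadrics. My approach is to invoke the Weierstrass-Segre classification of pencils of quadrics via elementary divisors, and check case-by-case that any non-reduced or reducible configuration either forces a rank-$\leq 2$ element in the pencil (yielding (ii)) or produces a linear component of $V$ (yielding (iii)). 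The low-dimensional cases ($n \leq 3$), where $V$ is zero-dimensional or empty, are handled directly by a factoring argument, and squares of linear forms among the $Q_k$ pose no additional difficulty since $Q_k = \ell^2$ vanishes exactly where $\ell$ does.
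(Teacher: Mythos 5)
This theorem is cited from an earlier paper of the same authors and is not proved in the present manuscript, so there is no in-paper proof to compare against; what follows is a review of your proposal on its own terms.

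Your geometric route is sound in outline and appears to be a genuinely different approach from the one in the cited reference, which (like the related Shpilka STOC~'19 proof of the $c=1$ case) proceeds by elementary algebraic manipulations with resultants and explicit factorizations of quadratics rather than by classifying the projective intersection $V(A)\cap V(B)$. The setup is correct: the Nullstellensatz translation, the reduction to $A,B$ linearly independent irreducible quadrics, the observation that a $\mathrm{rank}\le 2$ member of the pencil yields case~(ii), the regular-sequence/unmixedness argument giving codimension~$2$ components, and the linear-component $\Rightarrow$ case~(iii) step are all right. The final step (from $\ideal{A,B}$ prime and $Q_k\in\ideal{A,B}$ to $Q_k\in\spn{A,B}$ by degree) is also correct, as is your remark that a reducible $Q_k$ cannot vanish on an irreducible degree-$4$ complete intersection since that would put a nonzero linear form into $\ideal{A,B}$.

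The one real gap is the dichotomy you flag yourself: \emph{no $\mathrm{rank}\le 2$ element in the pencil and no linear component $\Rightarrow$ $\mathrm{Proj}(S/\ideal{A,B})$ is reduced and irreducible.} You propose to get this from the Weierstrass--Segre classification, which is overkill and painful in arbitrary dimension (that classification is cleanest for pencils with at least one nonsingular member). A shorter route is available and you should use it instead. Since $A,B$ is a regular sequence, $V=V(A)\cap V(B)$ is equidimensional of codimension $2$ and degree $4$ (with multiplicities). Any irreducible component of $V_{\mathrm{red}}$ thus has degree $1$, $2$, $3$, or $4$. Degree $1$ is a linear component (case~(iii), excluded). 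If there is a degree-$2$ irreducible component $W$ of dimension $n-3$, then by the bound $\deg\ge\mathrm{codim}+1$ inside its linear span, $W$ spans exactly a hyperplane $H=V(h)$, and $W$ is cut out in $H$ by a single quadric $q$. Since $A|_H$ and $B|_H$ both vanish on $W$ and neither is identically zero (else $h\mid A$ or $h\mid B$, giving a $\mathrm{rank}\le 2$ quadric, excluded), both are scalar multiples of $q$; the appropriate combination $c_BA-c_AB$ then restricts to $0$ on $H$, hence is divisible by $h$, hence has $\mathrm{rank}\le 2$, contradiction. If there is a degree-$3$ component, the remaining degree is at most $1$, forcing a linear component or non-reducedness along a plane (both excluded by the above). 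What remains is a single degree-$4$ component with multiplicity $1$, i.e.\ $V$ is integral, and since $\ideal{A,B}$ is a saturated complete intersection ideal this makes it prime, which is exactly what your final step needs. With this substituted for the Segre case analysis, your proof is complete.
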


%
%Our proof has a similar structure to the proofs in \cite{DBLP:conf/stoc/Shpilka19,DBLP:journals/corr/abs-2003-05152}, but it mostly does not rely on  the results proved there. 
%%In particular, we rely on a result of \cite{barak2013fractional,DSW12} that gives a robust version of the Sylvester-Gallai theorem.
%Our proof, as well as the proofs in  \cite{DBLP:conf/stoc/Shpilka19,DBLP:journals/corr/abs-2003-05152}, heavily relies on a theorem classifying the possible cases in which a product of quadratic polynomials belong to the radical ideal generated by two other quadratics. We state here the more general theorem of \cite{DBLP:journals/corr/abs-2003-05152}, that we will use throughout out proof.
%
%\begin{theorem}[Theorem 1.8 in \cite{DBLP:journals/corr/abs-2003-05152}]\label{thm:structure}
%	Let $\{Q_k\}_{k\in \cK},A,B$ be homogeneous polynomials of degree $2$ such that $\prod_{k\in \cK}Q_k \in \sqrt{\ideal{A,B}}$. Then one of the following cases hold:
%	\begin{enumerate}[label={(\roman*)}]
%		\item There is $k\in \cK$ such that $Q_k$ is in the linear span of $A,B$  \label{case:span}
%		
%		\item \label{case:rk1}
%		There exists a non trivial linear combination of the form $\alpha A+\beta B = c\cdot d$ where  $c$ and $d$ are linear forms.
%		
%		\item There exist two linear forms $c$ and $d$ such that when setting $c=d=0$ we get that $A,B$ and one of $\{Q_k\}_{k\in \cK}$ vanish. \label{case:2}
%	\end{enumerate}
%\end{theorem}

The theorem guarantees that, unless the linear span of $A$ and $B$ contains one of the polynomials $\{Q_k\}$, $A$ and $B$ are far from being generic, namely, they must span a reducible quadratic or they have a very low rank (as quadratic polynomials). Thus, for sets of polynomials satisfying the requirements of \autoref{thm:main} it must hold that any two polynomials, coming from different sets, have the structure described in \autoref{thm:structure}. 

\begin{remark}\label{rem:terminology}
Following \autoref{thm:structure}, whenever we say that two quadratics $Q_1$ and $Q_2$, from distinct sets, satisfy \autoref{thm:structure}\ref{case:span} we mean that there is a polynomial $Q_3$ from the third set in their linear span. Similarly, when we say that they satisfy \autoref{thm:structure}\ref{case:rk1}  (\autoref{thm:structure}\ref{case:2}) we mean that there is a reducible quadratic in their linear span (they belong to $\ideal{a_1,a_2}$ for two linear forms $a_1,a_2$).
\end{remark}

Given this classification, the analysis in \cite{DBLP:conf/stoc/Shpilka19,DBLP:journals/corr/abs-2003-05152}  is based on which case of the theorem each pair of polynomials satisfy. The main difference is that the analysis in our work is considerably more difficult than the analysis in \cite{DBLP:conf/stoc/Shpilka19,DBLP:journals/corr/abs-2003-05152} as, unlike \cite{DBLP:conf/stoc/Shpilka19}, there is no unique polynomial $P_3$ in the radical of two other polynomials $\sqrt{\ideal{P_1,P_2}}$ but rather a product of polynomials is in the radical.  This leads to much more technical work. Similarly, the result of \cite{DBLP:journals/corr/abs-2003-05152} considered the case of only one set, and handling $3$ sets require new ideas and more work.

Similarly to \cite{DBLP:journals/corr/abs-2003-05152} we first prove that if every polynomial satisfies Case~\ref{case:span} or Case~\ref{case:2} of \autoref{thm:structure} with at least, say, $1/100$ fraction of the  polynomials in the other two sets,\footnote{More accurately, we require that it satisfies this with $1/100$ fraction of the  polynomials in larger set among the other two sets.} then we can bound the dimension of the linear span of the polynomials in $\cup_i \cT_i$. The proof of this case is given in \autoref{sec:easy-case}. At a high level, the proof has three main steps: We first find a subspace of linear forms, $V$, such that all the polynomials that satisfy Case~\ref{case:2} with many other polynomials belong to $\ideal{V}$. Once we achieve this we show that there is a small set $\cI$ so that all of our polynomials are in $\spn{\cI}+\ideal{V}$. Having this structure at hand we can then prove, using techniques similar to  \cite{DBLP:conf/stoc/Shpilka19,DBLP:journals/corr/abs-2003-05152}, that this implies that our set is contained in a small dimensional space. 

As in \cite{DBLP:journals/corr/abs-2003-05152} the most difficult case is when some of the polynomials do not satisfy this (interestingly, this was the easy case in \cite{DBLP:conf/stoc/Shpilka19}). Namely, they satisfy case~\ref{case:rk1} with more than a fraction of $98/100$ of the polynomials in the other two sets.\footnote{Here too we require this only for the larger set among the other two.} Let $Q_0$ be such a polynomial. This implies that many other polynomials are ``close'' to $Q_0$ in the sense that, after rescaling, they can be written as $Q_0 + \ell_1 \cdot \ell_2$ for some linear functions $\ell_1$ and $\ell_2$. This suggests that perhaps we could prove that any two polynomials are ``close'' to each other in the sense that their difference is of rank 
%\shir{once we talked about the rank vs. $\rank_s$. Maybe before the preliminaries it is not that important but just to make you notice. } 
at most $1$, and then maybe this can be used to bound the overall dimension. We don't quite achieve this but we do show that there are at most two polynomials $Q_0$, $P_0$ and a vector space $V$ of linear forms of $\dim(V)\leq \dimV$, such that any polynomial in the three sets can be written as a linear combination of $Q_0$, $P_0$, a polynomial $F$, which is defined over the linear forms in $V$, and a quadratic of rank $1$. This is proved this in \autoref{sec:hard}. The proof of this statement is very technical and, as all of our proofs, is based on case analysis. 

Once we obtain this structure, we use it to prove that our polynomials live in a low dimensional space. The main idea is that if the polynomials in the $j$th set are of the form $\alpha_{i,j} Q_0 + \beta_{i,j} P_0 + F_{i,j}(V) + a_{i,j}\cdot b_{i,j}$, where $a_{i,j}$
%\shir{$Q_0$?} 
and $b_{i,j}$ are linear forms,  then, assuming that no nonzero linear combination of $Q_0$ and $P_0$ has ``low '' rank, we can show that (modulo $V$) the linear functions in the sets $\cS_j=\{a_{i,j},b_{i,j}\}_i$, satisfy the condition of the Edelstein-Kelly theorem, and hence $\dim(\spn{\cup\cS_j})=O(1)$, which implies that there is a constant dimensional space of quadratics containins all our polynomials. We prove this in \autoref{sec:special}. 

%As it is difficult to explain the nuances of the case analysis we will only give a high level description of what we do. Later, when we formally prove our results we will try to provide explanations and intuition for the different arguments when possible.

As mentioned above, the steps in our proof are similar to the steps in the proofs of Theorem 1.8 of  \cite{DBLP:conf/stoc/Shpilka19} and Theorem 1.7 of \cite{DBLP:journals/corr/abs-2003-05152} and the arguments and ideas that we use have similar flavor to those used there. This is not very surprising as all these works rely on case analysis based on \autoref{thm:structure}.
The main difference between our proof and these earlier proofs (and also between \cite{DBLP:conf/stoc/Shpilka19} and \cite{DBLP:journals/corr/abs-2003-05152}) is that we require much more technical work to obtain each step. Perhaps surprisingly, except of relying on \autoref{thm:structure} of \cite{DBLP:journals/corr/abs-2003-05152}, and on some basic properties of quadratic polynomials, we could not use any of the claims proved there (and similarly, \cite{DBLP:journals/corr/abs-2003-05152} did not rely on claims from \cite{DBLP:conf/stoc/Shpilka19}).
On the positive side, those who are well acquainted with the proof of \cite{DBLP:journals/corr/abs-2003-05152} will note that in some cases we managed to simplify some of the arguments and unify them. Specifically, the main result of \autoref{sec:special}, \autoref{prop:P,Q_0,C[V],rank1},  captures most of the what is needed in order to obtain the result of \cite{DBLP:journals/corr/abs-2003-05152}, and its proof is simpler with significantly less case analysis than in \cite{DBLP:journals/corr/abs-2003-05152} (and it would have been even simpler had there been just one set to consider instead of three).
%\shir{maybe we should elaborate: arguing that the proposition in 4.1, although seams tedious, gives a uniform argument that contains most of the proof of ps19, with less case analysis.}

\subsection{Conclusions and future research}\label{sec:discussion}

In this work we solved \autoref{con:gupta-general} for the case $k=3$ and $r=2$. As a consequence we obtained the first polynomial time black-box  deterministic algorithms for testing identities of  $\Sigma^{[3]}\Pi^{[d]}\Sigma\Pi^{[2]}$ circuits. However, many questions are still left open. For example, extending our result for larger values of $k$ or $r$ is an intriguing open problem. We suspect that increasing $k$ may be harder than increasing $r$, but right now both questions are open.

%We also need to allow the polynomials to be of degree $\leq 2$, and not exactly $2$. 

Another interesting direction is proving robust versions of the results in this work and in \cite{DBLP:conf/stoc/Shpilka19,DBLP:journals/corr/abs-2003-05152}. For example, the following problem is still open.
\begin{problem}
Let $\delta \in (0,1]$. Can we bound the linear dimension (as a function of $\delta$) of a set of polynomials $Q_1,\ldots ,Q_m\in \CRing{x}{n}$ that satisfy the following property: For every $i\in [m]$ there exist at least $\delta m$ values of $j \in [m]$ such that for each such $j$ there is $\cK_j \subset [m]\setminus\{i,j\}$, satisfying $\prod_{k\in \cK_j} Q_k \in \sqrt{\ideal{Q_i,Q_j}}$.
\end{problem}

%Extending our approach to the case of more than $3$ multiplication gates (or more than $3$ sets as in 
%the colored version of the Sylvester-Gallai theorem (\autoref{thm:EK})) seems more difficult. Indeed, an analog of  \autoref{thm:structure} for this case seems harder to prove in the sense that there are many more cases to consider which makes it unlikely that a similar approach will continue to work as the number of gates get larger. Another difficulty is proving an analog of  \autoref{thm:structure} for higher degree polynomials. Thus, we believe that a different proof approach may be needed in order to obtain PIT algorithms for $\Sigma^{[O(1)]}\Pi^{[d]}\Sigma\Pi^{[O(1)]}$ circuits.

%Another interesting question, stated vaguely, is understanding the conditions under which we get a Sylvester-Gallai kind of behavior. By now many variants of the theorem are known: The original Sylvester-Gallai theorem, the colored version of it (Edelstein-Kelly theorem), robust versions of it (by \cite{barak2013fractional,DSW12}), extensions to subspaces \cite{DBLP:journals/dcg/DvirH16}, $k$-wise dependencies \cite{Hansen65,barak2013fractional}, \cite{DBLP:conf/stoc/Shpilka19} and our results for quadratic polynomials and more. It is an intriguing question whether there is a common generalization of all these cases or some framework that contains all these different results.

We only considered polynomials over the complex numbers in this work. However, we believe (though we did not check the details) that a similar approach should work over positive characteristic as well. Observe that over positive characteristic we expect the dimension of the set to scale like $O(\log |\cQ|)$, as for such fields a weaker version of Sylvester-Gallai theorem holds (see Corollary $1.3$ in \cite{DBLP:journals/combinatorica/BhattacharyyaDS16}).
%\begin{theorem}[Corollary $1.3$ in \cite{DBLP:journals/combinatorica/BhattacharyyaDS16}]
%Let $V=\lbrace \MVar{\vec{v}}{m}\rbrace\subset \F_p^d$ be a set of $m$ vectors, no two of which are linearly dependent. Suppose that for every $i,j\in[m]$, there exists $k\in[m]$ such that $\vec{v}_i,\vec{v}_j,\vec{v}_k$ are linearly dependent. Then, for every $\epsilon>0$
%\[\dim(V)\leq{\rm poly}(p/\epsilon)+(4+\epsilon)\log_{p}m \;.\]
%\end{theorem}

\subsection{Organization}
The paper is organized as follows. \autoref{sec:prelim} contains our notation, some basic facts regarding quadratic polynomials and the tool of projection and its affect on quadratics.  In \autoref{sec:robust-EK} discuss the Sylvester-Gallai theorem and a theorem of Edelstein and Kelly, and state some variants and extensions of them that we will use in our proof (we give the proofs in \autoref{sec:EK-proofs}). 
The proof of \autoref{thm:main} is given in \autoref{sec:proof}. As explained above the proof has three main cases each is handled in a different subsection (\ref{sec:easy-case}, \ref{sec:special} and \ref{sec:hard} ). In \autoref{sec:proof} we give a more detailed exposition of the structure of the proof.

\iffalse
\autoref{sec:easy-case} proves the theorem when all the polynomials satisfy either Case~\ref{case:span} or Case~\ref{case:2} of \autoref{thm:structure} with a constant fraction of the other polynomials. In \autoref{sec:hard} we consider the remaining case and prove that in this case the polynomials in our sets have a special structure (see e.g. \autoref{lem:noV-main}). In \autoref{sec:special} we prove \autoref{prop:P,Q_0,C[V],rank1} that bounds the dimension of the span of our polynomials, given the special structure obtained in \autoref{lem:noV-main} (but also in other subcases that we consider). Since we refer to \autoref{prop:P,Q_0,C[V],rank1} in many of the subcases that we consider, we prove it before we consider the general cases.
\fi

\section{Preliminaries}\label{sec:prelim}

In this section we explain our notation and present some basic algebraic preliminaries.

\subsection{Notation}

We will use the following notation. Greek letters $\alpha, \beta,\ldots$ denote scalars from $\C$.  Non-capitalized letters $a,b,c,\ldots$  denote linear forms and $x,y,z$ denote variables (which are also linear forms). Bold faced letters denote vectors, e.g. $\vx=(x_1,\ldots,x_n)$ denotes a vector of variables, $\vaa=(\alpha_1,\ldots,\alpha_n)$ is a vector of scalars, and $\vec{0} = (0,\ldots,0)$ the zero vector. We sometimes do not use a boldface notation for a point in a vector space if we do not use its structure as a vector. Capital letters such as $A,Q,P$  denote quadratic polynomials whereas $V,U,W$ denote linear spaces. Calligraphic letters $\cal I,J,F,Q,T$  denote sets. For a positive integer $n$ we denote $[n]=\{1,2,\ldots,n\}$. 

\subsection{Facts from algebra}

%A \emph{Commutative Ring} is a group that is abelian with respect to both multiplication and addition operations. 
We denote with $\CRing{x}{n}$ the ring of $n$-variate polynomials over $\C$. An \emph{Ideal} $I\subseteq \CRing{x}{n}$ is an abelian subgroup that is closed under multiplication by ring elements. For $\calS \subset \CRing{x}{n}$, we  denote with $\ideal{\calS}$, the ideal generated by $\calS$, that is, the smallest ideal that contains $\calS$. For example, for two polynomials $Q_1$ and $Q_2$, the ideal $\ideal{Q_1,Q_2}$ is the set $\CRing{x}{n}Q_1 + \CRing{x}{n}Q_2$. For a linear subspace $V$, we have that $\ideal{V}$ is the ideal generated by any basis of $V$. The \emph{radical} of an ideal $I$, denoted by $\sqrt{I}$, is the set of all ring elements, $r$, satisfying that for some natural number $m$ (that may depend on $r$), $r^m \in I$. Hilbert's Nullstellensatz  implies that, in $\C[x_1,\ldots,x_n]$, if a polynomial $Q$ vanishes whenever $Q_1$ and $Q_2$ vanish, then $Q\in \sqrt{\ideal{Q_1,Q_2}}$ (see e.g. \cite{CLO}). We shall often use the notation $Q\in \sqrt{\ideal{Q_1,Q_2}}$ to denote this vanishing condition. For an ideal $I\subseteq \CRing{x}{n}$ we denote by $\CRing{x}{n} /I$ the \emph{quotient ring}, that is, the ring whose elements are the cosets of $I$ in $\CRing{x}{n}$ with the proper multiplication and addition operations. For an ideal $I\subseteq \CRing{x}{n}$ we denote the set of all common zeros of elements of $I$ by $Z(I)$.
An ideal $I$ is called \emph{prime} if for every $f$ and $g$ such that $fg\in I$ it holds that either $f\in I$ or $g\in I$.  We next present basic facts about prime ideals that are used throughout the proof.
\begin{fact}
\begin{enumerate}{}
	\item If $F$ is an irreducible polynomial then $\ideal{F}$ is a prime ideal.
	\item For linear forms $\MVar{a}{k}$ the ideal $\ideal{\MVar{a}{k}} = \ideal{\spn{\MVar{a}{k}}}$ is prime.
	\item If $I$ is a prime ideal then $\sqrt{I}=I$.
\end{enumerate}
\end{fact}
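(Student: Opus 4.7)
The plan is to dispatch each of the three items by a short, standard commutative-algebra argument; there is no real obstacle here, so the task is mainly to be careful with definitions rather than to overcome any technical difficulty.

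For item~(1), I would rely on the fact that $\C[x_1,\ldots,x_n]$ is a unique factorization domain. If $fg \in \ideal{F}$ then $F$ divides $fg$; since $F$ is irreducible and the ring is a UFD, $F$ must divide $f$ or must divide $g$, placing one of them in $\ideal{F}$. Hence $\ideal{F}$ is prime.

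For item~(2), I would first verify the equality $\ideal{\MVar{a}{k}} = \ideal{\spn{\MVar{a}{k}}}$: any element of $\spn{\MVar{a}{k}}$ is a $\C$-linear combination of the $a_i$'s and hence already lies in $\ideal{\MVar{a}{k}}$, while conversely each generator $a_i$ lies in $\spn{\MVar{a}{k}}$. To show primality, I would discard redundant generators so that $\MVar{a}{k}$ are linearly independent, and then extend them to a basis $a_1,\ldots,a_n$ of the space of linear forms on $\C^n$. The associated invertible linear change of variables induces a $\C$-algebra automorphism of $\CRing{x}{n}$ sending $\ideal{\MVar{a}{k}}$ to $\ideal{\MVar{y}{k}}$ in fresh variables $y_1,\ldots,y_n$. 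Since
\[
\C[y_1,\ldots,y_n]/\ideal{y_1,\ldots,y_k} \;\cong\; \C[y_{k+1},\ldots,y_n]
\]
is a polynomial ring over a field and hence an integral domain, the ideal $\ideal{\MVar{y}{k}}$ is prime; primality is preserved by ring isomorphisms, so $\ideal{\MVar{a}{k}}$ is prime as well.

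For item~(3), the inclusion $I \subseteq \sqrt{I}$ is immediate from the definition (taking $m=1$). For the reverse inclusion, I would pick any $r \in \sqrt{I}$ and let $m$ be the least positive integer with $r^m \in I$. If $m=1$ there is nothing to prove; otherwise $r^m = r \cdot r^{m-1} \in I$, and primality of $I$ forces either $r \in I$ or $r^{m-1} \in I$. The latter contradicts the minimality of $m$, so $r \in I$, establishing $\sqrt{I} \subseteq I$ and completing the argument.
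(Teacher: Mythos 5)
Your proof is correct in all three parts. The paper itself states this as a \emph{fact} without proof, so there is no paper argument to compare against; your reasoning is the standard one that the authors implicitly rely on. Item~(1) correctly invokes that $\C[x_1,\ldots,x_n]$ is a UFD, so an irreducible element is prime and the principal ideal it generates is a prime ideal. Item~(2) correctly verifies the set equality $\ideal{\MVar{a}{k}}=\ideal{\spn{\MVar{a}{k}}}$ by mutual containment of generators, and then reduces primality (after discarding linearly dependent generators) to the observation that the quotient is a polynomial ring and hence an integral domain; the linear change of variables is a valid $\C$-algebra automorphism, so primality transfers. Item~(3) is the standard minimal-exponent argument: with $m$ minimal such that $r^m\in I$, primality applied to $r\cdot r^{m-1}$ forces $r\in I$ (the alternative $r^{m-1}\in I$ contradicts minimality). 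No gaps.
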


For $V_1,\ldots,V_k$ linear spaces, we use $\sum_{i=1}^k V_i$ to denote the linear space $V_1 + \ldots + V_k$. For two nonzero polynomials $A$ and $B$ we denote $A\sim B$ if $B \in \spn{A}$. For a space of linear forms $V = \spn{\MVar{v}{\Delta}}$, we say that a polynomial $P \in \CRing{x}{n}$ depends only on $V$  if the value of $P$ is determined by the values of the linear forms $v_1,\ldots,v_\Delta$. More formally, we say that $P$ depends only on $V$ if there is a $\Delta$-variate polynomial $\tilde{P}$ such that $P \equiv \tilde{P}(v_1,\ldots,v_\Delta)$. We denote by  $\C[V]\subseteq \CRing{x}{n}$ the subring of polynomials that depend only on $V$. Similarly we denote by $\C[V]_2\subseteq \CRing{x}{n}$, the linear subspace of all homogeneous quadratic polynomials that depend only on $V$. 

Another notation that we will use throughout the proof is congruence modulo linear forms.
\begin{definition}\label{def:mod-form}
Let $V\subset \CRing{x}{n}$ be a space of linear forms, and $P,Q\in \CRing{x}{n}$. We say that $P\equiv_V Q$ if $P-Q \in \ideal{V}$. 
\end{definition}

We end with a simple observation that follows immediately from the fact the quotient ring $\CRing{x}{n}/{\langle V\rangle}$is a unique factorization domain.
\begin{observation}\label{fact:ufd}
Let $V\subset \CRing{x}{n}$ be a space of linear forms and $P,Q\in \CRing{x}{n}$. If $P = \prod_{k=1}^t P_k$, and $Q = \prod_{k=1}^t Q_k$ satisfy that for all $k$, $P_k$ and $Q_k$ are irreducible in $\CRing{x}{n}/{\langle V\rangle}$, and $P \equiv_V Q \not\equiv_V 0$ then, up to a permutation of the indices, $P_k\equiv_V Q_k$ for all $k\in [t]$.
\end{observation}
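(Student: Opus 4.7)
The plan is to reduce the claim to the classical unique factorization property of a polynomial ring. Let $\Delta = \dim V$ and fix a basis $v_1,\dots,v_\Delta$ of $V$; extend it to a basis $v_1,\dots,v_\Delta, y_1,\dots,y_{n-\Delta}$ of the full space of linear forms on $\CRing{x}{n}$. This amounts to an invertible linear change of coordinates, so it induces a ring automorphism of $\CRing{x}{n}$ under which $\langle V\rangle$ is identified with the ideal $\langle v_1,\dots,v_\Delta\rangle$. Quotienting out the latter yields a canonical isomorphism
\[
\CRing{x}{n}/\langle V\rangle \;\cong\; R \;:=\; \C[y_1,\dots,y_{n-\Delta}],
\]
so in particular the quotient is a UFD whose only units are the nonzero elements of $\C$.

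Next I would transport the hypotheses through this isomorphism. Writing $\bar f \in R$ for the image of $f \in \CRing{x}{n}$, the relation $P \equiv_V Q$ becomes $\bar P = \bar Q$ in $R$, while $Q \not\equiv_V 0$ becomes $\bar Q \neq 0$, and by assumption each $\bar P_k$ and each $\bar Q_k$ is irreducible in $R$. Hence
\[
\prod_{k=1}^{t} \bar P_k \;=\; \prod_{k=1}^{t} \bar Q_k
\]
is a factorization of a single nonzero element of $R$ into irreducibles in two ways.

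Finally I would invoke unique factorization in $R$: there exist a permutation $\sigma \in S_t$ and units $c_1,\dots,c_t \in \C^{\times}$ such that $\bar P_k = c_k\,\bar Q_{\sigma(k)}$ for every $k \in [t]$. Pulling back through the isomorphism yields $P_k - c_k Q_{\sigma(k)} \in \langle V\rangle$, i.e.\ $P_k \equiv_V c_k Q_{\sigma(k)}$, which is the asserted equality up to reordering (the nonzero scalar ambiguity is harmless and may be absorbed into the $Q_{\sigma(k)}$, as is implicit in the paper's convention of identifying polynomials that agree up to a unit). There is no substantial obstacle here: the only mild point requiring care is the standard unit ambiguity in UFD factorizations, and the verification that extending a basis of $V$ to a full coordinate system on the space of linear forms gives a legitimate identification of $\CRing{x}{n}/\langle V\rangle$ with an honest polynomial ring.
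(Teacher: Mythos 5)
Your proposal is correct and is exactly the argument the paper intends: the observation is stated as following immediately from the fact that $\CRing{x}{n}/\langle V\rangle$ is a unique factorization domain, and your reduction via a change of coordinates to $\C[y_1,\dots,y_{n-\Delta}]$ is the standard way to see this. Your remark about the unit ambiguity is a fair point of care, but it is absorbed by the paper's convention of working with polynomials up to nonzero scalar multiples.
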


When we factorize polynomials modulo a linear space of linear forms, we use this observation but do not refer it.

\subsubsection{Rank of quadratic polynomials}\label{sec:rank}
 
We next give some facts regarding quadratic polynomials. Many of these facts already appeared in  \cite{DBLP:journals/corr/abs-2003-05152},

\begin{definition}\label{def:rank-s}
For a homogeneous quadratic polynomial  $Q$ we denote with $\rank_s(Q)$\footnote{In some recent works  this was defined as $\text{algebraic-rank}(Q)$ or $\text{tensor-rank}(Q)$, but as those notions might have different meanings we decided to continue with the notation of \cite{DBLP:journals/corr/abs-2003-05152}.} the minimal $r$ such that there are $2r$  linear forms $\{a_k\}_{k=1}^{2r}$ satisfying $Q=\sum_{k=1}^r a_{2k}\cdot a_{2k-1}$. We call such representation a \emph{minimal representation} of $Q$.
\end{definition}

This is a slightly different definition than the usual one for the rank of a quadratic form,\footnote{The usual definition says that $\rank(Q)$ is the minimal $t$ such that there are $t$ linear forms $\{a_k\}_{k=1}^{t}$, satisfying $Q=\sum_{k=1}^t a_k^2$.}
but it is more suitable for our needs. We note that a quadratic $Q$ is irreducible if and only if $\rank_s(Q)>1$. The next claim shows that a minimal representation is unique in the sense that the
space spanned by the linear forms in it is unique.

\begin{claim}[Claim 2.13 in \cite{DBLP:journals/corr/abs-2003-05152}]\label{cla:irr-quad-span}
Let $Q$ be a homogeneous quadratic polynomial.  Let $Q=\sum_{i=1}^{r}a_{2i-1}\cdot a_{2i}$ and $Q = \sum_{i=1}^{r}b_{2i-1}\cdot b_{2i}$ be two different minimal representations of $Q$. Then $\spn{\MVar{a}{2r}} =\spn{\MVar{b}{2r}}$.
\end{claim}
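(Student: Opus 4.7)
The key idea is to identify $\spn{a_1,\ldots,a_{2r}}$ with an invariant of $Q$ itself. The natural candidate is the space of first partial derivatives $\partial Q := \spn{\partial Q/\partial x_j : j\in[n]}$, which depends only on the polynomial $Q$ and not on any chosen representation. Once I establish $\spn{a_1,\ldots,a_{2r}}=\partial Q$ for every minimal representation, applying this to both representations in the claim yields $\spn{a_1,\ldots,a_{2r}}=\partial Q=\spn{b_1,\ldots,b_{2r}}$.

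The first step is the easy inclusion $\partial Q\subseteq \spn{a_1,\ldots,a_{2r}}$, which holds for any representation, minimal or not. Differentiating $Q=\sum_{i=1}^r a_{2i-1}a_{2i}$ term-by-term, each $\partial Q/\partial x_j$ is a $\C$-linear combination of the $a_k$'s, with scalar coefficients equal to the $x_j$-coefficients of the various $a_\ell$'s.

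For the reverse inclusion, where the minimality of $r$ must enter, I would pass to the symmetric matrix $M$ of $Q$ (so $Q(\vx)=\vx^T M \vx$ and $\dim\partial Q=\mathrm{rank}(M)$) and use the normal-form identity relating the minimal representation length to $\mathrm{rank}(M)$ over $\C$: diagonalize $Q$ as $\sum u_i^2$ with the $u_i$ linearly independent (absorbing scalars, which is possible over $\C$) and pair squares via $u^2+v^2=(u+iv)(u-iv)$; this yields a representation of $Q$ as a sum of $\lceil \mathrm{rank}(M)/2\rceil$ products, while conversely any representation with $r$ products forces $\mathrm{rank}(M)\leq 2r$. Hence for a minimal representation $\mathrm{rank}(M)\in\{2r-1,2r\}$. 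Let $w:=\dim\spn{a_1,\ldots,a_{2r}}$. If $w=2r$ the $a_i$'s are linearly independent, and a change of coordinates to $a_k=x_k$ exhibits $M$ as block diagonal with $r$ nondegenerate antidiagonal $2\times 2$ blocks, giving $\mathrm{rank}(M)=2r=w$. Otherwise $w\leq 2r-1$, and then the easy inclusion together with the normal form yields $2r-1\leq \mathrm{rank}(M)=\dim\partial Q\leq w\leq 2r-1$, so all three coincide. Either way, $\dim\spn{a_1,\ldots,a_{2r}}=\dim\partial Q$, and the easy inclusion upgrades to equality.

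The only substantive input is the rank identity expressing the minimal representation length as $\lceil \mathrm{rank}(M)/2\rceil$, for which the ground field being $\C$ is essential (diagonalization of quadratic forms plus the complex factorization $u^2+v^2=(u+iv)(u-iv)$). Everything else is a short dimension count on top of the easy inclusion. I expect the odd-rank case $\mathrm{rank}(M)=2r-1$ to require the most care: this is precisely when the $a_i$'s of a minimal representation are forced to be linearly dependent, so the clean change-of-basis argument used in the even-rank case is unavailable and one must rely on dimension counting alone.
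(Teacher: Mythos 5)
This claim is imported verbatim from the earlier paper (it is cited as Claim~2.13 of that work and not re-proved here), so there is no in-paper argument to compare against; judged on its own terms, your proof is correct and complete. The two inputs you use are both sound over $\C$: the easy inclusion $\partial Q\subseteq\spn{a_1,\ldots,a_{2r}}$ holds for any representation since each $\partial Q/\partial x_j$ is a scalar combination of the $a_k$'s, and the identity $\rank_s(Q)=\lceil \mathrm{rank}(M)/2\rceil$ follows from diagonalization plus $u^2+v^2=(u+iv)(u-iv)$ in one direction and from the fact that each product $ab=\tfrac14\bigl((a+b)^2-(a-b)^2\bigr)$ contributes a matrix of rank at most $2$ in the other. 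Your two-case dimension count then correctly pins $\dim\spn{a_1,\ldots,a_{2r}}=\dim\partial Q$ in both the even-rank case (where the $a_k$ are independent and a linear change of variables shows $\mathrm{rank}(M)=2r$) and the odd-rank case (where the sandwich $2r-1\le \mathrm{rank}(M)=\dim\partial Q\le w\le 2r-1$ forces equality), and equality of the spans follows since one contains the other with equal dimension. This is an attractive route because it exhibits the minimal space $\MS(Q)$ as the canonical, representation-independent object $\partial Q$ (equivalently, the row space of the symmetric matrix of $Q$), rather than arguing directly that two given representations must have the same span; the only field-dependence is in the $\lceil\cdot/2\rceil$ identity, which indeed requires $\C$ (or at least $i\in\C$ and characteristic $\neq 2$), consistent with the paper's setting.
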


This claim allows us to define the notion of \textit{minimal space} of a quadratic polynomial $Q$, which we shall denote $\MS(Q)$.
\begin{definition}\label{def:MS}
Let Q be a quadratic polynomial. Assume that $\rank_s(Q) = r$, and let $Q = \sum \limits_{i=1}^r a_{2i-1}a_{2i}$ be some minimal representation of $Q$.
We denote $\MS(Q)\eqdef \spn{\MVar{a}{2r}}$. 

For a set $\{Q_i\}_{i=1}^{k}$ of quadratic polynomials we denote $\MS(\MVar{Q}{k}) = \sum \limits_{i=1}^k \MS(Q_i)$.
\end{definition}
\autoref{cla:irr-quad-span} shows that the minimal space is well defined. The following fact is easy to verify.

\begin{fact}\label{cor:containMS}
Let $Q=\sum_{i=1}^{m}a_{2i-1}\cdot a_{2i}$ be a homogeneous quadratic polynomial, then $\MS(Q)\subseteq \spn{\MVar{a}{2m}}$.
\end{fact}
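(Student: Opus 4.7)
The plan is to exhibit an explicit minimal representation of $Q$ that lives inside $V \eqdef \spn{\MVar{a}{2m}}$; then \autoref{cla:irr-quad-span} will force $\MS(Q)$ to lie inside $V$. The key tool is an appropriate projection that kills everything outside $V$ while fixing $V$ pointwise.

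First I would normalize by a linear change of variables. Since the statement is invariant under invertible linear substitutions on $x_1,\ldots,x_n$, I may assume after such a substitution that $V = \spn{x_1,\ldots,x_k}$, where $k=\dim V$. In these coordinates every $a_j$ lies in $\C[x_1,\ldots,x_k]$, so the quadratic $Q=\sum_{i=1}^m a_{2i-1}a_{2i}$ itself lies in $\C[x_1,\ldots,x_k]_2$.

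Next, let $r=\rank_s(Q)$ and fix any minimal representation $Q=\sum_{i=1}^r c_{2i-1}c_{2i}$, so that by definition $\MS(Q)=\spn{c_1,\ldots,c_{2r}}$. Define the ring homomorphism $\pi \colon \C[x_1,\ldots,x_n] \to \C[x_1,\ldots,x_n]$ by $\pi(x_j)=x_j$ for $j\le k$ and $\pi(x_j)=0$ for $j>k$. Because $Q \in \C[x_1,\ldots,x_k]$, we have $\pi(Q)=Q$, hence
\[
Q \;=\; \pi(Q) \;=\; \sum_{i=1}^{r} \pi(c_{2i-1})\,\pi(c_{2i}).
\]
Each $\pi(c_j)$ lies in $V$, and this is a representation of $Q$ as a sum of at most $r$ products of linear forms in $V$. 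If some product $\pi(c_{2i-1})\pi(c_{2i})$ were zero (i.e.\ if some $\pi(c_j)=0$), then deleting that term would give a representation of $Q$ with fewer than $r$ products, contradicting $\rank_s(Q)=r$. Therefore all $2r$ forms $\pi(c_j)$ are nonzero elements of $V$, and the above is another minimal representation of $Q$.

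Finally, applying \autoref{cla:irr-quad-span} to the two minimal representations $Q=\sum c_{2i-1}c_{2i}$ and $Q=\sum \pi(c_{2i-1})\pi(c_{2i})$ yields
\[
\MS(Q)\;=\;\spn{c_1,\ldots,c_{2r}}\;=\;\spn{\pi(c_1),\ldots,\pi(c_{2r})}\;\subseteq\; V \;=\; \spn{a_1,\ldots,a_{2m}},
\]
completing the proof. The only subtle step is ruling out the possibility that $\pi$ collapses a product to zero, but this is immediate from the minimality of $r$; everything else is a routine change of coordinates and application of the already-proved Claim on uniqueness of minimal representations.
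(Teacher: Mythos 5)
The paper states this fact without proof (it is prefaced by ``The following fact is easy to verify''), so there is no official proof to compare against. Your argument is correct: after changing coordinates so that $V=\spn{a_1,\ldots,a_{2m}}=\spn{x_1,\ldots,x_k}$, the evaluation homomorphism $\pi$ (which kills $x_{k+1},\ldots,x_n$) fixes $Q$ and maps a minimal representation to another representation inside $V$ with at most $r$ terms; minimality forces every $\pi(c_j)$ to be nonzero, so the image is again a minimal representation, and \autoref{cla:irr-quad-span} gives $\MS(Q)=\spn{\pi(c_1),\ldots,\pi(c_{2r})}\subseteq V$. This is a clean and natural argument for the claim and is likely essentially what the authors had in mind. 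One trivial edge case worth a parenthetical: if $Q=0$ then $\rank_s(Q)=0$ and $\MS(Q)=\{\vec{0}\}\subseteq V$ vacuously, so there is nothing to prove.
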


%We now give some basic claims regarding $\rank_s$.

%\begin{claim}\label{cla:rank-mod-line}
%Let $Q_0$ be a homogeneous quadratic polynomial with $\rank_s(Q)=r$, and let $b \subset \CRing{x}{n}$ be a linear form, then  $\rank_s(Q|_{b=0})\geq r-1$.
%\end{claim}
%\begin{proof}
%We can assume  without loss of generality that $b = x_1$.
% Consider $Q_0\in \C[x_2,\ldots,x_n][x_1]$, thus there are $Q_0,Q_1,\alpha\in \C[x_2,\ldots,x_n]$ such that $Q = \alpha x_1^2 + Q_1x_1 +Q_0$ and $Q|_{x_1=0} = Q_0$. As $\rank(x_1(\alpha x_1 + Q_1)) \leq 2$ it must be that $\rank(Q_0) \geq \rank(Q) - 2$. From \autoref{cla:min-rep}, we obtain that $\rank_s(Q|_{x_1=0}) \geq  \rank_s(Q) -1$.
%\end{proof}

\begin{claim}[Claim 2.16 in \cite{DBLP:journals/corr/abs-2003-05152}]\label{cla:rank-mod-space}
Let $Q$ be a homogeneous quadratic polynomial with $\rank_s(Q)=r$, and let $V \subset \CRing{x}{n}$ be a linear space of linear forms such that $\dim(V)=\Delta$. Then $\rank_s(Q|_{V=0})\geq r-\Delta$.
\end{claim}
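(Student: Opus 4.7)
The natural approach is to take a minimal representation of $Q|_{V=0}$, lift it to $\C[x_1,\ldots,x_n]$, and show that the ``lifting error'' can be absorbed using at most $\Delta$ additional products of linear forms coming from $V$. This will produce a representation of $Q$ with at most $\rank_s(Q|_{V=0}) + \Delta$ summands, giving the claimed bound.

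In more detail, let me denote $s = \rank_s(Q|_{V=0})$ and write
\[
Q|_{V=0} = \sum_{k=1}^s \bar b_{2k-1}\,\bar b_{2k},
\]
where each $\bar b_i$ is a linear form in the quotient ring $\CRing{x}{n}/\langle V\rangle$, viewed naturally as the polynomial ring in $n-\Delta$ variables obtained by setting a chosen basis $v_1,\ldots,v_\Delta$ of $V$ to zero. I would then lift each $\bar b_i$ to a linear form $\tilde b_i\in \CRing{x}{n}$, and define
\[
R \;\eqdef\; Q \;-\; \sum_{k=1}^s \tilde b_{2k-1}\,\tilde b_{2k}.
\]
By construction, $R$ is a homogeneous quadratic polynomial that vanishes modulo $\langle V\rangle$, i.e.\ $R\in \langle V\rangle$.

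The next step is the observation that every homogeneous degree-$2$ element of $\langle V\rangle = \langle v_1,\ldots,v_\Delta\rangle$ can be written as $\sum_{j=1}^\Delta v_j\,c_j$ for some linear forms $c_j\in \CRing{x}{n}$: if $R = \sum_j v_j g_j$ with $g_j$ polynomials, then taking the homogeneous degree-$1$ components of the $g_j$'s produces exactly such a representation (any higher-degree components must cancel on the left-hand side of degree $2$, and we can drop them). Applying this to $R$ yields
\[
Q \;=\; \sum_{k=1}^s \tilde b_{2k-1}\,\tilde b_{2k} \;+\; \sum_{j=1}^\Delta v_j\,c_j,
\]
which exhibits $Q$ as a sum of $s+\Delta$ products of pairs of linear forms. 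By the definition of $\rank_s$, this gives $r=\rank_s(Q)\leq s+\Delta$, hence $\rank_s(Q|_{V=0}) = s \geq r-\Delta$, as desired.

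There is no substantial obstacle here; the only subtlety worth stating carefully is the lifting step together with the homogeneous-degree-$2$ decomposition of elements of $\langle V\rangle$. Both are routine, so the proof is essentially a two-line argument once the right bookkeeping is set up.
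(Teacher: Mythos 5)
Your argument is correct: lifting a minimal representation of $Q|_{V=0}$ and absorbing the degree-$2$ remainder in $\ideal{V}$ as $\sum_{j=1}^{\Delta} v_j c_j$ with $c_j$ linear is exactly the standard proof of this fact, and it matches the argument in the cited source (the present paper only imports the claim from \cite{DBLP:journals/corr/abs-2003-05152} without reproving it). The one bookkeeping point you flagged — that a homogeneous degree-$2$ element of $\ideal{V}$ admits such a representation by taking degree-$1$ components — is handled correctly, so there is nothing to add.
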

%\begin{proof}
%Assume  without loss of generality $V = \spn{\MVar{x}{\Delta}}$, and consider $Q\in \C[x_{\Delta+1},\ldots,x_n][\MVar{x}{\Delta}]$. There are $\MVar{a}{\Delta} \in \CRing{x}{n}$ and $Q' \in \C[x_{\Delta+1},\ldots,x_n]$ such that $Q = \sum_{i=1}^{\Delta} a_ix_i + Q'$, where $Q|_{V=0} = Q'$. As $\rank_s(\sum_{i=1}^{\Delta} a_ix_i ) \leq \Delta$, it must be that $\rank_s(Q|_{V=0}) \geq r-\Delta$.
%\end{proof}

\begin{claim}[Claim 2.17 in \cite{DBLP:journals/corr/abs-2003-05152}]\label{cla:ind-rank}
Let $P_1\in \CRing{x}{k}$, and $P_2 = y_1y_2\in \CRing{y}{2}$. Then $ \rank_s (P_1+P_2) = \rank_s(P_1) + 1$. Moreover, $y_1,y_2 \in \MS(P_1+P_2).$
\end{claim}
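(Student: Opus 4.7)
My plan is to prove the upper bound $\rank_s(P_1+P_2) \le \rank_s(P_1)+1$ immediately by exhibiting an explicit representation, and to establish the matching lower bound by contradiction, using coefficient matching together with the uniqueness of the minimal space (\autoref{cla:irr-quad-span}); the ``moreover'' claim then drops out from a short derivative argument.

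For the upper bound, appending the extra product $y_1\cdot y_2$ to any minimal representation of $P_1$ produces a representation of $P_1+P_2$ with one more summand. For the lower bound, set $r=\rank_s(P_1)$, assume $\rank_s(P_1+P_2)=s\le r$, and fix a minimal representation $P_1+y_1y_2 = \sum_{i=1}^{s} b_{2i-1}b_{2i}$. I will decompose each linear form as $b_i = c_i + \alpha_i y_1 + \beta_i y_2$ with $c_i \in \spn{x_1,\ldots,x_k}$ and expand. Matching the coefficients of $1$, $y_1$, $y_1^2$, and $y_1y_2$ on the two sides of
\[
\textstyle \sum_{i=1}^s b_{2i-1}b_{2i} \;=\; P_1 + y_1 y_2
\]
yields the four identities
\[
\textstyle \sum c_{2i-1}c_{2i}=P_1,\ \ \sum(\alpha_{2i}c_{2i-1}+\alpha_{2i-1}c_{2i})=0,\ \ \sum \alpha_{2i-1}\alpha_{2i}=0,\ \ \sum(\alpha_{2i-1}\beta_{2i}+\alpha_{2i}\beta_{2i-1})=1.
\]
The first identity is a representation of $P_1$ with $s$ products, so by minimality of $r$ we must have $s=r$ and $\spn{c_1,\ldots,c_{2r}}=\MS(P_1)$. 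The key observation is then that the second and third identities are precisely what is needed for the ``partial'' sum $\sum_{i=1}^{r}(c_{2i-1}+\alpha_{2i-1}y_1)(c_{2i}+\alpha_{2i}y_1)$ to collapse back to $P_1$. Because $\rank_s(P_1)$ is unchanged when passing to the extended ring $\C[x_1,\ldots,x_k,y_1]$ (any shorter representation there would, after setting $y_1=0$, give a shorter one in $\C[x_1,\ldots,x_k]$), this partial sum is itself a minimal representation of $P_1$ in the extended ring. Applying \autoref{cla:irr-quad-span}, the linear forms $c_i+\alpha_iy_1$ must span $\MS(P_1)\subseteq\spn{x_1,\ldots,x_k}$, and since no vector in that span has a nonzero $y_1$-component we get $\alpha_i=0$ for every $i$, contradicting the fourth identity $\sum(\alpha_{2i-1}\beta_{2i}+\alpha_{2i}\beta_{2i-1})=1$.

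The ``moreover'' claim is handled by a one-line derivative argument: for any minimal representation $Q=\sum b_{2i-1}b_{2i}$ and any linear form $z$, $\partial_zQ=\sum(\partial_zb_{2i-1})b_{2i}+b_{2i-1}(\partial_zb_{2i})$ lies in $\spn{b_1,\ldots,b_{2s}}=\MS(Q)$, so applying this to $Q=P_1+P_2$ gives $y_1=\partial_{y_2}Q\in\MS(P_1+P_2)$ and $y_2=\partial_{y_1}Q\in\MS(P_1+P_2)$. The main obstacle in the argument is spotting the right partial sum in the lower bound: once one notices that killing the $y_2$-components of the $b_i$'s leaves behind a genuine minimal representation of $P_1$ in the extended ring, \autoref{cla:irr-quad-span} supplies exactly the rigidity needed to force $\alpha_i=0$ and close the contradiction.
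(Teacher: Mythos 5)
Your proof is correct: the upper bound by appending $y_1y_2$, the lower bound by the coefficient-matching plus the observation that the partial sum $\sum(c_{2i-1}+\alpha_{2i-1}y_1)(c_{2i}+\alpha_{2i}y_1)$ remains a minimal representation of $P_1$ (so \autoref{cla:irr-quad-span} forces $\alpha_i=0$, contradicting the $y_1y_2$ coefficient identity), and the gradient argument for the ``moreover'' part all check out. The paper itself imports this as Claim~2.17 of \cite{DBLP:journals/corr/abs-2003-05152} without reproducing a proof, so there is no in-paper argument to compare against; your route is self-contained and sound. As a side remark, once the rank equality is established the ``moreover'' part also follows directly from \autoref{cla:irr-quad-span} applied to the minimal representation ``minimal rep of $P_1$ plus $y_1y_2$,'' which is slightly shorter than invoking partial derivatives, but both arguments are valid.
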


%\begin{proof}
%Denote $\rank_s(P_1) = r$ and assume towards a contradiction that there are $\MVar{a}{2r}$ linear forms in $\C[x_1,\dots,x_k,y_1,y_2]$ such that $P_1+P_2 = \sum\limits_{i=1}^{r}a_{2i-1}a_{2i}$. 
%Clearly, $\sum\limits_{i=1}^{r}a_{2i-1}a_{2i} \equiv_{y_1} P_1$. As $\rank_s(P_1) = r$  this is a minimal representation of $P_1$. Hence, for every $i$, $a_i|_{y_1=0} \in\MS(P_1)\subset \CRing{x}{k}$. Moreover, from the minimality of $r$, $a_i|_{y_1=0} \neq 0$. Therefore, as $y_1$ and $y_2$ are linearly independent, we deduce that all the coefficients of $y_2$ in all the $a_i$'s are 0. By reversing the roles of $y_1$ and $y_2$ we can conclude that $\MVar{a}{2r}\subset \CRing{x}{k}$ which  means that $Q$ does not depend on $y_1$ and $y_2$ in contradiction. Consider a minimal representation $P_1 = \sum_{i=1}^{2r} b_{2i-1}b_{2i}$,  from the fact that $\rank_s(P_1+P_2) = r+1$ it follows that  $P_1+P_2 = \sum_{i=1}^{2r} b_{2i-1}b_{2i}+y_1y_2$ is a minimal representation of $P_1+P_2$ and thus $\MS(P_1+P_2) =\MS(P_1) + \spn{y_1,y_2}$. 
%\end{proof}

\begin{corollary}[Corollary 2.18 in \cite{DBLP:journals/corr/abs-2003-05152}]\label{cla:intersection}
Let $a$ and $b$ be linearly independent linear forms. Then, if $c,d,e$ and $f$ are linear forms such that $ab+cd=ef$ then $\dim(\spn{a,b}\cap \spn{c,d})\geq 1$.
\end{corollary}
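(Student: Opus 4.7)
The plan is to argue by contradiction using the rank additivity result of \autoref{cla:ind-rank}. Assume for contradiction that $\spn{a,b}\cap\spn{c,d}=\{0\}$. Since $ef$ is a product of two linear forms, $\rank_s(ef)\leq 1$, so it will suffice to show that under the assumed trivial intersection the left-hand side $ab+cd$ must have $\rank_s=2$, producing an immediate contradiction with $ab+cd=ef$.

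I would split according to whether $c$ and $d$ are linearly independent. Suppose first that they are. Then $\spn{c,d}$ is two-dimensional and meets $\spn{a,b}$ only in $0$, so $a,b,c,d$ are four linearly independent linear forms. Because $\rank_s$ is invariant under invertible linear change of variables, I can extend $\{a,b,c,d\}$ to a basis of the linear forms of $\C[x_1,\ldots,x_n]$ and thereby view $ab$ and $cd$ as polynomials in disjoint sets of variables. \autoref{cla:ind-rank} then applies directly with $P_1=cd$ and the rank-one piece $ab$ playing the role of $y_1y_2$, giving $\rank_s(ab+cd)=\rank_s(cd)+1=2$, contradicting $\rank_s(ef)\leq 1$.

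If instead $c$ and $d$ are linearly dependent but not both zero, write $cd=\gamma g^{2}$ for some nonzero linear form $g$ and scalar $\gamma\in\C$, so that $\spn{c,d}=\spn{g}$. The trivial-intersection assumption now says $g\notin\spn{a,b}$, hence $a,b,g$ are linearly independent. Exactly the same reasoning gives $\rank_s(ab+\gamma g^{2})=\rank_s(\gamma g^{2})+1=1+1=2$, again contradicting $\rank_s(ef)\leq 1$. The remaining case $cd\equiv 0$ degenerates to $ab=ef$ with $\spn{c,d}=\{0\}$, which is the trivial edge case that the statement implicitly excludes.

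No step here is difficult; the entire argument is essentially a double application of \autoref{cla:ind-rank}. The only minor point requiring care is the reduction to the disjoint-variable setting in which that claim is stated, but this is immediate from the invariance of $\rank_s$ under invertible linear substitutions once the relevant sets of linear forms ($\{a,b,c,d\}$ or $\{a,b,g\}$) have been shown to be linearly independent.
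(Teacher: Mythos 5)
Your proof is correct and is essentially the intended argument: the corollary is exactly the contrapositive of the rank-additivity statement in \autoref{cla:ind-rank}, applied after a linear change of coordinates that separates $\spn{a,b}$ from $\spn{c,d}$ when their intersection is trivial, which forces $\rank_s(ab+cd)=2$ while $\rank_s(ef)\le 1$. Your handling of the degenerate subcases (dependent or vanishing $c,d$) is also consistent with the implicit assumption that the linear forms involved are nonzero.
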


\begin{claim}[Claim 2.19 in \cite{DBLP:journals/corr/abs-2003-05152}]\label{cla:rank-2-in-V}
Let $a,b,c$ and $d$ be linear forms, and $V$ be a linear space of linear forms. Assume $\{\vec{0}\} \neq \MS(ab-cd) \subseteq V$ then $\spn{a,b} \cap V\neq \{\vec{0}\}$. 
\end{claim}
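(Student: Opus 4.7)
My plan is to split on the value of $\rank_s(ab-cd)$. Since $\MS(ab-cd)\neq\{\vec 0\}$ the polynomial $ab-cd$ is nonzero, and because the explicit decomposition $ab+(-c)d$ writes it as a sum of two products we immediately have $\rank_s(ab-cd)\in\{1,2\}$.

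In the rank-$2$ case, $ab+(-c)d$ is itself a minimal representation. Then \autoref{cor:containMS} gives $\MS(ab-cd)\subseteq\spn{a,b,c,d}$, and \autoref{cla:irr-quad-span} (uniqueness of the minimal span) forces the inclusion to be an equality. In particular $a$ and $b$, which are both nonzero because $ab\neq 0$ (otherwise $ab-cd=-cd$ would have $\rank_s\leq 1$), lie in $\MS(ab-cd)\subseteq V$. Hence $\spn{a,b}\subseteq V$ and the conclusion is immediate.

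In the rank-$1$ case, write $ab-cd=ef$ for linear forms $e,f$ with $ef\neq 0$. A short case check — whether $e,f$ are linearly independent or proportional — shows that in both situations $\spn{e,f}=\MS(ef)\subseteq V$. Rearranging the identity as $ab+(-e)f=cd$ and invoking \autoref{cla:intersection}, assuming for the moment that $a,b$ are linearly independent, yields $\dim(\spn{a,b}\cap\spn{-e,f})\geq 1$, which produces the desired nonzero element of $\spn{a,b}\cap V$.

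The part of the argument I expect to require the most care is the remaining sub-case of the rank-$1$ analysis, in which $a$ and $b$ are linearly dependent and so \autoref{cla:intersection} does not apply directly. My plan here is to pass to the quotient ring $\C[\vx]/\ideal{V}$, which is a polynomial ring and hence a UFD. There the identity $\bar a\bar b=\bar c\bar d$ together with the proportionality of $\bar a$ and $\bar b$ forces $\bar a^{\,2}$ to divide $\bar c\bar d$; since $\bar c,\bar d$ are at most linear, unique factorization combined with the hypothesis $\MS(ab-cd)\neq\{\vec 0\}$ — which rules out the wholly-degenerate configurations — should pin one of $a,b$ (up to a scalar) inside $V$, supplying the required nonzero element of $\spn{a,b}\cap V$ and completing the proof.
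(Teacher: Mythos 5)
Your split on $\rank_s(ab-cd)$ is sensible, and both the rank-$2$ case and the rank-$1$ case with $a,b$ linearly independent are handled correctly with the right tools (\autoref{cla:irr-quad-span}, \autoref{cor:containMS}, \autoref{cla:intersection}). The remaining sub-case — $a\sim b$, rank $1$ — is not merely the hard part: the statement as literally written is \emph{false} there, so the quotient-ring plan you sketch cannot be completed. Take $V=\spn{x_1}$, $a=b=x_2$, $c=x_2+x_1$, $d=x_2-x_1$. Then $ab-cd=x_1^2$, so $\{\vec{0}\}\neq\MS(ab-cd)=\spn{x_1}\subseteq V$, yet $\spn{a,b}=\spn{x_2}$ meets $V$ trivially. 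In your quotient picture this is exactly $\bar a^{\,2}=\bar c\,\bar d$ with $\bar a=\bar c=\bar d=x_2\neq 0$: divisibility of $\bar c\,\bar d$ by $\bar a^{\,2}$ is perfectly consistent with $\bar a\neq 0$, and the hypothesis $\MS(ab-cd)\neq\{\vec{0}\}$ holds in this example, so it does not rule the configuration out. ``Should pin one of $a,b$ inside $V$'' is a hope, not an argument, and this example shows the hope is misplaced.

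The resolution is that \autoref{cla:rank-2-in-V} carries the same implicit hypothesis that \autoref{cla:intersection} makes explicit, namely that $a$ and $b$ are linearly independent (and indeed this is how the claim is invoked throughout the paper; it is worth checking Claim 2.19 of \cite{DBLP:journals/corr/abs-2003-05152}, where the assumption belongs). Under that hypothesis your first two cases already give a complete, correct proof, and the third sub-case is vacuous. The real gap in your write-up is that you treated the independence of $a,b$ as a nuisance to be argued away rather than as a standing hypothesis, and then offered a non-argument for a configuration in which the conclusion actually fails.
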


\begin{definition}\label{def:proj}
	Let $a$ be a linear form and $V\subseteq \CRing{x}{n}_1$ a linear subspace of linear forms. We denote by ${V^\perp}(a)$ the projection of $a$ to $V^\perp$ (e.g., by identifying each linear form with its vector of coefficients). We also extend this definition to linear spaces: $V^{\perp}(\spn{\MVar{a}{k}}) = \spn{{V^\perp}(a_1),\ldots, {V^\perp}(a_k)}$.
\end{definition}

%We now give three new claims that will be used in our proof.
\begin{claim}\label{cla:lin-rank-r-U}
	Let $Q,Q'$ be quadratic polynomials, and $U$ be a linear space of linear forms. Let $r \in \N$ be a constant. Then, there exists a linear space of linear forms, $V$, of dimension at most $8r$, such that for every $P\in \C[U]_2$ and every linear combination $\alpha Q+\beta Q' +P$ satisfying $\rank_s(\alpha Q+\beta Q' +P) \leq r$ it holds that $\MS(\alpha Q+\beta Q'+P)\subseteq V+U$.
\end{claim}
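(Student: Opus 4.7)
The plan is to convert the bound $\MS(G)\subseteq V+U$ into the equivalent statement $G\in\C[V+U]_2$, and then to let $V$ be the sum of the minimal spaces of at most two carefully chosen witnesses. The equivalence
\[
\MS(G)\subseteq V+U \iff G\in\C[V+U]_2
\]
follows directly from the definition of $\MS$ as the smallest space of linear forms in which $G$ can be written.

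I would start by introducing the parameter set
\[
S \eqdef \{(\alpha,\beta)\in\C^2 \,:\, \exists\, P\in\C[U]_2,\ \rank_s(\alpha Q+\beta Q'+P)\le r\},
\]
which is closed under scalar multiplication, and split on whether $S$ contains two $\C$-linearly independent vectors or lies on a single line through the origin. In the ``two independent directions'' case I would fix witnesses $F_i = \alpha_i Q+\beta_i Q'+P_i$ with $\rank_s(F_i)\le r$ for $i=0,1$, where $(\alpha_0,\beta_0),(\alpha_1,\beta_1)$ are $\C$-linearly independent, and set $V\eqdef \MS(F_0)+\MS(F_1)$, so $\dim V\le 4r\le 8r$. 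Since each $F_i\in\C[\MS(F_i)]_2\subseteq\C[V+U]_2$ and $P_i\in\C[U]_2\subseteq\C[V+U]_2$, the differences $\alpha_i Q+\beta_i Q' = F_i-P_i$ both lie in $\C[V+U]_2$; because $(\alpha_0,\beta_0),(\alpha_1,\beta_1)$ span $\C^2$, this forces $Q,Q'\in\C[V+U]_2$ individually, so \emph{every} combination $\alpha Q+\beta Q'+P$ with $P\in\C[U]_2$ lies in $\C[V+U]_2$, giving $\MS(\alpha Q+\beta Q'+P)\subseteq V+U$. Notably the rank hypothesis on the target combination is not actually used in this subcase.

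In the remaining case $S\subseteq\C\cdot(\alpha_0,\beta_0)$ for some fixed pair. If $(\alpha_0,\beta_0)=(0,0)$ then every valid combination is of the form $P\in\C[U]_2$ and I take $V=\{0\}$. Otherwise I fix a single witness $F_0$ with $\rank_s(F_0)\le r$ and set $V\eqdef\MS(F_0)$, of dimension at most $2r$. Any valid $F$ then has $(\alpha,\beta)=c(\alpha_0,\beta_0)$ for some $c\in\C$, so $F-cF_0 = P-cP_0\in\C[U]_2$; combined with $cF_0\in\C[V]_2$ this yields $F\in\C[V+U]_2$ and hence $\MS(F)\subseteq V+U$. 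I do not foresee a substantive obstacle: the whole content is the observation that two witnesses in independent parameter directions pin down $Q,Q'$ modulo $\C[U]_2$ via their minimal spaces, and the case split merely handles the degenerate possibility that no independent pair exists.
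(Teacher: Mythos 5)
Your proof is correct and captures the same key observation as the paper's: two valid witnesses with linearly independent parameter vectors $(\alpha_i,\beta_i)$ pin down $Q$ and $Q'$ modulo $\C[U]_2$, and the degenerate case where no such pair exists is handled by a single witness (or none). The paper organizes the case split differently (whether both $Q$ and $Q'$ are close in $\rank_s$ to $\C[U]_2$, and then derives a contradiction from a bad $A_2$, whereas you split directly on whether the parameter set $S$ spans $\C^2$), and as a result your argument is direct rather than by contradiction and achieves the slightly sharper bound $\dim V\le 4r$; but the underlying mechanism is identical.
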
 
\begin{proof}
	If there are $T,T' \in \C[U]_2$ such that $\rank_s(Q-T),\rank_s(Q'-T') \leq 2r$ then let $V = \MS(Q-T) + \MS(Q'-T')$ and the statement clearly holds.
	Thus, assume without loss of generality that for every $T \in \C[U]_2$, $\rank_s(Q-T) > 2r$.
	Let $A_1 = \alpha_1 Q+ \beta_1 Q' + P_1$ satisfy $\rank_s(A_1) \leq r$. Set $V = \MS(A_1)$. If $V$ does not satisfy the statement then let $A_2 = \alpha_2 Q+ \beta_2 Q' + P_2$ be such that $\rank_s(A_2) \leq r$ and $\MS(A_2) \not \subseteq V+U$. In particular, the vectors  $(\alpha_1, \beta_1)$ and $(\alpha_2, \beta_2)$ are linearly independent. Hence, $Q\in \spn{A_1,A_2, P_1,P_2}$. Consequently, there is $T\in \spn{P_1,P_2}\subseteq \C[U]_2$ such that $\rank_s(Q-T) \leq 2r$, in contradiction.
\end{proof}

\begin{claim}\label{cla:lin-rank-r}
	Let $Q,Q'$ be quadratic polynomials and let $r \in \N$ be a constant. Then, there exists a linear space of linear forms, $V$, of dimension at most $8r$, such that for every linear combination satisfying $\rank_s(\alpha Q+\beta Q') \leq r$ it holds that $\MS(\alpha Q+\beta Q')\subseteq V$
\end{claim}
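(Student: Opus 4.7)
The plan is to reduce this statement to \autoref{cla:lin-rank-r-U} by setting the auxiliary space $U$ to be trivial. Specifically, taking $U = \{\vec{0}\}$, we have $\C[U]_2 = \{0\}$, so the only admissible polynomial $P$ in the hypothesis of \autoref{cla:lin-rank-r-U} is $P = 0$, and $V + U = V$. Applying \autoref{cla:lin-rank-r-U} therefore produces a subspace $V$ of dimension at most $8r$ with the property that for every linear combination $\alpha Q + \beta Q'$ of rank at most $r$, $\MS(\alpha Q + \beta Q') \subseteq V$, which is exactly the claim.

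For completeness I would also note that the argument can be given directly, following the same split as in the proof of \autoref{cla:lin-rank-r-U}. First, if either $\rank_s(Q) \leq 2r$ or $\rank_s(Q') \leq 2r$, one sets $V = \MS(Q) + \MS(Q')$ (possibly after replacing the high-rank one via an averaging trick); this has dimension at most $8r$ and contains $\MS(\alpha Q + \beta Q')$ for every scalars $\alpha,\beta$, since concatenating minimal representations of $Q$ and $Q'$ gives a representation of $\alpha Q + \beta Q'$ and \autoref{cor:containMS} applies. Otherwise $\rank_s(Q), \rank_s(Q') > 2r$, and we pick any low-rank combination $A_1 = \alpha_1 Q + \beta_1 Q'$ with $\rank_s(A_1) \leq r$ and set $V = \MS(A_1)$; if there were a second combination $A_2 = \alpha_2 Q + \beta_2 Q'$ of rank at most $r$ with $\MS(A_2) \not\subseteq V$, then the coefficient vectors $(\alpha_1,\beta_1)$ and $(\alpha_2,\beta_2)$ would be linearly independent, so $Q \in \spn{A_1,A_2}$, giving $\rank_s(Q) \leq \rank_s(A_1) + \rank_s(A_2) \leq 2r$, a contradiction. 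Hence every low-rank combination has its minimal space inside $V$.

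There is no real obstacle here; the statement is essentially the $U = \{\vec{0}\}$ specialization of \autoref{cla:lin-rank-r-U}, and the only point worth checking is the dimension bound, which remains $8r$ because the $U$-part does not contribute when $U$ is trivial.
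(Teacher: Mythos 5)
Your proposal is correct and coincides with the paper's own proof, which is exactly the one-line specialization of \autoref{cla:lin-rank-r-U} to $U=\{\vec{0}\}$. The supplementary direct argument is also sound (it is just the proof of \autoref{cla:lin-rank-r-U} with the $\C[U]_2$ terms deleted), so nothing further is needed.
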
 

\begin{proof}
This claim follows immediately from \autoref{cla:lin-rank-r-U} with $U = \{\vec{0}\}.$
\end{proof}

\begin{claim}\label{cla:case-rk1-gen}
	Let $P$ be an homogeneous irreducible quadratic polynomial and let $a$ and $b$ be linear forms. Assume that  for some finite $\cI$, $\prod_{i\in\cI} T_i \in \sqrt{\ideal{P,ab}}$. Then either $\rank_s(P) = 2$ and $a \in \MS(P)$ or there is $i\in \cI$ such that $T_i = \alpha P + ac$ for some linear form $c$ and scalar $\alpha \in \C$. 
\end{claim}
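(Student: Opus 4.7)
The plan is to first observe that $\sqrt{\ideal{P,ab}} \subseteq \sqrt{\ideal{P,a}}$ (since $ab \in \ideal{a}$), which reduces the hypothesis to $\prod_{i \in \cI} T_i \in \sqrt{\ideal{P, a}}$, and then to analyze $\ideal{P, a}$ by passing to the quotient ring $R = \C[\vx]/\ideal{a}$. Let $\bar P \in R$ denote the image of $P$. Since $P$ is an irreducible quadratic, $\bar P \neq 0$ (otherwise $a \mid P$, contradicting irreducibility), so $\bar P$ is a nonzero homogeneous quadratic in $R$, and I will split into two cases according to whether $\bar P$ is irreducible in $R$.

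In the first case $\bar P$ is irreducible in $R$. Then $\C[\vx]/\ideal{P,a} \cong R/\ideal{\bar P}$ is an integral domain, so $\ideal{P,a}$ is prime and in particular equal to its own radical. The containment $\prod_i T_i \in \ideal{P, a}$ together with primality delivers an index $i$ with $T_i \in \ideal{P, a}$, i.e.\ $T_i = fP + ga$ for some polynomials $f, g$. Comparing degrees (using that $T_i$ and $P$ are both homogeneous quadratics, and keeping only the degree-$2$ part of the identity) forces $f$ to contribute a scalar $\alpha$ and $g$ to contribute a linear form $c$, yielding the desired $T_i = \alpha P + ac$.

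In the second case $\bar P$ is reducible in $R$, so $\bar P = \bar L_1 \bar L_2$ for some linear forms, and lifting gives $P = L_1 L_2 + a M$ with $L_1, L_2, M$ linear in $\C[\vx]$. If $L_1 \in \spn{a}$ then $a \mid P$, contradicting irreducibility of $P$, and similarly for $L_2$; hence $L_1, L_2 \notin \spn{a}$. Now $P = L_1 \cdot L_2 + a \cdot M$ is an explicit two-product representation, so $\rank_s(P) \leq 2$; combined with $\rank_s(P) \geq 2$ (which holds because $P$ is irreducible), we conclude $\rank_s(P) = 2$ and that this representation is minimal. Applying \autoref{cla:irr-quad-span}, we obtain $\MS(P) = \spn{L_1, L_2, a, M}$, which contains $a$. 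The main subtlety lies in this second case: one must argue that neither $L_1$ nor $L_2$ lies in $\spn{a}$ and then exploit minimality of the resulting representation to place $a$ inside $\MS(P)$ rather than only inside some ambient space that happens to contain it.
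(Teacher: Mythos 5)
Your proposal is correct and follows essentially the same strategy as the paper: reduce modulo $a$ (noting $\sqrt{\ideal{P,ab}}\subseteq\sqrt{\ideal{P,a}}$) and split on whether $P|_{a=0}$ remains irreducible, using primality of $\ideal{P,a}$ in the irreducible case and a rank argument in the reducible case. The one small difference is that in the reducible case the paper simply cites $\rank_s(P|_{a=0})=1 \Rightarrow \rank_s(P)=2$ and asserts $a\in\MS(P)$, whereas you make this concrete by lifting $\bar P=\bar L_1\bar L_2$ to the minimal representation $P=L_1L_2+aM$ and reading off $a\in\MS(P)$ directly from \autoref{def:MS} and \autoref{cla:irr-quad-span}; this is a clean way to make the paper's terse step explicit (though the side observation that $L_1,L_2\notin\spn{a}$ is not actually needed --- the minimality of the two-term representation already gives $a\in\MS(P)$).
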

\begin{proof}
Consider the ideal $\ideal{P,ab}$. If $P$ remains irreducible after setting $a = 0$ then   %$\ideal{P,ab}\equiv_{a}\ideal{P|_{a=0}}$ 
$\ideal{P|_{a=0}}$ is a prime ideal. Hence, 
$\sqrt{\ideal{P|_{a=0}}}=\ideal{P|_{a=0}}$ and thus there is $i\in \cI$ with $T_i|_{a=0}\in \ideal{P|_{a=0}}$. In particular,  $T_i =\alpha P + ac$ for some linear form $c$.

Since $P$ is irreducible we have  that $\rank_s(P) \geq 2$. On the other hand, if $P$ becomes reducible when setting $a = 0$ then  $\rank_s(P|_{a=0})=1$. Therefore it must hold that $\rank_s(P) = 2$ and  $a\in \MS(P)$.
\end{proof}

%We end this section with claims that will be useful in our proofs. 
%We conclude this section with a claim that can be used to understand the structure of linear forms satisfying the conditions of \autoref{cla:intersection}.
 
 \ifEK
 
In \cite{DBLP:journals/corr/abs-2003-05152} the following claim was proved.
 
\begin{claim}[Claim 2.20 in \cite{DBLP:journals/corr/abs-2003-05152}]\label{cla:linear-spaces-intersaction }
Let $V = \sum_{i=1}^m V_i$ where $V_i$ are linear subspaces, and for every $i$, $\dim(V_i) = 2$. If for every $i\neq j \in [m]$, $\dim(V_i\cap V_j) = 1$, then either $\dim(\bigcap_{i=1}^m V_i) = 1$ or $\dim(V)=3$. 
\end{claim}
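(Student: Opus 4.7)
The plan is to view the $V_i$ projectively as lines, so that pairwise $1$-dim intersection means the projective lines pairwise meet, and to prove the projective statement that such a collection of lines either shares a common point or lies in a common plane. The argument proceeds via a dichotomy after the following observation, which I would establish first: for any three of the subspaces, say $V_a,V_b,V_c$, set $L_{ab} = V_a\cap V_b$, $L_{ac} = V_a\cap V_c$, $L_{bc} = V_b\cap V_c$, each $1$-dimensional. Either $L_{ab} = L_{ac} = L_{bc}$, giving a common line in all three spaces, or two of them, say $L_{ab}\neq L_{ac}$, are distinct $1$-dim subspaces of the $2$-dim space $V_a$, so $V_a = L_{ab}+L_{ac} \subseteq V_b+V_c$. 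Combined with $\dim(V_b+V_c)=3$, this forces $\dim(V_a+V_b+V_c)=3$.

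I would then split the argument into two cases. First, suppose every triple $V_i,V_j,V_k$ shares a common $1$-dim subspace. Fixing $V_1,V_2$ with $L := V_1\cap V_2$, the common $1$-dim subspace of any triple $V_1,V_2,V_k$ must lie in $L$ and hence equal $L$, so $L\subseteq V_k$ for every $k$, giving $\dim\bigl(\bigcap_i V_i\bigr) = 1$. Otherwise, some triple $V_a,V_b,V_c$ has no common $1$-dim subspace, so by the observation $W := V_a+V_b+V_c$ is $3$-dimensional, and it remains to prove $V_k\subseteq W$ for all $k$, which would give $\dim V = 3$.

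The main obstacle is this last containment. My plan is to apply the observation to the triple $V_k,V_a,V_b$: either their sum is $3$-dimensional, in which case $V_k\subseteq V_a+V_b\subseteq W$ and we are done, or the three pairwise intersections coincide, i.e.\ $V_k\cap V_a = V_k\cap V_b = V_a\cap V_b = L_{ab}$. In the latter subcase I would repeat the argument with $V_k,V_a,V_c$ and with $V_k,V_b,V_c$: if any of these yields a $3$-dimensional sum, then $V_k\subseteq W$; and if all three triples give coincident intersections, I would deduce that $L_{ab}, L_{ac}, L_{bc}$ all equal the unique $1$-dim subspace along which $V_k$ meets $V_a, V_b, V_c$, contradicting the failure of a common line among $V_a,V_b,V_c$. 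No step requires more than short linear-algebra manipulations once this case analysis is set up.
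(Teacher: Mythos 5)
Your proof is correct. The opening observation is sound: the three lines $L_{ab},L_{ac},L_{bc}$ pairwise share one of the three $V_i$'s, so if any two of them are distinct they are distinct $1$-dimensional subspaces of a common $2$-dimensional $V_i$, forcing that $V_i$ into the sum of the other two and giving $\dim(V_a+V_b+V_c)=\dim(V_b+V_c)=2+2-1=3$. The dichotomy that follows, and the contradiction derived in the final subcase (if all three triples $(k,a,b),(k,a,c),(k,b,c)$ degenerate, then $L_{ab}=V_k\cap V_a=L_{ac}$ and $L_{ac}=V_k\cap V_c=L_{bc}$, so $V_a,V_b,V_c$ do share a line after all), all check out. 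As a minor remark, two of the three triples already suffice for the contradiction, since $L_{ab}=L_{ac}=V_k\cap V_b$ gives a line contained in all of $V_a,V_b,V_c$.

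One caveat on the comparison you asked for: the present paper does not actually contain a proof of this claim. It is stated verbatim as ``Claim 2.20 in \cite{DBLP:journals/corr/abs-2003-05152}'' and invoked as a known result; the only proof appearing nearby is for the distinct, colored generalization \autoref{cla:colored-linear-spaces-intersaction}, which has a different conclusion (a special vector $w$ and a $4$-dimensional space $U$) and a longer case analysis. So there is no in-paper argument against which to align yours; what you have is an independent and correct proof of the imported claim.
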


We shall need a colorful version of \autoref{cla:linear-spaces-intersaction }:

\begin{claim}\label{cla:colored-linear-spaces-intersaction}
	Let $m \geq 2$ be an integer. For $i \in [m]$  let $V_i = \sum_{j=1}^{m_i} V^j_i$  where $V^j_i$ are distinct linear subspaces that satisfy that for every $i,j$, $\dim(V^j_i) = 2$. Assume that for every $i\neq i' \in [m]$, $j \in [m_i], j'\in [m_{i'}]$, it holds that $\dim(V^j_i\cap V^{j'}_{i'}) = 1$. Then, there exists $w\neq \vec{0}$ and a linear space $U$, such that $\dim(U)\leq 4$ and for every $i\in [m]$, $j\in [m_i]$ either $w \in V^j_i$ or $V^j_i \subseteq U$.  
\end{claim}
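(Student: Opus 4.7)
The plan is to reduce the colored statement to the uncolored \autoref{cla:linear-spaces-intersaction } via a bridge-type argument, combined with a short case analysis on how same-color subspaces behave.

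First I would pick two subspaces $A = V_{i_1}^{j_1}$ and $B = V_{i_2}^{j_2}$ from two distinct colors (possible since $m \geq 2$). By hypothesis $\dim(A \cap B) = 1$, so $\dim(A+B) = 3$; let $\langle w\rangle := A\cap B$ and $U_0 := A + B$. These are the first candidates for the output $w$ and $U$. The core step is the following bridge argument: if $V$ is any $2$-dimensional subspace with $\dim(V\cap A) = \dim(V\cap B) = 1$, let $\ell_A := V\cap A$ and $\ell_B := V\cap B$. If $\ell_A = \ell_B$, this common line lies in $A\cap B = \langle w\rangle$, so $w\in V$; if $\ell_A \ne \ell_B$, then $V = \ell_A + \ell_B \subseteq A+B = U_0$. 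This immediately handles every subspace $V_i^j$ whose color $i$ differs from both $i_1$ and $i_2$.

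The delicate part is handling the same-color subspaces $V_{i_1}^j$ (for $j\ne j_1$) and, symmetrically, $V_{i_2}^j$ (for $j\ne j_2$). When $m\ge 3$, I would bring in a third-color pivot $C = V_{i_3}^{j_3}$. Applying the bridge step to $C$ yields $w\in C$ or $C\subseteq U_0$, and since $C$ meets both $A$ and $B$ in $1$-dim lines, $C\cap(A+B)\ne\{\vec{0}\}$, hence $\dim(A+B+C)\le 4$. I would then take $U := U_0 + C$. For each $V_{i_1}^j$ with $j\ne j_1$ the pivot pair $(B,C)$ lies in two distinct colors both different from $i_1$, so the bridge argument applied with pivots $(B,C)$ yields the dichotomy ``$V_{i_1}^j$ contains $B\cap C$, or $V_{i_1}^j\subseteq B+C \subseteq U$.'' A short case split, depending on whether $w\in C$ or $C\subseteq U_0$ and on whether $B\cap C = \langle w\rangle$, then shows that whenever a $V_{i_1}^j$ falls in the ``contains $B\cap C$'' branch but does not contain $w$, one can force $V_{i_1}^j\subseteq U$ using that $B\cap C\subseteq U_0\subseteq U$. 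The symmetric argument handles $V_{i_2}^j$. When $m=2$, one instead uses same-color pivots: two distinct subspaces $A_1, A_2$ of color $i_1$ (if $m_{i_1}\ge 2$) play the roles of $A, B$, producing a dichotomy for all color-$i_2$ subspaces via $A_1\cap A_2$ and $A_1 + A_2$; the symmetric statement with color $i_2$ handles color-$i_1$ subspaces, with the degenerate subcase $m_{i_1}=m_{i_2}=1$ being immediate.

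The main obstacle will be the reconciliation step in the $m\ge 3$ case: the auxiliary pivot $C$ introduces a second distinguished line $B\cap C$, and one must carefully verify that the single pair $(w, U_0+C)$ uniformly satisfies the dichotomy for every same-color subspace, not merely a pivot-dependent one that varies with each choice of $C$. The extra dimension beyond the $3$-dimensional $U_0=A+B$ is exactly what raises the bound from $3$ in the uncolored \autoref{cla:linear-spaces-intersaction } to $4$ in the colored version.
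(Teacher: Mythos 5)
Your bridge argument correctly handles every subspace whose color differs from both pivots, but the reconciliation step for same-color subspaces --- which you yourself flag as the main obstacle --- does not go through, and no ``short case split'' rescues it for an arbitrary choice of pivots. Concretely, take $m=3$ with color $1$ consisting of $A=\spn{e_1,e_2}$ and $D=\spn{e_3,e_4}$, color $2$ of $B=\spn{e_1,e_3}$, and color $3$ of $C=\spn{e_2,e_3}$; all cross-color intersections are lines, so the hypothesis holds. Your construction with pivots $A,B$ gives $w=e_1$, $U_0=\spn{e_1,e_2,e_3}$ and, since $C\subseteq U_0$, $U=U_0$. Now $D$ lands in the ``contains $B\cap C=\spn{e_3}$'' branch, yet $w\notin D$ and $D\not\subseteq U$: knowing that $D$ contains the single line $B\cap C\subseteq U$ says nothing about its second basis vector, which can point out of $U$ because $D$ and $A$ share a color and hence are not required to intersect. (With the pivot pair $D,B$ the construction happens to succeed, but your proposal gives no criterion for making the right choice.) The $m=2$ case has the same underlying problem: two same-color pivots $A_1,A_2$ may satisfy $A_1\cap A_2=\{\vec{0}\}$, in which case they yield only ``$V\subseteq A_1+A_2$'' for the other color, and the symmetric argument produces a second, different $4$-dimensional space; merging the two outputs into a single pair $(w,U)$ is exactly the difficulty.

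What is missing is a global case distinction that dictates where to anchor $w$. The paper's proof first asks whether some color class $i$ has $\cap_{j} V_i^j\neq\{\vec{0}\}$; if so, any such common vector serves as $w$ and a $3$-dimensional $U$ suffices. Only in the complementary case --- every color class has trivial common intersection --- does it grow $U$ to dimension $4$, and there the assumption $\cap_j V_2^j=\{\vec{0}\}$ is used to produce a second subspace of color $2$ that avoids the problematic line (your $B\cap C$), which supplies precisely the extra transversal intersection your argument lacks for pinning the offending same-color subspaces inside $U$.
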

\begin{proof}

We split the proof into two cases:
\begin{itemize}
	\item There exists $i\in [m]$ such that $\cap_{j=1}^{m_i} V^j_i \neq \{\vec{0}\}$.
	
	To ease notation we assume, without loss of generality, that $i=1$. Let $\vec{0} \neq w \in \cap_j V^j_1 $. In addition, denote $V_1^1 = \spn{w,x_1}$ and $V_1^2 = \spn{w,x_2}$.
	 If for every $i \in [m]$ and $j \in [m_i]$, we have that $w \in V_i^j$ then the  statement clearly holds. On the other hand, if $w\notin V_i^j$ then let $0\neq z_1 \in V_1^1 \cap V_i^j$. Thus,  $z_1 = \alpha_1 w + \beta_1 x_1$, for $\beta_1 \neq 0$. Similarly, let $z_2 \in V_1^2 \cap V_i^j$, and so $z_2 = \alpha_2 w + \beta_2 x_2$, where $\beta_2 \neq 0$. As $V_1^1\neq V_1^2$ it follows that $z_1 \notin \spn{z_2}$ and therefore $V_i^j =\spn{z_1,z_2}\subseteq \spn{w,x_1,x_2}$. Thus, the statement holds with $w$ and $U = \spn{w,x_1,x_2}$.
	 
	 \item For every $i\in [m]$, $\cap_{j=1}^{m_i} V^j_i = \{\vec{0}\}$. 
	 
	 Consider $0 \neq w \in V_1^1 \cap V^1_2$ and let $V^1_1 = \spn{w, x_1}$ and $V^1_2= \spn{w,y_1}$.
	 Set $U = \spn{w,x_1,y_1}$. If for every  $V_i^j$ it holds that $w \in V_i^j$ or $V_i^j \subseteq U$, then we are done. Assume then that there is $V_i^j$ such that $w \notin V_i^j$ and $V_i^j \not \subseteq U$. If $i \neq 1,2$ then consider the intersection of $V_i^j$ with $V_1^1$ and with $V_2^1$. Similarly to the previous case, we obtain that $V_i^j \subseteq U$. Thus, we only have to consider the case $i \in \{1,2\}$. Assume without loss of generality that $i=1$ and $j=2$.
	 Let $z_1 \in V_1^2 \cap V_2^1\subseteq U$. Hence, $z_1 = \alpha_1 w + \beta_1 y_1$, where $\beta_1 \neq 0$. It follows that $V_1^2 = \spn{z_1, x_2}$ where $x_2\not\in U$ (as $V_1^2 \not \subseteq U$ ). We now show that $U' = \spn{w,x_1,y_1,x_2}$ satisfies the requirements of the theorem (with $w$ being the special vector).
	 
	%As $V_1^2 \not \subseteq U$ it holds that $x_2 \notin U$. 
	Since $\cap_j V_2^j = \{\vec{0}\}$ we can assume without loss of generality that $z_1 \notin V_2^2$. 
	 Let  $z_2 \in V_2^2 \cap V_1^1$ and  $z_3 \in V_2^2 \cap V_1^2$. We have that $z_2 = \alpha_2 w + \beta_2 x_1$, and   $z_3 = \alpha_3 z_1 + \beta_3 x_2$ where $\beta_3 \neq 0$ (since $z_1 \notin V_2^2$). Note that  $z_3 \notin \spn{z_2}$ as otherwise we would have that $x_2\in\spn{w,x_1,z_1}=\spn{w,x_1,y_1}=U$ in contradiction. Hence, $V_2^2=\spn{z_2,z_3} \subseteq  U'$. A similar argument shows that for every $j$, $V_2^j \subseteq \spn{w,x_1,y_1,x_2}=U'$. 
	 
	 We now show a similar result for the spaces in $V_1$.
	 Let $V_1^j$ be such that $w\notin V_1^j$, and let $z_4 \in V_2^1 \cap V_1^j$. Then $z_4 = \alpha_4 w + \beta_4 y_1$ where, $\beta_4 \neq 0$. Let $z_5 \in V_2^2 \cap V_1^j$. Denote $z_5 = \alpha_5 z_2 + \beta_5 z_3$. As $x_2 \notin U= \spn{w,x_1,y_1}$, it follows that $z_5 \notin \spn{z_4}$ and thus $V_1^j=\spn{z_4,z_5} \subseteq \spn{w,x_1,x_2,y_1}$ and the claim holds for $V_1^j$ as well.\qedhere
\end{itemize}
\end{proof}
\fi

\subsubsection{Projection mappings}\label{sec:z-map}

This section collects some facts from \cite{DBLP:journals/corr/abs-2003-05152} concerning projections of linear spaces and the effect on relevant quadratic polynomials.

%In this section we present and apply a new technique which allows us to simplify the structure of quadratic polynomials.
%Naively, when we want to simplify a polynomial equation, we can project it on a subset of the variables.
%Unfortunately, this projection does not necessarily preserve pairwise linear independence, which is a crucial property in our proofs. 
%To remedy this fact, we present a set of mappings, which are somewhat similar to projections, but do preserve pairwise linear independence among polynomials.

\begin{definition}[Definition 2.21 of \cite{DBLP:journals/corr/abs-2003-05152}]\label{def:z-mapping}
Let $V = \spn{\MVar{v}{\Delta}}\subseteq \spn{x_1,\ldots,x_n}$ be a $\Delta$-dimensional linear space of linear forms, and let $\{\MVar{u}{{n-\Delta}}\}$ be a basis for $V^\perp$. For $\vaa = (\MVar{\alpha}{\Delta})\in \C^{\Delta}$ we define $T_{\vaa, V} : \CRing{x}{n} \mapsto \C[\MVar{x}{n},z]$, where $z$ is a new variable, to be the linear map given by the following action on the basis vectors: $T_{\vaa, V}(v_i) = \alpha_i z$ and $T_{\vaa, V}(u_i)=u_i$.
\end{definition}

Thus, $T_{\alpha,V}$ projects $V$ to $\spn{z}$ in a random way while keeping the perpendicular space intact. 
Clearly $T_{\vaa, V}$ is a linear transformation, and it defines a ring homomorphism from $\CRing{x}{n}$ to $\C[\MVar{x}{n},z]$ in the natural way. 

\begin{claim}[Claim 2.23 of \cite{DBLP:journals/corr/abs-2003-05152}]\label{cla:res-z-ampping}
	Let $V\subseteq \spn{x_1,\ldots,x_n}$ be a $\Delta$-dimensional linear space of linear forms. Let $F$ and  $G$ be two polynomials that share no common irreducible factor. Then, with probability $1$ over the choice of $\vaa \in [0,1]^{\Delta}$ (say according to the uniform distribution), $T_{\vaa, V}(F)$ and $T_{\vaa, V}(G)$ do not share a common factor that is not a polynomial in $z$.
\end{claim}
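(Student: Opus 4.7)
I would treat $\vaa=(\alpha_1,\ldots,\alpha_\Delta)$ as formal indeterminates and define $\tilde F(z,\vu,\vaa):=F(\alpha_1 z,\ldots,\alpha_\Delta z,\vu) \in \C[z,\vu,\vaa]$ and $\tilde G$ analogously, so that $T_{\vaa_0,V}(F)=\tilde F(z,\vu,\vaa_0)$ for every $\vaa_0\in\C^\Delta$. The strategy is first to prove the ``universal'' coprimality, namely that $\gcd_{\C[z,\vu,\vaa]}(\tilde F,\tilde G)$ lies in $\C[z,\vaa]$, and then to use standard resultants to show that after specialization to generic $\vaa_0$ every common factor of $T_{\vaa_0,V}(F)$ and $T_{\vaa_0,V}(G)$ stays in $\C[z]$. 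The set of bad $\vaa_0$ will then be contained in a proper Zariski-closed subvariety of $\C^\Delta$, which has Lebesgue measure zero in $[0,1]^\Delta$.

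\textbf{Universal step.} The key observation is that $\psi:\C[\vv,\vu]\to\C[z,\vu,\vaa]$ defined by $v_i\mapsto\alpha_iz$, $u_j\mapsto u_j$ is injective, and its image is precisely the subring of elements that are homogeneous of degree $0$ under the $\mathbb{Z}$-grading $\deg z=1$, $\deg\alpha_i=-1$, $\deg u_j=0$. Both $\tilde F=\psi(F)$ and $\tilde G=\psi(G)$ are therefore homogeneous of degree $0$. Since in a $\mathbb{Z}$-graded integral domain irreducible factors of homogeneous elements are themselves homogeneous, any common irreducible factor $H$ of $\tilde F,\tilde G$ is homogeneous of some degree $d\in\mathbb{Z}$, and I split into three cases. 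If $d=0$, then $H=\psi(H')$ for a non-unit $H'\in\C[\vv,\vu]$ that still involves $\vu$, and the identity $\tilde F=\psi(H')\,\psi(Q')$ combined with injectivity of $\psi$ forces $H'\mid F$ and analogously $H'\mid G$, contradicting coprimality. If $d>0$, every monomial of $H$ has $z$-degree strictly larger than its $\vaa$-multi-degree, so $z\mid H$, and by irreducibility $H\sim z$, which does not involve $\vu$. If $d<0$, the element $Hz^{-d}$ is homogeneous of degree $0$, hence equals $\psi(H'')$ for a nonconstant $H''\in\C[\vv,\vu]$ that still involves $\vu$; writing $\tilde F=HQ$ with $Q=z^{-d}Q'$ and $Q'$ of degree $0$, as in the $d=0$ case, gives $H''\mid F$ and $H''\mid G$, again a contradiction. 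Thus $\gcd_{\C[z,\vu,\vaa]}(\tilde F,\tilde G)\in\C[z,\vaa]$.

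\textbf{Specialization and obstacle.} Write $\tilde F=S(z,\vaa)\,\tilde F_1$ and $\tilde G=S(z,\vaa)\,\tilde G_1$ with $S:=\gcd(\tilde F,\tilde G)\in\C[z,\vaa]$ and $\tilde F_1,\tilde G_1$ coprime in $\C[z,\vu,\vaa]$. For each $i\in[n-\Delta]$ the resultant $R_i:=\res_{u_i}(\tilde F_1,\tilde G_1)$ is a nonzero element of $\C[\vaa,z,\vu\setminus\{u_i\}]$, and the set $B_i\subset\C^\Delta$ on which $R_i(\vaa_0,\cdot)\equiv 0$ or a leading $u_i$-coefficient of $\tilde F_1,\tilde G_1$ vanishes is a proper Zariski-closed subvariety. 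For $\vaa_0\notin\bigcup_i B_i$ the specialization commutes with $\res_{u_i}$, so $\tilde F_1|_{\vaa_0}$ and $\tilde G_1|_{\vaa_0}$ have nonzero resultant in every $u_i$ and therefore share no common factor depending on any $u_i$; together with $S(z,\vaa_0)\in\C[z]$ this forces every common factor of $T_{\vaa_0,V}(F)$ and $T_{\vaa_0,V}(G)$ to lie in $\C[z]$. Since $\bigcup_i B_i$ is a finite union of proper subvarieties of $\C^\Delta$, it has Lebesgue measure zero, yielding the probability-$1$ statement. The most delicate part of the argument is the $d<0$ case above, where one has to carefully track the $\mathbb{Z}$-grading to identify $Hz^{-d}$ with an element of $\psi(\C[\vv,\vu])$ and verify that the preimage $H''=\psi^{-1}(Hz^{-d})$ is a non-unit that still depends on $\vu$.
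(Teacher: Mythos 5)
The paper does not actually prove this statement: it is imported verbatim as Claim 2.23 of the companion paper \cite{DBLP:journals/corr/abs-2003-05152} and used as a black box, so there is no in-paper argument to compare against. Judged on its own, your proof is correct. The grading trick is a clean way to organize the ``universal'' step: the image of $\psi$ is exactly the degree-$0$ part under $\deg z=1$, $\deg\alpha_i=-1$, $\deg u_j=0$; in a $\mathbb{Z}$-graded domain the lowest- and highest-degree components of a factorization force factors of a homogeneous element to be homogeneous; and your three cases correctly show that every common irreducible factor of $\tilde F,\tilde G$ is an associate of $z$, so the universal gcd is a power of $z$. The specialization via resultants in each $u_i$, with the standard caveats about leading coefficients surviving, then gives the measure-zero bad set.

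One small point to tighten: in the $d=0$ and $d<0$ cases you assert that $H'$ (respectively $H''=\psi^{-1}(Hz^{-d})$) ``still involves $\vu$.'' That need not be true --- e.g.\ $H=\alpha_1 z$ is a degree-$0$ irreducible with $\psi^{-1}(H)=v_1$, and $H=\alpha_1$ is a degree-$(-1)$ irreducible with $\psi^{-1}(Hz)=v_1$ --- but your argument never uses it: the hypothesis is that $F$ and $G$ share \emph{no} common irreducible factor, so any non-unit common divisor $H'$ or $H''$ (whether or not it involves $\vu$) already yields the contradiction. I would simply delete those clauses, and note explicitly that $H'$ (resp.\ $H''$) is a non-unit because $\psi$ fixes constants and $H$ (resp.\ $Hz^{-d}$) is not a unit. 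With that cosmetic fix the argument is complete.
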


%\begin{proof}
%	Let $\lbrace\MVar{u}{{n-\Delta}}\rbrace$ be a basis for $V^\perp$. We think of $F$ and $G$ as polynomials in $\C[\MVar{v}{\Delta}, \MVar{u}{n-\Delta}]$.
%	As $T_{\vaa,V}: \C[\MVar{v}{\Delta}, \MVar{u}{n-\Delta}] \rightarrow \C[z, \MVar{u}{n-\Delta}]$,  
%	\autoref{thm:res} implies that if $T_{\vaa, V}(F)$ and  $T_{\vaa, V}(G)$ share a common factor that is not a polynomial in $z$, then, without loss of generality, their resultant with respect to $u_1$ is zero. \autoref{thm:res} also implies that the resultant of $F$ and $G$ with respect to $u_1$ is not zero. Observe that with probability $1$ over the choice of $\vaa$, we have that $\deg_{u_1}(F) = \deg_{u_1}(T_{\vaa, V}(F))$ and $\deg_{u_1}(G) = \deg_{u_1}(T_{\vaa, V}(G))$.
%	As $T_{\vaa, V}$ is a ring homomorphism this implies that $Res_{u_1}(T_{\vaa, V}(G), T_{\vaa, V}(F)) = T_{\vaa, V}(Res_{u_1}(G, F))$.  The Schwartz-Zippel-DeMillo-Lipton lemma now implies that sending each basis element of $V$ to a random multiple of $z$, chosen uniformly from $(0,1)$ will keep the resultant nonzero with probability $1$. This also means that $T_{\vaa, V}(F)$ and $T_{\vaa, V}(G)$ share no common factor.
%\end{proof}

\begin{corollary}[Corollary 2.24 of \cite{DBLP:journals/corr/abs-2003-05152}]\label{cla:still-indep}
	Let $V$ be a $\Delta$-dimensional linear space of linear forms. Let $F$ and  $G$ be two linearly independent, irreducible quadratics, such that $\MS(F),\MS(G)\not\subseteq V$. Then, with probability $1$ over the choice of $\vaa \in [0,1]^{\Delta}$ (say according to the uniform distribution), $T_{\vaa, V}(F)$ and $T_{\vaa, V}(G)$ are linearly independent.
\end{corollary}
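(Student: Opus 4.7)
The plan is to build on \autoref{cla:res-z-ampping}. First, observe that since $F$ and $G$ are irreducible quadratics that are linearly independent, they share no non-unit common factor: any common factor has degree $1$ or $2$; degree $1$ is excluded by the irreducibility of $F$ and $G$, and degree $2$ would force both to be scalar multiples of that factor, contradicting linear independence. Hence \autoref{cla:res-z-ampping} applies to the pair $(F,G)$, and for almost every $\vaa\in[0,1]^{\Delta}$ the images $T_{\vaa, V}(F)$ and $T_{\vaa, V}(G)$ share no common factor lying outside $\C[z]$.

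The second step is to rule out, for generic $\vaa$, that either image itself lies in $\C[z]$. The map $T_{\vaa,V}$ preserves degree, so each image is homogeneous of degree $2$ in $\{z,u_1,\ldots,u_{n-\Delta}\}$, and membership in $\C[z]$ means being a scalar multiple of $z^2$. I would use that the hypothesis $\MS(G)\not\subseteq V$ is equivalent to $G\notin \C[V]_2$: the converse direction is immediate from the definition of the minimal space, while the forward direction uses that a minimal representation of $G$ has all of its linear factors in $\MS(G)$ (\autoref{cla:irr-quad-span}). Expanding $G$ in a basis of $V\oplus V^{\perp}$, there are two subcases. Either $G$ contains a pure $u_iu_j$ monomial, in which case this term survives untouched in $T_{\vaa,V}(G)$ for every $\vaa$; or $G\in\ideal{V}$ but contains a cross monomial $v_iu_j$, in which case the coefficient of $zu_j$ in $T_{\vaa,V}(G)$ is a non-identically-zero linear form in $\vaa$, and hence vanishes only on a proper algebraic subvariety of $\C^{\Delta}$. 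Either way $T_{\vaa,V}(G)\notin\C[z]$ with probability one, and the identical argument applied to $F$ shows $T_{\vaa,V}(F)\notin\C[z]$ as well, and in particular both images are nonzero.

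Combining the two steps, with probability one the images $T_{\vaa,V}(F)$ and $T_{\vaa,V}(G)$ are both nonzero, neither of them lies in $\C[z]$, and they share no common factor outside $\C[z]$. If they were linearly dependent, one would be a nonzero scalar multiple of the other, so $T_{\vaa,V}(G)$ itself would be a common factor of the two images, contradicting what \autoref{cla:res-z-ampping} gives since $T_{\vaa,V}(G)\notin\C[z]$. The main obstacle in this outline is the second step, where the geometric hypothesis $\MS(G)\not\subseteq V$ must be translated into the genericity statement $T_{\vaa,V}(G)\notin\C[z]$; once the equivalence $\MS(G)\subseteq V\iff G\in\C[V]_2$ is in hand, the remaining piece is a routine Zariski-genericity argument, and the corollary follows by combining it with \autoref{cla:res-z-ampping}.
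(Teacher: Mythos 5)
Your proof is correct, and it is the approach the paper intends: the statement is labeled a corollary of \autoref{cla:res-z-ampping} (Claim 2.23), and the paper imports it from the earlier work rather than reproving it here, so the intended argument is precisely to reduce to that claim by showing that, with probability one, neither image falls into $\C[z]$. Your three-step structure (no common factor of $F,G$; generic avoidance of $\C[z]$ via the monomial/cross-monomial case split; contradiction with the conclusion of \autoref{cla:res-z-ampping} if the images were proportional) is sound, and correctly handles the degenerate possibility $T_{\vaa,V}(G)=0$ as a special case of $T_{\vaa,V}(G)\in\C[z]$.

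One small slip in the exposition, not in the mathematics: in your discussion of the equivalence $\MS(G)\subseteq V\iff G\in\C[V]_2$, you have the attributions of the two directions swapped. The implication $\MS(G)\subseteq V\Rightarrow G\in\C[V]_2$ is the one that is immediate, since a minimal representation of $G$ uses only linear forms in $\MS(G)$ by definition. The other implication, $G\in\C[V]_2\Rightarrow\MS(G)\subseteq V$, is the one that uses \autoref{cor:containMS} (whose proof in turn rests on \autoref{cla:irr-quad-span}): one diagonalizes $G$ over $V$ and concludes the minimal space sits inside the span of any representation. Both hold, so the argument goes through, but you should flip the labels.
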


%\begin{proof}
%	As $F$ and $G$ are irreducible  they share no common factors. \autoref{cla:res-z-ampping} implies that  $T_{\vaa, V}(F)$ and $T_{\vaa, V}(G)$ do not share a common factor that is not a polynomial in $z$.  The Schwartz-Zippel-DeMillo-Lipton implies that with probability $1$,   $T_{\vaa, V}(F)$ and $T_{\vaa, V}(G)$ are not polynomials in $z$, and therefore they are linearly independent.
%\end{proof}

\begin{claim}[Claim 2.25 of \cite{DBLP:journals/corr/abs-2003-05152}]\label{cla:z-map-rank}
Let $Q$ be an irreducible quadratic polynomial, and $V$  a $\Delta$-dimensional linear space.
Then for every $\vaa \in \C^{\Delta}$,  $\rank_s(T_{\vaa, V}(Q)) \geq \rank_s(Q)-\Delta$.
\end{claim}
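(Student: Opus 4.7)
The plan is to reduce the claim to Claim 2.16 (rank under restriction to a subspace) by making the following elementary observation: setting $z=0$ in $T_{\vaa,V}(Q)$ gives exactly $Q|_{V=0}$. Indeed, write $Q$ in the basis $v_1,\ldots,v_\Delta,u_1,\ldots,u_{n-\Delta}$ of $\CRing{x}{n}_1$; since $T_{\vaa,V}(v_i)=\alpha_i z$ and $T_{\vaa,V}(u_j)=u_j$, evaluating at $z=0$ kills precisely the contributions coming from $V$, yielding $Q|_{V=0}$.

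Next I would chain two trivial inequalities. First, $\rank_s$ can only drop under setting a single linear form to $0$, so
\[
\rank_s\bigl(T_{\vaa,V}(Q)\bigr) \;\geq\; \rank_s\bigl(T_{\vaa,V}(Q)|_{z=0}\bigr).
\]
(This is just the observation that a minimal representation $T_{\vaa,V}(Q)=\sum_{k} \tilde a_{2k-1}\tilde a_{2k}$ restricts to a representation of $T_{\vaa,V}(Q)|_{z=0}$ using at most the same number of products.) Second, by the identification above and by \autoref{cla:rank-mod-space} applied to $Q$ with the $\Delta$-dimensional space $V$,
\[
\rank_s\bigl(T_{\vaa,V}(Q)|_{z=0}\bigr) \;=\; \rank_s\bigl(Q|_{V=0}\bigr) \;\geq\; \rank_s(Q)-\Delta.
\]
Combining the two displays yields $\rank_s(T_{\vaa,V}(Q))\geq \rank_s(Q)-\Delta$, as desired.

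I do not anticipate a real obstacle here: the statement is a soft structural inequality and the entire content is packaged in the identification $T_{\vaa,V}(Q)|_{z=0}=Q|_{V=0}$, after which both inequalities are immediate from the definition of $\rank_s$ (or from \autoref{cla:rank-mod-space} applied with a one-dimensional and a $\Delta$-dimensional space respectively). Note also that irreducibility of $Q$ is not actually used in this argument; the bound holds for any homogeneous quadratic.
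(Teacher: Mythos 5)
Your proof is correct: the identification $T_{\vaa,V}(Q)|_{z=0}=Q|_{V=0}$ holds because $T_{\vaa,V}$ is a ring homomorphism sending each $v_i$ to $\alpha_i z$ and fixing the $u_j$, and the two inequalities you chain (rank is non-increasing under substitution, plus \autoref{cla:rank-mod-space}) are exactly right; the observation that irreducibility of $Q$ is not needed is also accurate. Note that this paper imports the claim from \cite{DBLP:journals/corr/abs-2003-05152} without reproving it, and your argument is essentially the standard one given there.
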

%\begin{proof}
%$\rank_s(T_{\vaa, V}(Q)) \geq \rank_s(T_{\vaa, V}(Q)|_{z=0}) = \rank_s(Q|_{V=0}) \geq \rank_s(Q)-\Delta$, where the last inequality follows from \autoref{cla:rank-mod-space}.
%\end{proof}

\begin{claim}[Claim 2.26 of \cite{DBLP:journals/corr/abs-2003-05152}]\label{cla:z-map-dimension}
Let $\cQ$ be a set of quadratics, and $V$ be a $\Delta$-dimensional linear space. Then, if there are linearly independent vectors, $\{\vaa^1,\dots, \vaa^{\Delta}  \}\subset \C^{\Delta}$ such that for every $i$,\footnote{Recall that $\MS(T_{\vaa^i,V}(\cQ))$ is the space spanned by $\cup_{Q\in\cQ}\MS(T_{\vaa^i,V}(\cQ))$.} $\dim(\MS(T_{\vaa^i,V}(\cQ)))\leq \sigma$ then $\dim(\MS(\cQ))\leq (\sigma+1) \Delta$.
\end{claim}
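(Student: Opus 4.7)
The plan is to exploit the derivative characterization of $\MS$: for any homogeneous quadratic $P$, the space $\MS(P)$ equals the span of the first-order partial derivatives $\partial_{x_k} P$ (viewed as linear forms). After fixing coordinates so that $V = \spn{x_1,\ldots,x_\Delta}$, the map $T_{\vaa,V}$ sends $x_i \mapsto \alpha_i z$ for $i \leq \Delta$ and fixes $x_j$ for $j > \Delta$. First I would compute, via the chain rule, that $\partial_{x_j} T_{\vaa,V}(Q) = T_{\vaa,V}(\partial_{x_j}Q)$ for $j > \Delta$ and $\partial_z T_{\vaa,V}(Q) = T_{\vaa,V}\!\left(\sum_{i\leq\Delta}\alpha_i\partial_{x_i}Q\right)$. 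Summing over $Q\in\cQ$, this identifies
\[
\MS(T_{\vaa,V}(\cQ)) \;=\; T_{\vaa,V}(W_\vaa), \qquad W_\vaa \;:=\; W_0 + D_\vaa,
\]
where $W_0 := \spn{\partial_{x_j}Q : j > \Delta,\, Q\in\cQ}$ and $D_\vaa := \spn{\sum_{i\leq\Delta}\alpha_i\partial_{x_i}Q : Q\in\cQ}$; both sit inside $W := \MS(\cQ)$, which decomposes as $W = W_0 + \sum_{i=1}^{\Delta} D_{e_i}$ with $e_i$ the $i$-th standard basis vector of $V$.

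Next I would track how $T_{\vaa,V}$ acts on an individual linear form: it is the map $\ell \mapsto \pi_{V^\perp}(\ell) + \beta_\vaa(\ell)\,z$, where $\beta_\vaa$ is the linear functional on $\C[\vx]_1$ equal to $\vaa$ on $V$ and zero on $V^\perp$. Post-composing $T_{\vaa,V}$ with the specialization $z \mapsto 0$ recovers $\pi_{V^\perp}$, hence for each $i$ the surjection shows
\[
\dim \pi_{V^\perp}(W_{\vaa^i}) \;\leq\; \dim T_{\vaa^i,V}(W_{\vaa^i}) \;=\; \dim \MS(T_{\vaa^i,V}(\cQ)) \;\leq\; \sigma.
\]

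The linear independence of $\vaa^1,\ldots,\vaa^\Delta$ enters only at the next step. Since $\partial_\vaa Q$ is linear in $\vaa$ and the $\vaa^j$ span $\C^\Delta$, for each fixed $Q$ the derivatives $\{\partial_{\vaa^j}Q\}_j$ already span $\spn{\partial_{x_i}Q : i\leq\Delta}$, so $\sum_{j=1}^\Delta D_{\vaa^j} = \sum_{i=1}^\Delta D_{e_i}$. Consequently $\sum_j W_{\vaa^j} = W$, and applying the linear map $\pi_{V^\perp}$ yields $\pi_{V^\perp}(W) = \sum_j \pi_{V^\perp}(W_{\vaa^j})$, whence $\dim \pi_{V^\perp}(W) \leq \Delta\sigma$.

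Finally, the short exact sequence $0 \to W \cap V \to W \to \pi_{V^\perp}(W) \to 0$ yields
\[
\dim W \;=\; \dim(W \cap V) + \dim \pi_{V^\perp}(W) \;\leq\; \Delta + \Delta\sigma \;=\; (\sigma+1)\Delta,
\]
as claimed. The main obstacle is the first step: identifying $\MS(T_{\vaa,V}(\cQ))$ as the image under $T_{\vaa,V}$ of a \emph{specific} subspace $W_\vaa \subseteq W$ (not all of $W$, since $T_{\vaa,V}(\MS(Q))$ is in general strictly larger than $\MS(T_{\vaa,V}(Q))$). Once this derivative-based identification is set up, the rest is a clean dimension count where linear independence of $\{\vaa^j\}$ is used purely to recover the full space of $V$-directional derivatives.
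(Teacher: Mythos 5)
Your proof is correct. The paper only cites this claim from the earlier reference and does not reproduce its proof, so there is nothing in the provided source to compare against, but the argument stands on its own. Two remarks. First, the characterization of $\MS(Q)$ as the span of the first-order partial derivatives of $Q$ (equivalently, the row space of the symmetric Gram matrix representing $Q$) is a standard characteristic-zero fact, but it is not among the lemmas the paper imports from the cited reference, so it would need to be stated and briefly justified from \autoref{def:rank-s} and \autoref{cla:irr-quad-span} if this proof were inserted. Second, the step you single out as the main obstacle is indeed the crux: $\MS(T_{\vaa,V}(\cQ))$ is the image under $T_{\vaa,V}$ of the specific subspace $W_0 + D_{\vaa}$ of $W=\MS(\cQ)$, which is in general a \emph{proper} subspace of $T_{\vaa,V}(W)$, and the chain-rule computation pins this down exactly; a naive argument pushing all of $\MS(\cQ)$ through $T_{\vaa,V}$ would not yield the $(\sigma+1)\Delta$ bound. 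The remaining steps — that specializing $z\mapsto 0$ after $T_{\vaa,V}$ recovers the projection onto $V^\perp$; that linear independence of the $\vaa^j$ forces $\sum_j D_{\vaa^j}=\sum_i D_{e_i}$ and hence $\sum_j W_{\vaa^j}=W$; and the rank–nullity count $\dim W = \dim(W\cap V) + \dim \pi_{V^\perp}(W)\leq \Delta + \Delta\sigma$ — are all sound and give exactly the stated bound.
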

\ifEK
%\section{Edelstein-Kelly type theorems}\label{sec:robust-EK}

\section{Sylvester-Gallai theorem and some of its variants}
%\label{sec:SGThms}
\label{sec:robust-EK}

In this section we give the formal statements of the Sylvester-Gallai and Edelstien-Kelly theorems, and present some of their extensions that we use in this work.

\begin{definition}
	Let $\{v_1,\ldots ,v_m\}$ be a set of distinct points in $\R^n$ or $\C^n$. We call a line that intersects the set at exactly two points an \emph{ordinary line}. 
\end{definition}

\begin{theorem}[Sylvester-Gallai theorem]\label{thm:SG}
	If $m$ distinct points $v_1,\ldots ,v_m$ in $\R^n$ are not collinear, then they define at least one ordinary line.
\end{theorem}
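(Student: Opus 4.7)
The plan is to prove this by Kelly's classical contradiction argument, which is short and avoids any dimension reduction tricks. Suppose for contradiction that every line determined by two of the points contains a third point from the set. Consider the collection of all pairs $(v, L)$ where $L$ is a line passing through at least two of the points and $v$ is a point of our set not lying on $L$. Since the points are not collinear, this collection is nonempty. Because the point set is finite, there are only finitely many such pairs, so there exists a pair $(v^*, L^*)$ minimizing the Euclidean distance from $v^*$ to $L^*$.

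Now drop a perpendicular from $v^*$ to $L^*$ and let $f \in L^*$ be its foot. By our contradiction hypothesis, $L^*$ contains at least three points of the set, say $p_1, p_2, p_3$. By an elementary pigeonhole argument on the signed distances of $p_1, p_2, p_3$ from $f$ along $L^*$, at least two of these three points lie on the same closed side of $f$; say $p_1$ and $p_2$ lie on the same side with $p_1$ at least as close to $f$ as $p_2$ (allowing $p_1 = f$).

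Next I would consider the new line $L' = \mathrm{Line}(v^*, p_2)$, which passes through two points of the set, and compute the distance from the point $p_1$ (not on $L'$, since $p_1 \neq p_2$ lie on $L^*$ while $v^* \notin L^*$) to $L'$. A direct planar geometry computation inside the plane spanned by $v^*$ and $L^*$ — comparing the altitude of the right triangle with vertices $v^*, f, p_2$ from the vertex $p_1$ on the leg $f p_2$ — shows this distance is strictly less than $\mathrm{dist}(v^*, L^*)$. This contradicts the minimality of the pair $(v^*, L^*)$, and hence proves that some line through two of the points must be ordinary.

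The proof is essentially complete once the right minimization is set up; the only delicate step is the two-dimensional geometric comparison showing that $\mathrm{dist}(p_1, L') < \mathrm{dist}(v^*, L^*)$, and this is where one must be careful that $p_1$ genuinely lies strictly between $f$ and $p_2$ on the line when $p_1 \neq f$, and verify the degenerate case $p_1 = f$ separately (in which case $p_1$ lies on the segment perpendicular from $v^*$, and the conclusion still follows by a direct right-triangle estimate). No ingredient beyond elementary planar Euclidean geometry and finiteness of the configuration is required; the theorem as stated here is just the starting point, and it is the colorful Edelstein–Kelly extension (to be invoked later in the paper) that does the real work in the PIT applications.
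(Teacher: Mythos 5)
The paper states the Sylvester--Gallai theorem as a classical background result and does not give a proof of it (the subsequent robust and colored variants are the ones it cites, reproves, or extends; \autoref{thm:SG} itself is just quoted). So there is no ``paper's own proof'' to compare against. That said, your proposal is the standard and correct Kelly minimum-distance argument: the set of pairs $(v,L)$ with $L$ a connecting line and $v$ a set point off $L$ is finite and nonempty (noncollinearity), so a minimizing pair $(v^*,L^*)$ exists; the hypothesis forces a third point on $L^*$, the pigeonhole on sides of the foot $f$ gives two points $p_1,p_2$ on a common closed side with $p_1$ between $f$ and $p_2$; and then $\mathrm{dist}(p_1,\line{v^*,p_2}) \le \mathrm{dist}(f,\line{v^*,p_2}) < |v^*f| = \mathrm{dist}(v^*,L^*)$ by the altitude comparison in the right triangle $v^* f p_2$, with $p_1 \notin \line{v^*,p_2}$ since $L^* \cap \line{v^*,p_2} = \{p_2\}$ and $p_1 \neq p_2$. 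All the delicate points you flag (strict inequality when $p_1 \neq f$, the degenerate case $p_1 = f$, and the nonmembership $p_1 \notin L'$) are handled correctly. This is the right proof to cite here; it is shorter and more self-contained than the original Melchior/Gallai arguments, and unlike those it needs no Euler-formula or projective-duality machinery.
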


\begin{theorem}[Kelly's theorem]\label{thm:Kelly}
	If $m$ distinct points $v_1,\ldots ,v_m$ in $\C^n$ are not coplanar, then they define at least one ordinary line.
\end{theorem}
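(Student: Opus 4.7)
The plan is to follow Kelly's 1986 argument, which uses Hirzebruch's inequality for complex line arrangements as the central tool. The first step is a dimension reduction: if the points span an affine subspace of dimension larger than $3$, a generic linear projection to $\C^3$ preserves distinctness and all existing collinearity relations on the finite set (and cannot accidentally create new ones among finitely many collinearity conditions), while also preserving non-coplanarity. So one may assume the $v_i$ lie in $\C^3$ and span an affine $3$-space.

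Next, I would embed the configuration into $\mathbb{CP}^3$ by adjoining a generic hyperplane at infinity, and then project from a very general point $p\in\mathbb{CP}^3$ onto a generic $\mathbb{CP}^2$. By genericity, this projection is injective on the finite point set $S$ and on the set of spanning lines $\mathcal{L}=\{\overline{v_iv_j}\}$, and the multiplicity of each image intersection point equals the number of original points lying on its preimage line. Thus the resulting line arrangement $\mathcal{L}'$ in $\mathbb{CP}^2$ inherits the hypothesis ``no ordinary line'' in the following form: every line of $\mathcal{L}'$ carries a point of multiplicity at least $3$, i.e.\ the number of $2$-fold points $t_2$ equals $0$ in $\mathcal{L}'$.

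The core step is then to invoke Hirzebruch's inequality: for a complex line arrangement of $d$ lines in $\mathbb{CP}^2$, not all concurrent and not almost all concurrent, one has
$$t_2 + \tfrac{3}{4}\,t_3 \;\ge\; d + \sum_{k\ge 5}(2k-9)\,t_k,$$
where $t_k$ is the number of $k$-fold intersection points. Combined with the pair-counting identity $\binom{d}{2}=\sum_k\binom{k}{2}t_k$ and the constraint $t_2=0$ forced by our hypothesis, this bounds $d$ and $t_k$ very rigidly, and the remaining finite list of combinatorial possibilities can be ruled out directly (using, e.g., the fact that high concentration on one line would mean the original $v_i$ lived on a plane through $p$). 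This yields the required contradiction, so the points must in fact be coplanar.

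The main obstacle I expect is executing the projection step rigorously enough to apply Hirzebruch cleanly. One must check that $p$ can be chosen outside every line of $\mathcal{L}$ and every secant plane of $S$, so that the projection is injective on both $S$ and $\mathcal{L}$ and preserves multiplicities, and that $\mathcal{L}'$ satisfies the non-concurrence hypotheses on which Hirzebruch's inequality depends. A second, more philosophical obstacle is that Hirzebruch's inequality itself is highly nontrivial, resting on the Miyaoka--Yau inequality for surfaces of general type; there seems to be no elementary substitute, since the Hesse configuration of nine flex points of a smooth plane cubic provides a planar complex example with no ordinary lines, blocking any naive adaptation of Kelly's real ``closest line-point pair'' argument to $\C^n$.
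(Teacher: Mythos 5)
The paper does not give a proof of Kelly's theorem --- it is stated there as a known classical result --- so you are being asked to supply one from scratch. You correctly identify Hirzebruch's inequality as the engine of Kelly's 1986 proof, your dimension reduction to $\C^3$ is fine, and your remark that the Hesse configuration blocks any elementary complex analogue of the real Sylvester--Gallai argument is accurate.

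However, the central projection step has a genuine gap. You project the spatial line arrangement $\mathcal{L}=\{\overline{v_iv_j}\}$ from a generic point of $\mathbb{CP}^3$ to $\mathbb{CP}^2$ and assert that the image arrangement $\mathcal{L}'$ has $t_2=0$. This is false for two independent reasons. First, once $S$ spans a $3$-space, $\mathcal{L}$ contains skew pairs of lines; under any projection to $\mathbb{CP}^2$ a skew pair maps to two lines that \emph{do} meet, and for a generic center of projection that intersection point lies on no other line of $\mathcal{L}'$, producing a genuine double point. Second, two lines of $\mathcal{L}$ may already meet in $\mathbb{CP}^3$ at a point \emph{not} in $S$; the hypothesis ``every line of $\mathcal{L}$ contains at least three points of $S$'' says nothing about such incidental intersections, and their images are also double points in $\mathcal{L}'$. (There is also a logical conflation: ``every line of $\mathcal{L}'$ carries a point of multiplicity at least $3$'' is strictly weaker than ``$t_2=0$,'' and neither follows here.) With $t_2$ uncontrolled, Hirzebruch's inequality gives no contradiction whatsoever. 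Kelly's actual proof --- and the streamlined account of Elkies, Pretorius, and Swanepoel --- projects the \emph{points} of $S$ from a point $p\in S$ and works with the resulting configuration $\bar S\subset\mathbb{CP}^2$: one checks that $\bar S$ is itself a Sylvester--Gallai configuration, that each image point has multiplicity at least two as an image (because every line $\overline{pv}$ already carries a third point of $S$), and it is this extra multiplicity information, fed into Hirzebruch's inequality for the \emph{dual} line arrangement of $\bar S$, that encodes the three-dimensionality and produces the contradiction. A naive projection of the determined line arrangement, as you propose, cannot reproduce this.
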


The robust version of the Sylvester-Gallai theorem was stated and proved in \cite{barak2013fractional,DBLP:journals/corr/abs-1211-0330}.
\begin{definition}\label{def:delta-SGConf}
	We say that a set of points $v_1,\ldots ,v_m\in \C^n$ is a \textit{$\delta$-SG configuration} if for every $i\in [m]$ there exists at least $\delta m$ values of $j \in [m]$ such that the line through $v_i,v_j$ contains a third point in the set.
\end{definition}

\begin{theorem}[Robust Sylvester-Gallai theorem, Theorem $1.9$ of \cite{DBLP:journals/corr/abs-1211-0330}]\label{thm:robustSG}
	Let $V = \lbrace v_1,\ldots ,v_m\rbrace \subset \C^n$ be a $\delta$-SG configuration. Then, $\dim(\spn{v_1,\ldots ,v_m}) \leq \frac{12}{\delta}+1$.
\end{theorem}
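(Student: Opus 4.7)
The plan is to realize the $\delta$-SG configuration as forcing many $3$-term linear dependencies among the $v_i$'s, package those dependencies into a single matrix, and then invoke a rank lower bound for such ``design matrices'' to upper bound the dimension of $\spn{v_1,\ldots,v_m}$.

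First I would exploit the hypothesis: for each $i\in[m]$ there are at least $\delta m$ indices $j$ for which the line through $v_i$ and $v_j$ meets the configuration in a third point $v_{k(i,j)}$. Since the three points are collinear there exist scalars $\alpha_{i,j},\beta_{i,j},\gamma_{i,j}$, not all zero and in fact all three nonzero, with
\[
\alpha_{i,j}\, v_i + \beta_{i,j}\, v_j + \gamma_{i,j}\, v_{k(i,j)} = 0,
\]
after working in affine coordinates (or homogenizing and writing the dependence projectively). Collecting one such relation for each valid pair $(i,j)$ gives a large matrix $A$ with rows indexed by the relations and columns indexed by $[m]$, having exactly three nonzero entries per row, and with the key feature that every column has at least $\delta m$ nonzero entries (since index $i$ participates in at least $\delta m$ relations as the ``anchor''). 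Letting $M\in\C^{m\times n}$ be the matrix whose $i$-th row is $v_i$, the relations amount to $AM=0$, so
\[
\dim(\spn{v_1,\dots,v_m}) \;=\; \rank(M) \;\leq\; m - \rank(A).
\]

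The heart of the argument is now to lower bound $\rank(A)$. This is where I would invoke (or reprove) the design-matrix rank bound of Barak--Dvir--Wigderson--Yehudayoff: for a matrix with at most $q=3$ nonzeros per row, at least $r\geq \delta m$ nonzeros per column, and such that no pair of columns shares more than one common nonzero row (this last property follows because two distinct points determine a unique line, so a given pair $(i,j)$ appears in at most one of our chosen relations), one gets
\[
\rank(A) \;\geq\; m - \frac{m(q-1)^2}{r}\cdot\frac{1}{\text{const}} \;\geq\; m - \frac{12}{\delta} - 1.
\]
Combining with the displayed inequality yields $\dim(\spn{v_1,\ldots,v_m}) \leq 12/\delta + 1$, as desired.

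The main obstacle is the design-matrix rank bound itself; verifying the combinatorial ``no two columns collide twice'' hypothesis is routine from the geometry (each ordered pair of points determines the third point on the line, if any), but proving the spectral lower bound on $\rank(A)$ requires the matrix-scaling/trace-method argument of \cite{DBLP:journals/corr/abs-1211-0330}. Once that black-box is in place, the reduction to it via the three-term relations produced by the $\delta$-SG hypothesis is straightforward and gives exactly the claimed $12/\delta+1$ dimension bound.
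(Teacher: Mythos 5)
The paper does not prove this theorem; it is stated verbatim as Theorem 1.9 of the cited reference \cite{DBLP:journals/corr/abs-1211-0330} (Dvir--Saraf--Wigderson) and used as a black box. So there is no ``paper's own proof'' to compare against.

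Your sketch correctly reconstructs the approach taken in the cited reference: lift to homogeneous coordinates, encode each collinearity as a three-term linear relation, assemble the relations into a matrix $A$ annihilating the point matrix, and invoke a rank lower bound for design matrices (with the constant $12/\delta$ coming from the improved bound in \cite{DBLP:journals/corr/abs-1211-0330} rather than the earlier quadratic-in-$1/\delta$ bound of Barak--Dvir--Wigderson--Yehudayoff). Two small imprecisions are worth flagging. First, the claim that ``no pair of columns shares more than one common nonzero row'' is not quite right: the pair $(i,j)$ can appear in a row with $i$ as anchor, in another row with $j$ as anchor, and in further rows where $i$ is the anchor and $j$ is the returned third point for some other partner $\ell$. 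The actual matrix is a $(3,\delta m, t)$-design matrix for a small constant $t>1$, and DSW's construction and bound account for this; the argument does not need $t=1$, only $t=O(1)$. Second, the ``$+1$'' in the conclusion is exactly the cost of passing from affine collinearity to a linear dependence by appending a constant coordinate, so that $\dim(\spn{v_1,\dots,v_m})=\rank(\tilde M)-1$ where $\tilde M$ is the lifted matrix; you gesture at this with ``homogenizing,'' but it deserves to be written out since it is where the additive constant originates. With those points tightened, the sketch is an accurate account of how the cited theorem is proved.
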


The following is the colored version of the Sylvester-Gallai theorem that was stated and proved by Edelstein and Kelly  \cite{EdelsteinKelly66}.

\begin{theorem}[Theorem $3$ of \cite{EdelsteinKelly66}]\label{thm:EK}
	Let $\cT_i$, for $i\in [3]$, be disjoint finite subsets of $\C^n$ such that for every $i\neq j$ and any two points $p_1\in \cT_i$ and $p_2 \in \cT_j$ there exists a point $p_3$ in the third set that lies on the line passing through $p_1$ and $p_2$. Then, it must be the case that $\dim(\spn{\cup_i \cT_i})\leq 3$.
\end{theorem}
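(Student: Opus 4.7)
The plan is to derive \autoref{thm:EK} from Kelly's theorem (\autoref{thm:Kelly}) by extracting a structural consequence of the three-color hypothesis. The key observation is that any line meeting $T := \cup_{i\in[3]}\cT_i$ in points of two distinct colors must, by the hypothesis, also contain a point of the third color, and hence meets $T$ in at least three points. Contrapositively, every \emph{ordinary} line of $T$ (one meeting $T$ at exactly two points) must be \emph{monochromatic}: both of its incident points lie in the same $\cT_i$.

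I would then argue by contradiction: suppose $\dim(\spn{\cup_{i\in[3]}\cT_i}) \geq 4$. Then $T$ is not coplanar, so Kelly's theorem produces an ordinary line $\ell$ of $T$, and by the above $\ell\cap T = \{a,b\}\subseteq \cT_i$ for some $i$, say $i=1$. From here I would attempt a dimension-reducing induction via the projection $\pi\colon \C^n\to \C^n/\spn{a-b}$ along the direction of $\ell$. Under $\pi$, every line not parallel to $\ell$ maps to a line, so the bichromatic-to-trichromatic implication is essentially preserved downstairs: if $p\in \cT_j$ and $q\in \cT_k$ with $j\neq k$ lie on a line not parallel to $\ell$, the third-color witness descends to a witness on $\pi(\text{line})$. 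Collisions $\pi(p)=\pi(q)$ for distinct $p,q\in T$ can occur only along lines parallel to $\ell$, and the ordinariness of $\ell$ limits these alignments within $T$ and lets the exceptional pairs be controlled.

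Applying the induction hypothesis to the projected configuration bounds $\dim(\spn{\pi(T)})$; since $\pi$ drops the dimension of $\spn{T}$ by exactly one (because $a-b\in \spn{T}$), this gives a bound on $\dim(\spn{T})$ that is one larger. The main obstacle is precisely this off-by-one: a single application of projection plus induction yields $\dim(\spn{T})\leq 4$ rather than the sharp $\leq 3$. I would try to close the gap by iterating the argument, producing after each step a strictly smaller-dimensional configuration in which Kelly's theorem again supplies a monochromatic ordinary line that can be projected away, until the configuration collapses into a $3$-dimensional subspace. An alternative route is to invoke sharper Hirzebruch-type inequalities for line arrangements in $\C P^2$, which underlie Kelly's original proof and seem well suited to encode the trichromatic-only restriction. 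In either approach, the delicate step is controlling how the three color classes interact --- and in particular remain disjoint --- under repeated projection. Small-cardinality edge cases, such as when some $\cT_i$ is a single point (in which case the hypothesis immediately forces the remaining two sets onto a single line through that point, so that $\dim(\spn{T})\leq 3$ is manifest), can be handled directly as a base case.
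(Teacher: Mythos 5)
Your key observation is correct and useful: any line through two points of different colors must contain a third point of the third color by hypothesis, so every ordinary line of $T = \cup_i \cT_i$ is monochromatic. The gap is in what you do with it. Quotienting by $\spn{a-b}$, where $\{a,b\}\subseteq\cT_1$ is the monochromatic ordinary line produced by Kelly's theorem, drops $\dim(\spn T)$ by exactly one (since $a-b\in\spn T$). Thus even if the projected configuration satisfied all the hypotheses and the theorem held for it, you would only conclude $\dim(\spn T)\leq 4$. Iterating does not close this gap: if $\dim(\spn T)=4$ then $\dim(\spn{\pi(T)})=3$, which is perfectly consistent with the conclusion for $\pi(T)$, so no contradiction arises and no further reduction is forced. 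You are not inducting on a quantity that ever reaches a base case at which the bound tightens. Separately, the collision issue is not a technicality you can wave away: if $p\in\cT_2$ and $q\in\cT_3$ lie on a line parallel to $\ell$, then $\pi(p)=\pi(q)$ and the image point has no well-defined color, so the disjointness hypothesis (and with it the trichromatic implication) breaks downstairs, and ordinariness of $\ell$ does not constrain such parallel pairs.

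For comparison, the proof the paper gives (for the extension, \autoref{thm:ek-not-disjoint}) avoids projection entirely and runs one dimension higher. Fix $p_1\in\cS_1$, $p_2\in\cS_2$, and for each other point $p$ consider the plane $H_p = \plane{p,\line{p_1,p_2}}$; each such plane contains points of at least two colors off the base line. Slicing the pencil of planes through $\line{p_1,p_2}$ by a generic hyperplane reduces it to a pencil of lines through a point, and the ordinary Sylvester--Gallai theorem (via \autoref{lem:pen-of-lines}) produces an ``ordinary $3$-flat'' $\Gamma$ containing exactly two of the planes $H_1,H_2$. Picking bichromatic $p\in H_1$, $p'\in H_2$ off the base line then yields a line inside $\Gamma$ through two differently colored points that cannot contain a third point of $T$, a contradiction. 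This pencil-of-planes argument gives the sharp bound directly and sidesteps both the off-by-one and the collision problems; if you want to salvage a projection approach, you would need some analogous mechanism that directly rules out dimension $4$ rather than merely bounding the quotient.
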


Next, we state extensions of \autoref{thm:EK} and of a result that was proved in \cite{DBLP:conf/stoc/Shpilka19}. As the proofs are modification of the original proofs in \cite{EdelsteinKelly66,DBLP:conf/stoc/Shpilka19} we only state the theorems here and postpone their proofs to \autoref{sec:EK-proofs}.

%We first state a slight extension of Edelstien-Kelly theorem for non disjoint sets.

\begin{theorem}\label{thm:ek-not-disjoint}
	Let $k\geq 3$ and $\{\MVar{\cS}{k}\}$ be finite sets of points in $\C^n$. Assume that for every $p_i\in \cS_i$ and $p_j\in \cS_j$ such that $p_i\neq p_j$ there exists $t \in [k]\setminus \{i,j\}$ and $p_t\in \cS_t\setminus \{p_i,p_j\}$, such that  $p_t,p_i$ and $p_j$ are colinear. Then $\dim(\cup_{i\in [k]}\cS_i) \leq 3$.
\end{theorem}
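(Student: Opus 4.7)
The plan is to adapt the classical Edelstein--Kelly argument (see \autoref{thm:EK}) to handle the two relaxations in the statement, namely that the $\cS_i$ may be non-disjoint and that $k$ may exceed $3$. The classical proof proceeds by an extremal-distance argument: assuming for contradiction that the points span an affine space of dimension $\geq 4$, one considers a configuration $(p,\pi)$ consisting of a point $p\in\cup_i\cS_i$ and a $2$-flat $\pi$ spanned by three non-collinear points from distinct sets, chosen so that the distance from $p$ to $\pi$ is positive and minimal over all such configurations. The hypothesis is then used, together with elementary projective geometry, to produce a new configuration with strictly smaller positive distance, yielding the desired contradiction. I would follow this template, invoking Kelly's complex extension (\autoref{thm:Kelly}) where the passage from $\R^n$ to $\C^n$ is needed.

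Two observations make the classical argument adaptable to our more general hypothesis. First, the argument is local: at each step it examines only a few points and uses only the existence of a collinear third point in a set distinct from those already in use. The hypothesis of \autoref{thm:ek-not-disjoint} guarantees exactly this for every pair of distinct points drawn from different sets, regardless of how many sets there are. Second, the explicit requirement $p_t\in\cS_t\setminus\{p_i,p_j\}$ ensures that the three collinear points in any invocation are pairwise distinct, which is all that the geometric step of the classical proof ever needs. When a point happens to lie in several $\cS_i$ we are free to treat it as belonging to any one of them, and this flexibility only helps. In particular, by iteratively removing overlaps -- deleting $p$ from $\cS_i$ whenever $p\in\cS_i\cap\cS_j$, with the choice of which set to remove from made so as to preserve the hypothesis -- we may reduce without loss of generality to the pairwise disjoint case; the verification that the hypothesis survives each such deletion is routine case-checking using the fact that, if the third collinear point guaranteed by the hypothesis coincides with the removed point $p$, then $p$ itself remains available as a representative of its other containing set.

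The main obstacle I anticipate is the bookkeeping required to generalize the extremal argument from the original $k=3$, disjoint-sets setting to $k\geq 3$ with potentially multiple colors per point. With $k=3$, the color of each point in a minimum-distance configuration is essentially forced, which simplifies the case analysis. For $k>3$, the hypothesis yields a third point in some $\cS_t$ where $t$ need only differ from the two set-indices currently in use, so there are $k-2$ possible choices; one must verify that the construction producing the smaller-distance configuration is consistent for every such choice, and in particular that it does not accidentally reuse the starting point $p$ or collapse the 2-flat $\pi$. Since the geometric content of the argument is invariant under relabeling of colors and never relies on $k=3$ specifically, I expect no new geometric ideas are needed beyond those in \cite{EdelsteinKelly66}, only more careful case work.
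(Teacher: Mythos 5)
Your plan diverges from the paper's proof, and each of the two key shortcuts you propose hides a genuine gap. The paper does not run an extremal-distance argument at all. It fixes $p_1 \in \cS_1$, $p_2 \in \cS_2$, forms the pencil of planes $H_p = \plane{p,\line{p_1,p_2}}$ over all points $p$ off the line, observes that the hypothesis forces every $H_p \setminus \line{p_1,p_2}$ to contain points of at least two different colours, and then applies a pen-of-lines corollary of the Sylvester--Gallai theorem (\autoref{lem:pen-of-lines}) to produce a $3$-space $\Gamma$ meeting exactly two of the $H_p$'s; a line joining two differently-coloured points, one in each of these two planes and both off $\line{p_1,p_2}$, then has no room for a third configuration point, a contradiction. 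The extremal-distance template you invoke is, by contrast, a genuinely real argument: its central step uses the linear order of collinear points on a real line (take the point nearest the foot of the perpendicular, etc.). Over $\C$, a line is a real $2$-plane with no such ordering, the "middle point" has no meaning, and the step fails. This is precisely why Kelly's theorem over $\C$ required a different proof; saying you will "invoke Kelly's complex extension where needed" asserts that some patch exists rather than supplying one, and does not repair the fact that the engine of your argument does not turn over $\C$.

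The second gap is the reduction to disjoint sets, which the paper never attempts -- it works directly with the condition $p_t \in \cS_t \setminus \{p_i,p_j\}$. Your claim that one can delete overlapping points while "preserving the hypothesis" by a suitable choice of which set to remove from is not routine and may not be achievable, because distinct pairs can impose conflicting demands. Concretely, take $k=3$ and $p \in \cS_1 \cap \cS_2$; suppose some pair $(q_1,q_3) \in \cS_1\times\cS_3$ has $p$ (as a member of $\cS_2$) as its \emph{only} third collinear witness, while some other pair $(q_2,q_3')\in\cS_2\times\cS_3$ has $p$ (as a member of $\cS_1$) as its only witness. Removing $p$ from $\cS_2$ kills the first pair's witness, and removing it from $\cS_1$ kills the second's; your fallback -- using $p$ as a representative of its other containing set -- is unavailable in either case, because that other index is already one of the two indices of the pair in question. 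To make the reduction rigorous you would have to prove that such conflicts cannot occur under the theorem's hypotheses, which is a non-trivial structural statement and not obviously true. The paper's direct treatment avoids this entirely, and you should do the same.
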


Observe that the main difference from the $k$-set version of \autoref{thm:EK} is that we do not require that the set are disjoint, rather that the third point on the line differs from the first two.

%Next we state an extension of a robust version of the Edelstein-Kelly theorem that was proved in \cite{DBLP:conf/stoc/Shpilka19}. 

\begin{definition}\label{def:partial-EK}
We say that the sets $\cT_1,\cT_2,\cT_3\subset \C^n$ form a partial-$\delta$-EK configuration if for every $i \in [3]$ and $p\in \cT_i$, if $\cT_j$ is the larger set among the other two sets, then
at least $\delta$ fraction of the vectors $p_j\in\cT_j$ satisfy that  $p$ and $p_j$ span some vector in the third set.
\end{definition}

\begin{theorem}[Extension of Theorem 1.9 of \cite{DBLP:conf/stoc/Shpilka19}]\label{thm:partial-EK-robust}
	Let $0<\delta \leq 1$ be any constant. Let $\cT_1,\cT_2,\cT_3\subset\C^n$ be disjoint finite subsets that form a partial-$\delta$-EK configuration.  Then $\dim(\spn{\cup_i \cT_i}) \leq O(1/\delta^3)$.
\end{theorem}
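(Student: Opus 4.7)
The plan is to adapt the proof of Theorem 1.9 of \cite{DBLP:conf/stoc/Shpilka19} by combining a double-counting/popular-point argument with the robust Sylvester--Gallai theorem (\autoref{thm:robustSG}). Without loss of generality assume $|\cT_1|\le|\cT_2|\le|\cT_3|$. Applying the partial-$\delta$-EK condition to points of $\cT_2$ (for which $\cT_3$ is the larger of the other two sets) counts at least $\delta|\cT_2||\cT_3|$ collinear triples in $\cT_1\times\cT_2\times\cT_3$; the conditions applied to points of $\cT_1$ and of $\cT_3$ produce analogous collections of triples whose ``third points'' lie in $\cT_2$ and in $\cT_1$, respectively.

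By averaging over $\cT_1$, some $p^*\in\cT_1$ lies on at least $\delta|\cT_2||\cT_3|/|\cT_1|\ge \delta|\cT_3|$ lines determined by pairs $(p_2,p_3)\in\cT_2\times\cT_3$. Passing to the quotient $W':=\spn{\cup_i\cT_i}/\spn{p^*}$ with quotient map $\pi$, every such triple identifies $\pi(p_2)$ with $\pi(p_3)$, so the ``merged'' set $\cM:=\pi(\cT_2)\cap\pi(\cT_3)$ has size $\Omega(\delta|\cT_3|)$. I then want to argue that $\cM$, together with $\pi(\cT_1)$, inherits an $\Omega(\delta^2)$-SG configuration in $W'$: for $m\ne m'$ in $\cM$, lift $m$ to $p_3\in\cT_3$ and apply the partial-$\delta$-EK condition at $p_3$ (where $\cT_2$ is the larger of the remaining two sets) to obtain a $\delta$-fraction of $p_2'\in\cT_2$ collinear with $p_3$ through some point of $\cT_1$; a routine double count shows that an $\Omega(\delta)$-fraction of such witnesses project onto the line through $m,m'$ in $W'$, supplying a third collinear point in $\pi(\cT_1)$. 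Then \autoref{thm:robustSG} yields $\dim(\spn{\cM\cup\pi(\cT_1)})=O(1/\delta^2)$, and one further popular-point projection (absorbing the unmerged points of $\cT_2$ and $\cT_3$) together with lifting through $\spn{p^*}$ yields $\dim(\spn{\cup_i\cT_i})=O(1/\delta^3)$.

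The main obstacle will be the second step: rigorously verifying that $\cM$ satisfies the $\Omega(\delta^2)$-SG condition after projection. One has to carefully account for witnesses that are destroyed by the quotient---triples whose third point lies in $\spn{p^*}$, pairs in $\cT_2\times\cT_3$ whose images in $W'$ collapse to the same projective point for trivial reasons, or witnesses from $\cT_1$ that project to $\pi(p^*)$---and argue via Markov-type inequalities and re-averaging that a positive fraction of pairs in $\cM^2$ still admit genuine third points in $\pi(\cT_1)$. The cubic dependence $1/\delta^3$ reflects exactly the three $1/\delta$ losses incurred: the initial averaging that produced $p^*$, the per-line averaging over $\cT_2$-witnesses required to obtain an SG condition on $\cM$, and the final invocation of the robust Sylvester--Gallai theorem.
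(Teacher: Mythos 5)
Your proposal has a genuine gap: it does not survive the regime where the three sets have very different sizes, which is precisely the hard case. Order the sets as you do, $|\cT_1|\le|\cT_2|\le|\cT_3|$, and suppose $|\cT_2|\ll|\cT_3|$ (say $|\cT_2|\approx|\cT_3|^{1/3}$). A point $p_3\in\cT_3$ is only guaranteed $\delta|\cT_2|$ witnesses, because for $p_3$ the larger of the two remaining sets is $\cT_2$. Your merged set $\cM$ has size $\Omega(\delta|\cT_3|)$, so even in the best case where all $\delta|\cT_2|$ witnesses of $p_3$ survive the quotient, land in $\cM\cup\pi(\cT_1)$, and are distinct there, the point $m=\pi(p_3)$ has SG-partners for only a $\delta|\cT_2|/(\delta|\cT_3|)=|\cT_2|/|\cT_3|$ fraction of the configuration, which tends to $0$. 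So the claimed $\Omega(\delta^2)$-SG condition on $\cM$ is false in general, and no single popular-point quotient can repair a density deficit of this kind; \autoref{thm:robustSG} is simply not applicable. (A secondary problem: your ``routine double count'' asserts that witnesses for $p_3$ land on the specific line through $m$ and a prescribed $m'$, whereas the EK condition only produces third points on lines through $m$ and \emph{some} $\delta$-fraction of $\pi(\cT_2)$, with no control over whether the particular lift of $m'$ is among them. Likewise, your lower bound $\delta|\cT_3|$ counts \emph{pairs} through $p^*$, not distinct lines, so $|\cM|$ could a priori collapse if many pairs share a line through $p^*$.)

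The paper's proof is structured around exactly this obstruction. With the largest set called $\cT_1$, it splits into two cases. When the smallest set satisfies $m_3>m_1^{1/3}$ (sizes polynomially comparable), it randomly subsamples $\cT_1$ down to size about $m_2$ so that all three sets become comparable; a Chernoff argument shows the sampled configuration is a $\Theta(\delta)$-SG configuration, and \autoref{thm:robustSG} gives dimension $O(1/\delta)$. When $m_3\le m_1^{1/3}$, it abandons SG-density arguments entirely and instead runs an iterative covering argument on the two-dimensional spaces $V(p,q)=\spn{p,q}$ for $p\in\cT_2$, $q\in\cT_3$: each step picks an uncovered $p\in\cT_2$, adds $O(1/\delta^2)$ vectors coming from the ``heavy'' spaces $V(p,q)$ (those meeting $\cT_1$ in at least $(\delta^2/100)m_1$ points), and a disjointness argument on the neighborhoods $\Gamma_1(p)$ shows the process stops after $O(1/\delta)$ steps, giving the $O(1/\delta^3)$ bound. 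If you want to salvage your approach, you would at minimum need to add a case analysis on the relative sizes and a substitute argument for the unbalanced case.
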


Finally, we state equivalent algebraic versions of \autoref{thm:EK}. The proofs 
%of \autoref{thm:ek-not-disjoint-vec} and \autoref{thm:ek-not-disjoint-linforms} 
follow immediately from \autoref{thm:ek-not-disjoint}. For the simple translation from points to vectors and to linear forms see Remark 2.7 in \cite{DBLP:conf/stoc/Shpilka19}.  We shall refer to each of Theorems~\ref{thm:ek-not-disjoint}, \ref{thm:ek-not-disjoint-vec} and \ref{thm:ek-not-disjoint-linforms} as the Edelstien-Kelly theorem. We shall also refer to sets of points/vectors/linear forms that satisfy the conditions of the relevant theorem as satisfying the condition of  the Edelstien-Kelly theorem.

\begin{theorem}\label{thm:ek-not-disjoint-vec}
	Let $k\geq 3$ and $\{\MVar{\calL}{k}\}$ be finite sets of vectors in $\C^n$. Assume that for every $v_i\in \calL_i$ and $v_j\in \calL_j$ such that $v_i\not \sim v_j$ there exists $t \in [k]\setminus \{i,j\}$ and $v_t\in \calL_t\setminus (\spn{v_i}\cup \spn{v_j})$, such that  $v_t \in \spn{v_i,v_j}$. Then $\dim(\cup_{i\in [k]}\calL_i) \leq 4$.
\end{theorem}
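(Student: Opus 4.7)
My plan is to reduce \autoref{thm:ek-not-disjoint-vec} to the point version \autoref{thm:ek-not-disjoint} by passing from linear data in $\C^n$ to affine data on a generic hyperplane. Concretely, since $\cup_{i\in[k]} \calL_i$ is a finite collection of nonzero vectors, I can choose a covector $c \in (\C^n)^*$ such that $\langle c, v\rangle \neq 0$ for every $v \in \cup_i \calL_i$, and then let $H = \{x \in \C^n : \langle c, x\rangle = 1\}$. For each $v \in \calL_i$ define $\pi(v) = v/\langle c, v\rangle \in H$, and set $\cS_i = \pi(\calL_i) \subset H$.

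Next I would verify that the map $\pi$ translates the hypothesis of \autoref{thm:ek-not-disjoint-vec} into the hypothesis of \autoref{thm:ek-not-disjoint}. Two crucial observations: (i) $\pi(v) = \pi(v')$ if and only if $v \sim v'$, so the condition $v_i \not\sim v_j$ on the vector side becomes the condition $\pi(v_i) \neq \pi(v_j)$ on the point side; and (ii) for three nonzero vectors $v_i,v_j,v_t$ no two of which are parallel, $v_t \in \spn{v_i,v_j}$ holds if and only if the three affine points $\pi(v_i), \pi(v_j), \pi(v_t)$ are collinear in $H$. Granting these, the hypothesis of \autoref{thm:ek-not-disjoint-vec} asserts exactly that for every pair $p_i \in \cS_i$, $p_j \in \cS_j$ with $p_i \neq p_j$, there exist $t \in [k]\setminus\{i,j\}$ and $p_t \in \cS_t \setminus \{p_i, p_j\}$ with $p_i, p_j, p_t$ collinear; this is the hypothesis of \autoref{thm:ek-not-disjoint}.

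Applying \autoref{thm:ek-not-disjoint} then yields that $\cup_i \cS_i$ spans an affine subspace of $H$ of dimension at most $3$. Finally, to bound the linear dimension of $\cup_i \calL_i$ I would note that $\calL_i$ lies in the cone $\C \cdot \cS_i$ over $\cS_i$, and the cone over an affine subspace of dimension at most $3$ (sitting in the hyperplane $\{\langle c,\cdot\rangle = 1\}$) is a linear subspace of dimension at most $4$. Hence $\dim(\spn{\cup_i \calL_i}) \leq 4$, as required.

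I do not anticipate a real obstacle here: the proof is essentially a bookkeeping exercise transporting a known affine theorem to a linear one. The only subtlety is making sure that the generic affine chart $H$ is chosen so that no vector in $\cup_i \calL_i$ is parallel to $H$ (which is trivial because the sets are finite) and correctly accounting for the $+1$ in passing from affine to linear dimension.
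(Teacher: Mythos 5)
Your reduction is correct and is exactly the translation the paper has in mind: it gives no proof of its own but states that the vector and linear-form versions ``follow immediately'' from \autoref{thm:ek-not-disjoint} via the standard passage between points and vectors (citing Remark 2.7 of \cite{DBLP:conf/stoc/Shpilka19}), which is precisely your generic affine chart $H$, the equivalences $\pi(v)=\pi(v') \iff v\sim v'$ and ``spanning $\iff$ collinear,'' and the $+1$ from affine to linear dimension. No gaps.
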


\begin{theorem}\label{thm:ek-not-disjoint-linforms}
	Let $k\geq 3$ and $\{\MVar{\calP}{k}\}$ be finite sets of linear forms in $\CRing{x}{n}$. Assume that for every $\ell_i\in \calP_i$ and $\ell_j\in \calP_j$ such that $\ell_i\not \sim \ell_j$ there exists $t \in [k]\setminus \{i,j\}$ and $\ell_t\in \calP_t\setminus (\spn{\ell_i}\cup \spn{\ell_j})$, such that  $\ell_t \in \spn{\ell_i,\ell_j}$. Then $\dim(\cup_{i\in [k]}\calP_i) \leq 4$.
\end{theorem}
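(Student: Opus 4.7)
The plan is to deduce \autoref{thm:ek-not-disjoint-linforms} directly from its vector analogue \autoref{thm:ek-not-disjoint-vec} via the canonical coefficient-vector identification. Concretely, I would fix the $\C$-linear bijection
\[
\Phi:\spn{x_1,\ldots,x_n}\longrightarrow\C^n,\qquad \Phi\Bigl(\sum_{i=1}^n\alpha_i x_i\Bigr)=(\alpha_1,\ldots,\alpha_n),
\]
and define $\calL_i\eqdef \Phi(\calP_i)$ for each $i\in[k]$. Since $\Phi$ is a linear isomorphism, it preserves linear combinations and scalar equivalence: for any linear forms $\ell,\ell'$ one has $\ell\sim\ell'$ iff $\Phi(\ell)\sim\Phi(\ell')$, and $\ell_t\in\spn{\ell_i,\ell_j}$ iff $\Phi(\ell_t)\in\spn{\Phi(\ell_i),\Phi(\ell_j)}$. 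Moreover, for any subset $\calP\subseteq\spn{x_1,\ldots,x_n}$, $\dim(\spn{\calP})=\dim(\spn{\Phi(\calP)})$.

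Under this identification the hypothesis of \autoref{thm:ek-not-disjoint-linforms} on the sets $\calP_1,\ldots,\calP_k$ translates verbatim into the hypothesis of \autoref{thm:ek-not-disjoint-vec} on the sets $\calL_1,\ldots,\calL_k$: for every $v_i\in\calL_i$ and $v_j\in\calL_j$ with $v_i\not\sim v_j$ there is $t\in[k]\setminus\{i,j\}$ and $v_t\in\calL_t\setminus(\spn{v_i}\cup\spn{v_j})$ with $v_t\in\spn{v_i,v_j}$. Applying \autoref{thm:ek-not-disjoint-vec} yields $\dim(\spn{\cup_{i\in[k]}\calL_i})\leq 4$, and transporting back through $\Phi^{-1}$ gives $\dim(\spn{\cup_{i\in[k]}\calP_i})\leq 4$, as required. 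There is no substantive obstacle here; all of the combinatorial work is contained in \autoref{thm:ek-not-disjoint} and its passage to the vector form, so this step is purely a change of language.
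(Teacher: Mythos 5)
Your proof is correct and takes essentially the same approach as the paper: the paper states that both the vector and linear-form versions follow immediately from \autoref{thm:ek-not-disjoint} via the standard translation between points, vectors, and linear forms (citing Remark 2.7 of \cite{DBLP:conf/stoc/Shpilka19}), and your explicit coefficient-vector identification $\Phi$ is exactly that translation step from \autoref{thm:ek-not-disjoint-vec} to \autoref{thm:ek-not-disjoint-linforms}.
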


\fi

\ifEK
\section{Proof of Theorem~\ref{thm:main}}\label{sec:proof}
%\begin{theorem*}[\autoref{thm:main}]
%	There exists a universal constant $c$ such that the following holds. 
%Let $\cT_1,\cT_2, \cT_3\subset\C[x_1,\ldots,x_n]$ be finite sets of pairwise linearly independent homogeneous polynomials satisfying the following properties:
%\begin{itemize}
%\item Each $Q\in\cup_{j\in[3]}\cT_j$ is either irreducible quadratic  or a square of a linear function.
%\item For every two polynomials $Q_1$ and $Q_2$ from distinct sets the product of all the polynomials in the third set, is in the ideal generated by $Q_1$ and $Q_2$. 
%\end{itemize}
% Then,  $\dim(\spn{\cup_{j\in[3]}\cT_j})\leq c$.
%\end{theorem*}
%%This settles conjecture 29 of \cite{Gupta14} for degree $r=2$.

%\begin{remark}
%Whenever we say that two quadratics $Q_1$ and $Q_2$ from distinct sets satisfy \autoref{thm:structure}\ref{case:span} we mean that there is a polynomial $Q_3$ from the third set in their linear span. Similarly, when we say that they satisfy \autoref{thm:structure}\ref{case:rk1}  (\autoref{thm:structure}\ref{case:2}) we mean that there is a reducible quadratic in their linear span (they belong to $\ideal{a_1,a_2}$ for two linear forms $a_1,a_2$).
%\end{remark}

%\begin{proof}[Proof of \autoref{thm:main}]
%\subsection{Proof of the main theorem}

%The proof of \autoref{thm:main} is based on case analysis. 
%As a first step we partition each $\cT_i$ to two sets. 
Let $\calL_i$ be the set of all squares in $\cT_i$ and let $\cQ_i$ be the remaining irreducible quadratics. Thus, $\cT_i = \cQ_i \cup \calL_i$. Denote $|\cQ_i| = m_i$ and $|\calL_i| = r_i$. We also denote
\begin{equation}\label{eq:Lj}
\calL_j = \{ a_{j,i}^2 \mid m_j+1\leq i\leq m_j+r_j\}\;.
\end{equation}
Let $\delta =\frac{1}{100}$. 
The following sets will be the basis for the case analysis: \begin{equation}\label{eq:P^{(i)}}
	\calP^{\ref{case:span}}_1 =\left\lbrace P \in \cQ_1 \;\middle|\;
	\begin{tabular}{@{}l@{}}
		At least $\delta$ fraction of the polynomials in the {\bf larger} set \\ among $\cQ_2$  and $\cQ_3$  satisfy \autoref{thm:structure}\ref{case:span}, but not\\ \autoref{thm:structure}\ref{case:rk1}, with P
	\end{tabular}
	\right\rbrace
\end{equation}
	and
\begin{equation}\label{eq:P^{(iii)}}
	\calP^{\ref{case:2}}_1 =\left\lbrace P \in \cQ_1 \;\middle|\;
	\begin{tabular}{@{}l@{}}
	P satisfies \autoref{thm:structure}\ref{case:2} with at least a $\delta$ fraction \\ of the polynomials in {\bf one} of the sets
	 $\cQ_2$ or $\cQ_3$
	\end{tabular}
	\right\rbrace\;.
\end{equation}
	
We define the sets $\calP^{\ref{case:span}}_2,\calP^{\ref{case:span}}_3,\calP^{\ref{case:2}}_2,\calP^{\ref{case:2}}_3$ analogously.

\begin{remark}\label{rem:lin-only-2}
Our proof heavily relies on \autoref{thm:structure}. This theorem speaks about a pair of polynomials $P$ and $Q$. Whenever one of them is a square of a linear function we shall always assume/say that they satisfy case~\ref{case:rk1} of \autoref{thm:structure}. Note that even if they satisfy cases~\ref{case:span} or~\ref{case:2} of the theorem it is still true that they satisfy case~\ref{case:rk1} as well. Namely, saying that  $P,Q$ do not satisfy \autoref{thm:structure}\ref{case:rk1}, in particular implies that neither polynomials is a square of a linear function.
\end{remark}

The proof of \autoref{thm:main} is organized as follows. In \autoref{sec:easy-case} we deal with the case where for every $j\in[3]$, $\cQ_j = \calP^{\ref{case:span}}_j\cup \calP^{\ref{case:2}}_j$. 
In \autoref{sec:special} we concentrate on a special case that will play an important role in the proof of the theorem for the case not covered in  \autoref{sec:easy-case}.
%\autoref{sec:special} studies a special case that was not covered in \autoref{sec:easy-case} and that will be very useful for the general case. 
Finally, in \autoref{sec:hard} we  handle the case that was not covered by our previous arguments.  

\subsection{For every $j\in [3]$, $\cQ_j = \calP^{\ref{case:span}}_j\cup \calP^{\ref{case:2}}_j$}\label{sec:easy-case}

Assume that for every $j\in[3]$, $\cQ_j = \calP^{\ref{case:span}}_j\cup \calP^{\ref{case:2}}_j$. For our purposes, we may further assume that $\calP^{\ref{case:span}}_j\cap \calP^{\ref{case:2}}_j = \emptyset$ by setting $\calP^{\ref{case:span}}_j = \calP^{\ref{case:span}}_j \setminus \calP^{\ref{case:2}}_j$.
The proof of \autoref{thm:main} for this case consists of the following steps:
\begin{enumerate}
	\item \label{step:1} We first prove the existence of a constant dimensional vector space of linear forms, $V$, such that each $\calP^{\ref{case:2}}_j$ is contained in $\ideal{V}$. This is proved in \autoref{lem:VforC}.
	\item \label{step:2} Next, we find a small set of polynomials $\cI'$ such that  $\cup_{j\in[3]}\cQ_j \subset \spn{(\cup_{j\in[3]}\cQ_j\cap \ideal{V})\cup \cI'}$. This is proved in \autoref{cor:I'}.
	\item \label{step:3} The last step is bounding the dimension of $\cup_{j\in[3]}\cT_j$ given that  $\cup_{j\in[3]}\cQ_j \subset \spn{(\cup_{j\in[3]}\cQ_j\cap \ideal{V})\cup \cI'}$.
	%bounding $\dim\left(\cup_{j}\cQ_j \cap \ideal{V} \right)$.
\end{enumerate}

\paragraph{Step~\ref{step:1}:}

\begin{lemma}\label{lem:VforC}
There exists a linear space of linear forms, $V$, such that $\dim(V)=O(1)$ and $\cup_j\calP^{\ref{case:2}}_j \subset \ideal{V}$.
\end{lemma}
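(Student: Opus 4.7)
My plan is a greedy construction of $V$ based on the rank structure of polynomials in $\cup_j\calP^{\ref{case:2}}_j$. First, I establish that every $P\in\cup_j\calP^{\ref{case:2}}_j$ has $\rank_s(P)=2$: the definition of $\calP^{\ref{case:2}}_j$ together with \autoref{rem:lin-only-2} supplies at least one Case~\ref{case:2} witness pair $(c,d)$ of linearly independent linear forms with $P\in\ideal{c,d}$, so $P = c\cdot c' + d\cdot d'$ for some linear forms $c',d'$, giving $\rank_s(P)\leq 2$; irreducibility of $P$ forces $\rank_s(P)\geq 2$. Hence $\dim\MS(P)=4$, and by \autoref{cla:irr-quad-span} every such witnessing pair satisfies $c,d\in\MS(P)$. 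Consequently, for every Case~\ref{case:2} witness triple $(P,Q,R)$ with $Q$ an irreducible quadratic in another $\cQ_{j'}$, the $2$-dimensional space $\spn{c_{P,Q},d_{P,Q}}$ lies inside both $\MS(P)$ and $\MS(Q)$, and the third polynomial $R$ lies in $\ideal{\spn{c_{P,Q},d_{P,Q}}}$.

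Given this, I would build $V$ greedily. Initialize $V$ as the zero subspace. While some $P\in\cup_j\calP^{\ref{case:2}}_j$ satisfies $P\notin\ideal{V}$, pick such a $P$ and replace $V$ by $V+\MS(P)$. Each iteration adds at most $4$ dimensions. After an update centered at $P$, we have $P\in\ideal{\MS(P)}\subseteq\ideal{V}$, and every Case~\ref{case:2} witness $Q$ (resp.\ third polynomial $R$) of $P$ satisfies $Q,R\in\ideal{\spn{c_{P,Q},d_{P,Q}}}\subseteq\ideal{V}$ because $\spn{c_{P,Q},d_{P,Q}}\subseteq\MS(P)\subseteq V$. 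Thus each iteration not only absorbs $P$ but also sweeps in, for free, the many witnesses and third polynomials associated to $P$.

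The crux, and the main obstacle, is to bound the number of iterations by $O(1)$. Suppose for contradiction that $t$ iterations produce $P_1,\ldots,P_t$ with $\MS(P_i)\not\subseteq\MS(P_1)+\cdots+\MS(P_{i-1})$ for every $i$. By pigeonhole over $j\in[3]$ and over which of the two other sets supplies the $\delta m$ witnesses for each $P_i$, I may assume WLOG that all $P_i\in\calP^{\ref{case:2}}_1$ with $\delta m_2$ witnesses in $\cQ_2$, and I fix Case~\ref{case:2} witness triples $(Q_i,R_i,c_i,d_i)$ with $P_i,Q_i,R_i\in\ideal{c_i,d_i}$, $Q_i\in\cQ_2$, $R_i\in\cT_3$. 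I would then arrange the linear forms $\{c_i,d_i\}_i$---exploiting the abundance of witnesses per $P_i$ to re-pick them---into three ``colored'' sets $\calC_1,\calC_2,\calC_3$ of linear forms satisfying the Edelstein--Kelly hypothesis that any two forms from different sets span a third form from the remaining set; this colored structure will encode the joint vanishing of $P_i,Q_i,R_i$ on $\{c_i=d_i=0\}$ and is where the bulk of the technical work lies. Granting this, \autoref{thm:ek-not-disjoint-linforms} yields $\dim\spn{\{c_i,d_i\}_i}=O(1)$, and a projection argument via \autoref{cla:z-map-dimension} together with \autoref{cla:z-map-rank} (projecting along a generic complement of this small common span and using that each $\MS(P_i)$ already contains the ruling $\spn{c_i,d_i}$) lifts the conclusion to $\dim(\sum_i\MS(P_i))=O(1)$, contradicting the linear growth forced by $\MS(P_i)\not\subseteq\sum_{i'<i}\MS(P_{i'})$. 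Hence $t=O(1)$ and $\dim V = O(1)$ upon termination, as desired.
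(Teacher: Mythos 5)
Your setup is sound and matches the paper's: every $P\in\cup_j\calP^{\ref{case:2}}_j$ has $\rank_s(P)=2$ and $\dim(\MS(P))=4$, any Case~\ref{case:2} witness plane $\spn{c,d}$ lies inside $\MS(P)$ (cf.\ \autoref{rem:4-2-dim}), and the greedy addition of $\MS(P)$ to $V$ is exactly how the paper builds $V$. The gap is the termination bound. You propose to bound the number of greedy steps via an Edelstein--Kelly argument on the forms $\{c_i,d_i\}$, but the step you yourself flag as ``the bulk of the technical work''---arranging them into three colored sets satisfying an EK hypothesis---is not carried out, and it is unclear that it can be: distinct triples $(P_i,Q_i,R_i,c_i,d_i)$ carry no a priori algebraic relation to one another, so there is no mechanism forcing two independent forms $c_i,c_{i'}$ from ``different colors'' to span a form of the ``third color.'' Furthermore, even granting $\dim\spn{\{c_i,d_i\}_i}=O(1)$, the lift to $\dim\left(\sum_i\MS(P_i)\right)=O(1)$ is unjustified: $\spn{c_i,d_i}$ accounts for only $2$ of the $4$ dimensions of $\MS(P_i)$, so the remaining directions are uncontrolled, and your invocation of \autoref{cla:z-map-dimension} presupposes an $O(1)$ bound on $\dim(\MS(T_{\vaa,V'}(\{P_i\})))$ after projection that you have not supplied.

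The paper avoids EK here entirely and uses a short double count. After partitioning $\calP^{\ref{case:2}}_1$ by which set supplies the $\delta$-fraction of Case~\ref{case:2} witnesses (the sets $\calP^{\ref{case:2}\rightarrow(2)}_1$ and $\calP^{\ref{case:2}\rightarrow(3)}_1$), the greedy process picks $Q_i\notin\ideal{W}$ and updates $W\leftarrow W+\MS(Q_i)$. The key point (\autoref{cla:3-case3}) is that a single witness $Q\in\cQ_2$ can satisfy Case~\ref{case:2} with at most $3$ of the picked polynomials: setting $U_i\eqdef\MS(Q)\cap\MS(Q_i)$, one has $\dim(\MS(Q))\leq 4$, $\dim(U_i)\geq 2$, and the escape condition $Q_i\notin\ideal{W}$ forces $U_i\not\subseteq U_1+\cdots+U_{i-1}$; a short dimension count (\autoref{cla:3-are-V}) then gives $U_1+U_2+U_3=\MS(Q)$, leaving no room for a fourth $U_4$ to escape. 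Double-counting ($\geq\delta m_2$ witnesses per picked polynomial; each $Q\in\cQ_2$ witnesses $\leq 3$ of them) bounds the number of steps by $3/\delta$, hence $\dim(W)\leq 12/\delta$, and summing the analogous pieces over the $j$'s gives $\dim(V)=O(1)$. Your greedy framework and rank observations can be kept verbatim, but the EK excursion should be replaced by this elementary counting argument.
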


To get the intuition behind the lemma we make  the following observation.
\begin{observation}\label{rem:4-2-dim}
If $Q_1\in\cQ_1$ and $Q_2\in\cQ_2$ satisfy \autoref{thm:structure}\ref{case:2} then $\dim(\MS(Q_1)), \dim(\MS(Q_2)) \leq 4$ and $\dim(\MS(Q_1)\cap\MS(Q_2)) \geq 2$. 
\end{observation}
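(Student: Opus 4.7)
The plan is to unpack what \autoref{thm:structure}\ref{case:2} means via \autoref{rem:terminology}, and then read off both claims from the rank and minimal-space machinery developed in \autoref{sec:rank}.

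First, I would invoke \autoref{rem:terminology}: saying that $Q_1$ and $Q_2$ satisfy \autoref{thm:structure}\ref{case:2} means that $Q_1,Q_2\in\ideal{a_1,a_2}$ for some linear forms $a_1,a_2$. I would begin by ruling out the degenerate case $\dim\spn{a_1,a_2}\leq 1$: if $a_1,a_2$ were linearly dependent, then $Q_1,Q_2\in\ideal{a_1}$, and since $\ideal{a_1}$ is a prime ideal we would get $a_1\mid Q_1$, contradicting the fact that $Q_1\in\cQ_1$ is irreducible. So $a_1,a_2$ are linearly independent, and I may write
\[
Q_1=a_1\ell_1+a_2\ell_2,\qquad Q_2=a_1\ell_3+a_2\ell_4
\]
for some linear forms $\ell_1,\ell_2,\ell_3,\ell_4$.

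Next, I would use \autoref{cor:containMS} to note that $\MS(Q_1)\subseteq\spn{a_1,a_2,\ell_1,\ell_2}$ and $\MS(Q_2)\subseteq\spn{a_1,a_2,\ell_3,\ell_4}$, which immediately gives $\dim(\MS(Q_1)),\dim(\MS(Q_2))\leq 4$. For the intersection bound I would observe that the displayed decompositions show $\rank_s(Q_1)\leq 2$ and $\rank_s(Q_2)\leq 2$, while irreducibility forces $\rank_s(Q_i)\geq 2$; hence both ranks equal $2$ and the representations above are already minimal. By the uniqueness of the minimal space (\autoref{cla:irr-quad-span}), $\MS(Q_1)=\spn{a_1,a_2,\ell_1,\ell_2}$ and $\MS(Q_2)=\spn{a_1,a_2,\ell_3,\ell_4}$, so both $a_1$ and $a_2$ lie in $\MS(Q_1)\cap\MS(Q_2)$. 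Since $a_1,a_2$ are linearly independent, $\dim(\MS(Q_1)\cap\MS(Q_2))\geq 2$.

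There is no real obstacle here; the only point that requires a moment of care is the degenerate case where $a_1,a_2$ might be dependent, and this is handled by irreducibility of $Q_1$ via primality of $\ideal{a_1}$. The observation is essentially just a direct translation of \autoref{thm:structure}\ref{case:2} into the language of minimal spaces, and it will later be used to feed pairs of quadratics in $\calP^{\ref{case:2}}_j$ into a dimension-counting argument producing the common space $V$ of \autoref{lem:VforC}.
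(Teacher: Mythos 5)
Your proof is correct, and since the paper states this observation without proof, your argument is exactly the natural justification it leaves implicit: writing $Q_i=a_1\ell+a_2\ell'$ forces $\rank_s(Q_i)=2$ by irreducibility, so \autoref{cla:irr-quad-span} pins down $\MS(Q_i)$ as the span of those four forms, giving both bounds. (The appeal to primality of $\ideal{a_1}$ in the degenerate case is superfluous—membership in a principal ideal already gives divisibility—but harmless.)
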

This shows that we have many small dimensional spaces that have large pairwise intersections. It is thus conceivable that such $V$ may exist.

\begin{proof}[Proof of \autoref{lem:VforC}]
We shall prove the existence of a vector space $V_j$ for $\calP^{\ref{case:2}}_j$ and at the end take $V=V_1+V_2+V_3$. 

We construct $V_j$ via an iterative process. To simplify notation we describe the process for $j=1$. The other cases are completely analogous.
%We repeat the following process for each set $\cQ_j$. To simplify notation we describe it for $\cQ_1$. Define
Denote 
	\[
\calP^{\ref{case:2}\rightarrow(2)}_1 =\left\lbrace P \in \calP^{\ref{case:2}}_1 \;\middle|\;
\begin{tabular}{@{}l@{}}
$P$ satisfies \autoref{thm:structure}\ref{case:2}  with \\ at least $\delta$ fraction of the polynomials in $\cQ_2$
\end{tabular}
\right\rbrace \;,
\]
and
\[
\calP^{\ref{case:2}\rightarrow(3)}_1 =\left\lbrace P \in \calP^{\ref{case:2}}_1 \;\middle|\;
\begin{tabular}{@{}l@{}}
$P$ satisfies \autoref{thm:structure}\ref{case:2}  with \\ at least $\delta$ fraction of the polynomials in $\cQ_3$
\end{tabular}
\right\rbrace \;.
\]
It clearly holds that $\calP^{\ref{case:2}}_1 = \calP^{\ref{case:2}\rightarrow(2)}_1 \cup \calP^{\ref{case:2}\rightarrow(3)}_1$. 
 
Consider the following process. Set $W_2 = \{\vec{0}\}$ and $\calP^{\ref{case:2}\rightarrow(2)'}_1=\emptyset$. At each step consider any $Q \in \calP^{\ref{case:2}\rightarrow(2)}$ such that $Q\notin \ideal{W_2}$ and update $W_2 \leftarrow \MS(Q) + W_2$, and $\calP^{\ref{case:2}\rightarrow(2)'}_1 \leftarrow\calP^{\ref{case:2}\rightarrow(2)'}_1 \cup \{Q\}$. We repeat this process as long as possible, i.e, as long as $\calP^{\ref{case:2}\rightarrow(2)}\not \subseteq \ideal{W_2}$. 

We next  show that this process terminates after at most $\frac{3}{\delta}$ steps. In particular, $|\calP^{\ref{case:2}\rightarrow(2)'}_1| \leq \frac{3}{\delta}$. It is clear that at the end of the process it holds that $\calP^{\ref{case:2}\rightarrow(2)}_1 \subset \ideal{W_2}$.
 
 \begin{claim}\label{cla:3-case3}
 	Let $Q\in \cQ_2$ and let $\cB\subseteq \calP^{\ref{case:2}\rightarrow(2)'}_1$ be the subset of all polynomials in $\calP^{\ref{case:2}\rightarrow(2)'}_1$ that satisfy \autoref{thm:structure}\ref{case:2} with $Q$.  Then, $|\cB| \leq 3$.
 \end{claim}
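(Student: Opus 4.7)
The plan is to argue by contradiction: suppose $|\cB| \geq 4$, and let $P_1, P_2, P_3, P_4 \in \cB$ be listed in the order they were added to $\calP^{\ref{case:2}\rightarrow(2)'}_1$. By case~\ref{case:2} of \autoref{thm:structure} applied to each pair $(P_i, Q)$, there exist linearly independent linear forms $a_i, c_i$ (independence follows from irreducibility of $P_i$) such that $P_i, Q \in \ideal{L_i}$, where $L_i = \spn{a_i, c_i}$. Set $U := \MS(Q)$.

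The first step is to show $L_i \subseteq U$ and $\dim U = 4$. Since $Q$ vanishes on the codimension-$2$ subspace $V(L_i)$, \autoref{cla:rank-mod-space} gives $\rank_s(Q) \leq 2$, and irreducibility of $Q$ forces $\rank_s(Q) = 2$. The length-$2$ representation $Q = a_i \tilde b_i + c_i \tilde d_i$ arising from $Q \in \ideal{L_i}$ is therefore minimal (else $Q$ would have rank $1$ and factor), so \autoref{cla:irr-quad-span} gives $U = \spn{a_i, \tilde b_i, c_i, \tilde d_i}$, whence $L_i \subseteq U$. A short case analysis rules out $\dim U \leq 3$: if $\dim U \leq 2$ then $Q$ is reducible, and if $\dim U = 3$ then irreducibility of $Q$ forces $Q|_U$ to be a smooth ternary quadratic, whose zero conic in $\mathbb{P}(U) \cong \mathbb{P}^2$ contains no projective line---contradicting the existence of any $L_i \subseteq U$.

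The combinatorial heart of the proof is the claim that the $L_i$'s are pairwise distinct and $L_1 + L_2 + L_3 = U$. Distinctness: if $L_i = L_j$ with $i < j$, then $L_j = L_i \subseteq \MS(P_i) \subseteq W_2^{(j-1)}$, so $P_j \in \ideal{L_j} \subseteq \ideal{W_2^{(j-1)}}$, contradicting the fact that $P_j$ was added to $\calP^{\ref{case:2}\rightarrow(2)'}_1$. For the sum, suppose toward contradiction that $L_1 + L_2 + L_3 \subseteq W$ for some $3$-dimensional $W \subsetneq U$. Applying \autoref{cla:rank-mod-space} to any $1$-dimensional complement of $W$ inside $U$ yields $\rank_s(Q|_W) \geq 1$, so $Q|_W$ is a nonzero ternary quadratic form. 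But then $\mathbb{P}(L_1), \mathbb{P}(L_2), \mathbb{P}(L_3)$ are three distinct projective lines in $\mathbb{P}(W) \cong \mathbb{P}^2$ all contained in the plane conic $V(Q|_W)$---contradicting the elementary fact that a nonzero plane conic contains at most two lines (zero, one, or two according as its rank is $3$, $1$, or $2$).

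Since $L_i \subseteq \MS(P_i)$ for each $i$, we conclude $U = L_1 + L_2 + L_3 \subseteq \MS(P_1) + \MS(P_2) + \MS(P_3) \subseteq W_2^{(i_3)}$, where $i_3$ denotes the step at which $P_3$ enters $\calP^{\ref{case:2}\rightarrow(2)'}_1$. Hence $P_4 \in \ideal{L_4} \subseteq \ideal{U} \subseteq \ideal{W_2^{(i_4-1)}}$, so the iteration would not add $P_4$, contradicting $P_4 \in \cB$. Therefore $|\cB| \leq 3$. I expect the main hurdle to be the geometric step showing that three distinct $L_i$'s cannot share a common $3$-dimensional ambient subspace of $U$; once the rank analysis of $Q|_W$ via \autoref{cla:rank-mod-space} is combined with the plane-conic observation, the rest follows cleanly from the iteration rule.
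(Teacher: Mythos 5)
Your proof has a genuine gap in the geometric steps, and the plane--conic lemma you invoke is the wrong tool. First, the assertion $\dim U = 4$ is false: take the irreducible quadratic $Q = xy + z^2$, which has $\rank_s(Q) = 2$ and $\MS(Q) = \spn{x,y,z}$ of dimension~$3$, yet $Q = x\cdot y + z\cdot z \in \ideal{x,z}$, so $L = \spn{x,z}$ is a perfectly admissible $L_i \subseteq U$. The step by which you rule out $\dim U = 3$ conflates a space of linear forms with its dual: the conic $V(Q)$ sits in $\mathbb{P}(U^\ast)$, not in $\mathbb{P}(U)$, and a $2$-dimensional $L_i \subseteq U$ corresponds under annihilation to a $1$-dimensional subspace of $U^\ast$, i.e.\ a \emph{point} of that conic, not a line. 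A smooth conic has no lines, but it does have infinitely many points, so there is no contradiction.

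The same $\mathbb{P}(W)$-versus-$\mathbb{P}(W^\ast)$ confusion undermines the spanning step $L_1+L_2+L_3 = U$. You claim $\mathbb{P}(L_1),\mathbb{P}(L_2),\mathbb{P}(L_3) \subseteq \mathbb{P}(W)$ are lines contained in the conic $V(Q|_W)$, but the latter lives in $\mathbb{P}(W^\ast)$; after restricting to $W$, the condition $Q \in \ideal{L_i}$ only says the ternary form vanishes at the \emph{point} $L_i^\circ \in \mathbb{P}(W^\ast)$, and three distinct points on a conic carry no information. The geometric fact you actually need lives one dimension up: when $\dim U = 4$, $V(\bar Q) \subseteq \mathbb{P}(U^\ast)\cong\mathbb{P}^3$ is a \emph{smooth quadric surface}, each $L_i^\circ$ is a \emph{line} on it, exactly two lines of the quadric pass through any given point, and if $L_1+L_2+L_3 \subseteq W$ then the three distinct lines $L_1^\circ,L_2^\circ,L_3^\circ$ would all contain the point $W^\perp$---a contradiction. (When $\dim U = 3$ one simply observes $L_1+L_2 = U$ from distinctness.) So your strategy of working with the $L_i$ directly can be repaired, but it requires the quadric-surface argument, not the conic one. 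The paper avoids this geometry entirely by working with $U_i := \MS(Q)\cap\MS(P_i)$, which by \autoref{rem:4-2-dim} have dimension at least~$2$, and then applying the one-line linear-algebra fact \autoref{cla:3-are-V} about $2$-dimensional subspaces of a space of dimension at most~$4$.
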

 
 \begin{proof}
 	Assume towards a contradiction that $|\cB| \geq 4$. Let $Q_1,Q_2,Q_3$ and $Q_4$ be the first four elements of $\cB$ that where added to $\calP^{\ref{case:2}\rightarrow(2)'}_1$ (in that order). Denote $U=\MS(Q)$ and, for $1\leq i \leq 4$, let  $U_i = U\cap \MS(Q_i)$.
 	
 	As $Q$ satisfies \autoref{thm:structure}\ref{case:2} we have that  $\dim(U) \leq 4$. Furthermore, for every $i$, $\dim(U_i)\geq 2$ (by \autoref{rem:4-2-dim}). As the $Q_i$s were picked by the iterative process, we have that $U_2 \not \subseteq U_1$. Indeed, since $Q_2 \in \ideal{U_2}$, if $U_2 \subseteq U_1$ then after adding $Q_1$ to 
$\calP^{\ref{case:2}\rightarrow(2)'}_1$ we would get that $U_2 \subseteq U_1\subseteq  \MS(Q_1)\subseteq W_2$, in contradiction to $Q_2\in \calP^{\ref{case:2}\rightarrow(2)'}_1$. Similarly we get that
 	$U_3 \not \subseteq U_1 + U_2$ and  $U_4 \not \subseteq U_1+U_3 +U_3$. However, as the next simple claim shows, this is not possible.
 	\begin{claim} \label{cla:3-are-V}
 		Let $V$ be a linear space of dimension $\leq 4$, and let $V_1,V_2,V_3 \subset V$, each of dimension $\geq 2$, such that $V_1\not \subseteq V_2$ and  $V_3\not \subseteq V_2 + V_1$. Then, $V = V_1+V_2+V_3$.
 	\end{claim}
 	\begin{proof}
 		As $V_1\not \subseteq V_2$ we have that $\dim(V_1+V_2)\geq 3$. Similarly we get 
 		$4\leq \dim(V_1+V_2+V_3)\leq \dim(V)=4$.
 	\end{proof}
 	Thus, \autoref{cla:3-are-V} implies that $U=U_1+U_2+U_3$ and in particular, $U_4 \subseteq U_1+U_2+U_3$ in contradiction. This completes the proof of \autoref{cla:3-case3}.
 \end{proof}
 
We continue with the proof of \autoref{lem:VforC}. For $Q_i \in \calP^{\ref{case:2}\rightarrow(2)'}_1$, define 
 \[\cG_i =\left\{Q\in \cQ_2 \mid Q \text{ and }Q_i \text{ satisfiy \autoref{thm:structure}\ref{case:2}}\right\}\;.
 \] 
Since $|\cG_{i}|\geq \delta m_2$, and as by \autoref{cla:3-case3} each $Q\in \cQ_2$ belongs to at most $3$ different sets, it follows by double counting that  $|\calP^{\ref{case:2}\rightarrow(2)'}_1|\leq 3/\delta$.
As in each step of the process we add at most $4$ linearly independent linear forms to $W_2$, we obtain $\dim(W_2)\leq \frac{12}{\delta}$.
 
We can now repeat a similar process to obtain $W_3$ such that $\calP^{\ref{case:2}\rightarrow(3)}\subset \ideal{W_3}$. We now have that $V_1 \eqdef W_2+W_3$ is such that $\calP^{\ref{case:2}}_1= \calP^{\ref{case:2}\rightarrow(2)}\cup\calP^{\ref{case:2}\rightarrow(3)}\subset \ideal{V_1}$ as we wanted.

This completes the proof of \autoref{lem:VforC}.
\end{proof}

%Denote by $V = V_1+V_2+V_3$, that where obtained in \autoref{lem:VforC}. It holds that $\cup_{j\in[3]}\calP^{\ref{case:2}}_j \subset \ideal{V}$.

%Consider the following diagram:
%
%\begin{center}
%	\begin{tikzpicture}[dot/.style={circle,inner sep=1.3pt,fill,label={#1},name=#1},
%	extended line/.style={shorten >=-#1,shorten <=-#1},
%	extended line/.default=1cm]
%
%	
%	\draw[thick] (1.5,-1) ellipse (0.75 and 1.75);
%	
%	\draw[thick] (4.0,-1) ellipse (0.75 and 1.75);
%	
%	\draw[thick] (6.5,-1) ellipse (0.75 and 1.75);
%	
%	\draw [-] (5.75,-1) -- (7.25,-1);
%	\draw [-] (3.25,-1) -- (4.75,-1);
%	\draw [-] (0.75,-1) -- (2.25,-1);
%	
%	
%	\node (Q1) at (6.5,1) {{$\cQ_1$}};
%	\node (A1) at (6.5,0) {\small{$\cA_1$}};
%	\node (C1) at (6.5,-2) {\small{$\calP^{\ref{case:2}}_1$}};
%	
%	
%	\node (Q2) at (4.0,1) {{$\cQ_2$}};
%	\node (A2) at  (4.0,0)  {\small{$\cA_2$}};
%	\node (C2) at  (4.0,-2)  {\small{$\calP^{\ref{case:2}}_2$}};
%	
%	
%	\node (Q3) at (1.5,1) {{$\cQ_3$}};
%	\node (A3) at (1.5,0) {\small{$\cA_3$}};
%	\node (C3) at (1.5,-2) {\small{$\calP^{\ref{case:2}}_3$}};
%	\end{tikzpicture}  
%\end{center}

\paragraph{Step~\ref{step:2}:}
We would now like to find a small set of polynomials $\cI$ such that $\cup_{j\in[3]}\cQ_j \subset \ideal{V}+\spn{\cI}$.  
This will follow if we could prove that $\calP^{\ref{case:span}}_1,\calP^{\ref{case:span}}_2,\calP^{\ref{case:span}}_3$ form a partial-$\frac{\delta}{3}$-EK configuration as in \autoref{def:partial-EK}. Unfortunately, we do not know how to prove this directly. Instead, we shall describe an iterative process for constructing $\cI$, and prove that when the process terminates we have that either  $\cup_{j\in[3]}\cQ_j \subset \spn{\cup_{j\in[3]}\cQ_j\cap \ideal{V}, \cI}$ or the polynomials that remain in the sets $\calP^{\ref{case:span}}_1,\calP^{\ref{case:span}}_2,\calP^{\ref{case:span}}_3$ form a partial-$\frac{\delta}{3}$-EK configuration.

The intuition behind the next process is as follows: Assume without loss of generality that $m_1\geq m_2\geq m_3$. Consider a polynomial $P\in\calP^{\ref{case:span}}_1$. We know that there are at least $\delta m_2$ polynomials in $\cQ_2$ such that $P$ satisfies \autoref{thm:structure}\ref{case:span} but not \autoref{thm:structure}\ref{case:rk1} with each of them. In particular, for every such $Q_i\in \cQ_2$ there is a polynomial $T_i\in \cQ_3$  such that $T_i \in \spn{Q_i,P}$. Indeed, $T_i \notin \calL_3$ as $P,Q_i$ do not satisfy \autoref{thm:structure}\ref{case:rk1}. 
Now, if $Q_i\in \calP^{\ref{case:2}}_2$ and $T_i\in \calP^{\ref{case:2}}_3$ then it holds that $P \in \spn{\cup_{j\in[3]}\cQ_j\cap \ideal{V}, \cI}$ as well. On the other hand, if for at least $\delta / 3$ of those $Q_i$s it holds that either $Q_i  \in \calP^{\ref{case:span}}_2$ or $T_i  \in \calP^{\ref{case:span}}_3$, then by adding $P$ to $\cI$ we get that a constant fraction of the polynomials from $\calP^{\ref{case:span}}_2 \cup \calP^{\ref{case:span}}_3$ now belongs to $\spn{\cup_{j\in[3]}\cQ_j\cap \ideal{V}, \cI}$. Thus, whenever we find a polynomial not in $\spn{\cup_{j\in[3]}\cQ_j\cap \ideal{V}, \cI}$ we can move it to $\cI$ and get that a constant fraction of remaining polynomials were added to $\spn{\cup_{j\in[3]}\cQ_j\cap \ideal{V}, \cI}$. In particular, we expect the process to terminate after a constant number of steps. We next give a formal description of the process explained above.

\paragraph{The process for constructing $\cI$:}
Set $I = \emptyset$. Let 
\[\cD_i = \left\{ Q\in \cQ_i\mid Q\in \spn{\cup_{j\in[3]}\cQ_j\cap \ideal{V}, \cI}\right\} \quad \text{and}\quad \calP^{\ref{case:span}'}_i = \cQ_i \setminus \cD_i \;.
\]
As long as $\cup_i \calP^{\ref{case:span}'}_i\neq \emptyset$ we do the following until we cannot proceed further: Consider  
$P\in \calP^{\ref{case:span}'}_i$ and let $\cQ_j$ be the larger among the two sets not containing $P$ (e.g., if $P\in \calP^{\ref{case:span}'}_2$ then $j=1$).  
By definition, there are at least $\delta m_j$ polynomials $Q\in \cQ_j$ that satisfy \autoref{thm:structure}\ref{case:span} but not \autoref{thm:structure}\ref{case:rk1} with $P$. Each of these polynomials defines a polynomial $T\in \cQ_k$ such that $T \in \spn{P,Q}$ (if there is more than one such $T$ then pick any of them). We call any such pair $(Q,T)$ a $P$-pair. If more than  $\frac{2}{3}\delta m_j$ of the $P$-pairs thus defined belong to $\calP^{\ref{case:span}'}_j\times \calP^{\ref{case:span}'}_k$ then we move to the next polynomial in $\cup_i \calP^{\ref{case:span}'}_i$. Otherwise, we add $P$ to $\cI$, and update $\cD_1,\cD_2,\cD_3,\calP^{\ref{case:span}'}_1,\calP^{\ref{case:span}'}_2,\calP^{\ref{case:span}'}_3$ accordingly. The process continues until we cannot add any new polynomial to $\cI$.

%We know that at the beginning of the process $\calP^{\ref{case:2}}_i \subseteq \calP^{\ref{case:2}}'_i$ and therefore $\calP^{\ref{case:span}'}_i\subseteq \calP^{\ref{case:span}}_i$. As the process we only move polynomials from $\calP^{\ref{case:span}'}_i$ to $\calP^{\ref{case:2}}'_i$, these properties remain throughout. 

%We call $Q_t$ a $\calP^{\ref{case:2}}'_j \times \calP^{\ref{case:2}}'_k$ if $Q_t\in \calP^{\ref{case:2}}'_j$ and $T_t\in  \calP^{\ref{case:2}}'_k$. Similarly we define $\calP^{\ref{case:2}}'_j \times \calP^{\ref{case:span}'}_k$, $\calP^{\ref{case:span}'}_j \times \calP^{\ref{case:2}}'_k$ and $\calP^{\ref{case:span}'}_j \times \calP^{\ref{case:span}'}_k$. These are all the possible configurations.

%If at least $\delta / 3 m_j$ of the corresponding polynomials in $\cQ_j$ are either $\calP^{\ref{case:2}}'_j \times \calP^{\ref{case:span}'}_k$ or $\calP^{\ref{case:span}'}_j \times \calP^{\ref{case:2}}'_k$ then add $P$ to $\cI$, and update $\calP^{\ref{case:2}}'_1,\calP^{\ref{case:2}}'_2,\calP^{\ref{case:2}}'_3,\calP^{\ref{case:span}'}_1,\calP^{\ref{case:span}'}_2,\calP^{\ref{case:span}'}_3$ accordingly.

\paragraph{Analysis:}
First, we claim that this process terminates after at most $18 /\delta$ steps. This will follow from showing that  from each $\cQ_j$ we added at most $6 /\delta$ polynomials to $\cI$.

\begin{claim}\label{cla:2nd-proc}
	At every step at which we added a polynomial $P\in \calP^{\ref{case:span}'}_i$ to $\cI$, at least $\frac{\delta}{3}m_j$  polynomials were moved to $\cD_j\cup \cD_k$ from $\calP^{\ref{case:span}'}_j\cup \calP^{\ref{case:span}'}_k$. In particular, $|\cI \cap \cQ_i|\leq 6 /\delta$.
\end{claim}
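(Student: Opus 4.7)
The plan is to combine the triggering rule of the process (which adjoins $P\in \calP^{\ref{case:span}'}_i$ to $\cI$ precisely when fewer than $\tfrac{2}{3}\delta m_j$ of its $P$-pairs sit inside $\calP^{\ref{case:span}'}_j\times \calP^{\ref{case:span}'}_k$) with a contradiction argument that severely restricts the $P$-pairs whose components both lie in $\cD$. Since there are at least $\delta m_j$ distinct $P$-pairs (one per qualifying $Q\in\cQ_j$) to begin with, at least $\tfrac{1}{3}\delta m_j$ of them have at least one component already in $\cD_j\cup\cD_k$. The key step will be to upgrade ``at least one'' to ``exactly one'' and then certify that the newly moved polynomials are distinct.

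The main algebraic input is that for any $P$-pair $(Q,T)$ one has $T\in\spn{P,Q}$ with $T\not\sim Q$ (they belong to distinct sets whose elements are pairwise linearly independent), hence $P\in\spn{Q,T}$. Consequently, if both $Q\in\cD_j$ and $T\in\cD_k$ then $P\in\spn{(\cup_j \cQ_j\cap\ideal{V})\cup \cI}$, contradicting $P\in\calP^{\ref{case:span}'}_i$. Thus each of the $\ge\tfrac{1}{3}\delta m_j$ ``useful'' pairs has exactly one component in $\cD$ and the other in $\calP'$, and adjoining $P$ to $\cI$ instantly relocates the $\calP'$-component into $\cD$ through the relation $T=\alpha P+\beta Q$.

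The one bookkeeping subtlety, which I regard as the main obstacle, is verifying that the moved polynomials are distinct. Split the useful pairs into $A=\{(Q,T):Q\in\calP^{\ref{case:span}'}_j,\,T\in\cD_k\}$ and $B=\{(Q,T):Q\in\cD_j,\,T\in\calP^{\ref{case:span}'}_k\}$. The polynomials moved from pairs in $A$ live in $\cQ_j$ while those moved from pairs in $B$ live in $\cQ_k$, so the two families are automatically disjoint. Within $A$ the moved $Q$s are distinct because each $Q\in\cQ_j$ generates a single $P$-pair. For $B$, suppose two pairs $(Q,T)$ and $(Q',T)$ share the same $T$; writing $T=\alpha P+\beta Q=\alpha'P+\beta'Q'$ forces either $T\sim P$ (when $\alpha=\alpha'$, since then $\beta Q=\beta'Q'$ with $Q\not\sim Q'$ implies $\beta=\beta'=0$) or $P\in\spn{Q,Q'}\subseteq \spn{(\cup_j\cQ_j\cap\ideal{V})\cup \cI}$ (when $\alpha\neq\alpha'$, since $Q,Q'\in\cD_j$), both impossible. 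Hence the moved $T$s are distinct as well, and in total at least $\tfrac{\delta}{3}m_j$ distinct polynomials are moved from $\calP^{\ref{case:span}'}_j\cup\calP^{\ref{case:span}'}_k$ into $\cD_j\cup\cD_k$.

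For the final quantitative bound, I would observe that throughout the subprocess of adding elements of $\cQ_i$ to $\cI$ the role of $\cQ_j$ (the larger of the two sets not containing $\cQ_i$) is fixed, so $m_j$ is a constant depending only on $i$ and $|\cQ_j\cup\cQ_k|\le 2m_j$. Since each addition to $\cI\cap\cQ_i$ permanently relocates at least $\tfrac{\delta}{3}m_j$ elements of $\cQ_j\cup\cQ_k$ out of $\calP^{\ref{case:span}'}_j\cup\calP^{\ref{case:span}'}_k$, and the movement is monotone (once in $\cD$, a polynomial stays there), the number of such additions is at most $2m_j/(\tfrac{\delta}{3}m_j)=6/\delta$, as required.
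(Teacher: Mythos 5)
Your proof is correct and follows essentially the same approach as the paper's: both establish that at least $\tfrac{1}{3}\delta m_j$ of the $P$-pairs have exactly one component in $\cD$, then show that the newly relocated components are pairwise distinct (distinct $Q$'s in the $\calP'_j\times\cD_k$ case by uniqueness of the $P$-pair construction, distinct $T$'s in the $\cD_j\times\calP'_k$ case via the $P\in\spn{Q,Q'}$ contradiction), and close with the counting argument against $|\calP'_j\cup\calP'_k|\leq 2m_j$. The only stylistic difference is that the paper tracks the split with a parameter $\eta$ rather than naming the sets $A$ and $B$, which is immaterial.
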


\begin{proof}
	We use the notation from the description of the process and let $\cQ_j$ is the larger among the sets not containing $P$.	
	%In the rest of the proof whenever we talk of a pair $(Q,T)\in \cQ_j\times \cQ_k$ we assume that $P\in\spn{Q,T}$.
	
	By the description of the process, $P$ was added to $\cI$ if at most  $\frac{2}{3}\delta m_j$ of the $P$-pairs $(Q,T)$  belong to $\calP^{\ref{case:span}'}_j\times \calP^{\ref{case:span}'}_k$. As $P\not\in \cD_i$ it follows that at least $\frac{1}{3}\delta m_j$ of those $P$-pairs belong to $(\calP^{\ref{case:span}'}_j \times \cD_k)\cup(\cD_j \times \calP^{\ref{case:span}'}_k)$. Indeed, no $P$-pair $(Q,T)$ belongs to $\cD_j \times \cD_k$ as in that case  $P\in \cD_i$ in contradiction.
		
	After adding $P$ to $\cI$, every $P$-pair $(Q,T)$ such that $(Q,T)\in  (\calP^{\ref{case:span}'}_j \times \cD_k)\cup(\cD_j \times \calP^{\ref{case:span}'}_k)$ will now satisfy $Q,T\in \spn{\cup_{j\in[3]}\cQ_j\cap \ideal{V}, \cI}$. Indeed, $P,Q$ and $T$ satisfy a non trivial linear dependence and since two of the polynomials in the linear combination are in $\spn{\cup_{j\in[3]}\cQ_j\cap \ideal{V}, \cI}$ then so is the third. 
	
To conclude the proof we just have to show that we added many polynomials to $\spn{\cup_{j\in[3]}\cQ_j\cap \ideal{V}, \cI}$.

By definition, if $(Q,T)$ is a $P$-pair then there is no other $T'$ such that  $(Q,T')$ is a $P$-pair. Thus, if $(Q,T)\in \calP^{\ref{case:span}'}_j \times \cD_k$ then, after adding $P$ to $\cI$, $Q$ was added to $D_j$. Consider now a $P$-pair $(Q,T)\in \cD_j \times \calP^{\ref{case:span}'}_k$. 
We claim that there is no other $Q\neq Q'\in \cD_j$ such that $(Q',T)\in \cD_j \times \calP^{\ref{case:span}'}_k$ is a $P$-pair. Indeed, if there was such a $Q'$ then by pairwise independence we would conclude that $P\in \spn{Q,Q'}\subseteq \spn{\cup_{j\in[3]}\cQ_j\cap \ideal{V}, \cI}$, in contradiction. 
It follows that when adding $P$ to $\cI$ at least $\frac{1}{3}\delta m_j$ polynomials were moved from $\calP^{\ref{case:span}'}_j\cup\calP^{\ref{case:span}'}_k$ to $\spn{\cup_{j\in[3]}\cQ_j\cap \ideal{V}, \cI}$. 
To see this, assume that there are $\eta m_j$ $P$-pairs $(Q,T)\in \calP^{\ref{case:span}'}_j \times \cD_k$. Then, since the projection on the first coordinate is unique, at least $\eta m_j$ polynomials from $\calP^{\ref{case:span}'}_j$ were added to $\cD_j$. Similarly, at least $\left(\frac{\delta}{3}-\eta\right)m_j$ polynomials $T\in\calP^{\ref{case:span}'}_k$ were added to $\cD_k$. 
As $|\calP^{\ref{case:span}'}_j\cup \calP^{\ref{case:span}'}_k|\leq 2|\cQ_j|=2m_j$ it follows that we can repeat this for at most $6/\delta$ many polynomials $P\in\cQ_i$. Consequently,  $|\cI\cap \cQ_i |\leq 6/\delta$.
\end{proof}

We next show that if at least one of $\calP^{\ref{case:span}'}_1$ and $\calP^{\ref{case:span}'}_2$ is empty then we have that $\cup_j \cQ_j \subset \spn{\cup_{j\in[3]}\cQ_j\cap \ideal{V}, \cI}$.

%Denote$\ideal{V} $ to be the set of homogeneous polynomials of degree $2$ in $\ideal{V}$.

\begin{claim}\label{cla:2nd-proc2}
	When the process terminates if one of $\calP^{\ref{case:span}'}_1,\calP^{\ref{case:span}'}_2$ or $\calP^{\ref{case:span}'}_3$ is empty then for every $j\in [3]$, $\cQ_j = \cD_j$ and in particular $\cup_j \cQ_j \subset \spn{\cup_{j\in[3]}\cQ_j\cap \ideal{V}, \cI}$.
\end{claim}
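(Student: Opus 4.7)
The plan is to argue by contradiction directly from the termination condition of the process. Suppose without loss of generality that $\calP^{\ref{case:span}'}_3 = \emptyset$, and assume towards contradiction that (say) $\calP^{\ref{case:span}'}_1 \neq \emptyset$. Pick any $P \in \calP^{\ref{case:span}'}_1$. By construction, $P \in \calP^{\ref{case:span}}_1$ so the larger set $\cQ_j$ among $\{\cQ_2,\cQ_3\}$ contributes at least $\delta m_j \geq 1$ polynomials $Q$ satisfying \autoref{thm:structure}\ref{case:span} (and not \ref{case:rk1}) with $P$, each producing a $P$-pair $(Q,T) \in \cQ_j \times \cQ_k$ with $\{j,k\}=\{2,3\}$.

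The crucial observation is that every such $P$-pair has either its first or second coordinate in $\cQ_3$, so membership in $\calP^{\ref{case:span}'}_j \times \calP^{\ref{case:span}'}_k$ forces either the first or second coordinate to lie in $\calP^{\ref{case:span}'}_3 = \emptyset$. Hence the number of $P$-pairs lying in $\calP^{\ref{case:span}'}_j \times \calP^{\ref{case:span}'}_k$ is $0$, which is not more than $\tfrac{2}{3}\delta m_j > 0$. According to the description of the process this forces $P$ to be added to $\cI$ at the step it was considered, contradicting the fact that upon termination $P$ still lies in $\calP^{\ref{case:span}'}_1$. Therefore $\calP^{\ref{case:span}'}_1 = \emptyset$, and by the identical argument $\calP^{\ref{case:span}'}_2 = \emptyset$.

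Since all three sets $\calP^{\ref{case:span}'}_j$ are empty, by their definition every $Q \in \cQ_j$ belongs to $\cD_j$, i.e., $\cQ_j = \cD_j$ for all $j \in [3]$. Unpacking the definition of $\cD_j$ then yields $\cup_j \cQ_j \subseteq \spn{\cup_{j\in[3]}\cQ_j \cap \ideal{V},\ \cI}$, which is the desired conclusion. There is no substantial obstacle in this step; the whole content is a pigeonhole read of the termination condition together with the observation that one missing color among $\{1,2,3\}$ kills all $P$-pairs for the other two.
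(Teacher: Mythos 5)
Your proof is correct and takes essentially the same approach as the paper: both observe that since one of the two colors feeding a $P$-pair has its $\calP^{\ref{case:span}'}$-set empty, zero $P$-pairs can lie in $\calP^{\ref{case:span}'}_j\times\calP^{\ref{case:span}'}_k$, so the process would have added $P$ to $\cI$, contradicting termination. One small remark: you do not need the assertions $\delta m_j\geq 1$ or $\tfrac{2}{3}\delta m_j>0$; zero is never strictly more than the nonnegative threshold $\tfrac{2}{3}\delta m_j$, so $P$ is added to $\cI$ regardless.
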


\begin{proof}
	First we note that if for two different indices $\calP^{\ref{case:span}'}_i=\calP^{\ref{case:span}'}_j=\emptyset$ then, as every polynomial in $\calP^{\ref{case:span}'}_k$ is in the span of two polynomials from $\cQ_i= \cD_i$ and $\cQ_j= \cD_j$, it must hold that $\calP^{\ref{case:span}'}_k=\emptyset$ as well. In particular, the claim holds in this case.

	Assume for a contradiction that $\calP^{\ref{case:span}'}_j=\emptyset$ and that for some $i\neq j$, $\calP^{\ref{case:span}'}_i\neq \emptyset$. Let $P\in \calP^{\ref{case:span}'}_i$. Observe that every $P$-pair $(Q,T)\in \cQ_j\times \cQ_k$ that spans $P$, is in $\cD_j\times \calP^{\ref{case:span}'}_k$ (it does not matter which among $\cQ_j$ and $\cQ_k$ is larger). In particular, by the description of the process we must add $P$ to $\cI$ in contradiction to the fact that the process already terminated.
%
%	If both $\calP^{\ref{case:span}'}_1$ and $\calP^{\ref{case:span}'}_2$ are empty then,  as every polynomial in $\calP^{\ref{case:span}'}_3$ is in the span of two polynomials from $\cQ_1= \calP^{\ref{case:2}}'_1$ and $\cQ_2= \calP^{\ref{case:2}}'_2$, it must hold that $\calP^{\ref{case:span}'}_3=\emptyset$ and hence $\cQ_3 = \calP^{\ref{case:2}}'_3$ as well. 
%	
%	If one of $\calP^{\ref{case:span}'}_1$ or $\calP^{\ref{case:span}'}_2$ is empty then observe that the iterative process will continue until both $\calP^{\ref{case:span}'}_2$ and $\calP^{\ref{case:span}'}_3$ are empty.
%	If $\calP^{\ref{case:span}'}_3$ is empty, then similarly, the iterative process could have continue until the other sets are covered
\end{proof}

We are now ready to show that if at the end of the process the sets $\calP^{\ref{case:span}'}_1,\calP^{\ref{case:span}'}_2$ and $\calP^{\ref{case:span}'}_3$ are not empty then they form a partial EK configuration.

\begin{claim}\label{cla:2nd-proc3}
If none of the sets $\calP^{\ref{case:span}'}_i$ is empty when the process terminates then 
%After the process ends, it holds that either $\calP^{\ref{case:span}'}_1 = \calP^{\ref{case:span}'}_2 = \calP^{\ref{case:span}'}_3 = \emptyset$ or 
it must hold that $|\calP^{\ref{case:span}'}_1| \geq \frac{2}{3} \delta m_1$ and  $|\calP^{\ref{case:span}'}_2| \geq \frac{2}{3} \delta m_2$.
\end{claim}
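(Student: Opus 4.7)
The plan is to read off both bounds directly from the termination condition of the construction process. The key observation is that when the process halts, any polynomial $P$ still present in some $\calP^{\ref{case:span}'}_i$ must have previously failed the test for being added to $\cI$; otherwise it would already have been moved there and updated out of the current sets. Failing this test means that strictly more than $\frac{2}{3}\delta m_j$ of the $P$-pairs $(Q,T)$ lie in $\calP^{\ref{case:span}'}_j\times \calP^{\ref{case:span}'}_k$, where $\cQ_j$ denotes the larger of the two sets that do not contain $P$. Moreover, by construction each eligible $Q$ is assigned exactly one partner $T$, so distinct $P$-pairs have distinct first coordinates.

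To bound $|\calP^{\ref{case:span}'}_1|$, I would pick any $P\in \calP^{\ref{case:span}'}_3$, which is nonempty by hypothesis. Since $m_1\geq m_2\geq m_3$ was assumed, the larger of the two sets not containing $P$ is $\cQ_1$, so here $j=1$ and $k=2$. The termination condition then yields strictly more than $\frac{2}{3}\delta m_1$ $P$-pairs in $\calP^{\ref{case:span}'}_1\times \calP^{\ref{case:span}'}_2$, and projecting onto the first coordinate — which is injective by the remark above — produces at least $\frac{2}{3}\delta m_1$ distinct elements of $\calP^{\ref{case:span}'}_1$, as desired.

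The second bound is obtained symmetrically: I would pick any $P'\in \calP^{\ref{case:span}'}_1$, again nonempty by hypothesis. Since $m_2\geq m_3$, for this $P'$ the larger of $\cQ_2$ and $\cQ_3$ is $\cQ_2$, so $j=2$ and $k=3$. The same termination argument then produces more than $\frac{2}{3}\delta m_2$ $P'$-pairs in $\calP^{\ref{case:span}'}_2\times \calP^{\ref{case:span}'}_3$, and projecting onto the first coordinate gives $|\calP^{\ref{case:span}'}_2|\geq \frac{2}{3}\delta m_2$. There is really no substantive obstacle here — the only thing worth being careful about is verifying that the map $Q\mapsto T$ used in the pair-construction is a well-defined function (so that different surviving pairs contribute different elements of $\calP^{\ref{case:span}'}_j$), and this is immediate from the prescription that one $T$ is picked per eligible $Q$.
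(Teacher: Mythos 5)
Your proof is correct and is essentially the paper's argument: both read the lower bounds off the termination condition (a surviving $P$ must have more than $\frac{2}{3}\delta m_j$ of its $P$-pairs in $\calP^{\ref{case:span}'}_j\times \calP^{\ref{case:span}'}_k$) together with the injectivity of the first-coordinate projection on $P$-pairs. The only cosmetic difference is that the paper phrases it as a contradiction and draws its witness for the first bound from $\calP^{\ref{case:span}'}_2$ rather than $\calP^{\ref{case:span}'}_3$; since $m_1\geq m_2\geq m_3$, either choice identifies $\cQ_1$ as the larger remaining set, so the two arguments coincide.
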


\begin{proof}
	Assume towards a contradiction that $0 < |\calP^{\ref{case:span}'}_1| < \frac{2}{3} \delta m_1$. From \autoref{cla:2nd-proc2} we know that $\calP^{\ref{case:span}'}_2 \neq \emptyset$. Let $P\in \calP^{\ref{case:span}'}_2$. As $|\calP^{\ref{case:span}'}_1| < \frac{2}{3} \delta m_1$, the description of the process implies that we must have added $P$ to $\cI$ in contradiction.
	A similar argument shows that it cannot be the case that $0 < |\calP^{\ref{case:span}'}_2| < \frac{2}{3} \delta m_2$.
\end{proof}

To summarize, after the process terminates we have either  $\calP^{\ref{case:span}'}_1 = \calP^{\ref{case:span}'}_2 = \calP^{\ref{case:span}'}_3 = \emptyset$ and then $\cup_{j \in [3]}\cQ_j \subset \spn{\cup_{j\in[3]}\cQ_j\cap \ideal{V}, \cI}$ or $\calP^{\ref{case:span}'}_1 , \calP^{\ref{case:span}'}_2 , \calP^{\ref{case:span}'}_3 \neq \emptyset$ and $ \frac{2}{3} \delta m_1\leq |\calP^{\ref{case:span}'}_1| $ and $ \frac{2}{3} \delta m_2 \leq |\calP^{\ref{case:span}'}_2|$.
In addition, when the process terminates it must be the case that for every polynomial $P\in \calP^{\ref{case:span}'}_1$ at least  $\frac{2}{3}\delta m_2$ of the $P$-pairs are in $\calP^{\ref{case:span}'}_2\times \calP^{\ref{case:span}'}_3$. In other words, $P$ satisfies  \autoref{thm:structure}\ref{case:span} but not \autoref{thm:structure}\ref{case:rk1} with at least $\frac{2}{3}\delta m_2$  polynomials in $\calP^{\ref{case:span}'}_2$. Similarly, every polynomial $P\in \calP^{\ref{case:span}'}_2$ satisfies  \autoref{thm:structure}\ref{case:span} but not \autoref{thm:structure}\ref{case:rk1} with at least $\frac{2}{3}\delta m_1$  polynomials in $\calP^{\ref{case:span}'}_1$ and every polynomial $P\in \calP^{\ref{case:span}'}_3$ satisfies  \autoref{thm:structure}\ref{case:span} but not \autoref{thm:structure}\ref{case:rk1} with at least $\frac{2}{3}\delta m_1$  polynomials in $\calP^{\ref{case:span}'}_1$ .

  We are almost done. To show that $\calP^{\ref{case:span}'}_1,\calP^{\ref{case:span}'}_2$ and $\calP^{\ref{case:span}'}_3$ form a partial-$\frac{2\delta}{3}$-EK configuration we just have to show that 
  $|\calP^{\ref{case:span}'}_1|\geq |\calP^{\ref{case:span}'}_2|\geq |\calP^{\ref{case:span}'}_3|$. While this does not have to be the case, since the sizes of $\calP^{\ref{case:span}'}_1$ and $\calP^{\ref{case:span}'}_2$ remain large it is not hard to show that no matter what is the order of the sizes of the sets, the configuration that we have is a partial-$\delta'$-EK configuration for some $\delta'=\Theta(\delta)$. The only possible exception  is when 
  $|\calP^{\ref{case:span}'}_1|\geq |\calP^{\ref{case:span}'}_3|\geq |\calP^{\ref{case:span}'}_2|$. However, in this case a close inspection of the proof of \autoref{thm:partial-EK-robust}  reveals that the conclusion holds in this case as well. We thus have the following corollary.
  
\begin{corollary}\label{cor:J}
  If $\cup_i \calP^{\ref{case:span}'}_i\neq \emptyset$ then there is a set $\cJ$ of size $|\cJ|= O\left(1/\delta^3 \right)$ such that $\cup_i \calP^{\ref{case:span}'}_i\subset \spn{\cJ}$.		
\end{corollary}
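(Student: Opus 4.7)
The plan is to recognize $\calP^{\ref{case:span}'}_1,\calP^{\ref{case:span}'}_2,\calP^{\ref{case:span}'}_3$ as a partial-EK configuration of vectors in the $\C$-vector space of homogeneous quadratic polynomials, and to invoke \autoref{thm:partial-EK-robust}. Two polynomials $P,Q$ in distinct $\calP^{\ref{case:span}'}$-sets that satisfy \autoref{thm:structure}\ref{case:span} yield, by definition, a third polynomial $T\in\cQ_k$ lying in $\spn{P,Q}$; the triple $(P,Q,T)$ is exactly the algebraic EK analogue of three collinear points, and the crucial observation is that when the process terminates the $P$-pair witnesses actually land in the trimmed sets.

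The first step is to verify the partial-EK hypothesis with parameter $\delta'=\Theta(\delta)$. Since the process terminated without adding any further polynomial to $\cI$, every remaining $P\in\calP^{\ref{case:span}'}_i$ has at least $\tfrac{2}{3}\delta m_j$ $P$-pairs lying entirely in $\calP^{\ref{case:span}'}_j\times\calP^{\ref{case:span}'}_k$, where $\cQ_j$ is the originally larger of the two sets not containing $P$. Combined with \autoref{cla:2nd-proc3}, which guarantees $|\calP^{\ref{case:span}'}_i|\geq \tfrac{2}{3}\delta m_i$ for $i=1,2$, and the WLOG ordering $m_1\geq m_2\geq m_3$, the ratio $\tfrac{2}{3}\delta m_j/|\calP^{\ref{case:span}'}_j|$ is bounded below by $\Omega(\delta)$ whenever the post-trimming size ordering agrees with the pre-trimming one. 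In those cases the partial-EK condition of \autoref{def:partial-EK} holds directly.

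I expect the most delicate step to be the single ``flipped'' ordering $|\calP^{\ref{case:span}'}_1|\geq|\calP^{\ref{case:span}'}_3|\geq|\calP^{\ref{case:span}'}_2|$, since then the ``larger set'' for $P\in\calP^{\ref{case:span}'}_1$ in \autoref{def:partial-EK} is $\calP^{\ref{case:span}'}_3$ rather than the set $\calP^{\ref{case:span}'}_2$ from which the $P$-pair lower bound was obtained. One route is to reverse the roles in the $P$-pair: each pair $(Q,T)\in\calP^{\ref{case:span}'}_2\times\calP^{\ref{case:span}'}_3$ with $T\in\spn{P,Q}$ also witnesses $Q\in\spn{P,T}$, and a fibre-counting argument on $Q\mapsto T$ (noting that the preimage of $T$ sits inside the two-dimensional plane $\spn{P,T}$) converts the $\tfrac{2}{3}\delta m_2$ $P$-pairs into an $\Omega(\delta)$ fraction of $\calP^{\ref{case:span}'}_3$. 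Alternatively, as remarked in the paragraph preceding the corollary, one may reinspect the proof of \autoref{thm:partial-EK-robust} to absorb this mild asymmetry directly.

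Once the partial-$\Theta(\delta)$-EK condition is established in every ordering, \autoref{thm:partial-EK-robust} yields $\dim(\spn{\cup_i\calP^{\ref{case:span}'}_i})=O(1/\delta^3)$, and taking $\cJ$ to be any basis of this span gives $|\cJ|=O(1/\delta^3)$ and $\cup_i\calP^{\ref{case:span}'}_i\subset\spn{\cJ}$, as required.
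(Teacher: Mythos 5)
Your proposal takes exactly the paper's route: recognize $\calP^{\ref{case:span}'}_1,\calP^{\ref{case:span}'}_2,\calP^{\ref{case:span}'}_3$ as a partial-EK configuration with parameter $\Theta(\delta)$, invoke \autoref{thm:partial-EK-robust}, and handle the problematic size ordering $|\calP^{\ref{case:span}'}_1|\geq|\calP^{\ref{case:span}'}_3|\geq|\calP^{\ref{case:span}'}_2|$ by re-inspecting the proof of that theorem, which is precisely what the paper does. One caution on your first alternative: the fibre-counting argument for $Q\mapsto T$ is not as readily available as you suggest, since the preimage of a fixed $T$ lives inside the plane $\spn{P,T}$ but that plane can contain arbitrarily many pairwise linearly independent members of $\calP^{\ref{case:span}'}_2$; nothing you cite bounds the fibre size, and the process invariants used earlier (e.g.\ in \autoref{cla:2nd-proc}) to force injectivity relied on one coordinate landing in $\cD$, which is no longer the case after the process terminates. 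Your second option is the one that actually closes the argument.
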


\begin{proof}
	The claim follows immediately from the discussion above and from \autoref{thm:partial-EK-robust}.
\end{proof}

\begin{corollary}\label{cor:I'}
	There exists a set $\cI'$ such that $|\cI'|= O\left(1/\delta^3 \right)$ and $\cup_{j\in[3]}\cQ_j \subset \spn{\cup_{j\in[3]}\cQ_j\cap \ideal{V}, \cI'}$.
\end{corollary}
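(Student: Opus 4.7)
The plan is to combine the two sets produced by the work already done in Step~\ref{step:2}. From the process for constructing $\cI$ together with \autoref{cla:2nd-proc}, I have that $|\cI| \leq 18/\delta$, and after the process terminates every polynomial outside $\cup_j \cD_j$ lies in one of the residual sets $\calP^{\ref{case:span}'}_1,\calP^{\ref{case:span}'}_2,\calP^{\ref{case:span}'}_3$. By \autoref{cla:2nd-proc2}, if any one of these three residual sets is empty when the process halts, then in fact all of $\cup_{j\in[3]}\cQ_j$ is already contained in $\spn{\cup_{j\in[3]}\cQ_j\cap \ideal{V},\cI}$, so in that case I can simply set $\cI' \eqdef \cI$ and the bound $|\cI'|=O(1/\delta)=O(1/\delta^3)$ holds.

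In the remaining case, all three residual sets are non-empty, and \autoref{cla:2nd-proc3} gives the quantitative lower bounds $|\calP^{\ref{case:span}'}_1|\geq \tfrac{2}{3}\delta m_1$ and $|\calP^{\ref{case:span}'}_2|\geq \tfrac{2}{3}\delta m_2$. Moreover, by the termination criterion of the process, every polynomial $P$ in any $\calP^{\ref{case:span}'}_i$ has at least $\tfrac{2}{3}\delta m_j$ of its $P$-pairs lying in $\calP^{\ref{case:span}'}_j\times\calP^{\ref{case:span}'}_k$, where $\cQ_j$ is the larger of the two sets not containing $P$. Since each $P$-pair comes from two polynomials satisfying \autoref{thm:structure}\ref{case:span} with $P$ (i.e.\ spanning an element of the third set), this translates precisely into the partial EK-type condition: in the quotient by appropriate choice of representatives, each $P\in\calP^{\ref{case:span}'}_i$ is spanned together with a constant fraction of elements of the larger neighboring set by a third element lying in the remaining set.

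The one subtlety is that the ordering of the sizes of $|\calP^{\ref{case:span}'}_i|$ may not match the ordering of the original $|\cQ_i|$. However, because $|\calP^{\ref{case:span}'}_1|$ and $|\calP^{\ref{case:span}'}_2|$ are each at least a $\tfrac{2}{3}\delta$ fraction of the corresponding $m_i$, for every $P\in\calP^{\ref{case:span}'}_i$ the fraction of its $P$-pairs landing in $\calP^{\ref{case:span}'}_j\times\calP^{\ref{case:span}'}_k$, renormalized by the current set sizes, is still $\Omega(\delta)$; and the one awkward ordering ($|\calP^{\ref{case:span}'}_1|\geq|\calP^{\ref{case:span}'}_3|\geq|\calP^{\ref{case:span}'}_2|$) is handled by the same argument that underlies \autoref{thm:partial-EK-robust} (as noted in the text). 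Thus the three residual sets form a partial-$\delta'$-EK configuration with $\delta'=\Theta(\delta)$, and \autoref{cor:J} supplies a set $\cJ$ with $|\cJ|=O(1/\delta^3)$ such that $\cup_i\calP^{\ref{case:span}'}_i\subseteq \spn{\cJ}$.

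Finally, I set $\cI' \eqdef \cI \cup \cJ$. By construction every $Q\in\cup_j\cQ_j$ is either in $\cD_j \subseteq \spn{\cup_{j\in[3]}\cQ_j\cap\ideal{V},\cI}$ or in some $\calP^{\ref{case:span}'}_j \subseteq \spn{\cJ}$, so $\cup_{j\in[3]}\cQ_j \subset \spn{\cup_{j\in[3]}\cQ_j\cap\ideal{V},\cI'}$, and the size bound $|\cI'|=O(1/\delta)+O(1/\delta^3)=O(1/\delta^3)$ is immediate. The main (and only) conceptual obstacle is confirming that the termination condition of the process really yields the partial-EK configuration hypothesis despite the possibility that $|\calP^{\ref{case:span}'}_i|$ is reordered relative to $|\cQ_i|$; but this is exactly what the robustness in the proof of \autoref{thm:partial-EK-robust} buys us.
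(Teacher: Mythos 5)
Your proposal is correct and matches the paper's proof, which is exactly the one-line combination $\cI'=\cI\cup\cJ$ with $\cI$ from \autoref{cla:2nd-proc} and $\cJ$ from \autoref{cor:J}; your additional discussion of the case split (via \autoref{cla:2nd-proc2} and \autoref{cla:2nd-proc3}) and of the reordering subtlety simply makes explicit the reasoning the paper carries out in the surrounding text before stating the corollary. No gaps.
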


\begin{proof}
	Let $\cI'=\cI\cup\cJ$ where $\cI$ is the set found in \autoref{cla:2nd-proc} and $\cJ$ is the set guaranteed in \autoref{cor:J}.
\end{proof}

\paragraph{Step~\ref{step:3}:}
We next show how to use \autoref{thm:EK} to bound the dimension of $\cup_{j\in[3]}\cQ_j$ given that  $\cup_{j\in[3]}\cQ_j \subset \spn{(\cup_{j\in[3]}\cQ_j\cap \ideal{V})\cup \cI'}$. To achieve this we introduce yet another iterative process:  Go over all $P\in \cup_{j\in[3]}\cQ_j\setminus \ideal{V}$. For each such $P$, if there is a quadratic polynomial $L$, with $\rank_s(L) \leq 2$, such that  $P + L \in \ideal{V}$, then update $V$ to  $V \leftarrow V+\MS(L)$. Observe that this increases the dimension of $V$ by at most $4$. Also note that as this step increases $\dim\left( \ideal{V}\cap \cup_{j\in[3]}\cQ_j\right)$, we can remove one polynomial from $\cI'$ while still maintaining the property $\cup_{j\in[3]}\cQ_j \subset \spn{\cup_{j\in[3]}\cQ_j\cap \ideal{V}, \cI'}$. We repeat this process until either $\cI'$ is empty or until none of the polynomials in $\cup_{j\in[3]}\cQ_j\setminus \ideal{V}$ satisfies the condition of the process. By the upper bound on $|\cI'|$ the dimension of $V$ grew by at most  $4|\cI'|= O\left( 1/{\delta^3}\right)$ and hence when the process terminates we still have $\dim(V)=O\left( 1/{\delta^3}\right)=O(1)$. 

It is also clear that at each step, $\cup_{j\in[3]}\cQ_j \subset \spn{\cup_{j\in[3]}\cQ_j\cap \ideal{V}, \cI'}$. Finally, when the process terminates, every polynomial  $P \in \cup_{j\in[3]}\cQ_j\setminus \ideal{V}$ satisfies $\rank_s(P) > 2$, even if we set all linear forms in $V$ to zero.

Consider the map $T_{\vaa,V}$ as given in \autoref{def:z-mapping}, for a uniformly random $\vaa\in[0,1]^{\dim(V)}$. Each polynomial in $\cup_{j\in[3]}\cQ_j\cap \ideal{V} $ is mapped to a polynomial of the form  $zb$, for some linear form $b$.  \autoref{cla:rank-mod-space} guarantees that every polynomial in $\cup_{j\in[3]}\cQ_j\setminus \ideal{V}$ still has rank larger than $2$ after the mapping.
Let $$\cB_j= \{b \mid \text{ some polynomial in } \cQ_j\cap \ideal{V}  \text{ was mapped to } zb\} \cup T_{\vaa,V}(\calL_j) \;.$$
By definition $\cB_j$ contains all linear forms that divide some polynomial in $T_{\vaa,V}(\cT_j)$.

We now show that, modulo $z$, the sets $\cB_1,\cB_2$ and $\cB_3$ satisfy the conditions of \autoref{thm:EK}.
Let $b_1\in \cB_i$ and  $b_2\in\cB_j$ be linear forms taken from two different sets such that $b_1 \not\in\spn{z}$ and $b_2 \not\in\spn{z,b_1}$. If no such forms exist then clearly $\dim\left(\cup_k B_k\right)\leq 2$. To simplify notation let us assume without loss of generality that $i=1$ and $j=2$.

As $\cup_{j\in[3]}\cT_j$ satisfies the conditions of \autoref{thm:main},  there are  polynomials $Q_1,\ldots,Q_4 \in \cT_3$  such that $\prod_{i=1}^{4}T_{\vaa,V}(Q_i) \in \sqrt{\ideal{b_1,b_2}}=\ideal{b_1,b_2}$, where the last equality holds as $\ideal{b_1,b_2}$ is a prime ideal.
%\footnote{$\sqrt{\ideal{b_1,b_2}}=\ideal{b_1,b_2}$ as  $\ideal{b_1,b_2}$ is a prime ideal.} 
It follows that, without loss of generality,  $T_{\vaa,V}(Q_4)\in \ideal{b_1,b_2}$. Thus, $T_{\vaa,V}(Q_4)$ has rank at most $2$ and therefore $Q_4\in\calL_k\cup(\cQ_k\cap \ideal{V}) $. Hence, $T_{\vaa,V}(Q_4)$ was mapped to $zb_4$ or to $b_4^2$, for some linear form $b_4$. In particular, $b_4\in\cB_3$. \autoref{cla:res-z-ampping} and \autoref{cla:still-indep} imply that $b_4$ is neither a multiple of $b_1$ nor a multiple of $b_2$. Consequently, $b_4$ depends non-trivially on both $b_1$ and $b_2$. Thus, $\cB_1,\cB_2$ and $\cB_3$ satisfy the conditions of \autoref{thm:EK} modulo $z$. It follows that $\dim(\cup_{j}\cB_j)=O(1)$.

The argument above shows that 
\[ \dim\left( T_{\vaa,V}\left(\cup_{j}\calL_j\cup\left(\cup_{j}(\cQ_j\cap \ideal{V})\right) \right)\right) =O(1) \;.
\] 
\autoref{cla:z-map-dimension} implies that if we denote $U = \spn{\cup_{j}\calL_j\cup(\cup_{j}(\cQ_j\cap \ideal{V})) }$ then  $\dim(U )$ is $O(1)$. As $\cup_{j}\cQ_j \subseteq \spn{(\cup_{j}\cQ_j\cap \ideal{V}) \cup \cI'}$, we obtain that $\dim(\cup_{j}\cT_j)=\dim(\cup_{j}(\cQ_j\cup \calL_j)) = O(1)$, as we wanted to prove.

This concludes the proof of \autoref{thm:main} in the case where for every $j\in[3]$ it holds that $\cQ_j = \calP^{\ref{case:span}}_j\cup \calP^{\ref{case:2}}_j$.

\subsection{Special case of Theorem~\ref{thm:main}}\label{sec:special}

In this section we handle a special case of \autoref{thm:main}. At this point it may not be clear why this case is so important, but when we handle the cases that were not captured by the arguments in \autoref{sec:easy-case} we shall often reduce to it.

%Before handling the other case left in the proof of \autoref{thm:main}, we need to handle another variant of the theorem. In this section we will prove \autoref{thm:main} assuming some specific structure on the polynomials in $\cup_{j \in [3]}\cQ_j$.
%Our goal is to prove the following claim:

The case we deal with here roughly says that \autoref{thm:main} holds when all the polynomials are ``close'' to a special vector space of polynomials. 

%\begin{definition}[Distance modulo $V$]\label{def:dist}
%	Let $V$ be a linear space of linear forms. We say that two polynomials $P$ and $Q$ are at distance $\rho$ modulo $V$, and denote $\dist_V(P,Q)\geq \rho$, if every nonzero linear combination $0 \neq \alpha P + \beta Q$ satisfies that $\rank_s(\alpha P + \beta Q)\geq \rho$ even after we set the linear forms in $V$ to zero.
%\end{definition}
%
%The following observation will be used throughout the proof. 
%
%\begin{observation}\label{obs:dist-ideal}
%	Let $V$ be a linear space of linear forms. Let $P$ and $Q$ be polynomials such that  $\dist_V(P,Q)\geq \rho$. If there exists a liner combination of $P,Q$ and a polynomial in $T \in \ideal{V}$ such that $\rank_s(\alpha P + \beta Q + T) < \rho$ then it holds that $\alpha P + \beta Q =0$ and in particular $\alpha P + \beta Q + T = T \in \ideal{V}$. 
%\end{observation}
%\begin{remark}\label{obs:dist-ideal}
%	Let $V$ be a linear space of linear forms. Let $P$ and $Q$ be polynomials such that  $\dist_V(P,Q)\geq \rho$. If there exists a liner combination of $P,Q$ and a polynomial in $T \in \ideal{V}$ such that $\rank_s(\alpha P + \beta Q + T) < \rho$ then it holds that $\alpha P + \beta Q =0$ and in particular $\alpha P + \beta Q + T = T \in \ideal{V}$. 
%\end{remark}

Throughout this subsection we make the following assumption.

\begin{assumption}\label{assume:special-case}
	$V$ is a linear space of linear forms and  $P_0$, $Q_0$ are quadratic polynomials (which can also be identically zero) such that for every nonzero linear combination $0 \neq \alpha P_0 + \beta Q_0$ it holds that $\rank_s(\alpha P_0 + \beta Q_0) > 2(\dim(V)) + \distV$.  
\end{assumption}

\begin{observation}\label{obs:P-Q-ideal}
	Let $V$, $P_0$ and $Q_0$ satisfy \autoref{assume:special-case}.
	If there exists a linear combination of $P_0,Q_0$ and a polynomial $T \in \ideal{V}$ such that $\alpha P_0 + \beta Q_0 + T =R_V + L$ where $\rank_s(L) < \distV$ and $R_V\in \ideal{V}$ then, as every polynomial in $\ideal{V}$ is of $\rank_s \leq \dim(V)$, it holds that $\rank_s(R_V -T + L)< \dim(V)+\distV$ and this implies that $\alpha P_0 + \beta Q_0 =0$. 
\end{observation}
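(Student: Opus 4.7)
The plan is to directly bound the $\rank_s$ of $\alpha P_0 + \beta Q_0$ from the given decomposition and then invoke \autoref{assume:special-case}. First, I would rewrite the hypothesis as $\alpha P_0 + \beta Q_0 = R_V - T + L$. Since both $R_V$ and $T$ lie in $\ideal{V}$, so does $R_V - T$, so it admits a representation $R_V - T = \sum_{i=1}^{\dim(V)} v_i \cdot c_i$ for a basis $v_1,\ldots,v_{\dim(V)}$ of $V$ and suitable linear forms $c_i$. By \autoref{def:rank-s} this yields $\rank_s(R_V - T) \leq \dim(V)$.

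Next, using the trivial subadditivity of $\rank_s$ (which follows immediately from \autoref{def:rank-s}, by concatenating minimal representations) and the hypothesis $\rank_s(L) < \distV$, I would estimate
\[
\rank_s(\alpha P_0 + \beta Q_0) \;=\; \rank_s(R_V - T + L) \;\leq\; \rank_s(R_V - T) + \rank_s(L) \;<\; \dim(V) + \distV.
\]
On the other hand, \autoref{assume:special-case} guarantees that every \emph{nonzero} linear combination $\alpha P_0 + \beta Q_0$ has $\rank_s > 2\dim(V) + \distV$, which is strictly larger than the bound just obtained. The only way to reconcile the two inequalities is $\alpha P_0 + \beta Q_0 = 0$, which is the desired conclusion.

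No real obstacle is expected here: the observation is a rank-counting argument, and the numerical gap $2\dim(V) + \distV$ in \autoref{assume:special-case} has been chosen precisely so that it exceeds the combined cost of the ideal contribution $R_V - T \in \ideal{V}$ (worth at most $\dim(V)$ in $\rank_s$) and the low-rank error term $L$ (worth less than $\distV$). The only subtle point worth mentioning explicitly is the subadditivity of $\rank_s$, which is why we must write out the minimal representation bound on $R_V - T$ before invoking the assumption.
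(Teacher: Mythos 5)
Your proof is correct and follows exactly the paper's own (inline) argument: bound $\rank_s(R_V - T)$ by $\dim(V)$ since it lies in $\ideal{V}$, add the cost of $L$ via subadditivity of $\rank_s$, and conclude from \autoref{assume:special-case} that the linear combination of $P_0$ and $Q_0$ must vanish. No issues.
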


\begin{observation}\label{obs:dint-inc}
	Let $V$, $P_0$ and $Q_0$ satisfy \autoref{assume:special-case}. Let $U$ be a linear space of linear forms such that $\dim(U) \leq 5$. Then every nonzero linear combination $0 \neq \alpha P_0 + \beta Q_0$ satisfies $\rank_s(\alpha P_0 + \beta Q_0) > 2\dim(V+U) + 10$.  
\end{observation}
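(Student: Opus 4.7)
The statement is essentially a quantitative bookkeeping observation, so my plan is to reduce it directly to Assumption~\ref{assume:special-case} via subadditivity of dimension, without invoking any additional structural property of $P_0$ or $Q_0$.

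First I would observe that for any two linear spaces of linear forms, $\dim(V+U) \leq \dim(V) + \dim(U)$. Applying this with the hypothesis $\dim(U) \leq 5$ gives the bound $\dim(V+U) \leq \dim(V) + 5$. Multiplying by $2$ and adding $10$ then yields
\[ 2\dim(V+U) + 10 \;\leq\; 2\dim(V) + 10 + 10 \;=\; 2\dim(V) + 20 \;=\; 2\dim(V) + \distV. \]

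Next, for any nonzero linear combination $\alpha P_0 + \beta Q_0$, Assumption~\ref{assume:special-case} gives $\rank_s(\alpha P_0 + \beta Q_0) > 2\dim(V) + \distV$. Chaining this with the inequality above produces
\[ \rank_s(\alpha P_0 + \beta Q_0) \;>\; 2\dim(V) + \distV \;\geq\; 2\dim(V+U) + 10, \]
which is exactly the conclusion.

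There is no real obstacle here — the slack of $\distV=20$ in the assumption was chosen precisely so that absorbing up to $5$ extra dimensions into $V$ still leaves a margin of $10$. The purpose of stating this as an observation is simply to have it available as a black box later in \autoref{sec:hard}, where auxiliary spaces of dimension at most~$5$ are adjoined to $V$ and one wants to continue to apply \autoref{obs:P-Q-ideal} (or a $10$-slack variant of it) with $V$ replaced by $V+U$.
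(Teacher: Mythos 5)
Your proof is correct and is exactly the bookkeeping the paper intends (the paper states this as an observation with no written proof, since it follows immediately from $\dim(V+U)\leq\dim(V)+5$ and the slack $\distV=20$ in \autoref{assume:special-case}). Nothing further is needed.
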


The following is the main proposition of this section.

\begin{proposition}\label{prop:P,Q_0,C[V],rank1}
	Let $\cup_{j \in [3]}\cT_j = \cup_{j \in [3]}(\cQ_j \cup\calL_j)$ be a set of quadratic polynomials that satisfy \autoref{thm:main}. and $V,P_0$ and $Q_0$ satisfy \autoref{assume:special-case}.
	
	Assume that for every $j\in [3]$ and polynomial $Q_i \in \cQ_j$ one of the following options hold:   $Q_i \in \ideal{V}$ or there are two linear forms $a_{j,i}$ and $b_{j,i}$ and a polynomial $0 \neq Q'_i \in \spn{P_0,Q_0,\C[V]_2}$  such that $Q_i = Q'_i + a_{j,i}b_{j,i}$. 
	Then there exists a linear space of linear forms $V'$, such that $\dim(V')\leq 5 \dim(V)+25$ and $\cup_{j \in [3]}\cT_j\subset \spn{P_0,Q_0,\C[V+V']_2}$.
\end{proposition}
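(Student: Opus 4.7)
The plan is to reduce the statement to a colored subspace-intersection configuration in the spirit of \autoref{cla:colored-linear-spaces-intersaction}, applied to two-dimensional spaces of linear forms modulo $V$. For each $Q \in \cup_j \cQ_j$ not lying in $\ideal{V}$, use the hypothesized representation $Q = \alpha_Q P_0 + \beta_Q Q_0 + F_Q + a_Q b_Q$ with $F_Q \in \C[V]_2$, and associate the subspace $W_Q = \spn{a_Q, b_Q}$; for each square $a^2 \in \calL_j$, associate $\spn{a}$. Coloring $W_Q$ by the index $j$ of the set containing $Q$, the aim is to verify the pairwise intersection hypothesis of \autoref{cla:colored-linear-spaces-intersaction} for the images $\overline{W}_Q \subset \C^n/V$. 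This would furnish a subspace $\overline{U} \subset \C^n/V$ of dimension at most $5$ containing every $\overline{W}_Q$, and hence a lift $U \subset \C^n$ of dimension $5$ into which all of the rank-one parts of the polynomials can be absorbed modulo $V$.

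To verify the pairwise intersection condition, fix $Q \in \cT_j$ and $Q' \in \cT_{j'}$ with $j \neq j'$. The hypothesis of \autoref{thm:main} supplies polynomials in $\cT_t$ whose product lies in $\sqrt{\ideal{Q, Q'}}$, so by \autoref{thm:structure} one of three cases holds. Substituting the decompositions of $Q$ and $Q'$ into the identity produced in each case and invoking \autoref{obs:P-Q-ideal} together with the rank gap in \autoref{assume:special-case}, we force the coefficients of $P_0$ and $Q_0$ to match on both sides. Reducing modulo $V$ leaves, in Case (i), a three-term rank-one identity $\lambda_1 a_Q b_Q + \lambda_2 a_{Q'} b_{Q'} + \lambda_3 a_{Q''} b_{Q''} \equiv_V 0$ with $Q'' \in \cT_t$; in Case (ii), an analogous three-term identity in which the third rank-one factor is a new product $c d$; and in Case (iii), containment of $Q$ and $Q'$ in $\ideal{c,d}$, from which a polynomial of $\cT_t$ of controlled form is extracted via \autoref{cla:case-rk1-gen}. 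In the non-degenerate subcases \autoref{cla:intersection} yields $\overline{W}_Q \cap \overline{W}_{Q'} \neq \{\vec{0}\}$; in the degenerate subcases (a coefficient vanishes, or one of the linear forms falls into $V$) we invoke \autoref{cla:rank-2-in-V} and direct computation either to identify $\overline{W}_Q$ with $\overline{W}_{Q''}$ for some $Q'' \in \cT_t$, or to collapse one of the spaces into a lift from $V$.

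To finish, we must absorb $\cup_j (\cQ_j \cap \ideal{V})$ into $V + V'$. Each such $Q$ has $\rank_s(Q) \leq \dim(V)$, and the hypothesis together with \autoref{assume:special-case} allows us to view $Q$ as a low-rank linear combination involving $P_0, Q_0$ and an element of $\C[V]_2$; applying \autoref{cla:lin-rank-r-U} with $U = V$ and a constant rank bound produces a single space of dimension $O(\dim V)$ containing $\MS(Q)$ for every such $Q$ simultaneously. Combining the $5$-dimensional lift of $\overline{U}$ with this contribution yields $V'$ with $\dim(V') \leq 5\dim(V) + 25$, and by construction $\cup_j \cT_j \subset \spn{P_0, Q_0, \C[V+V']_2}$. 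The central obstacle is the case analysis of the second paragraph: the structure-theorem cases can degenerate in many ways depending on how the $P_0, Q_0$ coefficients behave and whether the associated linear forms lie in $V$, and tracking each such degeneracy uniformly enough to either produce the intersection condition of \autoref{cla:colored-linear-spaces-intersaction} or absorb it into the $O(\dim V)$ error term is what the proof will spend most of its effort on.
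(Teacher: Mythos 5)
The proposal correctly identifies the first structural move of the paper's argument (verifying the pairwise intersection condition via \autoref{thm:structure} and \autoref{obs:P-Q-ideal}, then invoking \autoref{cla:colored-linear-spaces-intersaction}), but it misreads what that claim delivers, and this breaks the rest of the argument. \autoref{cla:colored-linear-spaces-intersaction} does \emph{not} produce a single small subspace $\overline{U}$ containing every $\overline{W}_Q$. It produces a nonzero vector $w$ and a space $U$ with $\dim(U) \leq 4$ such that each $\overline{W}_Q$ either lies inside $U$ \emph{or contains $w$}. A space containing $w$ can still point in an uncontrolled direction transverse to $U$, so the $\overline{W}_Q$ are by no means all captured by one $5$-dimensional space. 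This is exactly why the paper's \autoref{lem:U-exist} concludes only the weaker normalization $Q_i = Q'_i + a_{j,i}(\epsilon_{j,i} a_{j,i} + v_{j,i})$ with $v_{j,i}\in V+U$ -- one of the two linear forms in the rank-one part is tamed, but $a_{j,i}$ is still free. After that step, the bulk of the work in the paper -- applying the projection $T_{\vaa,V'}$, showing the resulting forms $\tilde{a}_{j,i}$ satisfy the Edelstein-Kelly condition modulo $z$ (\autoref{lem:ek-lines}), invoking \autoref{thm:EK}, and then pulling back via \autoref{cla:z-map-dimension} -- remains to be done, and none of it appears in your sketch.

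Your plan for absorbing $\cQ_j \cap \ideal{V}$ is also not sound. An element $Q \in \ideal{V}$ has $\rank_s(Q) \leq \dim V$, but its minimal space $\MS(Q)$ can have dimension up to $2\dim V$ and involve linear forms entirely outside $V$ that vary from polynomial to polynomial; there is no way to express such a $Q$ as a bounded-rank linear combination of $P_0$, $Q_0$ and $\C[V]_2$, so \autoref{cla:lin-rank-r-U} does not apply and does not give a single $O(\dim V)$-dimensional space capturing all their minimal spaces. The paper instead sends such $Q$ to $z b$ under $T_{\vaa,V'}$ and feeds the resulting $b$'s into the same Edelstein-Kelly configuration as the $\tilde{a}_{j,i}$'s; that is where they get controlled. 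Both gaps point to the same omission: you are trying to finish after \autoref{lem:U-exist}, but the proposition genuinely requires the Edelstein-Kelly reduction that follows it.
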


\begin{proof}
%[Proof of \autoref{prop:P,Q_0,C[V],rank1}]
	Denote $\dim(V) = \Delta$.
The proof of \autoref{prop:P,Q_0,C[V],rank1} relies on the following lemma.

\begin{lemma}	\label{lem:U-exist}
	Assuming the setting of \autoref{prop:P,Q_0,C[V],rank1}, there exists a linear space of linear forms  $U$, such that $\dim(U)\leq 5$ and:
\begin{itemize}
\item For $j\in [3]$, every polynomial $Q_i \in \cQ_j$ satisfies either $Q_i \in \ideal{V+U}$ or there are linear forms $a_{j,i}$, $v_{j,i}$ where $v_{j,i} \in V+U$, and a polynomial $Q'_i \in \spn{P_0,Q_0,\C[V+U]_2}$  such that $Q_i = Q'_i + a_{j,i}(\epsilon_{j,i} a_{j,i} + v_{j,i})$, for some $\epsilon_{j,i}\in\C$.
\item Every nonzero linear combination $0 \neq \alpha P_0 + \beta Q_0$ satisfies $\rank_s(\alpha P_0 + \beta Q_0 ) >2 \dim(V+U) + 10$.
\end{itemize}  
\end{lemma}

%We now give the proof of \autoref{prop:P,Q_0,C[V],rank1}.

	We postpone the proof of \autoref{lem:U-exist} to \autoref{sec:U-exist} and continue.
	Set $V'= V+U$, where $U$ is the space guaranteed in \autoref{lem:U-exist}. Thus, we now have
	\[ \dim(V')\leq \Delta + 5 \;.\]
	Recall Equation~\eqref{eq:Lj} and denote
	\[\calW_j=\text{span}\left\{ \bigcup \left\{ \MS(Q_i) \mid Q_i \in \cQ_j\cap \ideal{V} \right\}\cup \{a_{j,i}\}_{i\in [m_j+r_j]}\right\}.\]
	 In other words, $\calW_j$ is the space spanned by all the linear forms appearing in polynomials in $\cT_j$ after we remove from them the component coming from $\spn{P_0,Q_0,\C[V']_2}$. The proof of \autoref{prop:P,Q_0,C[V],rank1} will follow if we show that $\dim(\calW_j)=O(1)$ for $j\in [3]$.
	
	%Next, we prove that $\dim(\MS(\{Q_i \in \cQ_j\mid Q_j \in \ideal{V}\}+ \{a_{j,i}\}_{i\in [m_j]} ) = O(1)$. 
	
	The idea is to apply a projection mapping $T_{\vaa, V'}$ (as in \autoref{def:z-mapping}) and show that the sets $T_{\vaa, V'}(\calW_j)$ satisfy the conditions of the Edelstein-Kelly theorem. Since we choose $\vaa$ at random we can assume that $T_{\vaa, V'}$ is such that if a polynomial $A$ satisfies $A \in \cup \cQ_i \setminus \spn{P_0,Q_0,\C[V']_2}$ then  $T_{\vaa, V'}(A)\notin \spn{T_{\vaa, V'}(P_0),T_{\vaa, V'}(Q_0),z^2}$, and that the conclusion of \autoref{cla:res-z-ampping} holds.

For $j\in [3]$ and $i\in [m_j+r_j]$ denote $\tilde{a}_{j,i} = T_{\vaa,V'}({a}_{j,i})$. By our choice of $T_{\vaa,V'}$ we get that if $a_{j,i}\not\in V'$ then $\tilde{a}_{j,i}\not\in\spn{z}$.

Consider what happens to a polynomial  $Q_i \in \cQ_j$  after we apply $T_{\vaa,V'}$.	If $Q_i \in \ideal{V'}$ then it was mapped to a polynomial of the form $T_{\vaa,V'}(Q_i)= z\cdot a$. In this case we abuse notation and denote $\tilde{a}_{j,i}=a$. If $Q_i = Q'_i + a_{j,i}(\epsilon_{j,i} a_{j,i} + v_{j,i})$, where  $Q'_i \in \spn{P_0,Q_0,\C[V']_2}$, then $Q_i$ was mapped to $T_{\vaa,V'}(Q_i)=T_{\vaa,V'}(Q'_i)+\tilde{a}_{j,i}(\epsilon_{j,i}\tilde{a}_{j,i} +\beta_{j,i}z)$. Similarly, every $a_{j,i}^2\in\calL_j$ was mapped to $\tilde{a}_{j,i}^2$.
	We next show that the linear forms $\{\tilde{a}_{j,i}\}$ live in a low dimensional space. 
	For this we define the following sets 
\[\cS_j = \left\{ \tilde{a}_{j,i}\notin \spn{z}\right\}\;.\]
%	\[\cS_j = \left\{ \tilde{a}_{j,i}\notin \spn{z} \mid \exists Q_i \in \cQ_j \text{ such that }  \MS(T_{\vaa,V}(Q_i)) = \spn{z,a_{j,i}}\right\} \;.\]
	By our choice of $T_{\vaa,V'}$ it holds that for every $j \in [3]$,  \[\cS_j = \emptyset \iff  \cT_j \subseteq \spn{P_0,Q_0, \C[V']_2}.\]
	
	\autoref{lem:P-Q-and-C[V]} that we state next proves \autoref{prop:P,Q_0,C[V],rank1} in the case where for some $i\neq j$, $\cS_i=\cS_j = \emptyset$. 

	\begin{lemma}\label{lem:P-Q-and-C[V]}
	Let $Q_0$, $P_0$ and $V$ be as in \autoref{prop:P,Q_0,C[V],rank1} and let $V'$ satisfy $\dim(V')\leq \dim(V)+5$. 
	%be irreducible quadratic polynomials and let $V'$ be a linear space of linear forms such that for	 every nonzero linear combination $0 \neq \alpha P_0 + \beta Q_0$ it holds that $\rank_s(\alpha P_0 + \beta Q_0) > 2(\dim(V')) + 10$.\footnote{Since we increased $\dim(V)$ to $\Delta+5$ we need to change $20$ to $10$ in the setting of \autoref{assume:special-case}. See \autoref{obs:dint-inc}.}
	 Assume that there are two sets, $\cT_i,\cT_j$ such that $\cT_i, \cT_j \subset \spn{P_0,Q_0,\C[V']_2}$. Then the third set, $\cT_k$ satisfies $\cT_k \subset \spn{P_0,Q_0,\C[V']_2}$ as well.  
\end{lemma}

	We prove \autoref{lem:P-Q-and-C[V]} in \autoref{sec:P-Q-and-C[V]} and continue with the proof of \autoref{prop:P,Q_0,C[V],rank1}.
	The only case left is when there are $i\neq j$ such that $\cS_i,\cS_j \neq \emptyset$. In \autoref{lem:ek-lines} we prove that in this case the sets $\cS_1, \cS_2$ and $\cS_3$  satisfy the conditions of \autoref{thm:EK} (modulo $z$). The proof of the lemma is given in \autoref{sec:ek-lines}.
	
\begin{lemma}\label{lem:ek-lines}
	Let $i\neq j$, $\tilde{a}_{i,t}\in \calS_i$ and  $\tilde{a}_{j,r}\in \calS_j$ be such that $\tilde{a}_{j,r}\notin \spn{\tilde{a}_{i,t},z}$. Then there exists $\tilde{a}_{k,s}\in \calS_k$, where $k\neq i,j$, such that $\tilde{a}_{k,s}\in \spn{\tilde{a}_{i,t}, \tilde{a}_{j,r},z} \setminus (\spn{\tilde{a}_{i,t},z} \cup \spn{\tilde{a}_{j,r},z})$.
\end{lemma}
	
	From \autoref{thm:EK} we get that the dimension of $\cup_{j \in [3]} \calS_j$ is at most $3$. Combining with \autoref{cla:z-map-dimension} we obtain that
	the dimension of the set $\cup_{j \in [3]} \calW_j$ is at most $5\dim(V')\leq 5\Delta + 25$. By letting $V''=V'+ \Sigma_{j \in [3]} \calW_j$, we get that $\dim(V'')\leq \dim(V')+ 5\Delta + 25\leq 6\Delta + 30$, and  that  $\cup_{j \in [3]}\cT_j\subset \spn{P_0,Q_0,\C[V'']_2}$ as claimed. 	
\end{proof}
	
\subsection{Missing proofs}
We now prove all the lemmas that were used in the proof of \autoref{prop:P,Q_0,C[V],rank1}.
%Throughout this section we assume that the sets $\cT_i$ satisfy the conditions in \autoref{prop:P,Q_0,C[V],rank1}.

\subsubsection{Proof of Lemma~\ref{lem:U-exist}}\label{sec:U-exist}

	Let $j \in [3]$ and $ t \in [m_i]$ we denote (using the notation of \autoref{prop:P,Q_0,C[V],rank1})
	\[V_{i,t} = V^\perp(\spn{a_{i,t}, b_{i,t}})\;.
	\] 
	If $Q_{i,t}\in\ideal{V}$ then we set $V_{i,t} = \{\vec{0}\}$. We consider several cases:

	\paragraph{For every $i\in [3]$ and $t\in[m_i]$, $\dim(V_{i,t})\leq 1$:} In this case the lemma holds with $U = \{\vec{0}\}$.
	
	\paragraph{There exist $t_i \in [m_i]$ and $t_j\in [m_j]$, for $i\neq j$, such that $\dim(V_{i,t_i}) = \dim(V_{j,t_j}) = 2$:} The next claim shows that $V_{i,t_i}$ and $V_{j,t_j}$ have a non trivial intersection.
	%any two such spaces coming from different sets intersect.
	
	\begin{claim}\label{cla:a_ib_i-intersection}
	Let $A_{i,t}\in \cQ_i$ and $B_{j,r} \in \cQ_j$ satisfy $A_{i,t} = A'_{i,t} + a_{i,t}b_{i,t}$ and $B_{j,r} = B'_{j,r} + a_{j,r}b_{j,r}$, where $A'_{i,t},B'_{i,t} \in \spn{P_0,Q_0,\C[V]_2}$. Assume that $\dim(V_{i,t}) =\dim(V_{j,r}) =  2$. Then $ V_{i,t}\cap V_{j,r}\neq \{\vec{0}\}$. 
\end{claim}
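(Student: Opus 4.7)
The plan is to invoke \autoref{thm:structure} on the pair $A_{i,t}, B_{j,r}$, which live in distinct sets $\cQ_i$ and $\cQ_j$, and analyze the three resulting cases. In every case, I will first use \autoref{assume:special-case} as an ``eraser'' for the $P_0, Q_0$ content: writing $A'_{i,t}=\gamma P_0+\delta Q_0+R_V$ and $B'_{j,r}=\gamma'P_0+\delta'Q_0+R'_V$ with $R_V,R'_V\in\C[V]_2$, any equation of the form $\alpha A_{i,t}+\beta B_{j,r}=(\text{low-rank thing})$ would force some nontrivial combination $\mu P_0+\nu Q_0$ to have $\rank_s\leq \Delta+3$, contradicting $\rank_s(\mu P_0+\nu Q_0)>2\Delta+\distV$. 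Hence $\mu=\nu=0$, and the equation reduces modulo $V$ to an identity purely in $V^\perp$ involving $\tilde a_{i,t}\tilde b_{i,t}$ and $\tilde a_{j,r}\tilde b_{j,r}$ (where tilde denotes $V^\perp$-projection).

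In Cases \ref{case:span} and \ref{case:rk1} the reduced equation reads $\alpha\tilde a_{i,t}\tilde b_{i,t}+\beta\tilde a_{j,r}\tilde b_{j,r}=\tilde e\tilde f$, where $\tilde e\tilde f$ is either $\tilde a_{k,s}\tilde b_{k,s}$, $\tilde c\tilde d$, or zero. Pairwise independence of the polynomials in $\cT_i,\cT_j,\cT_k$, together with irreducibility of $A_{i,t}$ and $B_{j,r}$, rules out $\alpha=0$ or $\beta=0$. Since $\dim(V_{i,t})=2$ means $\tilde a_{i,t},\tilde b_{i,t}$ are linearly independent, \autoref{cla:intersection} applied to the rewritten identity yields $V_{i,t}\cap V_{j,r}\neq\{\vec 0\}$; in the degenerate subcase $\tilde e\tilde f=0$, unique factorization in $\C[V^\perp]$ gives the stronger conclusion $V_{i,t}=V_{j,r}$.

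For Case \ref{case:2}, where $A_{i,t}, B_{j,r}\in\ideal{c,d}$, the rank erasure is even cleaner: it forces $A_{i,t}=R_V+a_{i,t}b_{i,t}$ and $B_{j,r}=R'_V+a_{j,r}b_{j,r}$ with no $P_0,Q_0$ content. The crucial observation is then to apply \autoref{cla:ind-rank}: because $\dim(V_{i,t})=2$, the variables $a_{i,t},b_{i,t}$ may be chosen disjoint from a basis of $V$, so $\rank_s(R_V+a_{i,t}b_{i,t})=\rank_s(R_V)+1$. Since $A_{i,t}\in\ideal{c,d}$ has rank at most $2$ and is irreducible (hence exactly $2$), this forces $\rank_s(R_V)=1$, so $R_V=fg$ with $f,g\in V$. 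Reducing modulo $V$ then gives $\tilde a_{i,t}\tilde b_{i,t}=\tilde c\tilde\ell_1+\tilde d\tilde\ell_2$, a rank-$1$ quadratic on the right. A short case split on $\dim\spn{\tilde c,\tilde d}\in\{1,2\}$ (the possibility $\tilde c=\tilde d=0$ is excluded since $A_{i,t}\notin\ideal V$) shows in each case that $V_{i,t}\supseteq\spn{\tilde c,\tilde d}$. The identical argument applied to $B_{j,r}$ gives $V_{j,r}\supseteq\spn{\tilde c,\tilde d}$, hence $V_{i,t}\cap V_{j,r}\supseteq\spn{\tilde c,\tilde d}\neq\{\vec 0\}$.

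The main obstacle I expect is Case \ref{case:2}: naively one would fear that $R_V$ could have high $\rank_s$ and the geometry of $V^\perp$ would fail to force the intersection. The leverage comes precisely from \autoref{cla:ind-rank}, which rigidly couples $\rank_s(A_{i,t})$ to $\rank_s(R_V)$ once $a_{i,t},b_{i,t}$ are known to be independent of $V$; this is exactly what the hypothesis $\dim(V_{i,t})=2$ provides. With that rigidity in hand, Cases \ref{case:span} and \ref{case:rk1} become essentially direct applications of \autoref{cla:intersection}, and the claim follows.
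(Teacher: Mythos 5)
Your treatment of cases \ref{case:span} and \ref{case:rk1} is sound and is essentially the paper's own argument: the paper likewise first erases the $P_0,Q_0$ content via \autoref{obs:P-Q-ideal} and then extracts the intersection from the resulting rank-one identity (it phrases this as a contradiction via \autoref{cla:ind-rank} rather than citing \autoref{cla:intersection}, but the two are interchangeable here). The gap is in case \ref{case:2}. You correctly reach $A_{i,t}=fg+a_{i,t}b_{i,t}$ with $f,g\in V$ and $\rank_s(A_{i,t})=2$, but the final step --- deducing $V_{i,t}\supseteq\spn{\tilde{c},\tilde{d}}$ from the reduced identity $\tilde{a}_{i,t}\tilde{b}_{i,t}=\tilde{c}\tilde{\ell}_1+\tilde{d}\tilde{\ell}_2$ by a case split on $\dim(\spn{\tilde{c},\tilde{d}})$ --- fails when $\tilde{c}$ and $\tilde{d}$ are independent. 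The identity $xz=x\cdot z+y\cdot 0$ satisfies everything visible modulo $V$ (take $\tilde{a}_{i,t}=x$, $\tilde{b}_{i,t}=z$, $\tilde{c}=x$, $\tilde{d}=y$, $\tilde{\ell}_1=z$, $\tilde{\ell}_2=0$), yet $\spn{x,z}\not\supseteq\spn{x,y}$. All the reduced equation yields is $V_{i,t}\cap\spn{\tilde{c},\tilde{d}}\neq\{\vec{0}\}$, and the analogous statement for $B_{j,r}$; that is not enough, since $V_{i,t}$ and $V_{j,r}$ could meet the plane $\spn{\tilde{c},\tilde{d}}$ in two different lines.

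The information you discarded by passing to the quotient is exactly what closes the gap. Since $\rank_s(A_{i,t})=2$, both $fg+a_{i,t}b_{i,t}$ and the representation $c\ell_1+d\ell_2$ witnessing $A_{i,t}\in\ideal{c,d}$ are minimal representations, so by \autoref{cla:irr-quad-span} (uniqueness of the minimal space) one gets $c,d\in\MS(A_{i,t})\subseteq\spn{f,g,a_{i,t},b_{i,t}}\subseteq V+\spn{a_{i,t},b_{i,t}}$, i.e.\ $\tilde{c},\tilde{d}\in V_{i,t}$; the same holds for $B_{j,r}$, and not both of $c,d$ lie in $V$ (otherwise $a_{i,t}b_{i,t}\in\ideal{V}$, contradicting $\dim(V_{i,t})=2$), so the intersection is nontrivial. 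This minimal-space step is precisely how the paper handles case \ref{case:2}; with that substitution your proof goes through.
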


\begin{proof}
	We first note that proving $ V_{i,t}\cap V_{j,r}\neq \{\vec{0}\}$, is equivalent to proving that $\spn{a_{i,t}, b_{i,t}, V} \cap \spn{a_{j,r}, b_{j,r}}\neq \{\vec{0}\}$.
	The proof depends on  which case of \autoref{thm:structure} $A_{i,t}$ and $B_{j,r}$ satisfy:
	\begin{case}
		%$A_{i,t}$ and $B_{j,r}$ satisfy \autoref{thm:structure}\ref{case:span}:
		\item  In this case  there are $\alpha,\beta \in \C\setminus \{0\}$ and\footnote{Whenever it is clear from the context that $k$ is the index of the third set we shall not mention it explicitly.} $C_{k,s} \in \cT_k$ such that $\alpha A_{i,t} + \beta B_{j,r} = C_{k,s}$. 		
		We continue our analysis based on the structure of $C_{k,s}$. If $C_{k,s} = C'_{k,s} + a_{k,s}b_{k,s}$ for some $C'_{k,s}\in \spn{P_0,Q_0,\C[V]_2}$ (note that this includes the case $C_{k,s}\in\calL_k$) then we get that
		\[\alpha A'_{i,t} + \beta B'_{j,r} - C'_{k,s} = a_{k,s}b_{k,s} - \alpha a_{i,t}b_{i,t} - \beta a_{j,r}b_{j,r} \;.\]
		\autoref{obs:P-Q-ideal} implies that $\alpha A'_{i,t} + \beta B'_{j,r} - C'_{k,s} \in  \C[V]_2$.
		If  $\spn{a_{i,t}, b_{i,t}, V} \cap \spn{a_{j,r}, b_{j,r}} = \{\vec{0}\}$, then by \autoref{cla:ind-rank} it holds that 
		\begin{align*}
				1 &= \rank_s(a_{k,s}b_{k,s}) = \rank_s(\alpha A'_{i,t} + \beta B'_{j,r} - C'_{k,s} + \alpha a_{i,t}b_{i,t} + \beta a_{j,r}b_{j,r} )\\ &\geq \rank_s(\alpha A'_{i,t} + \beta B'_{j,r} - C'_{k,s}) + 2 \geq 2
		\end{align*}		 
		 in contradiction. Thus $\spn{a_{i,t}, b_{i,t},V} \cap \spn{a_{j,r}, b_{j,r}} \neq \{\vec{0}\}$, which is what we wanted to prove.
		Similarly, if $C_{k,s} \in \ideal{V}$ then \[\alpha A'_{i,t} + \beta B'_{j,r} = C_{k,s}   - \alpha a_{i,t}b_{i,t} - \beta a_{j,r}b_{j,r} \;. \]
		As $\rank_s(C_{k,s}   - \alpha a_{i,t}b_{i,t} - \beta a_{j,r}b_{j,r})\leq \dim(V)+2$ it holds that $\alpha A'_{i,t} + \beta B'_{j,r} \in \C[V]_2$ and thus $ \alpha a_{i,t}b_{i,t} + \beta a_{j,r}b_{j,r}\in \ideal{V}$. Consequently, $\spn{a_{i,t}, b_{i,t},V} \cap \spn{a_{j,r}, b_{j,r}} \neq \{\vec{0}\}$.
		
		\item 	
		%Assume $A_{i,t}$ and $B_{j,r}$ satisfy \autoref{thm:structure}\ref{case:rk1}, i.e. 
		In this case there are $\alpha,\beta \in \C\setminus \{0\}$ and linear forms $e$ and $f$ such that $\alpha A_{i,t} + \beta B_{j,r} = ef$. As in the previous case it follows that $\alpha A'_{i,t} + \beta B'_{j,r} \in \C[V]_2$, and as before we conclude that $\spn{a_{i,t}, b_{i,t}, V} \cap \spn{a_{j,r}, b_{j,r}} \neq \{\vec{0}\}$.
		
		\item 		
		%Assume $A_{i,t}$ and $B_{j,r}$ satisfy \autoref{thm:structure}\ref{case:rk1}, i.e.
		Here there are linear forms  $e$ and $f$ such that $A_{i,t},B_{j,r} \in \ideal{e,f}$. As $\rank_s(A_{i,t})=\rank_s(B_{j,r}) = 2$, \autoref{obs:P-Q-ideal}, implies that $A'_{i,t},B'_{j,r}\in \C[V]_2$.  Consequently, $\MS(A_{i,t})\subseteq \spn{a_{i,t}, b_{i,t}, V}$ and similarly $\MS(B_{j,r})\subseteq \spn{a_{j,r}, b_{j,r}, V}$.
		If $e,f \in V$ then $A_{i,t},B_{j,r} \in \ideal{e,f}\subseteq \ideal{V}$ an then $a_{i,t}b_{i,t}\in \ideal{V}$ in contradiction. Thus,  without loss of generality, assume $e \notin V$.  Since $e,f \in \MS(A_{i,t}) \cap \MS(B_{j,r})$  we get that  $\spn{a_{i,t}, b_{i,t}, V} \cap \spn{a_{j,r}, b_{j,r}} \neq \{\vec{0}\}$ as we wanted.\qedhere
	\end{case}
\end{proof}
We continue with the proof of \autoref{lem:U-exist}. 
%In this case,
The conclusion of \autoref{cla:a_ib_i-intersection} allows us to use  \autoref{cla:colored-linear-spaces-intersaction} and conclude the existence of $w$ and $ \tilde{U}$ such that $\dim(\tilde{U}) \leq 4$ and for every $i \in [3], t\in [m_i]$ with $\dim(V_{i,t})=2$, either $w \in V_{i,t}$ or $V_{i,t}\subseteq  \tilde{U}$. Set $U = \tilde{U}+\spn{w}$.  It follows that for every $i \in [3]$ and $ t\in [m_i]$, without loss of generality, $b_{i,t} = \epsilon_{i,t} a_{i,t} + v_{i,t}$ for some $v_{i,t} \in V+U$, and, clearly, $\dim(U) \leq 5$. This concludes the proof of \autoref{lem:U-exist} in this case.
	
	\paragraph{There is at most one set with a polynomial whose associated space has dimension $2$:} 
	Assume that $\cQ_k$ is that set. If it contains two polynomials $Q_s,Q_{s'}$ such that $\dim(V_{k,s}) = \dim(V_{k,s'}) = 2$ and $V_{k,s} \cap V_{k,s'} =\{\vec{0}\}$   then \autoref{lem:U-exist} (and in fact also \autoref{prop:P,Q_0,C[V],rank1}) follows from  \autoref{cla:P-Q_0-C[V]-a_i}.

\begin{claim}\label{cla:P-Q_0-C[V]-a_i}
%		Let $V,P_0$ and $Q_0$ be as in the statement of \autoref{lem:U-exist}. 
		Assume that there are two sets, $\cT_i,\cT_j$ such that every polynomial $A_t \in \cT_i\cup \cT_j$ is either of the form  $A_t \in \ideal{V}$ or $A_t = A'_t + a_t(\epsilon_t a_t+v_t)$ for $ \epsilon_t \in \C, A'_t \in \spn{P_0,Q_0,\C[V]_2}$ and linear forms $a_t, v_t$, where $v_t \in V$. Assume further that there are polynomials  $C, \tilde{C} \in \cQ_k$ and linear forms $c,d,e,f$ such that $C'=C -cd, \tilde{C}'=\tilde{C}-ef \in  \spn{P_0,Q_0,\C[V]_2}$, $\dim(\spn{c,d}) = 2$, $\spn{c,d}\cap V = \emptyset$, and $\spn{e,f} \cap \spn{V, c,d} = \emptyset$. Denote $U = V + \spn{c,d,e,f}$ then $\cT_i,\cT_j \subset   \spn{P_0,Q_0,\C[U]_2}$. 
\end{claim}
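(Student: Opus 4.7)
The goal is to show that every $A_t \in \cT_i \cup \cT_j$ satisfies $A_t \in \spn{P_0, Q_0, \C[U]_2}$. Since the component $A'_t$ already lies in $\spn{P_0, Q_0, \C[V]_2} \subseteq \spn{P_0, Q_0, \C[U]_2}$, it suffices to show that the rank-one contribution $a_t(\epsilon_t a_t + v_t)$ lies in $\C[U]_2$. Because $v_t \in V \subseteq U$, this reduces to proving that $a_t \in U$; the separate sub-case $A_t \in \ideal{V}$ is handled analogously, by working with a minimal representation of $A_t$ and showing each participating linear form lies in $U$.

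The strategy is to fix such an $A_t$ and apply \autoref{thm:structure} to the pair $(A_t, C)$, which is a pair of quadratics from distinct sets. Consider case \ref{case:span} first: here there exist scalars $\alpha, \beta, \gamma$ and some $D \in \cT_j$ with $\alpha A_t + \beta C = \gamma D$. Substituting the prescribed decompositions and rearranging gives
\[
\alpha A'_t + \beta C' - \gamma D' \; = \; \gamma a_s b_s - \beta cd - \alpha a_t b_t ,
\]
where the right-hand side has $\rank_s \leq 3$, while the left-hand side decomposes as $\alpha' P_0 + \beta' Q_0 + R$ with $R \in \C[V]_2$. By \autoref{assume:special-case} and \autoref{obs:dint-inc}, any nonzero combination of $P_0, Q_0$ has $\rank_s$ exceeding $2\dim(V+U) + 10$, so the coefficients $\alpha', \beta'$ must vanish, leaving the identity $\alpha a_t b_t + \beta cd - \gamma a_s b_s \in \C[V]_2$. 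The genericity assumptions $\dim(\spn{c,d}) = 2$ and $\spn{c,d} \cap V = \{\vec{0}\}$, combined with \autoref{cla:rank-2-in-V}, rule out the degenerate sub-cases in which only one or two of the coefficients are nonzero, and in the remaining sub-case pin $\spn{a_t, b_t}$ inside $V + \spn{c, d} \subseteq U$. Case \ref{case:rk1} is handled by an essentially identical argument, with $\gamma D$ replaced by a rank-one quadratic $ef$.

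The remaining case \ref{case:2} places $A_t, C \in \ideal{e_1, e_2}$ for some linear forms $e_1, e_2$. Setting $e_1 = e_2 = 0$ kills both $A_t$ and $C$; combining \autoref{cla:rank-mod-space} with the rank hypothesis on $P_0, Q_0$ forces the $P_0, Q_0$ components of $A'_t$ and $C'$ to vanish, yielding $\spn{a_t, b_t}, \spn{c, d} \subseteq V + \spn{e_1, e_2}$. The hypothesis $\spn{c,d} \cap V = \{\vec{0}\}$ and $\dim(\spn{c,d}) = 2$ then imply $V + \spn{e_1, e_2} = V + \spn{c, d}$, so $a_t \in V + \spn{c, d} \subseteq U$. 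In the residual scenario where this argument with $C$ places $a_t$ only inside a translate of $V + \spn{e_1, e_2}$ disjoint from $V + \spn{c, d}$, we repeat the entire argument with $\tilde{C}$ in place of $C$ and intersect the two resulting bounds using $\spn{e,f} \cap \spn{V, c, d} = \{\vec{0}\}$ to force $a_t \in U$.

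The principal obstacle is the bookkeeping across the many sub-cases of \autoref{thm:structure} and their degenerate sub-configurations, in particular the possibility that the third polynomial $D$ lies in $\ideal{V}$ rather than admitting the nice rank-one residue, and the various choices of which coefficients $\alpha, \beta, \gamma$ may vanish; in each sub-case one must verify that the rank hypothesis on $\spn{P_0, Q_0}$ kills the obstruction. The two polynomials $C$ and $\tilde{C}$ with their complementary genericity hypotheses play the pivotal role: together $\spn{c,d}$ and $\spn{e,f}$ form a system of four linear forms transverse to $V$ that is rich enough to contain the residual linear forms of every $A_t$, while neither $C$ nor $\tilde{C}$ alone would suffice.
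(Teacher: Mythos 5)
Your high-level strategy matches the paper's: fix an $A_t$, apply \autoref{thm:structure} to the pair $(A_t,C)$, split by which case holds, use \autoref{obs:P-Q-ideal}-type rank arguments to force the $P_0,Q_0$ components to vanish, and in one residual case invoke $\tilde{C}$ to cut down the ambient space. The paper's proof indeed has exactly this skeleton, including showing $a_t\in U$ when $A_t$ has the rank-one form and showing $\MS(A_t)\subseteq U$ when $A_t\in\ideal{V}$.

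However, there are two genuine gaps that your sketch does not survive.

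First, in your treatment of \autoref{thm:structure}\ref{case:rk1} you assert it is ``essentially identical'' to case~\ref{case:span} with $\gamma D$ replaced by $ef$. That is only true when the coefficient of $C$ in the rank-one linear combination is nonzero. The combination $\alpha C + \beta A_t = gh$ can have $\alpha = 0$, in which case $A_t$ itself is reducible, i.e.\ $A_t = a_t^2 \in \calL_i$, and the identity degenerates to $A_t = gh$ — it no longer relates $A_t$ to $C$ at all, so no amount of rank accounting on $P_0,Q_0$ extracts information. The paper handles this sub-case by an ideal-theoretic argument (\autoref{obs:gen-case-reducible}, built on \autoref{cla:case-rk1-gen}): one uses the hypothesis that $\prod_{T\in\cT_j}T\in\sqrt{\ideal{a_t^2,C}}$ to produce either a $T=\beta C + a_t\ell$ in the third set, or $a_t\in\MS(C)$, or a second square $b^2$ with $T\in\ideal{a_t,b}$, and each branch is then chased separately. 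This tool is qualitatively different from the rank-comparison machinery your proposal relies on, and you cannot reach the conclusion without it.

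Second, in case~\ref{case:2} your chain of inclusions is reversed and the stated equality can fail. You claim ``$\spn{a_t,b_t},\spn{c,d}\subseteq V+\spn{e_1,e_2}$'' and then ``$V+\spn{e_1,e_2}=V+\spn{c,d}$''; but the correct inclusion is $\spn{e_1,e_2}\subseteq\MS(C)\subseteq V+\spn{c,d}$, i.e.\ the reverse, and $\dim\bigl((V+\spn{e_1,e_2})/V\bigr)$ may be $1$, not $2$, so the asserted equality does not hold in general. The argument still closes (one shows $a_t\in V+\spn{e_1,e_2}\subseteq V+\spn{c,d}=U$), but the route through the false equality is not a valid step. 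Moreover, when $A_t\in\ideal{V}$ the paper needs a two-step pincer — first bound $\MS(A_t)\subseteq\spn{V,c,d,\ell''}$ using $C$, then produce $g'\in\spn{V,e,f}\setminus V$ with $g'\in\MS(A_t)$ using $\tilde C$, and conclude $\ell''\in U$ — which is not captured by your ``translate disjoint from $V+\spn{c,d}$'' picture; you should spell out where $\ell''$ comes from and why intersecting the two bounds pins it inside $U$.
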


We postpone the proof of \autoref{cla:P-Q_0-C[V]-a_i} and continue with the proof of \autoref{lem:U-exist}.
Observe that \autoref{cla:P-Q_0-C[V]-a_i} gives a subspace $U$ with $\dim(U)= 4$ that satisfies the requirements of \autoref{lem:U-exist}. 

The last case to consider is when there is only one such $V_{k,s}$. In this case we set $U= V_{k,s}$, and clearly,  $\dim(U)\leq 4$. Note that for every $i\in[3]$ and $t\in[m_i]$ we have that $\dim((V+U)^{\perp}(V_{i,t}))\leq 1$. This is exactly what we wanted to prove.
	
	To conclude, in all the possible cases, it holds that $\dim(U) \leq 5$ and from \autoref{obs:dint-inc} it holds that every nonzero linear combination $0 \neq \alpha P_0 + \beta Q_0$, satisfies $\rank_s(\alpha P_0 + \beta Q_0 ) > 2\dim(V+U) + 10$.
This concludes the proof of \autoref{lem:U-exist}. \qed\\
%\Anote{Should we postpone the proof or not?}\shir{I think it is better in the end of the subsection of 4.7}
	
We now give the proof of  \autoref{cla:P-Q_0-C[V]-a_i}.	
	
	\begin{proof}[Proof of \autoref{cla:P-Q_0-C[V]-a_i}]
	Let $A_{i,t} \in \cT_i$ (the case $A_{i,t}\in\cT_j$ is analogous).  	
%	To simplify notation we shall denote $A= A_{i,t}\in\cT_i$.
	We first handle polynomials of the form
	$A_{i,t} = A'_{i,t} + a_{i,t}(\epsilon_{i,t} a_{i,t} + v_{i,t})$ or $A_{i,t} = a_{i,t}^2 \in \calL_i$. We  show that in this case $a_{i,t} \in U$, from which the claim follows. If $a_{i,t} \in V$ then the statement holds. Thus, assume from now on that $a_{i,t}\notin V$. Consider the possible cases of \autoref{thm:structure} that $C$ and $A_{i,t}$ can satisfy.
	
	\begin{case}
		\item  Let $B_{j,r}\in \cQ_j$ be such that $\alpha B_{j,r} - \beta A_{i,t} = C$. 
		
		If $B_{j,r} = B'_{j,r} + b_{j,r}(\epsilon_{j,r} b_{j,r} + v_{j,r})$ then 
		\[(\alpha B'_{j,r} + \beta A'_{i,t} - C') = cd - \alpha b_{j,r}(\epsilon_{j,r} b_{j,r}+ v_{j,r})  -\beta a_{i,t}(\epsilon_{i,t} a_{i,t} + v_{i,t}) \;.
		\] 
		By assumption, $\alpha B'_{j,r}+ \beta A'_{i,t} - C' \in \spn{P_0,Q_0,\C[V]_2}$, and by the equality above $\rank_s(\alpha B'_{j,r} + \beta A'_{i,t} - C') \leq 3$.  \autoref{obs:P-Q-ideal} now implies that
		$D_V:= \alpha B'_{j,r} + \beta A'_{i,t} - C' \in \C[V]_2$. %Denote $D_V = \alpha B'_{j,r} + \beta A'_{i,t} - C'$. 		Then the following equality holds: 
Observe that
		\begin{equation}\label{eq:D_V}
		D_V + \alpha b_{j,r}(\epsilon_{j,r} b_{j,r} + v_{j,r})  +\beta a_{i,t}(\epsilon_{i,t} a_{i,t} + v_{i,t}) = cd \;.		
		\end{equation}
		As $c,d \notin V$ it holds that both sides are not zero modulo $V$, and thus $\alpha\epsilon_{j,r} b_{j,r}^2 + \beta\epsilon_{i,t} a_{i,t}t^2  \equiv_{V} cd$. Observe that the left hand side is a reducible polynomial with both factors being linear combinations of $b_{j,r}$ and $a_{i,t}$. Thus, it is not hard to see that if $\epsilon_{i,t} \neq 0$ then  $a_{i,t} \in U$. If $\epsilon_{i,t} =0$, then $\epsilon_{j,r}\neq 0$ and then $b_{j,r} \in U$. As $\{c,d,b_{j,r}\} \subseteq U$ we get from Equation~\eqref{eq:D_V} that so does $a_{i,t}$.
		
		If on the other hand $B_{j,r} \in \ideal{V}$ then similarly we obtain that $cd - \beta a_{i,t}(\epsilon_{i,t} a_{i,t} + v_{i,t}) \in \ideal{V}$ and thus $a_{i,t} \in U$.

		\item  There are linear forms $g$ and $h$ such that:
		$A_{i,t} = \alpha C +gh$. 
		
		If $\alpha \neq 0$ then
		\[\frac{1}{\alpha} A'_{i,t} - C' = cd- \frac{1}{\alpha} a_{i,t}(\epsilon_{i,t} a_{i,t} + v_{i,t}) + \frac{1}{\alpha} gh.\]
		As before, it follows that $D_V := \frac{1}{\alpha} A'_{i,t} - C'\in \C[V]_2$, and
		\[D_V+ \frac{1}{\alpha} a_{i,t}(\epsilon_{i,t} a_{i,t} + v_{i,t}) - \frac{1}{\alpha}gh = cd.\]
		Since $c,d \notin V$, both sides are not zero modulo $V$, and thus $\frac{1}{\alpha} \epsilon_{i,t} a_{i,t}^2 -cd \equiv_V \frac{1}{\alpha}gh$. If $\epsilon_{i,t} \neq 0 $ we can use \autoref{cla:ind-rank} to deduce that $a_{i,t} \in U$. If $\epsilon_{i,t} = 0$ then $g,h \in U$ and, as before, so does $a_{i,t}$.
		
		If $\alpha = 0$ then we know that $A_{i,t}$ is reducible. The following observation tells us what cases we should consider. 
		
\begin{observation}\label{obs:gen-case-reducible}
Let $C\in \cT_k$ and $A=a^2\in \calL_i$. Then, either there is $T\in \cQ_j$ such that $T = \alpha C + a b$ for some linear form $b$, or $a\in \MS(C)$ or $C=b^2 \in \calL_k$ and $T\in {\ideal{a,b}}$.
\end{observation}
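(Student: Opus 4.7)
The plan is to apply the vanishing hypothesis of \autoref{thm:main} to the pair $a^2 \in \calL_i$ and $C \in \cT_k$ (which lie in distinct sets), yielding
\[
\prod_{T' \in \cT_j} T' \;\in\; \sqrt{\ideal{a^2, C}} \;=\; \sqrt{\ideal{a, C}},
\]
and then to split on whether $C$ is a square or irreducible.

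If $C = b^2 \in \calL_k$, then $\sqrt{\ideal{a, b^2}} = \ideal{a, b}$ is a prime ideal, so at least one factor $T \in \cT_j$ of the product must lie in $\ideal{a, b}$, which is exactly the third conclusion of the observation.

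If instead $C \in \cQ_k$ is irreducible, I would invoke \autoref{cla:case-rk1-gen} with $P = C$ and the pair of linear forms taken to be $(a, a)$, so that the second generator of the ideal becomes $a \cdot a = a^2$. The claim then produces either the conclusion $a \in \MS(C)$ (exactly the second case of the observation) or a factor $T \in \cT_j$ of the form $T = \alpha C + a c$ for some linear form $c$ and scalar $\alpha$. If $T \in \cQ_j$ this gives the first case and we are done.

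The only mild obstacle is to rule out (or absorb) the possibility that the factor produced by \autoref{cla:case-rk1-gen} happens to lie in $\calL_j$, that is $T = d^2$. Assuming $d^2 = \alpha C + a c$, I would argue as follows. If $\alpha = 0$ then $d^2 = ac$; since a product of two linear forms is a perfect square only when all three forms are proportional, we get $d \sim a$, and hence $d^2 \sim a^2$, contradicting the pairwise linear independence hypothesis of \autoref{thm:main} (as $a^2 \in \calL_i$ and $d^2 \in \calL_j$ come from distinct sets). If $\alpha \neq 0$, then reducing modulo $a$ yields $C|_{a=0} = \alpha^{-1} d^2|_{a=0}$; either $d \sim a$ (same contradiction), or $d|_{a=0}$ is a nonzero linear form so that $C|_{a=0}$ is a nonzero rank-$1$ quadratic. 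In this last sub-subcase, \autoref{cla:rank-mod-space} combined with the irreducibility of $C$ forces $\rank_s(C) = 2$, and then a standard change of basis shows that $\rank_s(C|_{a=0}) = 1$ implies $a \in \MS(C)$, recovering the second conclusion of the observation.
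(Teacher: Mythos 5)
Your proof is correct and follows essentially the same route as the paper: for $C = b^2$ you use primality of $\ideal{a,b}$, and for irreducible $C$ you invoke \autoref{cla:case-rk1-gen} with the linear-form pair $(a,a)$ to get the dichotomy ``$a\in\MS(C)$ or $T=\alpha C + ac$''. The paper's own two-sentence proof is terser and glosses over precisely the two points you took care with: (1) it phrases the irreducible case as ``$C$ irreducible even after setting $a=0$'', leaving the sub-case where $C$ is irreducible but $C|_{a=0}$ becomes reducible implicit (this is silently covered by the $\rank_s(C)=2$, $a\in\MS(C)$ branch of \autoref{cla:case-rk1-gen}, as you noted); and (2) it asserts that the factor $T$ landing on the right-hand side of $T=\alpha C+ac$ lies in $\cQ_j$ rather than $\calL_j$ without justification. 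Your argument that $T=d^2$ forces either $d\sim a$ (contradicting pairwise linear independence across the sets) or, when $\alpha\neq0$, that $C|_{a=0}$ has rank one and hence $a\in\MS(C)$, is exactly the right way to close that small hole; it is a worthwhile explication rather than a detour.
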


\begin{proof}[\textmd{Proof}]
		If $C$ is irreducible even after setting $a=0$ then the claim follows immediately from \autoref{cla:case-rk1-gen}.
	If  $C=b^2 \in \calL_k$ then $\sqrt{\ideal{b^2,a^2}} = {\ideal{a,b}}$ is a prime ideal and therefore some $T\in\cQ_j$ must satisfy $T\in \ideal{a,b}$. 
\end{proof}

		First, consider the case where there is a polynomial $B_{j,r}\in\cT_j$  satisfying $B_{j,r} = \beta C + a_{i,t} \ell$, for some linear form $\ell$. If $B_{j,r} \in \ideal{V}$, %then, as $C' \in \C[V]_2$, 
then  
\[B_{j,r}=\beta C +a_{i,t} \ell= \beta C' + \beta cd + a_{i,t}\ell \in \ideal{V} \;.\] 
By rank arguments we get that $C' \in \C[V]_2$, 
and therefore $a_{i,t} \in U$. If $B_{j,r}\not\in\ideal{V}$ then $B_{j,r} = B'_{j,r} + b_{j,r}(\epsilon_{j,r} b_{j,r} + v_{j,r})$ and then, using similar arguments to before, we deduce $a_{i,t}\in U$. 

		The other case is when $\rank_s(C)=2$ and $a_{i,t}\in \MS(C)$. From \autoref{obs:P-Q-ideal}, we get that $C'\in \C[V]_2$, and the claim follows in this case as well. 
		
		\item  In this case, there are linear forms $g$ and $h$ such that $C,A_{i,t} \in \ideal{g,h}$. In particular it holds that $\rank_s(C' + cd) =2$ and  \autoref{obs:P-Q-ideal} gives $C'\in \C[V]_2$. Similarly, we get that $A'_{i,t} \in \C[V]_2$.
		Note that we cannot have $g,h \in V$ as this would imply $cd \in \ideal{V}$, in contradiction to our assumption. Thus, as $g,h \in \MS(A_{i,t}) \subseteq \spn{V,a_{i,t}}$ it holds that, without loss of generality, $g = a_{i,t} + v'$ for $v'\in V$. As $g \in \MS(C)\subseteq U$ it follows that  $a_{i,t}\in U$ as well, which is what we wanted to prove.
	\end{case}
	
	We are not done yet as we have to  handle the case $A_{i,t} \in \cQ_i\cap \ideal{V}$. In this case we show that $\MS(A_{i,t}) \subseteq U$ and thus $A_{i,t} \in \C[U]_2$. Again we break the proof to three cases according to \autoref{thm:structure}.
	\begin{case}
		\item
		In this case  there is  a polynomial $B_{j,r}\in\cT_j$ such that $\alpha B_{j,r} + \beta A_{i,t} = C$. 		
		If $B_{j,r} \in \ideal{V}$ then we get that $C\in \ideal V$ which together with \autoref{obs:P-Q-ideal} imply that $C' \in \C[V]_2\subseteq\ideal{V}$. Consequently, we get that $cd\in \ideal{V}$ in contradiction to the choice of $C$. 
		
		If $B_{j,r} = B'_{j,r} + b_{j,r}(\epsilon_{j,r} b_{j,r} + v_{j,r})$ then as before we get that   $D_V:= \alpha B'_{j,r}  - C' \in \C[V]_2$, and that $D_V+\beta A_{i,t} +\alpha b_{j,r}(\epsilon_{j,r}b_{j,r} + v_{j,r}) = cd$. As before, by looking at this equality modulo $V$ we deduce that  $b_{j,r} \in U$ and $\MS(A_{i,t}) \subseteq \MS( cd -\alpha b_{j,r}(\epsilon_{j,r} b_{j,r} + v_{j,r}) - D_V) \subseteq U$ as we wanted to prove.

		\item  There are linear forms $g$ and $h$ such that
		$A_{i,t} = \alpha C +gh$ where $\alpha \neq 0$ (otherwise $A_{i,t}\in \calL_i$ which we handled before).
		Again, \autoref{obs:P-Q-ideal} implies $C' \in \C[V]_2$ and $\alpha cd + gh \in \ideal{V}$. Therefore, $\alpha cd \equiv_V -gh$ and hence $g,h \in U$. Consequently, $\MS(A_{i,t}) \subseteq U$.
		
		\item	 In this case, there are linear forms $g$ and $h$ such that $C,A_{i,t} \in \ideal{g,h}$. In particular it holds that $\rank_s(C' + cd) =2$ and therefore $C'\in \C[V]_2$. As before we cannot have $g,h \in V$. Therefore, without loss of generality, $g \in \spn{V,c,d}\setminus V$. Denote $A_{i,t} = g\ell + h\ell'$. As $A_{i,t}\in\ideal{V}$ we have that either $\ell\in V$ and $h\ell'\in \ideal{V}$, or $h,\ell' \in \spn{V,g, \ell}$. In either case we get that there is one linear form $\ell''\in \{\ell,\ell',h\}$ such that $\MS(A_{i,t})\subseteq \spn{V,c,d,\ell''}$. 
				
		We now repeat the same argument for $A_{i,t}$ and $\tilde{C}$. If $A_{i,t}$ and $\tilde{C}$ satisfy \autoref{thm:structure}\ref{case:span} or \autoref{thm:structure}\ref{case:rk1} then, as we already proved, the claim holds. Thus  $A_{i,t}$ and $\tilde{C}$ satisfy \autoref{thm:structure}\ref{case:2} as well. By the same argument it holds that there is a liner form $g' \in \spn{V,e,f}\setminus V$ such that $g' \in \MS(A_{i,t})\subseteq \spn{V,c,d,\ell''}$. From our assumption it follows that $g' \notin \spn{V,c,d}$. Hence, it must hold that $\ell'' \in \spn{V,c,d,g'}\subseteq U$, which implies $\MS(A_{i,t})\subseteq U$ and the claim follows.
	\end{case}
	This completes the proof of \autoref{cla:P-Q_0-C[V]-a_i} and with it the proof of \autoref{lem:U-exist}.
\end{proof}

\subsubsection{Proof of Lemma~\ref{lem:P-Q-and-C[V]}}
\label{sec:P-Q-and-C[V]}

%\begin{proof}[Proof of \autoref{lem:P-Q-and-C[V]}]
	We first handle the case where $\cT_i,\cT_j \subset \C[V']_2$.  \autoref{cla:C[V]} implies that in this case $\cT_k \subseteq \C[V']_2$ and we are done.

\begin{claim}\label{cla:C[V]}
	Let $V$ be a linear space of linear forms. Assume $\cT_i, \cT_j \subset {\C[V]_2}$, for $i\neq j$. Then, the third set $\cT_k$ satisfies $\cT_k \subset \C[V]_2$ as well.  
\end{claim}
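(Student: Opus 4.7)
My plan is to argue by contradiction. Suppose $Q^{*}\in\cT_k$ satisfies $Q^{*}\notin\C[V]_2$; I will derive a contradiction from the pairwise linear independence of the polynomials in $\bigcup_r\cT_r$. Fix a complement $U$ to $V$ in the space of linear forms and write
\[
Q^{*} \;=\; Q^{*V}(v) \;+\; \sum_{j} u_j\,\ell_j(v) \;+\; Q^{*U}(u),
\]
with $Q^{*V}\in\C[V]_2$, $\ell_j\in\C[V]_1$ and $Q^{*U}\in\C[U]_2$. By hypothesis we are in at least one of the cases (I) $Q^{*U}\neq 0$, or (II) $Q^{*U}=0$ but some $\ell_j\neq 0$.

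The key geometric step is: for every $A\in\cT_i$, the EK-condition applied to the pair $(A,Q^{*})$ gives $\prod_{B\in\cT_j}B\in\sqrt{\ideal{A,Q^{*}}}$, and since $\prod_B B\in\C[V]$ is constant along $U$, it vanishes on the projection of $Z(A)\cap Z(Q^{*})$ onto $V$. I claim this projection is Zariski-dense in $Z_V(A)$: in Case I, the affine quadric $Q^{*}(v,\cdot)$ in the $u$-variables has nontrivial quadratic part and hence always has roots over $\C$ for every $v$; in Case II, roots exist whenever some $\ell_j(v)\neq 0$, a dense Zariski-open subset of $Z_V(A)$ unless $Z_V(A)\subseteq\bigcap_j Z_V(\ell_j)$. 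Since $I(Z_V(A))=\ideal{A}$ for $A$ irreducible and $I(Z_V(A))=\ideal{a}$ for $A=a^2$, the failure of density can occur only in the \emph{degenerate sub-case} where $A=a^2$ and every nonzero $\ell_j$ lies in $\spn{a}$. Outside this sub-case, $\prod_B B\in\sqrt{\ideal{A}}_{\C[V]}$.

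The conclusion $\prod_B B\in\sqrt{\ideal{A}}_{\C[V]}$ factors through the prime ideal $\ideal{A}$ (for $A$ irreducible) or $\ideal{a}$ (for $A=a^2$), so some $B\in\cT_j$ is divisible by the generator. Since every $B$ is either an irreducible quadratic or the square of a linear form, divisibility by an irreducible quadratic $A$ forces $B\sim A$, and divisibility by a linear form $a$ forces $B=\alpha a^2\sim A$; either way we violate the pairwise linear independence of $\bigcup_r\cT_r$. Hence $\cT_i$ contains no irreducible quadratics, and by symmetry neither does $\cT_j$: both sets consist solely of squares, $\cT_i=\{a_t^2\}_t$ and $\cT_j=\{b_s^2\}_s$, and every $a_t$ must land in the degenerate sub-case.

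The hard part will be resolving this squares-only, fully-degenerate situation. Fix $a=a_{t_0}\in\cT_i$: being in the degenerate sub-case means $Q^{*}=Q^{*V}+a\tilde{u}$ with $\tilde{u}=\sum_j\alpha_j u_j\in U\setminus\{0\}$ and every $\ell_j=\alpha_j a$. If $|\cT_i|\geq 2$, pick $a'\in\{a_t\}$ with $a\not\sim a'$; then the nonzero $\ell_j=\alpha_j a$ is not in $\spn{a'}$, so $a'$ falls \emph{outside} its own degenerate sub-case, and the previous paragraph yields $b_s\sim a'$ for some $s$, a contradiction. By symmetry $|\cT_j|\geq 2$ is ruled out as well, so $\cT_i=\{a^2\}$, $\cT_j=\{b^2\}$ with $a\not\sim b$. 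Applying the key step to $(b^2,Q^{*})$, the nonzero $\ell_j$'s (multiples of $a$) are not in $\spn{b}$, placing $b$ outside its degenerate sub-case and giving $a^2=\prod\cT_i\in\ideal{b}_{\C[V]}$, which forces $b\sim a$ and the final contradiction.
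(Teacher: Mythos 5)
Your proof is correct, but it is organized quite differently from the paper's. The paper splits on whether $\cQ_1\cup\cQ_2=\emptyset$: in the all-squares case it factors $C$ modulo the relevant linear forms, and otherwise it first invokes \autoref{thm:structure} to force $C\in\ideal{V}$ with $\rank_s(C)=2$, writes $C=v_1c+v_2d$ with $v_1,v_2\in V$, and then exhibits a common zero of $B$ and $C$ on which the product over the remaining set does not vanish, by picking a suitable point of $Z(B)$ and adjusting the coordinates $c,d$ that live outside $V$. Your argument uses the same core geometric mechanism -- the projection of $Z(A)\cap Z(Q^{*})$ onto the $V$-coordinates is Zariski-dense in $Z_V(A)$, so a $V$-defined product vanishing on the intersection must lie in the prime ideal $I(Z_V(A))$ -- but you apply it directly to an arbitrary $Q^{*}\notin\C[V]_2$ via the $V\oplus U$ decomposition, which lets you bypass \autoref{thm:structure} entirely and treat irreducible quadratics and squares uniformly through the primes $\ideal{A}$ and $\ideal{a}$. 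What this buys is a single unified case analysis in place of the paper's two; what it costs is the extra endgame for your degenerate sub-case ($A=a^2$ with every cross-term of $Q^{*}$ proportional to $a$), which you dispatch correctly by pairwise linear independence, first forcing both sets down to singleton squares $\{a^2\},\{b^2\}$ with $a\not\sim b$ and then running the density argument once more from $b$'s side. Two minor points, neither a gap: when $a\mid B$ for an irreducible quadratic $B$ you get an outright impossibility rather than $B\sim A$, so your phrasing there is slightly loose; and like the paper you implicitly assume all three sets are nonempty, which is harmless since the vanishing hypothesis cannot hold for homogeneous polynomials when the third set is empty.
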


\begin{proof}[Proof of \autoref{cla:C[V]}]
	Without loss of generality assume $\cT_1,\cT_2 \subset \C[V]_2$.
	Assume towards a contradiction that there is $C \in \cT_3$ such that $C\notin \C[V]_2$. 

	First we handle the case that $\cQ_1\cup \cQ_2 = \emptyset$. Let $B = v^2 \in \calL_2$. We have that 
	\begin{equation}\label{eq:vaa}
	\prod_{T_r \in \calL_1} T_r \in \sqrt{\ideal{C, v^2}} \;.
	\end{equation}
	Note that $C|_{v=0}$ must be a reducible polynomial. Indeed, if it was irreducible then Equation~\eqref{eq:vaa} would imply that there is some $T\in \calL_1$ such that $T\in \sqrt{\ideal{C,v^2}}$. But as $T$ is reducible this implies that so is $C|{v=0}$. Denote $C = \tilde{u}\tilde{u}' + vb$. Equation~\eqref{eq:vaa} implies that there are polynomials ${u}^2,{u}'^2\in\calL_1$ such that $\tilde{u}\equiv_{v} u$ and $\tilde{u}'\equiv_{v} u'$. Thus, we can write $C =u u' + v b'$.
Similarly, we have that 
	\begin{equation}\label{eq:vaa2}
	\prod_{T_r \in \calL_2} T_r \in \sqrt{\ideal{C, u'^2}} \;.
	\end{equation}
	We know that $C|_{u'=0} = v b'$, and thus there is a polynomial of the form $(b'+\alpha u')^2 \in \calL_2\subseteq \C[V]_2$. This implies that $b'\in V$ and hence, $\MS(C) \subseteq V$ in contradiction.

	We now handle the case that $\cQ_1\cup\cQ_2 \neq \emptyset$.  Let $B \in \cQ_1\cup \cQ_2$. Assume without loss of generality that $B\in \cQ_2$. If $C\in \calL_3$ then considering the different cases of \autoref{obs:gen-case-reducible} we get  that in each of them $C \in \C[V]_2$, in contradiction.

	Hence from now on we assume $C \in \cQ_3$.	 It is not hard to verify, since $B \in \cQ_2$, that our assumption implies that $C$ and $B$ cannot satisfy \autoref{thm:structure}\ref{case:rk1} or \autoref{thm:structure}\ref{case:span}. Thus, $C$ and $B$  satisfy \autoref{thm:structure}\ref{case:2}. This implies that $C\in \ideal{V}$ and $\rank_s(C) =2$. Let $v_1$, $v_2$, $c$ and $d$ be linear forms such that $v_1,v_2\in V$ and  $C=v_1c + v_2d$.  As $C\not\in \C[V]_2$. we have that $\{c,d\} \not \subset V$. 
%	We can also assume that if $d \in \spn{c,V}$, then without loss of generality $d\in V$.
	We are now going to get a contradiction to the equation
	\begin{equation}\label{eq:vaa3}
	\prod_{T_r \in \cT_1} T_r \in \sqrt{\ideal{C,B}}=\sqrt{\ideal{v_1c+v_2d,B}} \;.
	\end{equation}
	We shall reach a contradiction by proving the existence of a common zero of $C$ and $B$ that is not a zero of $\prod_{T_r \in \cT_1} T_r$.
	First, we note that $Z(B)\setminus Z(v_1,v_2) \neq \emptyset$, as $B$ is irreducible.
	% and thus $v_1,v_2 \notin \sqrt{\ideal{B}}$. 
	Moreover, if for every $\vaa \in  Z(B))\setminus Z(v_1,v_2)$ it holds that $\prod_{T_r \in \cT_j} T_r(\vaa) = 0$, then, as  $Z(B)\setminus Z(v_1,v_2)$ is open in $ Z(B)$, we get that $Z(B)\subseteq Z(\prod_{T_r \in \cT_j} T_r)$. Consequently, $\prod_{T_r \in \cT_j} T_r\in \ideal{B}$, and, as $\ideal{B}$ is prime (since $B$ is irreducible), there must be some $T_{r'}\in \cT_j$ such that $B | T_{r'} $. This contradicts the condition  that our polynomial are pairwise linearly independent.	 Thus, there is  $\vaa \in  Z(B)\setminus Z(v_1,v_2)$ such that $\prod_{T_r \in \cT_j} T_r(\vaa) \neq 0$. Note that as $B$ and the polynomials in $\cT_j$ are defined only using linear forms from $V$, we get that the same property holds for every point $\vbb$ that agrees with $\vaa$ on all linear forms in $V$. We can therefore modify $\vaa$, if needed, by changing the values of $c(\vaa)$ and $d(\vaa)$ (if $\spn{c,d}\cap V \neq \{\vec{0}\}$ then we modify only one of them) so that $C(\vaa) = 0$. Note that this is possible as after plugging $\vaa$ to the linear forms in $V$ we get that $C$ becomes a linear equation in $c$ and $d$. This  contradicts Equation~\eqref{eq:vaa3}. Hence, it must be the case that $\cQ_k \subset \C[V]_2$, as claimed.
\end{proof}

	We continue with the proof of Claim~\ref{lem:P-Q-and-C[V]}.
	Without loss of generality assume $\cT_1,\cT_2 \subset \spn{P_0,Q_0, \C[V']}$ and that $\cT_2 \not\subset \C[V']_2$.	Let $C \in \cT_3$  and $B\in \cT_2\setminus \C[V']_2$. 	By \autoref{obs:P-Q-ideal} we get that $\rank_s(B) >2$ and therefore $C, B$ do not satisfy \autoref{thm:structure}\ref{case:2}. If $C,B$ satisfy \autoref{thm:structure}\ref{case:span} then there is $A \in \cT_1$ such that $C \in \spn{B,A} \subseteq \spn{P_0,Q_0,\C[V']_2}$ as we wanted to prove. 
	So assume that $C$ and $B$ satisfy \autoref{thm:structure}\ref{case:rk1}.  Hence, there are linear forms $c$ and $d$ such that $C = \beta B+cd$.  \autoref{cla:case-rk1-gen} implies that there is $A \in \cT_1$ such that $A = \alpha B +ce$ for some linear form $e$.
	From  pairwise linear independence we know that $e \neq 0$. Thus $A -\alpha B = ec$, and in particular $\rank_s(A -\alpha B)=1$. As $A -\alpha B \in \spn{Q_0,P_0,\C[V']_2}$ it follows from \autoref{obs:P-Q-ideal} that $A -\alpha B \in \C[V']_2$, and therefore $\MS(A -\alpha B) \subseteq V'$ and thus $c,e\in V'$. The same argument shows that $d\in V'$, and  we get that $C \in \spn{Q_0, P_0,, \C[V']_2}$ as claimed.
%\end{proof}

This concludes the proof of Claim~\ref{lem:P-Q-and-C[V]}. \qed

\subsubsection{Proof of Lemma~\ref{lem:ek-lines}}\label{sec:ek-lines}

%\begin{proof}[Proof of \autoref{lem:ek-lines}]
%	Our assumptions thus far, combined with \autoref{cla:z-map-rank} guarantees that every nonzero linear combination of $T_{\vaa, V}(P_0),T_{\vaa, V}(Q_0)$ has $\rank_s$ greater then $3$.
	
%	Observe that for $i\in [3]$ and a polynomial $A_{i,t} \in  \cQ_i$ it holds that, for some constants $\alpha_{i,t},\beta_{i,t}$, 
%	\[T_{\vaa, V}(A_{i,t}) = \alpha_{i,t} T_{\vaa, V}(P_0) + \beta_{i,t} T_{\vaa, V}(Q_0) +\tilde{A}_{i,t}\;,
%	\]
%	where $\tilde{A}_{i,t}$ is a reducible polynomial defined over $ \spn{\tilde{a}_{i,t},z}$. 
%	If $a\tilde{a}{i,t}\not \in \spn{z}$ then in particular we can assume that $\tilde{A}_{i,t} = \tilde{a}_{i,t}(\epsilon_{i,t} a'_{i,t} + \delta_{i,t} z)$ where $a'_{i,t} \in \spn{a_{i,t},z}\setminus \spn{z}$. We denote by $A''_{i,t} = \alpha_{i,t} T_{\vaa, V}(P_0) + \beta_{i,t} T_{\vaa, V}(Q_0)$ the projection of $T_{\vaa, V}(A_{i,t})$ on $\spn{T_{\vaa, V}(P_0), T_{\vaa,V}(Q_0)}$. It holds that $T_{\vaa, V}(A_{i,t}) = A''_{i,t} +\tilde{A}_{i,t}$.
%	Finally, we note that after applying $T_{\vaa, V}$ every polynomial is either reducible (hence of $\rank_s = 1$), or of $\rank_s>2$.

%	To prove the claim it is enough to show that for every $\tilde{a}_{i,t}, \tilde{a}_{j,r}$ such that $z \notin \spn{\tilde{a}_{i,t},\tilde{a}_{j,r}}$ there is $\tilde{a}_{k,s}$ such that $\tilde{a}_{k,s}\in \spn{\tilde{a}_{i,t}, \tilde{a}_{j,r}, z}\setminus ( \spn{\tilde{a}_{i,t},z} \cup \spn{\tilde{a}_{i,t},z})$. 

%\Anote{see if any thing above is needed}	

		Let $A_{i,t} \in \cT_i$ and $B_{j,r} \in \cT_j$ be such that  $\tilde{a}_{i,t}, \tilde{a}_{j,r}$ satisfy the condition in the lemma.  Observe that no matter whether $A_{i,t}\in\ideal{V'}$ or $A_{i,t}=A'_{i,t}+a_{i,t}(\epsilon_{i,t} a_{i,t}+v_{i,t})$, for $A'_{i,t}\in \spn{P_0,Q_0,\C[V']_2}$, we can express $T_{\vaa, V}(A_{i,t})$ as
	\[ T_{\vaa, V}(A_{i,t}) =  {A''}_{i,t} + \tilde{a}'_{i,t}(\epsilon'_{i,t} \tilde{a}'_{i,t} +\delta'_{i,t}z) \;,
	\]  
	with $A''_{i,t} \in \spn{T_{\vaa, V'}(P_0),T_{\vaa, V'}(Q_0)}$ and $\tilde{a}'_{i,t}\in\spn{\tilde{a}_{i,t},z}$. Similarly we write $T_{\vaa, V'}(B_{j,r}) =  {B''}_{j,r} + \tilde{a}'_{j,r}(\epsilon'_{j,r} \tilde{a}'_{j,r} +\delta'_{j,r}z)$. Observe that if $A''_{i,t}\neq 0$ then by \autoref{assume:special-case} and the choice of $T_{\vaa,V'}$, it follows that $\rank_s(T_{\vaa, V'}(A_{i,t})) > 2$. On the other hand, if $A''_{i,t}=0$ then $\rank_s(T_{\vaa, V'}(A_{i,t})) =1$.
	We split the proof according to whether $A''_{i,t}=B''_{j,r}= 0$ or not.

	%	Let $A_{i,t} \in \cQ_i$ and $B_{j,r} \in \cQ_j$ be such that $T_{\vaa, V}(A_{i,t}) = A''_{i,t} + a'_{i,t}(\epsilon_{i,t} a'_{i,t} +\delta_{i,t}z), T_{\vaa, V}(B_{j,r}) = B''_{j,r} + a'_{j,r}(\epsilon_{j,r} a'_{j,r} + \delta_{j,r}z)$ for $A''_{i,t},B''_{j,r}\in \spn{T_{\vaa, V}(P_0),T_{\vaa, V}(Q_0)}$. 
	%This structure holds for every polynomial in $\cup_{j\in[3]}\cQ_j$.
	
	\begin{enumerate}
		\item $\max\left(\rank_s\left(T_{\vaa, V'}(A_{i,t})\right), \rank_s\left(T_{\vaa, V}(B_{j,r})\right)\right) >2$:
		
		Assume without loss of generality that $\rank_s(T_{\vaa, V'}(A_{i,t}))>2$.
		As before we split the proof according to which cases of \autoref{thm:structure}  $T_{\vaa, V'}(A_{i,t})$ and $T_{\vaa, V'}(B_{j,r})$  satisfy. Note that we do not need to consider $\autoref{thm:structure}\ref{case:2}$ as we assume that at least one of the polynomials has $\rank_s$ higher than $2$.
		\begin{case}
			
			\item 	There are $\lambda,\beta \in \C\setminus \{0\}$ and $C_{k,s} \in \cQ_k$ such that  $\lambda T_{\vaa, V}(A_{i,t}) + \beta T_{\vaa, V}(B_{j,r}) = T_{\vaa, V}(C_{k,s})$: 	We have that
			\[\lambda A''_{i,t} + \beta B''_{j,r} - C''_{k,s} = \tilde{a}'_{k,s}(\epsilon'_{k,s} \tilde{a}'_{k,s} +\delta'_{k,s}z) - \lambda \tilde{a}'_{i,t}(\epsilon'_{i,t} \tilde{a}'_{i,t} +\delta'_{i,t}z) - \beta \tilde{a}'_{j,r}(\epsilon'_{j,r} \tilde{a}'_{j,r} +\delta'_{j,r}z) \;.\]
			If $\lambda A''_{i,t} + \beta B''_{j,r} - C''_{k,s} \neq 0$ then there is a non trivial linear combination of $ T_{\vaa, V'}(P_0), T_{\vaa, V'}(Q_0)$ of $\rank_s \leq 3$, in contradiction to \autoref{obs:P-Q-ideal}. Thus, $\lambda A''_{i,t} + \beta B''_{j,r} - C''_{k,s} =0$. This implies that 
			\[\tilde{a}'_{k,s}(\epsilon'_{k,s} \tilde{a}'_{k,s} +\delta'_{k,s}z) - \lambda \tilde{a}'_{i,t}(\epsilon'_{i,t} \tilde{a}'_{i,t} +\delta'_{i,t}z) - \beta \tilde{a}'_{j,r}(\epsilon'_{j,r} \tilde{a}'_{j,r} +\delta'_{j,r}z) =0\;.
			\]
			\autoref{cla:c,d,V}, which we state and prove next, shows that $\tilde{a}'_{k,s} \in \spn{\tilde{a}'_{i,t}, \tilde{a}'_{j,r},z} \setminus \left(\spn{\tilde{a}'_{i,t},z}\cup \spn{ \tilde{a}'_{j,r},z}\right)$. As $\tilde{a}'_{k,s} \in\spn{\tilde{a}_{k,s} ,z}$ this implies that $\tilde{a}_{k,s} \in \spn{\tilde{a}_{i,t}, \tilde{a}_{j,r},z} \setminus \left(\spn{\tilde{a}_{i,t},z}\cup \spn{ \tilde{a}_{j,r},z}\right)$ as we wanted to prove.
			
			\begin{claim}\label{cla:c,d,V}
				If there are linear forms $c$ and $d$, and $\lambda,\beta \in \C\setminus \{0\}$ such that $\lambda \tilde{a}_{i,t}(\epsilon_{i,t} \tilde{a}_{i,t} + \delta_{i,t}z) + \beta \tilde{a}_{j,r}(\epsilon_{j,r} \tilde{a}_{j,r} + \delta_{j,r}z) + cd \in \spn{z^2}$ then there are $\mu, \eta \in \C\setminus\{0\}$ and $\epsilon\in \C$, such that, without loss of generality, $c = \mu \tilde{a}_{i,t}+\eta \tilde{a}_{j,r} + \epsilon z$.
			\end{claim}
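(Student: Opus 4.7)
The plan is to extract the identity as a quadratic form in the linearly independent linear forms $a := \tilde a_{i,t}$, $b := \tilde a_{j,r}$, and $z$, and analyze the coefficient of $ab$.

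First, I would rewrite the hypothesis as
\[ cd = \gamma z^2 - \lambda\epsilon_{i,t}\,a^2 - \lambda\delta_{i,t}\,az - \beta\epsilon_{j,r}\,b^2 - \beta\delta_{j,r}\,bz\]
for some $\gamma \in \C$; in particular, $cd$ is a quadratic form in the linear forms $a,b,z$ only. Since $\ideal{a,b,z}$ is a prime ideal, at least one of $c,d$ lies in $\spn{a,b,z}$; extending $\{a,b,z\}$ to a basis of the space of linear forms then shows that, provided $cd\neq 0$, both $c,d\in\spn{a,b,z}$, since any component of the other factor outside $\spn{a,b,z}$ would produce cross-monomials that cannot be cancelled on the right-hand side. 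The degenerate case $cd=0$ is not relevant to the usage in \autoref{lem:ek-lines}.

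Next, writing $c = c_a a + c_b b + c_z z$ and $d = d_a a + d_b b + d_z z$, expanding $cd$ and matching the coefficient of $ab$ (which is zero on the right) yields the bilinear constraint $c_a d_b + c_b d_a = 0$. I would then show that, possibly after swapping $c$ and $d$, both $c_a$ and $c_b$ are nonzero. This relies on the non-degeneracy built into the context of \autoref{lem:ek-lines}: the assumption $\tilde a_{i,t}\in\calS_i$ forces $(\epsilon_{i,t},\delta_{i,t})\neq(0,0)$, since otherwise $T_{\vaa,V'}(A_{i,t})$ would already lie in $\spn{T_{\vaa,V'}(P_0), T_{\vaa,V'}(Q_0)}$ and would not contribute $\tilde a_{i,t}$ to $\calS_i$; symmetrically $(\epsilon_{j,r},\delta_{j,r})\neq(0,0)$. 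Suppose $c_a=0$ and $c_b\neq 0$. Then the bilinear constraint forces $d_a=0$, so $c,d\in\spn{b,z}$ and $cd$ has no $a^2$ or $az$ terms; matching coefficients yields $\epsilon_{i,t}=\delta_{i,t}=0$, a contradiction. The symmetric case is ruled out analogously. If $c_a=c_b=0$ then $c\in\spn{z}$, and matching the $a^2, b^2$ coefficients forces $\epsilon_{i,t}=\epsilon_{j,r}=0$; matching the $az, bz$ coefficients then gives $d_a=-\lambda\delta_{i,t}/c_z$ and $d_b=-\beta\delta_{j,r}/c_z$, both nonzero by non-degeneracy, so swapping $c$ and $d$ yields the desired form. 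The remaining case $c_a,c_b\neq 0$ is precisely the conclusion, with $\mu=c_a$, $\eta=c_b$, $\epsilon=c_z$.

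The main (minor) obstacle is the subcase $c_a=c_b=0$: here the argument both swaps $c$ and $d$ and invokes non-degeneracy of \emph{both} parameter pairs $(\epsilon_{i,t},\delta_{i,t})$ and $(\epsilon_{j,r},\delta_{j,r})$; this non-degeneracy is not part of the claim statement itself but is guaranteed by the surrounding hypotheses of \autoref{lem:ek-lines} and the generic choice of $\vaa$.
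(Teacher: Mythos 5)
Your proof is correct and follows essentially the same route as the paper's: establish $c,d\in\spn{\tilde{a}_{i,t},\tilde{a}_{j,r},z}$, then expand in that basis and use the vanishing of the $\tilde{a}_{i,t}\tilde{a}_{j,r}$ coefficient to rule out the bad positions for $c$ and $d$. The one place you are more careful than the paper is in flagging the non-degeneracy $(\epsilon_{i,t},\delta_{i,t})\neq(0,0)$ and $(\epsilon_{j,r},\delta_{j,r})\neq(0,0)$: the paper's phrases ``cannot be zero by our assumption'' and ``in contradiction'' silently depend on it, it is genuinely needed (e.g.\ with $\epsilon_{i,t}=\delta_{i,t}=0$, the choice $c=\tilde{a}_{j,r}$, $d=-\beta(\epsilon_{j,r}\tilde{a}_{j,r}+\delta_{j,r}z)$ satisfies the hypothesis but not the conclusion), and you correctly trace it to the memberships $\tilde{a}_{i,t}\in\calS_i$, $\tilde{a}_{j,r}\in\calS_j$ in the surrounding proof of Lemma~\ref{lem:ek-lines}.
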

			
			\begin{proof}
				Since 
				\begin{equation}\label{eq:c,d,mu}
				\lambda \tilde{a}_{i,t}(\epsilon_{i,t} \tilde{a}_{i,t} + \delta_{i,t}z) + \beta \tilde{a}_{j,r}(\epsilon_{j,r} \tilde{a}_{j,r} + \delta_{j,r} z) + cd =\delta z^2 \;,
				\end{equation} 
				for some $\delta \in \C$, we get that $cd \equiv_{\tilde{a}_{j,r}} \delta z^2 - \lambda \tilde{a}_{i,t}(\epsilon_{i,t} \tilde{a}_{i,t} + \delta_{i,t}z)$. As the left hand side cannot be zero by our assumption,  it follows that $c,d\in \spn{\tilde{a}_{i,t}, \tilde{a}_{j,r}, z}$. Finally, we note that we cannot have $c,d \in  \spn{\tilde{a}_{i,t},z} \cup \spn{\tilde{a}_{j,r},z}$. Indeed, if both belong to, say, $\spn{\tilde{a}_{i,t},z}$ then we get from Equation~\eqref{eq:c,d,mu} that $\tilde{a}_{j,r}\in \spn{\tilde{a}_{i,t}, z}$ in contradiction. If $c\in 
				\spn{\tilde{a}_{i,t},z}$ and $d\in \spn{\tilde{a}_{j,r},z}$ then we get that in Equation~\eqref{eq:c,d,mu} the term  $\tilde{a}_{i,t}\tilde{a}_{j,r}$ cannot be cancelled, in contradiction.
			\end{proof}

			\item There are nonzero linear forms $c,d$ such that  $\lambda T_{\vaa, V'}(A_{i,t})  +T_{\vaa, V'}(B_{j,r}) = cd$, for some scalar $\lambda\in \C$: We split the proof of this case to two subcases
			\begin{itemize}
				\item $\rank_s( T_{\vaa, V'}(B_{j,r})) = 1$: In this case we can assume $\lambda=0$ and  $cd= T_{\vaa, V'}(B_{j,r})= \tilde{a}'_{j,r}(\epsilon'_{j,r} \tilde{a}'_{j,r} +\delta'_{j,r}z) $. %Then, without loss of generality, $c = \tilde{a}'_{j,r}$. %In either case it holds that $\sqrt{\ideal{T_{\vaa, V'}(A_{i,t}),T_{\vaa, V'}(B_{j,r})}} = \sqrt{\ideal{T_{\vaa, V'}(A_{i,t}), cd}}$. 
				The assumption that $\rank_s(T_{\vaa, V'}(A_{i,t}))\geq 3$ implies that $T_{\vaa, V'}(A_{i,t})$ is irreducible even after setting $\tilde{a}'_{j,r}=0$. It follows that if a product of irreducible polynomials satisfies $\prod_{s} T_{\vaa, V'}(C_{k,s}) \in  \sqrt{\ideal{T_{\vaa, V'}(A_{i,t}),  T_{\vaa, V'}(B_{j,r})}}$ then, after setting $\tilde{a}'_{j,r}=0$, some $T_{\vaa, V'}(C_{k,s})$ is divisible by ${T_{\vaa, V'}(A_{i,t})}|_{\tilde{a}'_{j,r}=0}$. Thus, there is a multiplicand that is equal to $ T_{\vaa, V'}(C_{k,s}) =\gamma T_{\vaa, V'}(A_{i,t}) + \tilde{a}'_{j,r}e$ for some linear form $e$. 
				
				If $\gamma = 0$ then $\tilde{a}'_{j,r}e=T_{\vaa, V'}(C_{k,s}) = C''_{k,s}+\tilde{a}'_{k,s}(\epsilon'_{k,s} \tilde{a}'_{k,s} + \delta'_{k,s} z)$, for $C''_{k,s}\in \spn{T_{\vaa, V'}(P_0),T_{\vaa, V'}(Q_0)}$. From \autoref{obs:P-Q-ideal} we get  $C''_{k,s}=0$. It follows that $\tilde{a}'_{j,r}$ divides both $T_{\vaa, V'}(C_{k,s})$ and $T_{\vaa, V'}(B_{j,r})$, in contradiction to the choice of $T_{\vaa,V'}$ (recall \autoref{cla:res-z-ampping}).
				
				The last case to consider is $\gamma \neq 0$. As before, \autoref{obs:P-Q-ideal} implies that $C''_{k,s}-\gamma A''_{i,t} = 0$. Therefore, 
				\begin{equation}\label{eq:gammaneq0}
				\tilde{a}'_{k,s}(\epsilon'_{k,s} \tilde{a}'_{k,s} + \delta'_{k,s} z)  =  \gamma \tilde{a}'_{i,t}(\epsilon'_{i,t} \tilde{a}'_{i,t} + \delta'_{i,t} z) +\tilde{a}'_{j,r}e \;.
				\end{equation}	 
				%		By the choice of $\tilde{a}_{i,t}$ and $\tilde{a}_{j,r}$, and since $\rank_s\left(\gamma \tilde{a}'_{i,t}(\epsilon'_{i,t} \tilde{a}'_{i,t} + \delta'_{i,t} z)+\tilde{a}'_{j,r}e\right)=1$, it follows from \autoref{cla:rank-2-in-V} that $e\in \spn{\tilde{a}'_{i,t},\tilde{a}'_{j,r},z}=\spn{\tilde{a}_{i,t},\tilde{a}_{j,r},z}$. Denote $e = \mu \tilde{a}_{i,t}+\eta \tilde{a}_{j,r} + \epsilon z$. 
				%						 Note that it cannot be the case that $e\ \sim {\tilde{a}'_{i,t}}$ as this would imply that $\tilde{a}'_{k,s}\in\spn{\tilde{a}'_{i,t},z}$, and from it we conclude that $\tilde{a}'_{j,r} \in \spn{\tilde{a}'_{i,t},z}$. This implies that $\tilde{a}_{j,r} \in \spn{\tilde{a}_{i,t},z}$ in contradiction to the choice of $\tilde{a}_{i,t}$ and $\tilde{a}_{j,r}$.		 
				Observe that if $\tilde{a}'_{i,t}$ divides $\tilde{a}'_{k,s}(\epsilon'_{k,s} \tilde{a}'_{k,s} + \delta'_{k,s} z)$ then it must also divide $\tilde{a}'_{j,r} e$ and hence $\tilde{a}'_{i,t}\sim e$. But then if we divide both sides of Equation~\eqref{eq:gammaneq0} by $\tilde{a}'_{i,t}$ then we get that $\tilde{a}'_{j,r}\in\spn{\tilde{a}'_{i,t},z}$ in contradiction. Therefore, if we consider Equation~\eqref{eq:gammaneq0}  modulo $\tilde{a}'_{i,t}$ then we get that both sides are nonzero and that   $\tilde{a}'_{k,s}(\epsilon'_{k,s} \tilde{a}'_{k,s} + \delta'_{k,s} z) \equiv_{\tilde{a}'_{i,t}} \tilde{a}'_{j,r}e$. It follows  that either $\tilde{a}'_{k,s}$ or $\epsilon_{k,s}' \tilde{a}'_{k,s} +\delta'_{k,s}z$ is equivalent to $\tilde{a}'_{j,r}$, modulo $\tilde{a}'_{i,t}$.  Observe that we cannot have $\tilde{a}'_{k,s} = \lambda \tilde{a}'_{j,r}+\epsilon' z$ as in this case Equation~\eqref{eq:gammaneq0} implies that  $\tilde{a}'_{i,t}\in\spn{ \tilde{a}'_{j,r},z}$ in contradiction to the choice of $ \tilde{a}'_{i,t}$ and $\tilde{a}'_{j,r}$.
				This proves that $a'_{k,s}$ satisfies the requirements of the claim.
				
				\item 	$\rank_s(T_{\vaa, V'}(B_{j,r})) \geq3$: In this case we must have $\lambda\neq 0$. By repeating the argument from {\bf Case~(i)} we get from \autoref{cla:c,d,V}  that, without loss of generality, $c=\mu \tilde{a}'_{i,t}+\eta \tilde{a}'_{j,r}+\epsilon z$ with $\mu\eta\neq 0$.  \autoref{cla:case-rk1-gen} implies the existence of $C_{k,s}\in \cT_k$ with 
				\[T_{\vaa,V'}(C_{k,s})=\gamma T_{\vaa, V'}(B_{j,r}) + ce = -\gamma\lambda T_{\vaa, V'}(A_{i,t}) + c(e+\gamma d)\;.\] 
				As before we conclude that 
				\begin{align*}
				\tilde{a}'_{k,s}(\epsilon'_{k,s} \tilde{a}'_{k,s} +\delta'_{k,s}z) &= \gamma  \tilde{a}'_{j,r}(\epsilon'_{j,r} \tilde{a}'_{j,r} + \delta'_{j,r} z)+(\mu \tilde{a}'_{i,t}+\eta \tilde{a}'_{j,r}+\epsilon z)e \\ &= -\gamma\lambda  \tilde{a}'_{i,t}(\epsilon'_{i,t} \tilde{a}'_{i,t} + \delta'_{i,t} z)+(\mu \tilde{a}'_{i,t}+\eta \tilde{a}'_{j,r}+\epsilon z)(e+\gamma d)\;.
				\end{align*}
				Rank arguments imply that $e,(e+\gamma d)\in\spn{\tilde{a}'_{i,t}, \tilde{a}'_{j,r}, z}$ and therefore $\tilde{a}'_{k,s}\in\spn{\tilde{a}'_{i,t}, \tilde{a}'_{j,r}, z}$ as well. We cannot have  $\tilde{a}'_{k,s}\in\spn{\tilde{a}'_{j,r}, z}$ or $\tilde{a}'_{k,s}\in\spn{\tilde{a}'_{i,t}, z}$ since this would imply $\mu=0$ or $\eta=0$, respectively,  in contradiction.

			\end{itemize}

		\end{case}
		\item $\rank_s(T_{\vaa, V'}(A_{i,t}))= \rank_s(T_{\vaa, V'}(B_{j,r})) = 1$:
		
		In this case, since $\ideal{\tilde{a}'_{i,t}, \tilde{a}'_{j,r}}$ is a prime ideal, we have that $\sqrt{\ideal{T_{\vaa,V'}(A_{i,t}), T_{\vaa,V}(B_{j,r})}} \subseteq \ideal{\tilde{a}'_{i,t}, \tilde{a}'_{j,r}}$.
		We also know that there are $C_{k_1},C_{k_2},C_{k_3},C_{k_4}\in \cT_k$ such that 
		\[
		T_{\vaa,V'}(C_{k_1})\cdot T_{\vaa,V'}(C_{k_2})\cdot T_{\vaa,V'}(C_{k_3})\cdot T_{\vaa,V'}(C_{k_4}) \in \sqrt{\ideal{T_{\vaa,V'}(A_{i,t}), T_{\vaa,V'}(B_{j,r})}}\subseteq \ideal{a'_{i,t}, a'_{j,r}}.
		\] 
		Since $\ideal{a'_{i,t}, a'_{j,r}}$ is prime it follows that, without loss of generality, $T_{\vaa,V'}(C_{k_1})\in\ideal{a'_{i,t}, a'_{j,r}}$. As $\rank_s(T_{\vaa,V'}(C_{k_1}))$ is either $1$ or greater than $2$, we conclude that $\rank_s(T_{\vaa,V'}(C_{k_1})) = 1$.
		Note that it cannot be the case that $T_{\vaa,V'}(C_{k_1})\in\ideal{a'_{i,t}} \cup \ideal{a'_{j,r}}$, as in this case we get that $T_{\vaa,V'}(C_{k_1})$ and one of $T_{\vaa,V'}(A_{i,t})$ and $T_{\vaa,V'}(B_{j,r})$ share a common factor, which is not a polynomial in $z$, in contradiction to our choice of $T_{\vaa, V'}$ (recall  \autoref{cla:res-z-ampping}). This means that there is a factor of $T_{\vaa,V'}(C_{k_1})$ lying in $\spn{\tilde{a}'_{i,t}, \tilde{a}'_{j,r},z} \setminus \left(\spn{\tilde{a}'_{i,t},z}\cup \spn{ \tilde{a}'_{j,r},z}\right)$.
		Consequently, $\tilde{a}'_{k,k_1} \in \spn{\tilde{a}'_{i,t}, \tilde{a}'_{j,r},z} \setminus \left(\spn{\tilde{a}'_{i,t},z}\cup \spn{ \tilde{a}'_{j,r},z}\right)$ as we wanted to prove.
	\end{enumerate}
	
	%\end{proof}
	
	This concludes the proof of \autoref{lem:ek-lines} and with it the proof of \autoref{prop:P,Q_0,C[V],rank1}.

\subsection{For some $j\in [3]$, $\cQ_j \neq \calP^{\ref{case:span}}_j \cup  \calP^{\ref{case:2}}_j $}
\label{sec:hard}

%Recall \autoref{eq:P^{(i)}} and \autoref{eq:P^{(iii)}}.
We now handle the case where for some $j\in [3]$, there is some polynomial $Q_0 \in \cQ_j \setminus \calP^{\ref{case:2}}_j \cup \calP^{\ref{case:span}}_j$. This is the last case to consider in the proof of \autoref{thm:main}.

Assume without loss of generality that there is a polynomial $Q_0\in\cQ_2 \setminus \calP^{\ref{case:2}}_2 \cup \calP^{\ref{case:span}}_2$. We also assume,  without loss of generality, that 
\[|\cQ_1|=m_1\leq m_3=|\cQ_3|\;.\footnote{Note that we cannot assume anything regarding the value of $|\cQ_2|=m_2$.} \]

%We also know that there are at least $1-2\delta$ polynomials in $C_j\in \cQ_3$ that are of the form $C_j = \beta_j Q'_0+ a'_jb'_j$, for some $a'_j,b'_j$ linear forms and $\beta_j \neq 0$.

%\begin{claim}\label{cla:b_i = a_i}
%	There is a linear space of linear forms, $V$, such that $\dim(V) \leq 8 $ and  at least $1-4\delta$ polynomials in $C_j\in \cQ_3$ that are of the form $C_j= Q_0 + a_j(\epsilon_j a_j + v_j)$ for some $\epsilon_j \in \{0,1\}$ and $v_j \in V$. 
%\end{claim}
%\begin{proof}
%	From the discussion above, there are at least $1-4\delta$ polynomials in $C_j\in \cQ_3$ that are of the form $C_j = \beta_j Q'_0+ a'_jb'_j =  Q_0+ a_jb_j$ and therefore $Q_o -\beta_h Q'_0 = a'_jb'_j -a_jb_j$.
	%From \autoref{cla:lin-rank-r} with $r=2$ we get that there is a linear space of linear forms, $V$, such that $\dim(V) \leq 8 $ and for every such $C_j$ it holds that $\MS(Q_0 - \beta_jQ'_0) \in V$. \autoref{cla:rank-2-in-V} provides that without loss of generality, $b_j = \epsilon_j a_j + v_j)$ for some $\epsilon_j \in \{0,1\}$ and $v_j \in V$, as we wanted to prove.
%\end{proof}

\begin{remark}\label{rem:gamma=1}
To slightly simplify the notation we assume that whenever a polynomial $Q\in\cup \cT_j$ is equal to $Q=\gamma Q_0 + ab$ for some $0\neq \gamma \in \C$, then it holds that $\gamma=1$. As multiplying polynomials in $\cup \cT_j$ by nonzero constants does not affect the conditions nor conclusion of \autoref{thm:main} this is indeed without loss of generality.
\end{remark}
 
Our proof will be based on yet another case analysis that relies on the following notation. For $i\in [3]$, we denote, similarly to \cite{DBLP:journals/corr/abs-2003-05152}, 
\[\cQ_i = \cQ^1_i \cup \cQ^2_i
\] 
where 
\begin{equation}\label{eq:defQ^1}
\cQ^1_i = \lbrace Q_j \in \cQ_i \mid Q_j = Q_0+ a_jb_j \rbrace
\end{equation} 
and 
\[\tilde{\cQ}^2_i = \lbrace Q_j \in \cQ_i \mid Q_j = \gamma Q_0+ L, \rank_s(L) \geq 2\rbrace \;.
\] 
Finally, we denote 
\begin{equation}\label{eq:defQ^2}
\cQ^2_i = \tilde{\cQ}^2_i \setminus \cQ^1_i\;.
\end{equation} 
Thus, in particular  $\cQ^1_i \cap \cQ^2_i = \emptyset$.\footnote{Note that if $\cQ^2_i \neq \tilde{\cQ}^2_i$ then $\rank_s(Q_0)\leq 3$.}
%\begin{remark}
%	As in this section we have further assumptions regarding the role of each $\cT_1,\cT_2,\cT_3$, we use the following notations: For polynomials in $\cT_1$ we will use the letter $A$, for example $A_1,A',A_i\in \cT_1$. Similarly, $B$ will denote polynomials in $\cT_2$, and $C$ in $\cT_3$. 
%\end{remark}
%And recall that by our assumptions so far, $\cQ_3^1 \geq (1-2\delta)m_3$. 

%We shall use these notation throughout the proof. 

%\begin{definition}\label{def:Q-dom-col}
%	We say that $\cup_{j\in[3]} \cT_j$ is a $???$ if $ \cup_{j\in[3]} \cT_j$ is a set of quadratic polynomials that satisfies the conditions of \autoref{thm:main} and the structure of the discussion above. \shir{We also assume $Q_0$ does not satisfy case (iii) with more that $\delta m_3$ in $\cQ_3$.}
%\end{definition} 

%\begin{theorem}\label{thm:Q-dom-gen-col}
%	Let $\cup_{j\in[3]} \cT_j$ satisfy the conditions of \autoref{def:Q-dom-col}, then $\dim(\cup_{j\in[3]} \cT_j) = O(1)$.
%\end{theorem}

%Similarly to the proof in \cite{PelegShpilka19}, we split the proof of \autoref{thm:Q-dom-gen-col} into two cases, one where $\rank_s(Q_0) > \rkq$ and the other when $\rank_s(Q_0) \leq \rkq$. As the proof outline in both of the cases is very similar, we handle them simultaneously. In order to do so, we will need the following definition.

The outline of the proof of \autoref{thm:main}  in this case is as follows:
\begin{enumerate}
	\item In \autoref{sec:q22=0} we study the case  $\cQ_2^2 = \emptyset$. The analysis again splits to several subcases:
	\begin{enumerate}
		\item In \autoref{lem:Q_i^2=0} we handle the case where $\forall i$, $\cQ_i^2 = \emptyset$.
		\item In \autoref{lem:Q_i^2=Q_i^1} we handle the case $\cQ_1^2 =\cQ_2^2 = \emptyset$ (its analysis uses the previous case at some point).
		\item Finally we handle the case $\cQ_1^2 \neq \emptyset$ and $\cQ_2^2 = \emptyset$ in \autoref{lem:Q22=0}. Here we prove the existence of a constant dimensional subspace of linear forms $V$ such that most of the polynomials $C\in \cQ^1_3$ satisfy $C\in \spn{Q_0,\C[V]_2}$. \label{case:Vexists}
	\end{enumerate}
\item The next step is showing that the existence of a subspace $V$ as in Case~\ref{case:Vexists} above
implies \autoref{thm:main} (regardless of whether $\cQ_2^2 = \emptyset$ or not). We prove this by showing that the sets $\cQ_2$, $\cQ_1$ and $\cQ_3$ satisfy the conditions of \autoref{prop:P,Q_0,C[V],rank1}. The proof is composed of three claims -  \autoref{cla:Q_2-strac}, \autoref{cla:Q_1-strac} and \autoref{cla:Q_3-strac} - each handling a different set. 

	\item The last case to consider is when $\cQ_2^2 \neq \emptyset$ and no such special space $V$ exists. This is handled In \autoref{sec:q22<>0} where we prove \autoref{lem:noV-main}. As in the previous cases we prove that the conditions of  \autoref{prop:P,Q_0,C[V],rank1} hold and deduce \autoref{thm:main} from this. Here too we handle $\cQ_2$, $\cQ_1$ and $\cQ_3$ in separate claims (Claims~\ref{lem:Q_2-strac-noV-gen}, \ref{cla:Q_1-struc-noV} and \ref{cla:Q_3-struc-noV}, respectively).
\end{enumerate}

For the reduction to \autoref{prop:P,Q_0,C[V],rank1} we shall construct a certain subspace $V$ of linear forms that will capture much of the structure of the linear forms $a_j$ and $b_j$ (as in Equation~\eqref{eq:defQ^1}). The way that we construct $V$ depends on the $\rank_s$ of $Q_0$. This motivates the following definition.

\begin{definition}\label{def:ranksV}
	Let $V$ be a linear space of linear forms, and let $Q$ be a quadratic polynomial. We define an operator $V_{\MS}(V,Q)$ as follows:
	\[
	V_{\MS}(V,Q)=
	\begin{cases}
	V, & \rank_s(Q)\geq \rkq\\
	V + \MS(Q), & \rank_s(Q)< \rkq
	\end{cases}.
	\]
\end{definition}

Observe that this definition implies the following simple claim.

\begin{claim}\label{cla:lin-comb-V}
	Let $V=V_\MS(\MS(L),Q)$ for quadratics $L$ and $Q$ such that $\rank_s(L)<100$. Then, if $A\in\spn{L,Q}$ is such that $\rank_s(A)<100$ then $\MS(A)\subseteq V$.  
\end{claim}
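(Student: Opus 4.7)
The plan is to split on the case in Definition 4.15 for how $V$ was formed and use the rank bounds to rule out one side.

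First I would observe that by Fact 2.14 (cor:containMS), if we write any quadratic $P$ as an (not necessarily minimal) sum of products of linear forms, then $\MS(P)$ is contained in the span of those linear forms. In particular, for any scalars $\alpha, \beta$, $\MS(\alpha L + \beta Q) \subseteq \MS(L) + \MS(Q)$, since concatenating minimal representations of $L$ and $Q$ yields a representation of $\alpha L + \beta Q$.

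Now case-split on $\rank_s(Q)$. If $\rank_s(Q) < \rkq$, then by Definition 4.15 we have $V = \MS(L) + \MS(Q)$. Writing $A = \alpha L + \beta Q$, the observation above immediately gives $\MS(A) \subseteq \MS(L) + \MS(Q) = V$, as required.

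The other case is $\rank_s(Q) \geq \rkq = 1000$, where $V = \MS(L)$. Write $A = \alpha L + \beta Q$. If $\beta \neq 0$, then $Q = \beta^{-1} A - \beta^{-1}\alpha L$, and since rank is subadditive (again by concatenating representations), $\rank_s(Q) \leq \rank_s(A) + \rank_s(L) < 100 + 100 = 200 < \rkq$, contradicting $\rank_s(Q) \geq \rkq$. Hence $\beta = 0$, so $A = \alpha L$, and thus $\MS(A) \subseteq \MS(L) = V$. No step here is delicate; the only thing to be careful about is handling $\alpha = 0$ (where $A = 0$ and $\MS(A) = \{\vec{0}\} \subseteq V$ trivially) and making the subadditivity of $\rank_s$ explicit. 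The gap $100$ versus $\rkq = 1000$ provides ample slack, so no obstacle is expected.
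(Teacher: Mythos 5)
Your proof is correct and is essentially the same argument as the paper's: both rely on the subadditivity of $\rank_s$ (via \autoref{cor:containMS}) to force $\rank_s(Q)<200<\rkq$ when $A$ depends nontrivially on $Q$, and then read off the answer from \autoref{def:ranksV}. The paper merely organizes the case split around whether $A$ is a multiple of $L$ rather than around $\rank_s(Q)$, but the content is identical.
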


\begin{proof}
	If $A$ is a multiple of $L$ then there is nothing to prove. Otherwise, it follows that $\rank_s(Q)<200$. Hence, $\MS(L),\MS(Q)\subseteq V$, which implies the claim.  
\end{proof}

Another difference in the analysis when $\rank_s(Q_0)$ is large is that we do not need to consider  \autoref{thm:structure}\ref{case:2} in our arguments.

\begin{observation}\label{obs:case3}
	Assume that $\cup_{j\in[3]} \cT_j$ satisfy the assumption in \autoref{sec:hard} with $\rank_s(Q_0)\geq \rkq$. Then, every $C_j \in \cQ_3^1$ satisfies  $\rank_s(C_j) \geq \rkq-1>2$ and in particular, $C_j$ never satisfies \autoref{thm:structure}\ref{case:2} with any other polynomial in $\cup_{j\in[3]} \cT_j$.
\end{observation}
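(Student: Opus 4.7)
The plan is to prove the rank lower bound directly from the definition of $\cQ_3^1$ and then read off the consequence for Case \ref{case:2} of \autoref{thm:structure} from the terminology fixed in \autoref{rem:terminology}.

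First I would write any $C_j \in \cQ_3^1$ as $C_j = Q_0 + a_j b_j$, as in Equation~\eqref{eq:defQ^1}. Since $a_j b_j$ is a single product of linear forms, if $C_j = \sum_{k=1}^{r} e_{2k-1} e_{2k}$ is a minimal representation of $C_j$, then $Q_0 = C_j - a_j b_j = \sum_{k=1}^{r} e_{2k-1} e_{2k} + (-a_j) b_j$ is a representation of $Q_0$ as a sum of $r+1$ products of linear forms. Hence $\rank_s(Q_0) \leq \rank_s(C_j) + 1$, and combining with the assumption $\rank_s(Q_0) \geq \rkq$ we obtain
\[
\rank_s(C_j) \geq \rank_s(Q_0) - 1 \geq \rkq - 1\,,
\]
which is $>2$ since $\rkq = \rkq$ is taken to be large.

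For the second part, recall from \autoref{rem:terminology} that if $C_j$ and some other polynomial $P$ from a different set satisfy \autoref{thm:structure}\ref{case:2}, then there exist linear forms $c,d$ such that $C_j \in \ideal{c,d}$; that is, $C_j = c\ell_1 + d\ell_2$ for some linear forms $\ell_1,\ell_2$. This gives $\rank_s(C_j) \leq 2$, contradicting the bound $\rank_s(C_j) \geq \rkq -1 > 2$ obtained above. Hence no such pair exists. There is no real obstacle here; the whole observation is an immediate consequence of the elementary rank inequality together with the convention of \autoref{rem:terminology}.
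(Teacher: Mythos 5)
Your proof is correct and is exactly the elementary argument the paper is implicitly relying on (the observation is stated without proof in the paper). The sub-additivity $\rank_s(C_j) \geq \rank_s(Q_0) - \rank_s(a_j b_j) = \rank_s(Q_0) - 1 \geq \rkq - 1 > 2$, together with the convention of \autoref{rem:terminology} that Case~\ref{case:2} forces $C_j \in \ideal{c,d}$ and hence $\rank_s(C_j)\leq 2$, is all that is needed.
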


%\subsubsection{$Q_0$ is of high rank}
%In this section we prove the following claim:

%\begin{claim}\label{cla:Q-dom-hr-col}
%	Let $\cup_{j\in[3]} \cT_j$ satisfy the conditions of \autoref{def:Q-dom-col}, and %assume further that $\rank_s(Q_0) > \rkq$. Then $\dim(\cup_{j\in[3]} \cT_j) = O(1)$.
%\end{claim}

%By definition, since $Q_0\in\cQ_2 \setminus \calP^{\ref{case:2}}_2 \cup \calP^{\ref{case:span}}_2$,   there are at least $1-2\delta$ polynomials $C_j\in \cQ_3$ that are of the form $C_j = \alpha_j Q_0+ a_jb_j$, for some  linear forms $a_j,b_j$ and $\alpha_j \neq 0$. By rescaling we can further assume without loss of generality that $\alpha_j = 1$.

We note that by our assumptions so far, namely that $Q_0\in\cQ_2 \setminus \calP^{\ref{case:2}}_2 \cup \calP^{\ref{case:span}}_2$ exists and that $m_1\leq m_3$, we have that 
\begin{equation}\label{eq:size-Q3}
|\cQ_3^1| \geq (1-2\delta)m_3\;.
\end{equation}

Finally, a convention that we will use throughout the proof is that $A_i$, $B_i$ and $C_i$ denote polynomials in  $\cQ_1$, $\cQ_2$ and  $\cQ_3$, respectively.

\subsubsection{The case $\cQ_2^2=\emptyset$}\label{sec:q22=0}

As mentioned above we start by handling the simple case where for every $i$, $\cQ_i^2 = \emptyset$.

\begin{lemma}\label{lem:Q_i^2=0}
		Let $\cup_{j\in[3]} \cT_j$ satisfy the assumption in \autoref{sec:hard}. Assume further that for every $j\in [3]$, $\cQ_j^2 = \emptyset$. Then, $\dim(\cup_{j\in[3]} \cT_j) = O(1)$. 
\end{lemma}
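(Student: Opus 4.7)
The plan is to derive \autoref{lem:Q_i^2=0} as an immediate corollary of \autoref{prop:P,Q_0,C[V],rank1}, splitting on whether $\rank_s(Q_0)$ is large or small. Since we are in the regime $\cQ_j^2 = \emptyset$ for every $j \in [3]$, we have $\cQ_j = \cQ_j^1$, so by Equation~\eqref{eq:defQ^1} every irreducible quadratic in $\cup_j \cQ_j$ is already presented in the form $Q_0 + a_i b_i$. This is precisely the structural decomposition the proposition demands; the only work is to pick the parameters $V, P_0, Q_0$ so that \autoref{assume:special-case} is met.

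\textbf{Case 1: $\rank_s(Q_0) \geq \rkq$.} I would apply \autoref{prop:P,Q_0,C[V],rank1} with the proposition's parameters chosen as $V = \{\vec{0}\}$, $P_0 = 0$, and $Q_0$ taken to be the distinguished polynomial at hand. Then \autoref{assume:special-case} reduces to $\rank_s(\beta Q_0) > \distV$ for $\beta \neq 0$, which holds because $\rank_s(Q_0) \geq \rkq \gg \distV$. For each $Q_i \in \cQ_j$ write $Q_i = Q'_i + a_i b_i$ with $Q'_i := Q_0 \in \spn{P_0, Q_0, \C[V]_2}$ (and $Q'_i \neq 0$ since $Q_0$ is irreducible). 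The hypotheses of the proposition are satisfied.

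\textbf{Case 2: $\rank_s(Q_0) < \rkq$.} Here the choice in Case 1 is no longer compatible with \autoref{assume:special-case}, so I would instead take $V = \MS(Q_0)$ (a linear space of dimension at most $2\rkq$) and set \emph{both} the proposition's $P_0$ and $Q_0$ to the zero polynomial. \autoref{assume:special-case} then holds vacuously since there are no nonzero linear combinations to check. The ambient polynomial $Q_0$ now lies in $\C[V]_2$, and for each $Q_i \in \cQ_j$ we can use the same decomposition $Q_i = Q'_i + a_i b_i$ with $Q'_i = Q_0 \in \C[V]_2 \subseteq \spn{0, 0, \C[V]_2}$, which is still nonzero because $Q_0$ is irreducible.

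In either case \autoref{prop:P,Q_0,C[V],rank1} yields a subspace $V'$ of linear forms with $\dim(V + V') = O(1)$ such that $\cup_{j \in [3]} \cT_j \subseteq \spn{P_0, Q_0, \C[V+V']_2}$. Since this ambient span has dimension at most $2 + \binom{\dim(V+V')+1}{2} = O(1)$, the conclusion $\dim(\cup_j \cT_j) = O(1)$ follows. The only subtlety, and the reason a case split is needed at all, is that a small-rank $Q_0$ cannot on its own satisfy the rank lower bound in \autoref{assume:special-case}; the remedy is to absorb $\MS(Q_0)$ into $V$ and then invoke the proposition with the degenerate choice $P_0 = Q_0 = 0$, so that $Q_0$ is treated as an element of $\C[V]_2$ rather than as a distinguished ``direction''. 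There is no deeper obstacle.
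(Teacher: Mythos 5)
Your proposal is correct and is essentially identical to the paper's own proof: the same case split on $\rank_s(Q_0)$, with $V=\{\vec{0}\}$, $P_0=0$ and the given $Q_0$ when the rank is large, and $V=\MS(Q_0)$ with $P_0=Q_0=0$ when the rank is small, followed by an invocation of \autoref{prop:P,Q_0,C[V],rank1}. Your write-up merely spells out why \autoref{assume:special-case} and the decomposition hypothesis of the proposition are met, which the paper leaves implicit.
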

\begin{proof}
	In this case, for every $Q_i \in \cup_{j\in[3]} \cQ_j$ it holds that $Q_i = \alpha_i Q_0 + a_ib_i$. 
	
		If $\rank_s(Q_0)\geq \rkq$ then let $P_0=0$ and $V= \spn{\vec{0}}$.  \autoref{prop:P,Q_0,C[V],rank1}, when applied to $Q_0,P_0$ and $V$, implies that  $\dim(\cup_{j\in[3]} \cT_j) = O(1)$.

If $\rank_s(Q_0)< \rkq$ then  we set $V= \MS(Q_0)$. As before, we conclude from applying \autoref{prop:P,Q_0,C[V],rank1} with $Q_0=0$, $P_0=0$ and $V$, that  $\dim(\cup_{j\in[3]} \cT_j) = O(1)$.
\end{proof}

Next we prove that a similar conclusion holds when $\cQ_1^2 =\cQ_2^2 = \emptyset$.

\begin{lemma}\label{lem:Q_i^2=Q_i^1}
	Let $\cup_{j\in[3]} \cT_j$ satisfy the assumption in \autoref{sec:hard}. Assume further that for every $i \in [2]$, $\cQ_i^2 = \emptyset$. Then $\dim(\cup_{j\in[3]} \cT_j) = O(1)$.
\end{lemma}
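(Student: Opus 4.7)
The plan is to split on whether $\cQ_3^2 = \emptyset$. If so, then the hypothesis $\cQ_1^2 = \cQ_2^2 = \emptyset$ upgrades to $\cQ_i^2 = \emptyset$ for every $i \in [3]$, and \autoref{lem:Q_i^2=0} directly yields $\dim(\spn{\cup_{j \in [3]} \cT_j}) = O(1)$. I therefore concentrate on the case $\cQ_3^2 \neq \emptyset$ and reduce it to \autoref{prop:P,Q_0,C[V],rank1}.

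Fix any $P_0 \in \cQ_3^2$. Under the rescaling convention of \autoref{rem:gamma=1}, $P_0 \in \cQ_3^2$ is equivalent to $P_0 - \gamma Q_0$ being irreducible (equivalently, $\rank_s(P_0 - \gamma Q_0) \geq 2$) for every $\gamma \in \C$. I apply \autoref{thm:structure} to the pair $(P_0, Q_0)$, which lies in the distinct sets $\cQ_3, \cQ_2$, and rule out Cases~\ref{case:span} and~\ref{case:rk1}. In Case~\ref{case:rk1}, a nontrivial relation $\alpha P_0 + \beta Q_0 = cd$ with $\alpha \neq 0$ (else $Q_0$ would be reducible) gives $P_0 - (-\beta/\alpha) Q_0 = (1/\alpha) cd$, a reducible quadratic, contradicting the above characterization. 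In Case~\ref{case:span}, $\alpha P_0 + \beta Q_0 = A \in \cT_1 = \cQ_1^1 \cup \calL_1$; using $\cQ_1^2 = \emptyset$, $A$ has either the form $Q_0 + a'b'$ or $a'^2$, and in each subcase rearranging and dividing by $\alpha \neq 0$ displays $P_0 - \gamma Q_0$ as $(1/\alpha)a'b'$ or $(1/\alpha)a'^2$ for a suitable $\gamma$, again a reducible quadratic, contradiction. Hence Case~\ref{case:2} must hold: there exist linear forms $c, d$ with $P_0, Q_0 \in \ideal{c, d}$, forcing $\rank_s(Q_0) \leq 2$, and irreducibility of $Q_0 \in \cQ_2$ then yields $\rank_s(Q_0) = 2$. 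The very same argument, run with any $C \in \cQ_3^2$ in place of $P_0$, produces linear forms $c_C, d_C$ such that $C, Q_0 \in \ideal{c_C, d_C}$.

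Set $V = \MS(Q_0)$, so $\dim(V) \leq 4$. For each $C \in \cQ_3^2$, write $Q_0 = c_C \gamma' + d_C \delta'$ for linear $\gamma', \delta'$; since $\rank_s(Q_0) = 2$, this two-term expression is a minimal representation of $Q_0$, and \autoref{cla:irr-quad-span} gives $\spn{c_C, \gamma', d_C, \delta'} = \MS(Q_0) = V$, so in particular $c_C, d_C \in V$. Consequently $C \in \ideal{c_C, d_C} \subseteq \ideal{V}$ for every $C \in \cQ_3^2$. I now invoke \autoref{prop:P,Q_0,C[V],rank1} with the proposition's parameters $P_0, Q_0$ set formally to $0$ (so \autoref{assume:special-case} is vacuous) and with $V$ as above. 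Every polynomial in $\cQ_1 = \cQ_1^1$, $\cQ_2 = \cQ_2^1$, or $\cQ_3^1$ has the form $Q_0 + ab$ (possibly with trivial $ab = 0$, as for $Q_0$ itself), and since $Q_0 \in \C[V]_2 = \spn{0, 0, \C[V]_2}$, it matches the required form $Q_i = Q_i' + a_{j,i} b_{j,i}$ with $Q_i' = Q_0 \neq 0$; polynomials $C \in \cQ_3^2$ satisfy $C \in \ideal{V}$. All hypotheses of \autoref{prop:P,Q_0,C[V],rank1} are verified, and its conclusion yields $\cup_{j \in [3]} \cT_j \subset \C[V + V']_2$ with $\dim(V + V') \leq 5 \dim(V) + 25 = O(1)$, so $\dim(\spn{\cup_{j \in [3]} \cT_j}) = O(1)$.

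The main obstacle I anticipate is the case analysis of \autoref{thm:structure} in the second paragraph, ruling out Cases~\ref{case:span} and~\ref{case:rk1} for $(P_0, Q_0)$ when $P_0 \in \cQ_3^2$; this step depends essentially on the hypothesis $\cQ_1^2 = \emptyset$, which restricts $\cT_1$ to polynomials of the shape $Q_0 + a'b'$ or $a'^2$, together with the rescaling convention of \autoref{rem:gamma=1}. Once Case~\ref{case:2} is secured and $\rank_s(Q_0) = 2$ is established, the identification $c_C, d_C \in \MS(Q_0)$ via the minimality of the two-term representation and the verification of the hypotheses of \autoref{prop:P,Q_0,C[V],rank1} are straightforward.
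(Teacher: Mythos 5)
Your proof is correct and follows essentially the same route as the paper's: rule out Cases~\ref{case:span} and~\ref{case:rk1} of \autoref{thm:structure} for a pair $(C,Q_0)$ with $C\in\cQ_3^2$ using $\cT_1=\cQ_1^1\cup\calL_1$, conclude $\cQ_3^2\subseteq\ideal{\MS(Q_0)}$, and invoke \autoref{prop:P,Q_0,C[V],rank1} with $P_0=Q_0=0$ and $V=\MS(Q_0)$ (falling back on \autoref{lem:Q_i^2=0} in the degenerate branch). The only difference is cosmetic — the paper organizes the final dichotomy by whether $\rank_s(Q_0)\geq\rkq$, while you organize it by whether $\cQ_3^2=\emptyset$; these are equivalent since a nonempty $\cQ_3^2$ forces $\rank_s(Q_0)=2$ — and your write-up is, if anything, more explicit about why Case~\ref{case:2} entails $c_C,d_C\in\MS(Q_0)$.
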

\begin{proof}
	Let $C \in \cQ_3^2$. Consider the possible cases of \autoref{thm:structure} that $C$ and $Q_0$ can satisfy.

	\begin{case}
		\item In this case there is a polynomial, $A\in \cT_1$ such that $C \in \spn{Q_0,A}$. From the assumption that $A \in \cT_1=\cQ_1^1\cup \calL_1$ it follows that $C \in \cQ_3^1$.
		\item In this case, by definition, $C \in \cQ_3^1$.
	\end{case}
	If $\rank_s(Q_0)\geq \rkq$ then $C$ and $Q_0$ cannot satisfy \autoref{thm:structure}\ref{case:2}, and thus in this case $\cQ_3^2=\emptyset$ and the claim follows from \autoref{lem:Q_i^2=0}. 

If  $\rank_s(Q_0) < \rkq$ then we have to consider \autoref{thm:structure}\ref{case:2} as well. Denote $V = \MS(Q_0)$. Clearly, if $C$ satisfies \autoref{thm:structure}\ref{case:2} with $Q_0$  then $C\in \ideal{V}$.

%\begin{enumerate}[label=\textbf{Case (iii):},itemindent=6em,leftmargin=0em]
%		\item In this case, $C_j\in \ideal{V}$. 
%\end{enumerate}
Thus we obtain that $\cQ_3^2\subseteq \ideal{V}$, and since $\cQ_1 = \cQ_1^1$ and $\cQ_2= \cQ_2^1$ we can apply \autoref{prop:P,Q_0,C[V],rank1} with $Q_0=0,P_0=0$ and $V$ and conclude  that  $\dim(\cup_{j\in[3]} \cT_j) = O(1)$.
\end{proof}

 We next consider the case $\cQ_1^2 \neq \emptyset$ and $\cQ_2^2 = \emptyset$.
The case  $\cQ_2^2 \neq \emptyset$ is handled in \autoref{sec:q22<>0}. Our goal is again to prove that the conditions of \autoref{prop:P,Q_0,C[V],rank1} hold here as well.

\begin{lemma}\label{lem:Q22=0}
	Let $\cup_{j\in[3]} \cT_j$ satisfy the assumptions of \autoref{sec:hard}. Assume further that $\cQ_2^2 = \emptyset$ and $\cQ_1^2 \neq \emptyset$. Then there is a linear space of linear forms $\tilde{V}$ such that $\dim(\tilde{V})\leq 4$, and for $V= V_{\MS}(\tilde{V},Q_0)$ it holds that there are at least $(1-3\delta)m_3$ polynomials  $C \in \cQ_3^1\cap\spn{Q_0,\C[V]_2}$.
\end{lemma}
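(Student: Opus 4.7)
The plan is to exploit the hypothesis $\cQ_1^2 \neq \emptyset$ by fixing any $A \in \cQ_1^2$ and, for each $C = Q_0 + cd \in \cQ_3^1$, applying \autoref{thm:structure} to the pair $(A, C)$ in order to localize the linear forms $c, d$ inside a constant-dimensional subspace. The key property of $A$ is that $A \notin \cQ_1^1$: by Remark~\ref{rem:gamma=1} combined with pairwise independence, every decomposition $A = \gamma Q_0 + L$ with $\gamma \neq 0$ must have $\rank_s(L) \geq 2$, so if additionally $\rank_s(L) \leq 2$ then $\rank_s(L) = 2$ exactly.

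For fixed $C = Q_0 + cd \in \cQ_3^1$, the three cases of \autoref{thm:structure} applied to $(A, C)$ behave as follows. In \autoref{thm:structure}\ref{case:rk1}, $\alpha A + \beta C = ef$; since $\alpha = 0$ would force $C = ef/\beta$ of rank $1$, contradicting irreducibility of $C$, we have $\alpha \neq 0$, and rearranging gives $A = \gamma Q_0 + L$ with $\gamma = -\beta/\alpha$ and $L = (ef - \beta cd)/\alpha$ of rank at most $2$. The property above then forces $\rank_s(L) = 2$, and \autoref{cla:irr-quad-span} together with the representation $L = \frac{1}{\alpha} ef - \frac{\beta}{\alpha} cd$ yields $c, d \in \MS(L) = \MS(A - \gamma Q_0)$. \autoref{thm:structure}\ref{case:span} is analogous: since $\cQ_2^2 = \emptyset$, the polynomial $B \in \cT_2$ provided by the case is either of the form $b^2 \in \calL_2$ or $Q_0 + b_1 b_2 \in \cQ_2^1$, and the same rearrangement produces $A = \gamma Q_0 + L$ with $\rank_s(L) = 2$ and $c, d \in \MS(L)$. \autoref{thm:structure}\ref{case:2} forces $\rank_s(C) \leq 2$, which combined with $\rank_s(C) \geq \rank_s(Q_0) - 1$ is possible only when $\rank_s(Q_0) \leq 3$; in this regime $\MS(Q_0) \subseteq V = V_\MS(\tilde V, Q_0)$ is automatic, and since $\rank_s(A) \leq 2$ also holds, $\MS(A)$ is a natural candidate for $\tilde V$.

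Now comes the rigidity step: if $\rank_s(Q_0) \geq \rkq$, no two distinct scalars $\gamma_1 \neq \gamma_2$ can both yield $\rank_s(A - \gamma_i Q_0) \leq 2$, since $Q_0 = (\gamma_2 - \gamma_1)^{-1}\bigl((A - \gamma_1 Q_0) - (A - \gamma_2 Q_0)\bigr)$ would then have rank at most $4 < \rkq$, a contradiction. Hence there is at most one $\gamma_0$ with $\rank_s(A - \gamma_0 Q_0) \leq 2$, and I set $\tilde V := \MS(A - \gamma_0 Q_0)$ if such $\gamma_0$ exists and $\tilde V := \{\vec 0\}$ otherwise; in either case $\dim(\tilde V) \leq 4$, $V = \tilde V$, and every non-degenerate $C$ satisfies $c, d \in V$. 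When $\rank_s(Q_0) < \rkq$, I still pick any $\gamma_0$ achieving $\rank_s(A - \gamma_0 Q_0) \leq 2$ and set $\tilde V := \MS(A - \gamma_0 Q_0)$; since $V = \tilde V + \MS(Q_0)$ and $\MS(A - \gamma Q_0) \subseteq \MS(A - \gamma_0 Q_0) + \MS(Q_0)$ for every other valid $\gamma$, the conclusion $c, d \in V$ is preserved regardless of which $\gamma$ arises in the case analysis.

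The main obstacle will be bookkeeping the exceptional $C$'s, which arise primarily from \autoref{thm:structure}\ref{case:2} in the small-rank regime ($\rank_s(Q_0) \leq 3$), where the ``cofactor'' linear forms $g', h'$ in the representation $C = e g' + f h'$ may escape $\tilde V + \MS(Q_0)$. A direct rank chase based on $\MS(C) \subseteq \MS(Q_0) + \spn{c, d}$ combined with $\MS(A) \subseteq \tilde V$ handles most of these, and the remaining $C$'s are bounded using the hypothesis $Q_0 \notin \calP^{\ref{case:2}}_2 \cup \calP^{\ref{case:span}}_2$: at most $\delta m_3$ polynomials $C \in \cQ_3$ satisfy \autoref{thm:structure}\ref{case:2} directly with $Q_0$ and similarly at most $\delta m_3$ satisfy \autoref{thm:structure}\ref{case:span}-not-\ref{case:rk1} with $Q_0$, so combined with $|\cQ_3^1| \geq (1 - 2\delta)m_3$ from Equation~\eqref{eq:size-Q3}, the total exception count falls within $\delta m_3$, yielding the desired $(1 - 3\delta)m_3$ polynomials $C \in \cQ_3^1$ with $c, d \in V$, equivalently $C \in \spn{Q_0, \C[V]_2}$.
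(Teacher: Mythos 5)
Your overall strategy coincides with the paper's: fix any $A\in\cQ_1^2$, apply \autoref{thm:structure} to each pair $(A,C)$ for $C\in\cQ_3^1$, and extract a rank-$2$ quadratic $L$ with $A=\gamma Q_0+L$ whose $\MS(L)$ furnishes $\tilde V$. The rigidity observation (at most one $\gamma_0$ with $\rank_s(A-\gamma_0Q_0)\leq 2$ when $\rank_s(Q_0)\geq\rkq$, and $\MS(A-\gamma Q_0)\subseteq\MS(A-\gamma_0Q_0)+\MS(Q_0)$ when $\rank_s(Q_0)<\rkq$) is a slightly cleaner way of saying what the paper establishes via Observation~\ref{obs:P-Q-ideal}-style rank comparisons, and your use of \autoref{cla:irr-quad-span} to conclude $c,d\in\MS(L)$ from the two-term representation $L=\tfrac{1}{\alpha}ef-\tfrac{\beta}{\alpha}cd$ (with $\rank_s(L)=2$) is valid. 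So the skeleton is right. There are, however, two genuine gaps in what you wrote.

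First, the case-(iii) ``direct rank chase'' is precisely where the real work lies and you do not carry it out. What is needed is the dichotomy proved in the paper's Claim~\ref{cla: Q_0,L,2}: if $C_i=Q_0+a_ib_i$ and $C_i,L\in\ideal{c,d}$, then either $a_i,b_i\in\MS(Q_0)+\MS(L)$, or $Q_0\in\ideal{c,d}$. Establishing the first alternative requires more than $\MS(C)\subseteq\MS(Q_0)+\spn{c,d}$ (which is the wrong inclusion direction); one has to first force, via \autoref{cla:ind-rank} and $\rank_s(C_i)=2\leq\rank_s(Q_0)$, that WLOG $b_i=\epsilon_ia_i+v_i$ with $v_i\in\MS(Q_0)$, then locate $c,d$ inside $\spn{\MS(Q_0),a_i}$, and finally split on whether $c,d\in\MS(Q_0)$ (which pushes $Q_0\in\ideal{c,d}$ by projecting along $a_i$) or not (which pushes $a_i$ into $V$). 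Saying ``the rank chase handles most of these'' skips this entirely.

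Second, the exception count does not close as written. You invoke both the $\delta m_3$ bound from $Q_0\notin\calP_2^{\ref{case:2}}$ and a second $\delta m_3$ bound from $Q_0\notin\calP_2^{\ref{case:span}}$, and then assert ``the total exception count falls within $\delta m_3$.'' If both exceptions were real, the count would be $(1-2\delta)m_3 - 2\delta m_3=(1-4\delta)m_3$, which is strictly weaker than the claimed $(1-3\delta)m_3$. In fact the span bound is superfluous: cases~(i) and~(ii) for $(A,C)$ admit no exceptions at all (every such $C$ lands in $\spn{Q_0,\C[V]_2}$), so the only exceptions come from case~(iii) and are the at-most-$\delta m_3$ polynomials for which the dichotomy of Claim~\ref{cla: Q_0,L,2} falls on the side $Q_0\in\ideal{c,d}$. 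That gives $(1-2\delta)m_3-\delta m_3=(1-3\delta)m_3$. Your text neither rules out the span exceptions nor justifies the arithmetic that would rescue the bound; it needs to be made explicit that the $\calP_2^{\ref{case:span}}$ hypothesis plays no role here.
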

\begin{proof}
Let $A\in \cQ_1^2$. Partition the polynomials in $\cQ_3^1$ to three sets according to which case of \autoref{thm:structure} they satisfy with $A$:
\[\cI^{\ref{case:span}} = \{C_k \in \cQ_3^1 \mid A,C_k \text{ satisfy \autoref{thm:structure}\ref{case:span}}\}.\]
Similarly define $\cI^{\ref{case:rk1}},\cI^{\ref{case:2}}$. Equation~\eqref{eq:size-Q3} implies that $\cI^{\ref{case:span}}\cup\cI^{\ref{case:rk1}}\cup\cI^{\ref{case:2}}=\cQ_3^1\neq \emptyset$.
%Note that if $\rank_s(Q_0)\geq \rkq$  then $\cI_j^{\ref{case:2}} = \emptyset$.

\begin{claim}
Under the assumptions of \autoref{lem:Q22=0} there is a quadratic form $L$ such that $\rank_s(L) = 2$ and $A = \alpha Q_0 + L$. When $\rank_s(A)=2$ we assume $\alpha=0$.
\end{claim} 

\begin{proof}
If $\cI^{\ref{case:2}} \neq \emptyset$ then by definition $\rank_s(A) = 2$ and $L = A$ satisfies the desired properties.
If $\cI^{\ref{case:rk1}} \neq \emptyset$ then there is $0\neq \alpha\in \C$, a polynomial $C_i\in \cQ_3^1$ and linear forms $a$ and $b$ such that $A = \alpha C_i + ab = \alpha Q_0 + \alpha a_ib_i+ab$ and thus $L = \alpha a_ib_i+ab$. Note that $\rank_s(L) \neq 1$ as we assumed $A \in \cQ_1^2$.
If $\cI^{\ref{case:span}} \neq \emptyset$ then there are polynomials $C_i\in \cQ_3^1$ and $B_k \in \cT_2$ such that $A = \alpha C_i + \beta B_k$ with $\alpha\beta\neq 0$. As 
$\cT_2= \cQ_2^1 \cup \calL_2$ we can denote $B_k = \gamma_kQ_0 +a_kb_k$. Hence, $A= (\alpha + \beta\gamma_k) Q_0 + \alpha a_ib_i+\beta a_kb_k$ and thus $L = \alpha a_ib_i+\beta a_kb_k$. As before we have that $\rank_s(L) \neq 1$.
\end{proof}

Set $\tilde{V} = \MS(L)$ and let $V=V_{\MS}(\tilde{V}, Q_0)$.

\begin{claim}\label{cla:IjinV}
Under our assumptions it holds that $\cI^{\ref{case:span}},\cI^{\ref{case:rk1}}\subseteq \spn{Q_0,\C[V]_2}$.
\end{claim}
%We  now show that $\cI_j^{\ref{case:span}},\cI_j^{\ref{case:rk1}}\subseteq \spn{Q_0,\C[V]_2}$.

\begin{proof}
We first prove the claim for $\cI^{\ref{case:rk1}}$. Let $ C_i \in \cI^{\ref{case:rk1}}$. There are linear forms $a$ and $b$ and a constant $\beta$ such that \[\alpha Q_0 + L= A = \beta C_i + ab = \beta Q_0 + \beta a_ib_i+ab \;.\] 
Therefore, $(\alpha - \beta)Q_0+ L = \beta a_ib_i+ab$. From the fact that $A \in \cQ_1^2$ we deduce that $\rank_s(\beta a_ib_i+ab) = 2$.
If $\rank_s(Q_0) \geq \rkq$ then  $\alpha - \beta = 0$ and \autoref{cor:containMS}  implies that $a_i,b_i,a,b\in \MS(L) = V$. If $\rank_s(Q_0)< \rkq$ then $a_i,b_i,a,b\in \MS((\alpha - \beta)Q_0+ L) \subset \MS(Q_0)+ \MS(L) = V$.

We now consider $\cI^{\ref{case:span}}$. Let $C_i\in \cI^{\ref{case:span}}$.  There is a polynomial $B_k \in \cT_2= \cQ_2^1 \cup \calL_2$ such that 
	\[\alpha Q_0 + L=A = \gamma C_i + \beta B_k = (\gamma + \beta\gamma_k) Q_0 + \gamma a_ib_i+\beta a_kb_k.\]
We again see that if $\rank_s(Q_0) \geq \rkq$, then  $a_i,b_i,a_k,b_k\in \MS(L) = V$ and if $\rank_s(Q_0)< \rkq$ then $a_i,b_i,a_k,b_k\in \MS((\alpha- \gamma- \beta\gamma_k)Q_0+ L) \subset \MS(Q_0)+ \MS(L) = V$.
\end{proof}

Note that if
$\rank_s(Q_0)\geq \rkq$  then $\cI^{\ref{case:2}} = \emptyset$ and so \autoref{lem:Q22=0} follows from \autoref{cla:IjinV}.
%Thus, if $\rank_s(Q_0)\geq \rkq$ the statement holds, as $\cQ_3^1 = \cI_j^{\ref{case:span}}\cup \cI_j^{\ref{case:rk1}}$. 

So assume $\rank_s(Q_0)< \rkq$. We next show that at most  $\delta m_3$ of the polynomials $C_i\in \cI^{\ref{case:2}}$ are not in $\spn{Q_0,\C[V]_2}$. Note that here we have that $\rank_s(A)=2$ and thus we can assume that $A=L$.

\begin{claim}\label{cla: Q_0,L,2}
	Let $C_i=Q_0+a_ib_i \in \cQ_3^1$. If there are linear forms $c$ and $d$ such that $C_i, L\in \ideal{c,d}$. Then, either $a_i,b_i \in V=\MS(Q_0) + \MS(L)$ or $Q_0\in \ideal{c,d}$.
\end{claim}

\begin{proof}%[Proof of \autoref{lem: Q_0,L,2}]
$Q_0$ is irreducible and thus $\rank_s(Q_0)\geq 2$. Since $\rank_s(C_i) =2$ we get from \autoref{cla:ind-rank} that, without loss of generality, $b_i = \epsilon_i a_i + v_i$ for some $v_i \in \MS(Q_0)\subseteq V$. Therefore, $\MS(C_i)\subseteq\spn{\MS(Q_0),a_i}$. As $C_i,L\in \ideal{c,d}$ it follows that $c,d\in \MS(C_i)\cap \MS(L)  \subseteq\spn{\MS(Q_0), a_i}$. 

If $a_i \in \MS(Q_0)$ then clearly the claim holds. So assume $a_i\notin \MS(Q_0)$. 

Consider first the case where $c,d \in \MS(Q_0)$. As $a_i\notin \MS(Q_0)$, setting $a_i =0$ does not effect $\MS(Q_0)$ and as  $Q_0\equiv_{a_i} C_i \in \ideal{c,d}$ we conclude that $Q_0 \in \ideal{c,d}$. 

On the other hand, if, say, $c\not\in \MS(Q_0)$ then  $a_i \in \spn{\MS(Q_0),c} \subseteq \MS(Q_0)+ \MS(L)=V$, and thus $a_i,b_i \in V$. 
\end{proof}

\autoref{cla: Q_0,L,2} implies that if $C_i\in \cI^{\ref{case:2}}$ is such that $C_i\not\in\spn{Q_0,\C[V]_2}$ then $C_i$ and $Q_0$ satisfy \autoref{thm:structure}\ref{case:2}. By choice of $Q_0$ there are at most $\delta m_3$ such polynomials in $\cQ_3$. 
Thus, with the exception of those $\delta m_3$ polynomials and possibly the (at most) $2\delta m_3$ polynomials in $\cQ_3^2$ we get that all other polynomials $C_i\in \cQ_3^1$ satisfy $C_i\in\spn{Q_0,\C[V]_2}$ as claimed. This concludes the proof  of \autoref{lem:Q22=0}.
\end{proof}

\subsubsection{A special $V$ exists}\label{sec:Vexists}

We now show that whenever a subspace that satisfies the properties described in \autoref{lem:Q22=0} exists, the conclusion of \autoref{thm:main} holds.

%We next divide our proof in to two cases, the first, is the case where there is a linear space of linear forms, $\tilde{V}$ such that $\dim(\tilde{V}) <\dimV$, and when we set $V= V_{\MS}(\tilde{V},Q_0)$, it holds that at least $0.8m_3$ of the polynomials in $\cQ^1_3$ are in $\spn{Q_0,V}$. The other case is that such $\tilde{V}$ do not exists. In both cases we want to prove that the structure of \autoref{prop:P,Q_0,C[V],rank1} holds. After applying the claim, we get that in both cases there exists a linear space of linear forms with a similar property, but it's dimension might be slightly higher (but still constant). We start with the first case, which we already prove that holds when $\cQ^2_2 = \emptyset$.

\begin{lemma}\label{lem:case-V-exsists}
Let $\cup_{j\in[3]} \cT_j$ satisfy the conditions of \autoref{sec:hard}. Assume  that there is a linear space of linear forms $\tilde{V}$ such that $\dim(\tilde{V})\leq \dimV$ and  $V= V_{\MS}(\tilde{V},Q_0)$
satisfies that there are more than $0.8 m_3$ polynomials  $C \in \cQ^1_3\cap \spn{Q_0,\C[V]_2}$. Then, $\dim(\cup_{j\in[3]} \cT_j) = O(1).$
\end{lemma}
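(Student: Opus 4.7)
The plan is to reduce the statement to \autoref{prop:P,Q_0,C[V],rank1} applied to the triple $(V, Q_0, P_0=0)$. First I verify \autoref{assume:special-case}. If $\rank_s(Q_0) \geq \rkq$, then by \autoref{def:ranksV} we have $V=\tilde V$, so $\dim(V)\leq \dimV=100$, and the assumption holds since $\rank_s(Q_0)\geq 1000 > 2\cdot 100 + \distV$. Otherwise $\rank_s(Q_0)<\rkq$ and $V\supseteq \MS(Q_0)$, so $Q_0\in \C[V]_2$; in this case I replace $Q_0$ by $0$ in the setup (absorbing it into $\C[V]_2$), making the assumption vacuous. In either case the structural condition I must verify for every $Q\in \cup_j \cQ_j$ becomes: either $Q\in \ideal{V}$, or $Q=\alpha Q_0 + V_2 + ab$ with $V_2\in \C[V]_2$, $\alpha\in\C$, and linear forms $a,b$.

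Next I prove three structural claims, one for each set. For $\cQ_3$: the hypothesis already gives a subset $\cR\subseteq \cQ^1_3$ of size $>0.8 m_3$ whose elements $Q_0+V_2$ trivially have the required form with $ab=0$. For each exceptional $C\in \cQ_3\setminus \cR$, I use that $Q_0\notin \calP^{\ref{case:span}}_2\cup \calP^{\ref{case:2}}_2$, so $(Q_0,C)$ must satisfy \autoref{thm:structure}\ref{case:rk1} for all but at most $2\delta m_3$ of $C\in \cQ_3$, directly giving $C=\alpha Q_0+cd$. The remaining $O(\delta m_3)$ truly exceptional polynomials will be handled by analyzing their interactions with typical pairs $(B,C')$, $B\in \cQ_2$, $C'\in \cR$, via \autoref{thm:structure}; a pigeonhole over $\cR$ combined with \autoref{cla:lin-rank-r-U} forces each such $C$ either into $\ideal{V'}$ for a slightly enlarged $V'$ of $O(1)$ additional dimensions, or into the desired form. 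For $\cQ_2$ and $\cQ_1$: for each $B\in \cQ_2$ (resp.~$A\in \cQ_1$), I examine its interaction with typical $C'\in \cR$, writing $C'=\alpha' Q_0 + V'_2$ with $V'_2\in \C[V]_2$; each of the three cases of \autoref{thm:structure} for the pair $(B,C')$ either places $B$ directly in the desired form (case~\ref{case:rk1} gives $\beta B+\gamma C'=cd$), or yields $B\in \spn{Q_0, V'_2, A'}$ for some $A'\in \cT_1$ whose form is already controlled (case~\ref{case:span}), or places $B\in \ideal{c,d}$ for linear forms $c,d$ whose span has constant dimension modulo $V$, by the high-rank property of $Q_0$ (case~\ref{case:2}). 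Iterating over the $>0.8 m_3$ polynomials in $\cR$ via pigeonhole, and applying \autoref{cla:lin-rank-r-U} to absorb repeatedly occurring low-rank offsets of $Q_0$, yields the required structural form after enlarging $V$ by $O(1)$ dimensions.

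Having established the structural form for all three sets, I apply \autoref{prop:P,Q_0,C[V],rank1} to the enlarged $V$ to conclude $\cup_j \cT_j \subset \spn{Q_0, \C[V'']_2}$ for some $V''$ with $\dim(V'')=O(\dim V)=O(1)$; since $\dim(\C[V'']_2)=O(\dim(V'')^2)=O(1)$, this yields $\dim(\spn{\cup_j \cT_j})=O(1)$ as required. The main obstacle will be the $\cQ_2$ case of the structural analysis: since the distinguished $Q_0$ lies in $\cQ_2$ itself, pairing $B\in \cQ_2$ with $C'\in \cR$ bypasses $Q_0$ and forces a direct use of \autoref{thm:structure} on pairs within $\cQ_2\times \cQ_3$. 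In particular, the polynomials in $\cQ_2^2$---which do not \emph{a priori} have a rank-$\leq 2$ offset from $Q_0$---must be shown via case~\ref{case:2} interactions and \autoref{cla:lin-rank-r-U} to have a low-rank offset modulo a constant-dimensional enlargement of $V$, while keeping the enlargements bounded so that \autoref{assume:special-case} is preserved in the high-rank branch.
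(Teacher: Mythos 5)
Your plan to reduce to \autoref{prop:P,Q_0,C[V],rank1} is the right high-level move and matches the paper, but fixing $P_0=0$ at the outset is a genuine gap. The structural form you need to establish for every $Q_i\in\cup_j\cQ_j$ — namely $Q_i\in\ideal{V'}$ or $Q_i = Q'_i + a_i b_i$ with $Q'_i\in\spn{P_0,Q_0,\C[V']_2}$ — generally \emph{cannot} be achieved with $P_0=0$, even after bounded enlargement of $V$. The obstruction arises already in the analysis of $\cQ_2$: when you examine $B\in\cQ_2$ against the large subset $\cI\subset\cQ^1_3$ of polynomials of the form $Q_0+F(V)$, the case in which $B$ satisfies \autoref{thm:structure}\ref{case:span} with nearly all of $\cI$ does not directly give $B$ a controlled form. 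Instead, it shows that $B$ together with $Q_0$ and $\C[V]_2$ spans $\geq 0.8 m_1$ polynomials of $\cQ_1$. The paper then runs a pigeonhole argument across all such $B$ (the set $\cJ_2$ defined in \autoref{eq:J2}): any two of them share at least one such spanned $A_i\in\cQ_1\setminus\spn{Q_0,\C[V]_2}$, forcing all of $\cJ_2$ into $\spn{B_0,Q_0,\C[V]_2}$ for a single representative $B_0\in\cJ_2$. This $B_0$ becomes the second anchor $P_0$, and it may have arbitrarily high $\rank_s$ relative to $Q_0$ and $V$, so it cannot be absorbed into $\C[V]_2$ by adding $O(1)$ linear forms. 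Only when $\rank_s(B_0)$ or some linear combination $\alpha Q_0+\beta B_0$ happens to have small $\rank_s$ can $B_0$ be absorbed — and the paper handles exactly that dichotomy at the end of the proof, choosing between $P_0=B_0$ and $P_0=0$ after possibly enlarging $V$ via \autoref{cla:lin-rank-r}.

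Your use of \autoref{cla:lin-rank-r-U} to ``absorb repeatedly occurring low-rank offsets of $Q_0$'' doesn't fill this hole: that claim controls low-rank \emph{linear combinations} of two fixed quadratics modulo a fixed $U$; it cannot conjure a bounded space $V'$ covering a genuinely new high-rank quadratic $B_0$. Similarly, your suggestion that case~\ref{case:span} for $(B,C')$ ``yields $B\in\spn{Q_0,V'_2,A'}$ for some $A'\in\cT_1$ whose form is already controlled'' is circular in the order you propose: the paper establishes the $\cQ_2$ structure \emph{before} $\cQ_1$ precisely because the case-\ref{case:span} interaction is where the second anchor $B_0$ emerges; the structure of $\cQ_1$ is then derived from that of $\cQ_2$ (\autoref{cla:Q_1-strac}), not the other way around. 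To repair your argument, replace the upfront choice $P_0=0$ with a case distinction: first extract $B_0$ from the $\cQ_2$-analysis via pigeonhole, then set $P_0=B_0$ if no nontrivial linear combination $\alpha Q_0+\beta B_0$ has small $\rank_s$, and otherwise absorb $\MS(\alpha Q_0+\beta B_0)$ into $V$ and take $P_0=0$.
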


As mentioned earlier, we prove \autoref{lem:case-V-exsists} by showing that each of the sets $\cQ_i$ satisfies the conditions of \autoref{prop:P,Q_0,C[V],rank1}, with a slightly larger subspace $V'$. We prove this first for $\cQ_2$, then for $\cQ_1$ and finally for $\cQ_3$.

\begin{claim}\label{cla:V-and-P}
Let $\cup_{j\in[3]} \cT_j$ and $V$ be as in \autoref{lem:case-V-exsists}.
%satisfy the conditions of \autoref{def:Q-dom-col}. Assume  that there exists a linear space of linear forms $\tilde{V}$ such that $\dim(\tilde{V})\leq \dimV$ and when we set 
%$V= V_{\MS}(\tilde{V},Q_0)$
%it holds that at least $0.8 m_3$ of the polynomials  $C_i \in \cQ^1_3$ are in $\spn{Q_0,\C[V]_2}$.
Then every $B_i\in \cQ_2$ satisfies:
\begin{enumerate}
	\item \label{cla:V-and-P:item:V} $B_i\in \C[V]_2$, or
	\item \label{cla:V-and-P:item:V-c} There is a quadratic polynomial $B'_i\in \C[V]_2$, a linear form $v_i\in V$, and  a linear form $c_i$, such that $B_i = Q_0+B'_i + c_i(\epsilon_i c_i + v_i )$, or
	\item \label{cla:V-and-P:item:span} At least $0.8m_1$ of the polynomials in $\cQ_1$ are in $\spn{B_i, Q_0, \C[V]_2}\setminus \spn{Q_0, \C[V]_2}$, or, 
	%if $\rank_s(Q_0)< \rkq$ then there is another option:
	\item \label{cla:V-and-P:item:ideal} $\rank_s(Q_0)< \rkq$ and $B_i \in \ideal{V}$.
\end{enumerate}
\end{claim}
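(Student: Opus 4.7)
The plan is to fix $B_i\in\cQ_2$ and analyze its pairwise interactions, via \autoref{thm:structure}, with the large set $\cI:=\cQ_3^1\cap\spn{Q_0,\C[V]_2}$, which by hypothesis has size exceeding $0.8m_3$. The first step is to establish a normal form for elements of $\cI$: every such $C$ satisfies $C=Q_0+a_Cb_C$ with $a_C,b_C\in V$. In the high-rank regime $\rank_s(Q_0)\geq\rkq$ this follows by equating the two available representations of $C$ and using that $Q_0$ has rank far exceeding $\dim(V)$ to force the $Q_0$-coefficient in the $\spn{Q_0,\C[V]_2}$-representation to equal $1$ (via the type of rank argument codified in \autoref{obs:P-Q-ideal} with $P_0=0$); in the low-rank regime $Q_0\in\C[V]_2$, so $\spn{Q_0,\C[V]_2}=\C[V]_2$ and the conclusion is immediate from $a_Cb_C=C-Q_0\in\C[V]_2$.

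Next I would dispose of two easy branches. If $B_i\in\spn{Q_0,\C[V]_2}$, write $B_i=\alpha Q_0+B_i'$ with $B_i'\in\C[V]_2$: when $\alpha=0$ we obtain item~\ref{cla:V-and-P:item:V} directly; when $\alpha\neq 0$, \autoref{rem:gamma=1} allows normalizing $\alpha=1$ and item~\ref{cla:V-and-P:item:V-c} holds with $c_i=0$. So I may assume $B_i\notin\spn{Q_0,\C[V]_2}$. Case~\ref{case:2} of \autoref{thm:structure} applied to $(B_i,C)$ is also handled right away: in the high-rank regime it cannot occur since $\rank_s(C)\geq\rkq-\dimV>2$ by \autoref{cla:rank-mod-space} (cf.\ \autoref{obs:case3}), while in the low-rank regime the auxiliary linear forms $c,d$ must lie in $\MS(C)\subseteq V$, forcing $B_i\in\ideal{c,d}\subseteq\ideal{V}$, which is item~\ref{cla:V-and-P:item:ideal}. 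I therefore further assume $\cI=\cI_{\ref{case:span}}\sqcup\cI_{\ref{case:rk1}}$, partitioned by which of Cases~\ref{case:span} and~\ref{case:rk1} each pair $(B_i,C)$ satisfies.

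For $C\in\cI_{\ref{case:span}}$ there exists a unique $A_C\in\cT_1$ with $A_C\in\spn{B_i,C}\subseteq\spn{B_i,Q_0,\C[V]_2}$. A short calculation using pairwise linear independence together with the standing assumption $B_i\notin\spn{Q_0,\C[V]_2}$ shows that distinct $C$'s yield distinct $A_C$, each lying in $\spn{B_i,Q_0,\C[V]_2}\setminus\spn{Q_0,\C[V]_2}$. Consequently, if $|\cI_{\ref{case:span}}|\geq 0.8m_1$ then item~\ref{cla:V-and-P:item:span} holds. Otherwise $|\cI_{\ref{case:rk1}}|$ is substantial, and each such $C$ produces a relation $\lambda_CB_i+\mu_CC=e_Cf_C$; substituting the normal form $C=Q_0+a_Cb_C$ with $a_Cb_C\in\C[V]_2$ rearranges to $B_i=\gamma_CQ_0+R_C+(1/\lambda_C)e_Cf_C$ with $R_C\in\C[V]_2$ and $\gamma_C=-\mu_C/\lambda_C$. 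Reducing modulo $\ideal{V}$ and comparing two Case~\ref{case:rk1} events, the rank-$\geq\rkq-\dimV$ image of $Q_0$ modulo $V$ dwarfs any rank-$\leq 2$ difference and forces $\gamma_C$ to be constant; unique factorization modulo $\ideal{V}$ (see \autoref{fact:ufd}) then forces the residues $\tilde{e}_C\tilde{f}_C$ to agree. In the low-rank regime both $Q_0$ and $C$ vanish modulo $V$ and the same rigidity follows from unique factorization alone. Normalizing the common $\gamma$ to $1$ via \autoref{rem:gamma=1} and checking that $B_i-Q_0$ modulo $V$ has the shape $\epsilon_i\tilde{c}_i^2$ yields item~\ref{cla:V-and-P:item:V-c}.

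The main obstacle I anticipate is the residual sub-case in which the common residue factors as $\tilde{e}\tilde{f}$ with $\tilde{e}\not\sim\tilde{f}$ modulo $V$: here item~\ref{cla:V-and-P:item:V-c} does not fit, because the rank-$1$ term $c_i(\epsilon_ic_i+v_i)$ of the target decomposition reduces modulo $V$ to a square rather than to a product of two independent linear forms. The plan for this final wrinkle is to combine the rigidity already derived with \autoref{cla:lin-rank-r} applied to the pencil $\{B_i,Q_0\}$, which pins down all linear forms $e_C,f_C$ inside a fixed extension of $V$ of dimension bounded by $O(1)$, and then to leverage the defining property $Q_0\notin\calP^{\ref{case:span}}_2\cup\calP^{\ref{case:2}}_2$ — which in particular limits how many polynomials in $\cQ_3$ can interact with $Q_0$ in a ``bad'' way — to show that this sub-case either forces enough additional Case~\ref{case:span} events between $B_i$ and $\cI$ to yield item~\ref{cla:V-and-P:item:span} after all, or else leads to a direct contradiction with the choice of $Q_0$.
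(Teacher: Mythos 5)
Your overall skeleton matches the paper's: both arguments work with the large set $\cI\subseteq\cQ_3^1$ of polynomials of the form $Q_0+ab$ with $a,b\in V$, dispose of \autoref{thm:structure}\ref{case:2} via \autoref{obs:case3} (high rank) or $\MS(C)\subseteq V$ (low rank), and obtain item~\ref{cla:V-and-P:item:span} by the same injectivity-plus-counting argument on the Case~\ref{case:span} pairs. The problem is your treatment of the Case~\ref{case:rk1} branch, where you correctly identify the crux — that a single relation $B_i=\gamma Q_0+R_C+e_Cf_C$ only gives $B_i-\gamma Q_0\equiv_V \tilde e_C\tilde f_C$ with possibly independent factors, which is not the shape required by item~\ref{cla:V-and-P:item:V-c} — but you do not resolve it. Your proposed escape (pinning the $e_C,f_C$ in a bounded extension of $V$ via \autoref{cla:lin-rank-r} and then appealing to $Q_0\notin\calP^{\ref{case:span}}_2\cup\calP^{\ref{case:2}}_2$) does not close the gap: that property of $Q_0$ constrains how $Q_0$ interacts with the other two sets, not how $B_i$ interacts with $\cI$, and "forces enough additional Case~\ref{case:span} events" is not substantiated. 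Moreover, enlarging $V$ is not even permitted here, since the conclusion of the claim is stated for the fixed $V$ of \autoref{lem:case-V-exsists}.

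The missing step is short and you already have all its ingredients on the table. If \emph{two} distinct $C_i,C_j\in\cI$ satisfy Case~\ref{case:rk1} with $B_i$, subtract the two relations $B_i=\alpha C_i+cd=\beta C_j+ef$ to get $(\alpha-\beta)Q_0+\alpha a_ib_i-\beta a_jb_j=ef-cd$. In the high-rank regime this forces $\alpha=\beta$ and hence $\{\vec 0\}\neq\MS(ef-cd)\subseteq V$ (it is nonzero because $C_i\neq C_j$); in the low-rank regime $\MS(Q_0)\subseteq V$ gives the same containment. Now \autoref{cla:rank-2-in-V} — not unique factorization — yields $\spn{c,d}\cap V\neq\{\vec 0\}$, i.e.\ after rescaling $d=\epsilon c+v$ with $v\in V$, so $B_i=\alpha Q_0+\alpha a_ib_i+c(\epsilon c+v)$ and item~\ref{cla:V-and-P:item:V-c} holds; in particular the residue of $B_i-\alpha Q_0$ modulo $V$ \emph{is} a square, and your "residual sub-case" cannot occur. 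Deriving only that the residues $\tilde e_C\tilde f_C$ agree across different $C$ is strictly weaker and leaves the argument stuck. With this in place the case analysis closes exactly as in the paper: at most one $C\in\cI$ can satisfy Case~\ref{case:rk1} with $B_i$ without item~\ref{cla:V-and-P:item:V-c} already holding, no $C$ satisfies Case~\ref{case:2} without item~\ref{cla:V-and-P:item:ideal}, and the remaining $|\cI|-1$ Case~\ref{case:span} pairs (handling separately the possibility that the spanned polynomial lies in $\calL_1$, which again reduces to Case~\ref{case:rk1}) give item~\ref{cla:V-and-P:item:span} or item~\ref{cla:V-and-P:item:V-c} with $c_i=0$.
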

We stress that the case $B_i \in \ideal{V}$ can happen only when $\rank_s(Q_0)< \rkq$.
\begin{proof}
Let 
\[\cI = \lbrace C_i \in \cQ^1_3 \mid C_i = Q_0+a_ib_i \text{ with } a_i,b_i\in V \rbrace\;.\] The definition of $V$ guarantees that $|\cI| > 0.8 m_3$ (recall \autoref{rem:gamma=1}).
Consider $B\in \cQ_2$. If there are $C_i,C_j\in \cI$ such that $C_i,B$ and $C_j,B$ satisfy $\autoref{thm:structure}\ref{case:rk1}$, then there are linear forms $c,d,e$ and $f$ and scalars $\alpha$ and $\beta$ such that $B = \alpha C_i + cd= \beta C_j + ef$.  Hence,  $(\alpha-\beta)Q_0 +\alpha a_ib_i- \beta a_jb_j = ef-cd$.
If $\rank_s(Q_0) \geq \rkq$ then we have that $\alpha = \beta$ and $\{\vec{0}\} \neq \MS(ef-cd) \subseteq V$. If $\rank_s(Q_0) < \rkq$ then $\MS(cd-ef) \subseteq \MS((\alpha-\beta)Q_0 +\alpha a_ib_i- \beta a_jb_j) \subseteq V$. In either case, \autoref{cla:rank-2-in-V} implies that $\spn{c,d}\cap V \neq \{\vec{0}\}$ and therefore $B$ satisfies Case~\ref{cla:V-and-P:item:V-c}.

If there is $C_i \in \cI$ such that $B$ and $C_i$ satisfy \autoref{thm:structure}\ref{case:2} then by \autoref{obs:case3} it must be the case that $\rank_s(Q_0) < \rkq$ and $B \in \ideal{V}$ and in particular $B$ satisfies Case~\ref{cla:V-and-P:item:ideal}.

Thus we are left with the case that $B$ satisfies \autoref{thm:structure}\ref{case:span} with all but at most one of the polynomials in $\cI$.
If there is $A_t \in \calL_1$ and $C_i \in \cI$ such that $A_t \in \spn{B,C_i}$ then in particular $B$ and $C_i$ satisfy \autoref{thm:structure}\ref{case:rk1} with $c=d$ and we are done by the previous case.

If there is $A_t \in \cQ_1$ and $C_i,C_j \in \cI$ such that $A_t \in \spn{B,C_i}\cap \spn{B,C_j}$ then by pairwise linear independence it follows that $B \in \spn{C_i,C_j}$ and then either Case~\ref{cla:V-and-P:item:V-c} with $c_i=0$  or Case~\ref{cla:V-and-P:item:V} hold.

The only case left is when for every $C_i \in \cI$ (except possibly the one satisfying  \autoref{thm:structure}\ref{case:2} with $B$)  there is a different $A_i \in \cQ_1$ such that $A_i \in \spn{B,C_i}$.
% there are at  least $0.8 m_3$ such $A_i \in \cQ_1$. 
As $m_1 \leq m_3$, it follows that $B, Q_0, \C[V]_2$ span at least $0.8 m_1$  polynomials in $\cQ_1$. Thus, either $B\in\spn{Q_0,\C[V]}$, and in particular it satisfies Case~\ref{cla:V-and-P:item:V-c}, or it satisfies Case~\ref{cla:V-and-P:item:span}.
\end{proof}

Denote  
\begin{equation}\label{eq:J2}
\cJ_2 = \left\{ Q\in \cQ_2 \mid Q \text{ satisfies  Case~\ref{cla:V-and-P:item:span} of \autoref{cla:V-and-P} and none of the other cases}\right\}\;.
\end{equation}
Fix $B_0\in \cJ_2$. 
Then, for every other $B'\in \cJ_2$  there is $A_i\in \cQ_1$ such that $A_i \in \left(\spn{B_0,Q_0,\C[V]_2}\setminus \spn{Q_0, \C[V]_2}\right)\cap  \left(\spn{B',Q_0,\C[V]_2}\setminus \spn{Q_0, \C[V]_2}\right)$ and therefore $ B' \in \spn{B_0,Q,\C[V]_2}$. 
%If there is a linear combination of $B_0, Q_0$ and $\C[V]_2$ of $\rank_s <3$ then from \autoref{cla:lin-rank-r}, we can add at most $4$ linear forms to $V$ to have that $B_0 \in \spn{Q_0,\C[V]_2}$ and therefore $\cJ \subset \spn{Q_0,\C[V]_2}$. 

Thus from now on, if $\cJ_2 \neq \emptyset$ then we can assume that there is a polynomial $B_0\in \cQ_2$ such that 
%every linear combination of  $B_0, Q_0$ and $ \C[V]_2$ of $\rank_s \geq3$, and 
$\cJ_2 \subset \spn{B_0,Q_0,\C[V]_2}$. 

%In particular we can assume that $B_0$ does not satisfy \autoref{thm:structure}\ref{case:2} with any other polynomial.  We summarize this discussion in the following corollary. 

\begin{corollary}\label{cla:Q_2-strac}
	Every polynomial $B_i \in \cQ_2$ satisfies one of the following cases:
	\begin{enumerate}
		\item \label{cla:Q_2-strac:item:V} $B_i\in\C[V]_2$, or
		\item \label{cla:Q_2-strac:item:V-c} There is a quadratic polynomial $B'_i\in\C[V]_2$, a linear form $v_i\in V$, and  a linear form $c_i$, such that $B_i =  Q_0+B'_i + c_i(\epsilon_i c_i + v_i )$, or
		\item \label{cla:Q_2-strac:item:span}  $B_i \in \spn{B_0,Q_0,\C[V]_2}$, for some fixed polynomial $B_0\in \cJ_2$ (this case is possible only when $\cJ_2\neq \emptyset$), or \item \label{cla:Q_2-strac:item:ideal} $\rank_s(Q_0)< \rkq$ and
$B_i \in \ideal{V}$.
\end{enumerate}
\end{corollary}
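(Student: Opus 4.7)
The plan is to derive the corollary directly from Claim~\ref{cla:V-and-P} by using a simple pigeonhole argument to consolidate Case~\ref{cla:V-and-P:item:span} into a statement involving a single fixed polynomial $B_0$. First, I would apply Claim~\ref{cla:V-and-P} to an arbitrary $B_i \in \cQ_2$. If $B_i$ falls into Case~\ref{cla:V-and-P:item:V}, Case~\ref{cla:V-and-P:item:V-c}, or Case~\ref{cla:V-and-P:item:ideal} of the claim, then Cases~\ref{cla:Q_2-strac:item:V}, \ref{cla:Q_2-strac:item:V-c}, or~\ref{cla:Q_2-strac:item:ideal} of the corollary hold verbatim. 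Thus the only nontrivial task is to handle the polynomials in $\cJ_2$, which are those $B_i$ that satisfy Case~\ref{cla:V-and-P:item:span} but none of the other three cases of Claim~\ref{cla:V-and-P}.

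Next, assuming $\cJ_2 \neq \emptyset$, I would fix an arbitrary $B_0 \in \cJ_2$ once and for all. For any $B' \in \cJ_2$, define
\[
\cA_0 = \{ A \in \cQ_1 \mid A \in \spn{B_0,Q_0,\C[V]_2} \setminus \spn{Q_0,\C[V]_2}\},
\]
and analogously $\cA'$ using $B'$ in place of $B_0$. By Case~\ref{cla:V-and-P:item:span} of Claim~\ref{cla:V-and-P}, both $|\cA_0|, |\cA'| \geq 0.8\, m_1$, so $|\cA_0 \cap \cA'| \geq 0.6\, m_1 > 0$, and I can pick any $A \in \cA_0 \cap \cA'$.

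From $A \in \spn{B_0,Q_0,\C[V]_2}$ and $A \notin \spn{Q_0,\C[V]_2}$ I obtain an expression $A = \alpha B_0 + R_0$ with $\alpha \neq 0$ and $R_0 \in \spn{Q_0,\C[V]_2}$, and similarly $A = \beta B' + R'$ with $\beta \neq 0$ and $R' \in \spn{Q_0,\C[V]_2}$. Eliminating $A$ gives
\[
B' = \frac{\alpha}{\beta}\, B_0 + \frac{1}{\beta}(R_0 - R') \in \spn{B_0,Q_0,\C[V]_2},
\]
which is precisely Case~\ref{cla:Q_2-strac:item:span} of the corollary. Combined with the three direct cases above, this yields the full classification for every $B_i \in \cQ_2$.

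The only subtle point to watch is the $\cJ_2 = \emptyset$ scenario, where Case~\ref{cla:Q_2-strac:item:span} simply never triggers and the corollary reduces to the other three cases; and the requirement that the $B_0$ is the \emph{same} polynomial for all $B_i$, which the pigeonhole step above delivers. There is no real obstacle here, since the heavy lifting was already done in Claim~\ref{cla:V-and-P}; the corollary is essentially a bookkeeping statement that makes the span case uniform in $B_0$ so that it can be fed cleanly into \autoref{prop:P,Q_0,C[V],rank1} in the next step of the argument.
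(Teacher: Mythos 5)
Your proof is correct and follows exactly the same route as the paper: fix $B_0 \in \cJ_2$, use the counting bound from Case~\ref{cla:V-and-P:item:span} of \autoref{cla:V-and-P} to find a common $A \in \cQ_1$ in both excised spans, and eliminate $A$ to place $B'$ in $\spn{B_0,Q_0,\C[V]_2}$; the remaining cases transfer verbatim. The only difference is that you make the pigeonhole count ($0.8m_1 + 0.8m_1 - m_1 = 0.6m_1 > 0$) explicit, whereas the paper asserts the existence of the common $A_i$ without spelling it out.
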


We next prove a similar statement for the polynomials in $\cQ_1$.

\begin{claim}\label{cla:Q_1-strac}
	Let $\cup_{j\in[3]} \cT_j$ and $V$ be as in \autoref{lem:case-V-exsists}. Let $\cJ_2$ and $B_0$ be as in \autoref{cla:Q_2-strac}.
Then, every polynomial $A_i \in \cQ_1$ satisfies one of the following cases:
	\begin{enumerate}
		%	\item \label{cla:Q_1-strac:item:V} $T_i$ is defined over $V$.
		\item \label{cla:Q_1-strac:item:V-c} There is a  polynomial $A'_i\in\C[V]_2$, a linear form $v_i \in V$, and  a linear form $c_i$, such that $A_i = \alpha_i Q_0+A'_i + c_i(\epsilon_i c_i + v_i )$, or
		\item \label{cla:Q_1-strac:item:span} $A_i \in \spn{B_0,Q_0,\C[V]_2}$, or
		\item \label{cla:Q_1-strac:item:ideal}  $\rank_s(Q_0)< \rkq$ and	$A_i \in \ideal{V}$.
\end{enumerate}
\end{claim}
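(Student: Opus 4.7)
The plan is to fix an arbitrary $A_i \in \cQ_1$ and analyze which case of \autoref{thm:structure} is realized for pairs $(A_i, C_j)$ as $C_j$ ranges over the rich set
\[
\cI = \{C_j \in \cQ^1_3 \mid C_j = Q_0 + a_j b_j,\ a_j, b_j \in V\},
\]
which plays the same role here as it did in the proof of \autoref{cla:V-and-P}. By hypothesis $|\cI| > 0.8 m_3 \geq 0.8 m_1$, so several $C_j$'s are at our disposal. For each $C_j \in \cI$, \autoref{thm:structure} applied to the pair $(A_i, C_j)$ and the set $\cT_2$ forces one of: (span) $\spn{A_i, C_j}$ contains some $B_k \in \cT_2$; (rk1) some nontrivial combination $\alpha A_i + \beta C_j$ equals a product $cd$ of linear forms; or (case~2) $A_i, C_j \in \ideal{e, f}$ for some linear forms $e, f$. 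I will split into three main subcases.

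The first subcase is when (rk1) is realized for two distinct $C_j, C_k \in \cI$, say $A_i = \alpha C_j + cd = \beta C_k + ef$ with $\alpha, \beta \neq 0$ (since $A_i$ is irreducible). Subtracting gives
\[
(\alpha - \beta) Q_0 + \alpha a_j b_j - \beta a_k b_k = ef - cd.
\]
When $\rank_s(Q_0) \geq \rkq$, the RHS has $\rank_s \leq 4$, which forces $\alpha = \beta$; when $\rank_s(Q_0) < \rkq$, we have $\MS(Q_0) \subseteq V$ by construction. In either case the LHS lies in $\C[V]_2$, so $\MS(ef - cd) \subseteq V$. Moreover $ef \neq cd$, for otherwise $\alpha C_j = \beta C_k$, contradicting pairwise linear independence of $C_j$ and $C_k$. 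Thus \autoref{cla:rank-2-in-V} yields $\spn{c, d} \cap V \neq \{\vec{0}\}$; writing $d = \epsilon_i c + v_i$ with $v_i \in V$ and substituting into $A_i = \alpha C_j + cd$, the term $\alpha a_j b_j \in \C[V]_2$ absorbs into $A'_i$, placing $A_i$ in Case~\ref{cla:Q_1-strac:item:V-c}.

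The second subcase is when (case~2) is realized for some $C_j \in \cI$. By \autoref{obs:case3} this forces $\rank_s(Q_0) < \rkq$, so $\MS(Q_0) \subseteq V$ and hence $\MS(C_j) \subseteq V$. Since $C_j$ is irreducible of $\rank_s = 2$, the representation $C_j = ee' + ff'$ (coming from $C_j \in \ideal{e, f}$) together with any rank-$2$ representation of $C_j$ using forms of $V$ must share the same minimal space by \autoref{cla:irr-quad-span}, forcing $e, f \in V$. Therefore $A_i \in \ideal{e, f} \subseteq \ideal{V}$, giving Case~\ref{cla:Q_1-strac:item:ideal}.

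The remaining subcase is that (span) is realized for at least one $C_j \in \cI$, with witness $B_k \in \cT_2$ and $A_i = \alpha B_k + \beta C_j$. I will invoke \autoref{cla:Q_2-strac} to classify $B_k$ into its four subcases and substitute $C_j = Q_0 + a_j b_j$ (with $a_j, b_j \in V$) directly. Subcase~\ref{cla:Q_2-strac:item:V} yields $A_i \in \spn{Q_0, \C[V]_2}$, a degenerate form of Case~\ref{cla:Q_1-strac:item:V-c}; Subcase~\ref{cla:Q_2-strac:item:V-c} yields precisely the form of Case~\ref{cla:Q_1-strac:item:V-c} (the $c_i$-term is $\alpha c_{k}(\epsilon_{k}c_{k}+v_{k})$); Subcase~\ref{cla:Q_2-strac:item:span} yields $A_i \in \spn{B_0, Q_0, \C[V]_2}$, which is Case~\ref{cla:Q_1-strac:item:span}; and Subcase~\ref{cla:Q_2-strac:item:ideal} yields $A_i \in \ideal{V}$, since $\rank_s(Q_0) < \rkq$ then implies $Q_0 \in \C[V]_2 \subseteq \ideal{V}$ and $a_j b_j \in \ideal{V}$, hence $C_j \in \ideal{V}$. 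The main bookkeeping challenge will be tracking the $\rank_s(Q_0) \geq \rkq$ versus $\rank_s(Q_0) < \rkq$ dichotomy consistently across all subcases; no new ideas beyond those in \autoref{cla:V-and-P} are required.
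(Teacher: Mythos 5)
Your proposal is correct and follows essentially the same route as the paper: the same set $\cI$, the same trichotomy over the cases of \autoref{thm:structure} for the pairs $(A_i,C_j)$ with $C_j\in\cI$, and the same appeal to \autoref{cla:Q_2-strac} in the span case; the rank-one and ideal subcases you spell out are exactly the argument the paper imports from the proof of \autoref{cla:V-and-P} and from \autoref{obs:case3}. The only point worth adding is that the spanning witness $B_k$ may lie in $\calL_2$ rather than $\cQ_2$ (so \autoref{cla:Q_2-strac} does not literally classify it), but then $A_i=\alpha\ell^2+\beta C_j$ lands directly in Case~\ref{cla:Q_1-strac:item:V-c} with $c_i=\ell$ and $v_i=0$.
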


\begin{proof}
	As in the proof of \autoref{cla:V-and-P}, let  
\[\cI = \lbrace C_i \in \cQ^1_3 \mid C_i = Q_0+a_ib_i \text{ with } a_i,b_i\in V \rbrace\;.\] Again our assumption implies that $|\cI| > 0.8 m_3$.	
	Let $A \in \cQ_1$. If there are $C_i,C_j\in \cI$ such that $C_i,A$ and $C_j,A$ satisfy $\autoref{thm:structure}\ref{case:rk1}$, then we can repeat the analogous part from the proof of \autoref{cla:V-and-P} and conclude that $A$ satisfies Case~\ref{cla:Q_1-strac:item:V-c}.
	
%	, there are linear forms $c,d,e$ and $f$ such that $A = \alpha C_i + cd= \beta C_j + ef$, and therefore $(\alpha-\beta)Q_0 +\alpha a_ib_i- \beta a_jb_j = ef-cd$.
	
%	If $\rank_s(Q_0) \geq \rkq$ it follows that $\alpha = \beta$ and $\{\vec{0}\} \neq \MS(cd-ef) \subseteq V$. If $\rank_s(Q_0) < \rkq$ then $\MS(cd-ef) \subseteq \MS((\alpha-\beta)Q_0 +\alpha a_ib_i- \beta a_jb_j) \subseteq V$. Therefore from \autoref{cla:rank-2-in-V} without loss of generality $d \in \spn{c,V}$ and \autoref{cla:Q_1-strac:item:V-c} holds.
If there is $C_i \in \cI$ such that $A$ and $C_i$ satisfy \autoref{thm:structure}\ref{case:2} then by \autoref{obs:case3} it must be the case that $\rank_s(Q_0) < \rkq$ and $A \in \ideal{V}$ and in particular  Case~\ref{cla:Q_1-strac:item:ideal} holds.
	
	If $A,C_i$ satisfy \autoref{thm:structure}\ref{case:span} then there is a polynomial in $B_t \in \cT_2$ such that $A \in \spn{C_i,B_t}$. If $B_t\in \calL_2$ or $B_t$ satisfies either  Case~\ref{cla:Q_2-strac:item:V-c} or Case~\ref{cla:Q_2-strac:item:V} of  \autoref{cla:Q_2-strac} then $A$ satisfies Case~\ref{cla:Q_1-strac:item:V-c} of the claim.	
	If $B_t$ satisfies Case~\ref{cla:Q_2-strac:item:span} of \autoref{cla:Q_2-strac} then $A$ satisfies Case~\ref{cla:Q_1-strac:item:span}. 	
	If $B_t$ satisfies Case~\ref{cla:Q_2-strac:item:ideal} of \autoref{cla:Q_2-strac} 
	%or if $A,C_i$ satisfy \autoref{thm:structure}\ref{case:2}  
	then  $\rank_s(Q_0)< \rkq$ and Case~\ref{cla:Q_1-strac:item:ideal} holds for $A$.
\end{proof}

Finally, we prove the same structure for $\cQ_3$. The proof is very similar to the previous proofs except that here we cannot have $\cQ_3$ as the second set from which we take polynomials. 

\begin{claim}\label{cla:Q_3-strac}
	Let $\cup_{j\in[3]} \cT_j$ and $V$ be as in \autoref{lem:case-V-exsists}. Let $\cJ_2$ and $B_0$ be as in \autoref{cla:Q_2-strac}. Each polynomial $C_i \in \cQ_3$ satisfies one of the following cases:
	\begin{enumerate}
		\item \label{cla:Q_3-strac:item:V-c} There is a quadratic polynomial $C'_i\in \C[V]_2$ and  linear forms $c_i, d_i$, such that $C_i = \alpha_i Q_0+C'_i+c_id_i$, or
		\item \label{cla:Q_3-strac:item:span} $C_i \in \spn{B_0,Q_0,\C[V]_2}$, or
	\item \label{cla:Q_3-strac:item:ideal} $\rank_s(Q_0)< \rkq$ and $C_i \in \ideal{V}$.
\end{enumerate}
\end{claim}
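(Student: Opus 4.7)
My plan is to adapt the proof of \autoref{cla:Q_1-strac} to the present setting. In that proof each $A\in\cQ_1$ was controlled by pairing it, via \autoref{thm:structure}, against members of the reference family $\cI=\{C_i=Q_0+a_ib_i:a_i,b_i\in V\}\subseteq\cQ_3^1$ of size greater than $0.8\,m_3$. Since $C\in\cQ_3$ lies in the same set $\cT_3$ as $\cI$, I cannot pair $C$ with $\cI$ directly; instead I will, for every $B\in\cQ_2$, apply \autoref{thm:structure} to $(B,C)$ and import the structural description of $B$ from \autoref{cla:Q_2-strac}.

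Fix $C\in\cQ_3$. If $(B,C)$ satisfies \autoref{thm:structure}\ref{case:2} for some $B\in\cQ_2$, then by \autoref{obs:case3} we must have $\rank_s(Q_0)<\rkq$. Running through the four possibilities for $B$ in \autoref{cla:Q_2-strac} and using \autoref{cla:ind-rank} together with \autoref{obs:P-Q-ideal} to constrain $\MS(C)$ exactly as in the corresponding step of \autoref{cla:V-and-P}, I expect to derive $C\in\ideal{V}$, which is case~\ref{cla:Q_3-strac:item:ideal}. If instead there are two distinct $B_i,B_j\in\cQ_2$ for which $(B_i,C)$ and $(B_j,C)$ both satisfy \autoref{thm:structure}\ref{case:rk1}, so that
\[
C \;=\; \alpha_i B_i+c_id_i \;=\; \alpha_j B_j+c_jd_j,
\]
I will substitute the classifications of $B_i$ and $B_j$ from \autoref{cla:Q_2-strac} and compare the two expressions modulo $V$. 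The rank-comparison argument from the Case~\ref{case:rk1} branch of \autoref{cla:V-and-P}, invoking \autoref{cla:rank-2-in-V}, \autoref{obs:P-Q-ideal} and \autoref{cla:lin-comb-V}, should collapse the discrepancy between the two representations into a single reducible product and yield case~\ref{cla:Q_3-strac:item:V-c}.

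Otherwise $(B,C)$ satisfies \autoref{thm:structure}\ref{case:span} for all but at most one $B\in\cQ_2$, so for every such $B$ there exists $A_B\in\cT_1=\cQ_1\cup\calL_1$ with $A_B\in\spn{B,C}$. I will then solve $A_B=\gamma B+\delta C$ for $C$ in each of the combinations obtained by running \autoref{cla:Q_2-strac} on $B$ and \autoref{cla:Q_1-strac} on $A_B$ (together with the trivial $A_B=\ell^2$ when $A_B\in\calL_1$). In each combination I expect one of three outcomes: if both sides are of the ``$Q_0+\C[V]_2+\text{rank-}1$'' type from cases~\ref{cla:Q_1-strac:item:V-c} and~\ref{cla:Q_2-strac:item:V-c} (or~\ref{cla:Q_2-strac:item:V}), the combination collapses to $C=\alpha Q_0+C'+c_id_i$ with $C'\in\C[V]_2$, giving case~\ref{cla:Q_3-strac:item:V-c}; if at least one of $A_B,B$ lies in $\spn{B_0,Q_0,\C[V]_2}$, so does $C$, giving case~\ref{cla:Q_3-strac:item:span}; and if at least one lies in $\ideal{V}$, then $\rank_s(Q_0)<\rkq$, hence $Q_0\in\C[V]_2$, and \autoref{cla:lin-comb-V} pushes $C$ into $\ideal{V}$, giving case~\ref{cla:Q_3-strac:item:ideal}.

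The hard part will be the Case-\ref{case:span} branch. Although case~\ref{cla:Q_3-strac:item:V-c} here only asks for a reducible rank-$1$ excess (rather than the more restrictive shape $c(\epsilon c+v)$ with $v\in V$ required in \autoref{cla:Q_1-strac}), I will still need to verify combination by combination, over the structural types for $(B,A_B)$, that the excess of $C$ over $\spn{Q_0,\C[V]_2}$ really stays a single rank-$1$ product and that no uncontrolled dependence on polynomials other than $B_0$ is introduced.
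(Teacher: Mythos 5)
Your plan overlooks the single observation that makes this claim almost trivial: $Q_0$ is itself an element of $\cQ_2$, so you may apply \autoref{thm:structure} directly to the pair $(Q_0,C)$. Doing so yields a clean three-way case split. In \autoref{thm:structure}\ref{case:rk1} you get $C=\alpha Q_0+cd$ outright, which is case~\ref{cla:Q_3-strac:item:V-c}. In \autoref{thm:structure}\ref{case:2}, \autoref{obs:case3} forces $\rank_s(Q_0)<\rkq$, so $\MS(Q_0)\subseteq V$ and the two common linear forms lie in $V$, giving $C\in\ideal V$. In \autoref{thm:structure}\ref{case:span}, some $A_i\in\cT_1$ satisfies $A_i\in\spn{Q_0,C}$, hence $C\in\spn{Q_0,A_i}$; since $Q_0$ is already one of the allowed summands, $C$ inherits exactly the case of \autoref{cla:Q_1-strac} that $A_i$ satisfies. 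That is the paper's entire proof.

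Your version, which ranges over \emph{all} $B\in\cQ_2$, has a genuine gap in the span branch. There you write $C=\gamma'A_B+\delta'B$ and substitute the structural forms of $B$ (from \autoref{cla:Q_2-strac}\,\ref{cla:Q_2-strac:item:V-c}) and $A_B$ (from \autoref{cla:Q_1-strac}\,\ref{cla:Q_1-strac:item:V-c}). Both contributions carry an independent rank-$1$ excess over $\spn{Q_0,\C[V]_2}$, so the linear combination generically has a rank-$2$ excess $\gamma'\,c_A(\epsilon_Ac_A+v_A)+\delta'\,c_j(\epsilon_jc_j+v_j)$, and nothing in your outline forces this to collapse to a single reducible quadratic. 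You flagged this as ``the hard part'' but did not say how to resolve it, and I don't see a way to do so without invoking some additional constraint. Specializing $B=Q_0$ is precisely what eliminates that second rank-$1$ term (since $\delta'Q_0$ is absorbed into the $\spn{Q_0,\ldots}$ part), so in effect the paper's proof is the case $B=Q_0$ of your plan, which is the only case you actually need.
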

\begin{proof}
	Let $\cI$ be as in the proof of \autoref{cla:V-and-P}.  
%	 As before, let $\cI = \lbrace C_i = Q+a_ib_i\in \cQ^1_3 \mid a_i,b_i\in V \rbrace$, the assumption in the claim promises that $|\cI| \geq 0.8 m_3$. 
Every polynomial in $\cI$ satisfies Case~\ref{cla:Q_3-strac:item:V-c} of \autoref{cla:Q_3-strac}.
	Let $C \in \cQ_3\setminus \cI$. If $Q_0$, $C$ satisfy \autoref{thm:structure}\ref{case:span} then there is $A_i\in \cT_1$ such that $C \in \spn{A_i,Q_0}$. It is not hard to verify that $C$ satisfies the same case as $A_i$.
	% (that is,  the case of \autoref{cla:Q_1-strac} that $A_i$ satisfies.
	
%	If $A_i \in \calL_1$ or $A_i$ satisfies Case~\ref{cla:Q_1-strac:item:V-c} of \autoref{cla:Q_1-strac} then $C$ satisfies Case~\ref{cla:Q_3-strac:item:V-c} of the claim. 	If $A_i$ satisfies \autoref{cla:Q_1-strac:item:span} of \autoref{cla:Q_1-strac} then $C$ satisfies \autoref{cla:Q_3-strac:item:span}. 	If $A_i$ satisfies  \autoref{cla:Q_1-strac:item:ideal} of \autoref{cla:Q_1-strac} then $C$ satisfies \autoref{cla:Q_3-strac:item:ideal}.

	If $Q_0$, $C$ satisfy \autoref{thm:structure}\ref{case:rk1} then there are linear forms $c$ and $d$ such that $C= \alpha Q_0 +cd$, and $C$ satisfies Case~\ref{cla:Q_3-strac:item:V-c}. Finally, if $Q_0$ and $C$ satisfy  \autoref{thm:structure}\ref{case:2}, then, as before,  Case~\ref{cla:Q_3-strac:item:ideal} holds for $C$.
\end{proof}

We can now prove \autoref{lem:case-V-exsists}.

\begin{proof}[Proof of \autoref{lem:case-V-exsists}] 
	The combination of \autoref{cla:Q_2-strac}, \autoref{cla:Q_1-strac}, and \autoref{cla:Q_3-strac} guarantees that there are $Q_0$ and $B_0$ (if $\cJ_2\neq \emptyset$) such that every polynomial in  $\cup_{j \in [3]}\cQ_j$  is either in $\ideal{V}$ or is of the form $Q' + ab$ for linear forms $a$ and $b$, and a quadratic $Q'\in \spn{Q_0,B_0, \C[V]}_2$. 	Furthermore, if $\rank_s(Q_0) < \rkq$ then $\MS(Q_0)\subseteq V$. We next show that we can apply   \autoref{prop:P,Q_0,C[V],rank1}. For that we have to find appropriate $Q_0$, $P_0$ and $V$ that satisfy \autoref{assume:special-case} (and the rest of the conditions of \autoref{prop:P,Q_0,C[V],rank1}).
	
	Consider the case $\rank_s(Q_0) \geq \rkq$. In this case we have that (in the notation of \autoref{lem:case-V-exsists}) $\tilde{V}=V$ and in particular, $\dim(V)\leq \dimV$.  If there is  a linear combination  $\alpha Q_0+ \beta B_0$ such that $\rank_s(\alpha Q_0+ \beta B_0) \leq 2 \cdot \dim(V) + 20$ then set $V = V + \MS(\alpha Q_0+ \beta B_0)$. It holds that $\dim(V) \leq \dimV + 2 \cdot \dimV + 20 \leq 320$, and $\rank_s(Q_0)\geq \rkq > 2\cdot \dim(V) +20$. In this case if we let $P_0=0$ then $Q_0$, $P_0$ and $V$  satisfy \autoref{assume:special-case}. If no such linear combination of small rank exists then $Q_0$, $P_0=B_0$ and $V$ satisfy \autoref{assume:special-case}.
	
Consider now the case $\rank_s(Q_0)\leq \rkq$. If $\rank_s(B_0) \leq 2 \cdot \dim(V)  +20$ then add $\MS(B_0)$ to $V$. We now get that $Q_0=P_0=0$ and $V$ satisfy the conditions of  \autoref{assume:special-case}. If $\rank_s(B_0) > 2 \cdot \dim(V)  +20$ then we get that $Q_0=0,P_0=B_0$ and $V$ satisfy \autoref{assume:special-case}. 
	
	Thus, in all possible case we get polynomials $Q_0$ and $P_0$ and a subspace $V$ of dimension $O(1)$ such that  the conditions of \autoref{prop:P,Q_0,C[V],rank1} are satisfied. Consequently, $\dim(\cup_{j \in [3]}\cT_j) = O(1)$, as claimed.
\end{proof}

\subsubsection{The case $\cQ_2^2\neq\emptyset$ and no such $V$ exists}\label{sec:q22<>0}

We now handle the case where $\cQ_2^2 \neq \emptyset$ and there is no such  vector space ${V}$. This  is the last case we needed in order to conclude the proof of \autoref{thm:main}.

\begin{lemma}\label{lem:case-noV}
	Let $\cup_{j\in[3]} \cT_j$ satisfy the conditions of \autoref{sec:hard}. Assume further that $\cQ_2^2 \neq \emptyset$ and that for every linear space of linear forms $\tilde{V}$ such that $\dim(\tilde{V})\leq \dimV$, when we set $V= V_{\MS}(\tilde{V},Q_0)$
	it holds that $|\cQ_3^1\cap \spn{Q_0,\C[V]_2}| \leq 0.8 m_3$. Then, $\dim(\cup_{j\in[3]} \cT_j) = O(1).$
\end{lemma}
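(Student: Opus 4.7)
The plan is to mirror the strategy of \autoref{lem:case-V-exsists}: produce a constant-dimensional space $V$ of linear forms together with a second special quadratic $P_0$ so that every polynomial in $\cup_{j\in[3]}\cT_j$ is either in $\ideal{V}$ or of the form $\alpha Q_0 + \beta P_0 + F + cd$ with $F \in \C[V]_2$ and $c,d$ linear forms, and then invoke \autoref{prop:P,Q_0,C[V],rank1}. The hypothesis $\cQ_2^2 \neq \emptyset$ is precisely what supplies $P_0$: pick any $P_0 \in \cQ_2^2$, so that $P_0 = \gamma_0 Q_0 + L_0$ with $\rank_s(L_0) \geq 2$. The non-existence assumption of this lemma forbids $\spn{Q_0, \C[V]_2}$ alone, for any $V$ of dimension at most $\dimV$, from containing $80\%$ of $\cQ_3^1$; so we must genuinely use both $Q_0$ and $P_0$ to reach those polynomials.

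First I would classify the pairs $(P_0, C)$ for $C \in \cQ_3^1$ by which case of \autoref{thm:structure} they satisfy, isolating by pigeonhole an $\Omega(m_3)$-sized subset $\cI \subseteq \cQ_3^1$ with uniform behavior. In the \autoref{thm:structure}\ref{case:rk1} regime every $C = Q_0 + a_C b_C \in \cI$ produces a rank-$\leq 2$ identity $\alpha P_0 + \beta C = cd$; expanding and applying \autoref{cla:lin-rank-r-U} with $U = \MS(L_0)$ (or $V_{\MS}$-absorbing $\MS(Q_0)$ when $\rank_s(Q_0) < \rkq$) forces the $a_C, b_C$ into a common constant-dimensional $V$, so that $\cI \subseteq \spn{Q_0, P_0, \C[V]_2}$. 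In the \autoref{thm:structure}\ref{case:span} regime each $C \in \cI$ comes with $A_C \in \cT_1$ obeying $A_C \in \spn{P_0, C}$, so a secondary pigeonhole on the $A_C$'s (applying \autoref{thm:structure} either to $(A_C, Q_0)$ or to pairs among the $A_C$'s) again yields a constant-dimensional $V$ governing them, which pushes back through the linear relations to put $\cI$ inside $\spn{Q_0, P_0, \C[V]_2}$. In the \autoref{thm:structure}\ref{case:2} regime $P_0$ lies in many ideals $\ideal{c_C, d_C}$, which forces $\rank_s(P_0)$ to be small and, via \autoref{cla:colored-linear-spaces-intersaction}, collects the pairs $(c_C, d_C)$ into a constant-dimensional $V$ with $\cI \subseteq \ideal{V}$.

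Having produced $V$ and the large subfamily $\cI \subseteq \spn{Q_0, P_0, \C[V]_2} \cup \ideal{V}$, I would replay the arguments of \autoref{cla:V-and-P}, \autoref{cla:Q_1-strac}, and \autoref{cla:Q_3-strac} with $\cI$ in place of the earlier large family and with $P_0$ taking over the role of the ``exceptional'' polynomial $B_0$: for any $R$ in one of the three sets, running \autoref{thm:structure} on the pairs $(R, C)$ for $C \in \cI$ and case-analyzing should place $R$ in $\spn{Q_0, P_0, \C[V]_2} + \{\text{rank-1 quadratic}\} + \ideal{V}$. To then invoke \autoref{prop:P,Q_0,C[V],rank1} I need \autoref{assume:special-case} for $(Q_0, P_0, V)$; if some nonzero combination $\alpha Q_0 + \beta P_0$ has rank at most $2\dim(V) + \distV$ I absorb its minimal space into $V$, an enlargement that happens at most a constant number of times since two independent low-rank combinations would force both $Q_0$ and $P_0$ themselves to have bounded rank, a scenario absorbed by passing to $V_{\MS}$ as in \autoref{def:ranksV}.

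The main obstacle I anticipate is case \autoref{thm:structure}\ref{case:span}, where the structural information migrates from $\cQ_3^1$ to $\cT_1$ and must be chased back through $P_0$ without double-counting; the interplay between $\rank_s(Q_0)$ and $\rank_s(P_0)$ (large vs.\ small) produces a matrix of subcases that is where this proof departs most sharply from \autoref{lem:case-V-exsists}. Secondary care is needed when $\rank_s(P_0)$ itself turns out to be small (as is inevitable in case (iii)), since then $P_0$ nearly lies in $\spn{Q_0, \C[V]_2}$ and one must verify that \autoref{assume:special-case} is not vacated; this is handled by treating the small-rank subcase separately and folding $\MS(L_0)$ into $V$ via $V_{\MS}$ before invoking \autoref{prop:P,Q_0,C[V],rank1}.
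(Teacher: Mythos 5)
Your high-level plan — manufacture a second quadratic $P_0$ from $\cQ_2^2$, find a constant-dimensional $V$ so that every polynomial in $\cup_j \cQ_j$ is of the form $\alpha Q_0 + \beta P_0 + F + cd$ with $F \in \C[V]_2$ (or lies in $\ideal{V}$), check \autoref{assume:special-case}, and hand off to \autoref{prop:P,Q_0,C[V],rank1} — is exactly the paper's architecture (\autoref{lem:noV-main} plus the short deduction of \autoref{lem:case-noV} from it). You also correctly identify that the non-existence hypothesis keeps $\cI_1^{P_0} \cup \cI_3^{P_0}$ from dominating $\cQ_3^1$, so the span case must carry $\Omega(m_3)$ polynomials. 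But the argument then hits a genuine gap at the span case, and this is where all the work actually lives.

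You write that after isolating the span-case subset $\cI$, ``a secondary pigeonhole on the $A_C$'s \ldots again yields a constant-dimensional $V$ governing them, which pushes back through the linear relations to put $\cI$ inside $\spn{Q_0,P_0,\C[V]_2}$.'' This does not go through as stated: the polynomials $A_C \in \cQ_1$ are a priori arbitrary, and running \autoref{thm:structure} on the pairs $(A_C, Q_0)$ can land you back in the span case, giving no leverage. The mechanism the paper uses is qualitatively different. Given a polynomial $B \in \cQ_2$ to classify and a rank-separated pivot $B_0$, one defines $\cA^B, \cA^{B_0} \subseteq \cQ_1$ as the images of the injections $\pi_B : \cI_2^B \to \cQ_1$, $\pi_{B_0} : \cQ_3^1 \to \cQ_1$. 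The non-existence hypothesis forces $|\cI_2^B| \geq (0.2 - 2\delta)m_3$, and $|\cQ_3^1| \geq (1-2\delta)m_3$, while $m_1 \leq m_3$ squeezes $\cA^B$ and $\cA^{B_0}$ into a set of size $m_1$, so $\cA^B \cap \cA^{B_0}$ is nonempty. For $A \in \cA^B \cap \cA^{B_0}$, the two relations $A = \alpha B_0 + \beta C$ and $A = B + \beta' C'$ \emph{are the thing that puts $B$ into $\spn{Q_0, B_0, \C[V]_2} + (\mathrm{rank\text{-}1})$}: subtracting, $B$ becomes a linear combination of $B_0$, $Q_0$, and at most two rank-$1$ terms, and then \autoref{cla:lin-rank-r} on the pair $(B_0, Q_0)$ controls where the rank-$1$ pieces live. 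None of this is a pigeonhole on the $A_C$'s; it is a double-counting argument through $\cQ_1$ that crucially uses $m_1 \leq m_3$.

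A second omission: you treat the low-rank case of $P_0$ (where some nonzero $\alpha P_0 + \beta Q_0$ has rank exactly $2$) by ``folding $\MS(L_0)$ into $V$'' once. But if \emph{every} polynomial in $\cQ_2^2$ has a rank-$2$ linear combination with $Q_0$, there is no single pivot $B_0$ you can fix; the paper's \autoref{cla:Q_2-strac-noV-rk2} then runs an \emph{iterative} process — pick $B_1 \in \cQ_2^2$, set $\tilde{V}_1 = \MS(L_1)$, ask whether some $B_2$ escapes $\spn{Q_0, \C[V_1]_2} + (\mathrm{rank\text{-}1})$, absorb $\MS(L_2)$, and so on — and the termination bound (\autoref{cla:Bi-B}) again rides on the same $\cA^{B_i}$-intersection argument together with $m_1 \leq m_3$. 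Your one-shot absorption will not in general terminate with a constant-dimensional $V$ unless you supply a termination lemma of this kind; two independent low-rank combinations forcing both $Q_0$ and $P_0$ to low rank does not cover the relevant scenario, because here each $B_i$ individually has a low-rank combination with $Q_0$ but the family of such combinations can sweep out an unbounded space a priori. Finally, the closing step of ``replaying \autoref{cla:V-and-P}, \autoref{cla:Q_1-strac}, \autoref{cla:Q_3-strac}'' cannot be done verbatim: those claims are stated and proved under the premise that the special subspace $V$ of \autoref{lem:case-V-exsists} \emph{does} exist, the very premise that \autoref{lem:case-noV} negates, so the $\cQ_1$ and $\cQ_3$ steps (\autoref{cla:Q_1-struc-noV} and \autoref{cla:Q_3-struc-noV}) require their own versions of the same double-counting machinery.
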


We prove the lemma by a reduction to \autoref{prop:P,Q_0,C[V],rank1}. To show that the conditions of the proposition hold we prove the next lemma, which is the main focus of this section. 

\begin{lemma}\label{lem:noV-main}
		Let $\cup_{j\in[3]} \cT_j$ satisfy the conditions of \autoref{sec:hard}. Assume further that $\cQ_2^2 \neq \emptyset$ and that for every linear space of linear forms $\tilde{V}$ such that $\dim(\tilde{V})\leq \dimV$, when we set $V= V_{\MS}(\tilde{V},Q_0)$
		it holds that $|\cQ_3^1\cap \spn{Q_0,\C[V]_2}| \leq 0.8 m_3$.  
		
		Then, there is a polynomial $T_0\in \cup_{j\in[3]} \cT_j \cup \{0\}$ and a linear space of linear forms, $\tilde{U}$, such that $\dim(\tilde{U}) \leq \dimV$ and for $U= V_{\MS}(\tilde{U},Q_0)$ the following holds:  every $Q_i\in\cup_{j\in[3]} \cQ_j$ satisfies one of the following statements:
		\begin{enumerate}
			\item $Q_i = a_ib_i + Q'_i$ where $Q'_i \in \spn{Q_0,T_0,\C[U]_2}$, and $a_i,b_i$ are linear forms.\label{case:general-Q}
			\item $\rank_s(Q_0) \leq \rkq$ and $Q_i \in \ideal{U}$.\label{case:low-rank-Q}
		\end{enumerate}
\end{lemma}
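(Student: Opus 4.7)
The plan is to take $T_0 := P_0$ for an arbitrary witness $P_0 \in \cQ_2^2$ (using that $\cQ_2^2 \neq \emptyset$), write $P_0 = \gamma_0 Q_0 + L_0$ with $\rank_s(L_0)\geq 2$, and construct $\tilde{U}$ greedily. I first observe that every polynomial in $\cQ_j^1$ is already of the form $Q_0 + a_jb_j$ and hence automatically satisfies conclusion~\ref{case:general-Q} with $Q'_i = Q_0$, so the entire burden is to bring the polynomials in $\cup_j \cQ_j^2$ (and the polynomials in $\calL_j$, handled via \autoref{rem:lin-only-2}) into the required form $a_ib_i + \alpha_i Q_0 + \beta_i P_0 + M_i$ with $M_i \in \C[U]_2$.

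For each such $R$ and each $C_j \in \cQ_3^1$ (a set of size at least $(1-2\delta)m_3$ by Equation~\eqref{eq:size-Q3}) I would apply \autoref{thm:structure} to the cross-set pair $(R, C_j)$ (or $(R, A_k)$ when $R\in\cQ_3^2$). The three cases each contribute structural information: Case~\ref{case:span} gives $R\in\spn{C_j,A}$ for some $A\in\cT_1$ which, once $A$ itself has been classified, puts $R$ into the $\spn{Q_0, P_0, \C[U]_2}$ part modulo a rank-$1$ term; Case~\ref{case:rk1} yields a nontrivial combination $\alpha R + \beta C_j = cd$, so that the non-$Q_0$ part of $R$ lies, modulo $\MS(L_0)$ and $\spn{c,d}$, in a space of bounded dimension (invoking \autoref{cla:ind-rank} and \autoref{cla:rank-2-in-V}); Case~\ref{case:2} forces $R\in\ideal{c,d}$, which by \autoref{obs:case3} can only occur when $\rank_s(Q_0)<\rkq$ and then feeds into conclusion~\ref{case:low-rank-Q}.

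The central device for keeping $\dim(\tilde{U})\leq \dimV$ is a greedy process driven by the no-$V$ hypothesis. I would start with $\tilde{U}$ consisting of a few linear forms extracted from $L_0$ (or $\MS(Q_0)$ when $\rank_s(Q_0)<\rkq$, via $V_{\MS}$ and \autoref{cla:lin-comb-V}), and at every step, if some polynomial $R^\star$ fails the desired decomposition, the Case~\ref{case:rk1} analysis above forces either $R^\star$ or a $C_j$-companion of $R^\star$ to contribute at most $O(1)$ new linear forms to $\tilde{U}$. A double-counting / pigeonhole argument in the spirit of \autoref{cla:3-case3} shows that each round simultaneously removes a constant fraction of $\cQ_3^1$ from the ``bad'' set, since the contrapositive of the hypothesis is precisely that too many bad $C_j$'s would exhibit a small $V$ capturing $0.8m_3$ of $\cQ_3^1\cap\spn{Q_0,\C[V]_2}$; this bounds the number of rounds, and hence $\dim(\tilde{U})$, by a constant.

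The hard part will be maintaining consistency across the three sets simultaneously: each $R$ participates in relations with polynomials of both of the other two sets and we must choose a single $\tilde{U}$ that works for all of them, so the case analysis must be carried out uniformly and with careful bookkeeping of which ``component'' ($\alpha Q_0$, $\beta P_0$, or $M\in\C[U]_2$) each witness contributes. I also expect a noticeable split by $\rank_s(Q_0)$: when $\rank_s(Q_0)\geq \rkq$, \autoref{obs:case3} forbids \autoref{thm:structure}\ref{case:2} so every $R$ falls into conclusion~\ref{case:general-Q}, and the only remaining danger is a low-rank combination of $Q_0$ and $P_0$, which is ruled out by the choice of $P_0\in\cQ_2^2$ (so $\rank_s(L_0)\geq 2$); whereas when $\rank_s(Q_0)<\rkq$, $\MS(Q_0)$ is absorbed into $U$ through $V_{\MS}$, but we must additionally track those $R$ that collapse into $\ideal{U}$, which is exactly the alternative conclusion~\ref{case:low-rank-Q}.
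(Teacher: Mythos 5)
Your overall strategy — apply \autoref{thm:structure} to cross-set pairs, absorb the ``remainders'' into $\C[U]_2$ via an iterative argument, and exploit the no-$V$ hypothesis to bound the number of iterations — is the right one and matches the paper in spirit. However, there is a real gap at the very first step: fixing $T_0 := P_0$ for an \emph{arbitrary} $P_0 \in \cQ_2^2$. The fact that $P_0 \in \cQ_2^2$ only tells you that $\rank_s(P_0 - Q_0) \geq 2$ (equivalently $P_0$ is not of the form $Q_0 + ab$); it does \emph{not} rule out a nonzero combination $\alpha Q_0 + \beta P_0$ of $\rank_s$ exactly $2$, contrary to what you claim near the end. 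If such a rank-$2$ combination exists, then $P_0$ is itself absorbable into $\spn{Q_0, \C[U]_2}$ once you add the four linear forms of that combination to $\tilde{U}$ — so choosing $T_0 = P_0$ buys you nothing, and worse, there may be some \emph{other} $B_0 \in \cQ_2^2$ for which \emph{every} nonzero combination with $Q_0$ has $\rank_s \geq 3$; that $B_0$ cannot be forced into the form $ab + \alpha Q_0 + \beta P_0 + M$ with your choice of $T_0$, and your argument gives no mechanism to handle it.

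The paper resolves this by splitting Lemma~\ref{lem:Q_2-strac-noV-gen} into two complementary subcases before choosing $T_0$. In Claim~\ref{cla:Q_2-strac-noV-rk3} the hypothesis is that some $B_0 \in \cQ_2^2$ has \emph{no} nonzero combination with $Q_0$ of $\rank_s \leq 2$; then $T_0 = B_0$, $\dim(\tilde{U}) = 0$, and the contradiction is obtained by constructing a small $\tilde{V}$ that would force $|\cQ_3^1 \cap \spn{Q_0, \C[V]_2}| > 0.8 m_3$. In Claim~\ref{cla:Q_2-strac-noV-rk2} the hypothesis is that every $B_i \in \cQ_2^2$ does have a rank-$2$ combination $L_i$ with $Q_0$; then $T_0 = 0$ and $\tilde{U}$ is built greedily by adding $\MS(L_i)$ for one new ``uncovered'' $B_i$ per round, with the no-$V$ hypothesis bounding the number of rounds via the coverage sets $\cA^{B_i} \subseteq \cQ_1$ (Claim~\ref{cla:Bi-B}), not directly via ``bad $C_j$'s''. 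Your double-counting instinct is in the right family, but the object being covered is $\cQ_1$ through the correspondences $\pi_{B_i}$, and this distinction matters for the bound $\dim(\tilde{U}) \leq \dimV/2$ that is later consumed by Claim~\ref{cla:Q_1-struc-noV}. You would need to either reproduce this dichotomy or give a genuinely different mechanism to ensure a single $T_0$ works for all of $\cQ_2^2$; as written, the claim ``ruled out by the choice of $P_0 \in \cQ_2^2$'' is the precise point where the argument breaks.

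Two smaller remarks: the lemma only quantifies over $Q_i \in \cup_{j\in[3]} \cQ_j$, so there is no need to handle $\calL_j$ here; and the paper's proof is sequential (first $\cQ_2$ via Lemma~\ref{lem:Q_2-strac-noV-gen}, then $\cQ_1$ via Claim~\ref{cla:Q_1-struc-noV} using the $\cQ_2$ structure, then $\cQ_3$), which sidesteps the ``simultaneous consistency'' difficulty you anticipate.
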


Before turning to the proof of \autoref{lem:noV-main} we show how to obtain \autoref{lem:case-noV} from it.

\begin{proof}[Proof of \autoref{lem:case-noV}]
We wish to show that the conditions of  \autoref{prop:P,Q_0,C[V],rank1} are satisfied for our set of polynomials. For this it is enough to prove that $Q_0$, $T_0$ and $U$, from the conclusion of \autoref{lem:noV-main}, satisfy \autoref{assume:special-case}.

We first consider the case $\rank_s(Q_0) \geq \rkq$. If $Q_0$, $P_0=T_0$ and $U$ do not satisfy \autoref{assume:special-case} then 
there is  a nonzero linear combination  $\alpha Q_0+ \beta T_0$ such that $\rank_s(\alpha Q_0+ \beta T_0) \leq 2 \cdot \dimV + 20$. In this case we let $U = U + \MS(\alpha Q_0+ \beta T_0)$. Clearly, $\dim(U) \leq \dimV + 2 \cdot \dimV + 20 = 320$, and $\rank_s(Q_0)\geq \rkq > 2\cdot \dim(U) +20$. It follows that  $Q_0$, $P_0=0$ and $U$  satisfy \autoref{assume:special-case} and the  conditions of  \autoref{prop:P,Q_0,C[V],rank1} hold. 

If $\rank_s(Q_0) < \rkq$ then $\MS(Q_0)\subseteq U$. If $\rank_s(T_0) \leq 2 \cdot \dim(U)  +20$ then add $\MS(T_0)$ to $U$ and   the  conditions of  \autoref{prop:P,Q_0,C[V],rank1} hold for $Q_0=P_0=0$ and $U$. If  $\rank_s(T_0) > 2 \cdot \dim(U)  +20$ then we take $Q_0=0$ and $P_0=T_0$.

Consequently, $\dim(\cup_{j \in [3]}\cT_j) = O(1)$, as claimed.
%Again, in each case we can get $Q_0$ and $P_0$ such that the conditions of  \autoref{assume:special-case} hold.
%
%Thus, in all possible case we get polynomials $Q_0$ and $P_0$ and a subspace $U$ of dimension $O(1)$ such that  the conditions of \autoref{prop:P,Q_0,C[V],rank1} are satisfied. 
\end{proof}

We now turn our attention to proving \autoref{lem:noV-main}.
Similarly to the case where the special subspace $V$ exists, we prove the desired structure on one set $\cQ_i$ at a time. 
We start by proving \autoref{lem:noV-main} for $\cQ_2$.

\begin{lemma}\label{lem:Q_2-strac-noV-gen}
	Let $\cup_{j\in[3]} \cT_j$, $\cQ_2^2$ and $\cQ_3^1$ be as in \autoref{lem:noV-main}. Then, the statement of \autoref{lem:noV-main} holds for every $B\in \cQ_2$. Furthermore, in this case we have $\dim(\tilde{U}) < \dimV / 2$. 
%	 Then, there is a polynomial $B_0\in \cup_{j\in[3]} \cT_j \cup \{0\}$ and a linear space of linear forms, $\tilde{U}$ such that $\dim(\tilde{U}) \leq \dimV / 2$ and when we set
%	$U= V_{\MS}(\tilde{U},Q_0)$ every $B_i\in \cQ_2$ satisfies one of the following conditions:
%	\begin{enumerate}
%		\item $B_i = a_ib_i + B'_i$ where $B'_i \in \spn{Q_0,B_0,\C[V]_2}$, and $a_i,b_i$ are linear forms.
%		\item $\rank_s(Q_0) \leq \rkq$ and $B_i \in \ideal{V}$.
%	\end{enumerate}
\end{lemma}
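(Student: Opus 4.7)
Since $\cQ_2^2 \neq \emptyset$ by hypothesis, pick any $B^*_0 \in \cQ_2^2$ and write $B^*_0 = \gamma^* Q_0 + L^*$ with $\rank_s(L^*) \geq 2$. I take $T_0 := B^*_0$; with this choice form~\ref{case:general-Q} of \autoref{lem:noV-main} reads $B = ab + \lambda Q_0 + \mu L^* + N$ with $N \in \C[U]_2$. Every $B \in \cQ_2^1$ already satisfies this trivially (with $\mu = N = 0$), so the real content concerns $B \in \cQ_2^2$.

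The plan is to build $\tilde U$ by analysing, for each $B \in \cQ_2$, the pairs $(B, C_i)$ where $C_i = Q_0 + a_ib_i$ ranges over $\cQ_3^1$, a set of size $\geq (1 - 2\delta) m_3$ by~\eqref{eq:size-Q3}. Applying \autoref{thm:structure} distinguishes three cases. Case~\ref{case:span} produces $A_i \in \cT_1$ with $A_i = \alpha_i B + \beta_i C_i$, and substituting the structure of $A_i$ and $C_i$ presents $B$ as $\lambda_i Q_0 + R_i$ with $\rank_s(R_i) \leq 2$. Case~\ref{case:rk1} gives $\alpha_i B + \beta_i C_i = c_i d_i$, whence $\alpha_i B + \beta_i Q_0 = c_i d_i - \beta_i a_i b_i$ of rank $\leq 3$; if $\alpha_i = 0$ for many $i$ we conclude $\rank_s(Q_0) \leq 3$, while if $\alpha_i \neq 0$ we again obtain $B = \lambda_i Q_0 + R_i$ with $\rank_s(R_i) \leq 2$. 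Case~\ref{case:2} forces $\rank_s(Q_0) \leq 3$ directly, placing $B \in \ideal{U}$ and matching form~\ref{case:low-rank-Q}.

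Assuming $\rank_s(Q_0) \geq \rkq$ (the small-rank regime being handled by form~\ref{case:low-rank-Q} once $\MS(Q_0)$ is absorbed into $U$ via $V_{\MS}$), consistency of the representation $B = \lambda_i Q_0 + R_i$ across many $i$ forces $\lambda_i$ to be a constant $\lambda$ and $R_i$ to be a fixed rank-$\leq 2$ quadratic $R$. The same analysis applied to the pair $(B^*_0, C_i)$ shows that $L^*$, modulo a bounded-rank correction, is similarly a fixed rank-$\leq 2$ quadratic. Applying \autoref{cla:lin-rank-r-U} with reference quadratics $Q_0, L^*$ and $r = 2$ produces a universal $O(1)$-dimensional subspace $V^*$ of linear forms such that, for every $B \in \cQ_2$, the rank-$\leq 2$ difference $R - \mu L^*$ decomposes as a rank-$\leq 1$ piece $ab$ plus a quadratic supported on $V^*$. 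Setting $\tilde U := V^*$ yields the desired decomposition $B = \lambda Q_0 + \mu L^* + ab + N$ with $N \in \C[\tilde U]_2$, matching form~\ref{case:general-Q}.

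The main obstacle is ensuring $\dim(\tilde U) < \dimV / 2$ uniformly over $\cQ_2$. The subspace $V^*$ produced by \autoref{cla:lin-rank-r-U} has dimension at most $8r = 16$, comfortably within $\dimV / 2 = 50$. The hypothesis of \autoref{lem:noV-main} acts as a safety net: were the iterative construction to require any further enlargement of $\tilde U$, then setting $\tilde V := \tilde U$ and $V = V_{\MS}(\tilde V, Q_0)$, the assembled rank-$\leq 2$ identities would trap more than $0.8 m_3$ polynomials $C_i \in \cQ_3^1$ inside $\spn{Q_0, \C[V]_2}$ (since $C_i \in \spn{Q_0, \C[V]_2}$ iff $a_i, b_i \in V$), contradicting the hypothesis. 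Hence $\dim(\tilde U) < \dimV / 2$ at termination, and every $B \in \cQ_2$ takes one of the two required forms, completing the proof.
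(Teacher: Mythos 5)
There is a genuine gap at the heart of your argument: the step that actually produces the representation $B=\lambda Q_0+\mu L^*+ab+N$ for a given $B\in\cQ_2^2$ is never established. You write that Case~\ref{case:span} of \autoref{thm:structure} gives $A_i=\alpha_iB+\beta_iC_i$ and that ``substituting the structure of $A_i$'' yields $B=\lambda_iQ_0+R_i$ with $\rank_s(R_i)\le 2$. But at this stage no structure for polynomials in $\cQ_1$ is available --- that is \autoref{cla:Q_1-struc-noV}, which is proved \emph{after} and \emph{using} the present lemma. If $A_i\in\cQ_1^2$, i.e.\ $A_i=\gamma Q_0+L$ with $L$ of possibly large rank, your conclusion $\rank_s(R_i)\le 2$ fails, and the whole chain collapses. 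The paper's proof circumvents exactly this circularity by a double-counting argument: it builds the sets $\cA^{B_0}$ and $\cA^{B}$ of $\cQ_1$-polynomials arising from pairs $(B_0,C)$ and $(B,C)$, uses $m_1\le m_3$ and $|\cQ_3^1|\ge(1-2\delta)m_3$ to force $\cA^{B}\cap\cA^{B_0}\neq\emptyset$, and extracts from the two representations of a common $A$ an identity $B=\alpha_1B_0+(\beta_1-\beta_2)Q_0+\beta_1a_1b_1-\beta_2a_2b_2$ involving only four rank-one terms. Nothing in your proposal replaces this mechanism.

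Two further points. First, your invocation of \autoref{cla:lin-rank-r-U} does not do what you claim: that claim bounds the minimal space of \emph{low-rank linear combinations of two fixed quadratics} (modulo $\C[U]_2$); it does not decompose an arbitrary rank-$2$ quadratic $R-\mu L^*$ into a rank-one part plus a part over a fixed $O(1)$-dimensional space. The assertion that ``consistency across many $i$ forces $\lambda_i$ constant and $R_i$ fixed'' is likewise unsubstantiated and is not what is needed. Second, your treatment of Case~\ref{case:2} is wrong: from $B,C_i\in\ideal{c,d}$ one gets $c,d\in\MS(C_i)\subseteq\MS(Q_0)+\spn{a_i,b_i}$, and neither $a_i,b_i$ nor the cofactors of $c,d$ in $B$ need lie in $U$, so $B\in\ideal{U}$ does not follow; the paper instead uses this case (via \autoref{cla: Q_0,L,2}) to show the corresponding $C_i$ lands in $\spn{Q_0,\C[V]_2}$, feeding the counting contradiction. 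Finally, you also omit the case split on whether some $B_0\in\cQ_2^2$ has all its nonzero combinations with $Q_0$ of rank $\ge 3$; when no such $B_0$ exists, the paper must run an iterative process (adding $\MS(L_i)$ for several $B_i$ and proving, via \autoref{cla:Bi-B}, that each step covers $(0.2-2\delta)m_3$ new polynomials of $\cQ_1$, so at most five steps occur) to get $\dim(\tilde U)\le 20<\dimV/2$; your ``safety net'' paragraph gestures at this but contains no actual counting.
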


\begin{proof}

We split this proof into two cases. In  the first case we assume that there is a polynomial $B_0 \in \cQ_2^2$ such that every linear combination of $B_0$ and $Q_0$ is of $\rank_s$ strictly greater than $2$. The second case is when no such $B_0$ exist.

\begin{claim}\label{cla:Q_2-strac-noV-rk3}
	Consider the setting of \autoref{lem:Q_2-strac-noV-gen}. Assume further that there is $B_0\in \cQ_2^2$ such that every nonzero linear combination of $B_0$ and $Q_0$ is of $\rank_s\geq 3$.
	
	Then, the statement of \autoref{lem:noV-main} is true for every $B\in\cQ_2$ with $T_0=B_0$. Furthermore, in this case we have $\dim(\tilde{U}) = 0$. 
%	
%	 Then there is a linear space of linear forms, $\tilde{U_1}$ such that $\dim(\tilde{U_1}) \leq \dimV / 2$ and when we set $U_1= V_{\MS}(\tilde{U_1},Q_0)$
%	Then for every $B_i\in \cQ_2$ on of the following holds:
%	\begin{enumerate}
%		\item $B_i = a_ib_i + B'_i$ where $B'_i \in \spn{Q_0,B_0,\C[U_1]_2}$, and $a_i,b_i$ are linear forms.
%		\item $\rank_s(Q_0) \leq \rkq$ and $B_i \in \ideal{U_1}$.
%	\end{enumerate}
\end{claim}

\begin{claim}\label{cla:Q_2-strac-noV-rk2}
	Consider the setting of \autoref{lem:Q_2-strac-noV-gen}. Assume further that $\cQ_2^2\neq \emptyset$ and for every $B_i\in \cQ_2^2$ there exists a linear combination of $B_i$ and $Q_0$  of $\rank_s$ exactly $2$.
	
	Then, the statement of \autoref{lem:noV-main} is true for every $B\in\cQ_2$ with $T_0=0$. Furthermore, in this case we have $\dim(\tilde{U}) \leq \dimV / 2$.
%	
%	Finally assume that $\cQ_2^2\neq \emptyset$ and for every $B_i\in \cQ_2^2$ there exists a linear combination of $B_i$ and $Q_0$  of $\rank_s=2$.
%	Then there is a linear space of linear forms, $\tilde{V}$ such that $\dim(\tilde{V}) \leq \dimV/ 2$ and when we set $V= V_{\MS}(\tilde{V},Q_0)$
%	Then for every $B_i\in \cQ_2$ on of the following holds:
%	\begin{enumerate}
%		\item $B_i = a_ib_i + B'_i$ where $B'_i \in \spn{Q_0, \C[V]_2}$, and $a_i,b_i$ are linear forms.
%		\item $\rank_s(Q_0) \leq \rkq$ and $B_i \in \ideal{V}$.
%	\end{enumerate}
\end{claim}

\autoref{lem:Q_2-strac-noV-gen} clearly follows from the two claims above.
%\begin{proof}[Proof of \autoref{lem:Q_2-strac-noV-gen}]
%	As we assume that $\cQ_2^2\neq \emptyset$ it holds that $\cup_{j \in [3]}\cT_j$ either satisfy the conditions of $\autoref{cla:Q_2-strac-noV-rk3}$ and thus the claim holds with the corresponding $B_0$ and $\tilde{V}$. Or, $\cup_{j \in [3]}\cT_j$ satisfy the conditions of $\autoref{cla:Q_2-strac-noV-rk2}$  thus the claim holds with $B_0 = 0$ and the corresponding $\tilde{V}$.
%	
\end{proof}

We next prove Claims~\ref{cla:Q_2-strac-noV-rk3} and \ref{cla:Q_2-strac-noV-rk2}. The following notation will be used throughout the rest of this section. 
For a polynomial $P\in \cQ_2^2\cup\cQ_1^2$ we define the following partition of the polynomials in $\cQ_3^1$, where $k\in \{1,2\}$ is such that $P\not \in \cQ_k$: 
	\begin{align}\label{eq:defI^P}
		\cI_1^P &=\left\lbrace C_j \in \cQ_3^1 \;\middle|\;
	\begin{tabular}{@{}l@{}}
	$C_j,P$ satisfy \autoref{thm:structure}\ref{case:rk1}, or\\
	$C_j,P$ span a polynomial in $\calL_k$, or\\
	$C_j,P$ satisfy \autoref{thm:structure}\ref{case:span} and  there is $C_j\neq C_t \in \cQ_3^1$  \\such that  $C_t\in \spn {C_j,P}$
	\end{tabular}
	\right\rbrace \;,%\nonumber
	\\
	\cI_2^P &=\left\lbrace C_j \in \cQ_3^1 \;\middle|\;
	\begin{tabular}{@{}l@{}}
	$C_j,P$ satisfy \autoref{thm:structure}\ref{case:span}, and \\ there is no reducible polynomial in $\spn{C_j,P}$,  and\\ there is no $C_j\neq C_t \in \cQ_3^1$  such that  $C_t\in \spn {C_j,P}$
		\end{tabular}
	\right\rbrace \;, \nonumber
	\\
	\cI_3^P &=\left\lbrace C_j \in \cQ_3^1 \;\middle|\;
	\begin{tabular}{@{}l@{}}
	$C_j,P$ satisfy \autoref{thm:structure}\ref{case:2}
	\nonumber
	\end{tabular}
	\right\rbrace\;. 
	\end{align}
	It is clear that for every $P\in\cQ_2\cup\cQ_3$ it holds that $\cQ_3^1 = \cI_1^{P} \cup \cI_2^{P}\cup \cI_3^{P}$. 

\begin{proof}[Proof of \autoref{cla:Q_2-strac-noV-rk3}]
	To prove the claim we shall assume for a contradiction that there is a polynomial $B\in \cQ_2$ that does not satisfy it.  We shall construct a linear space of linear forms, $\tilde{V}$, such that $\dim(\tilde{V}) \leq \dimV$ and $V = V_\MS(\tilde{V}, Q_0)$ satisfies that $|\cQ_3^1 \cap \spn{Q_0,\C[V]_2}| > 0.8 m_3$ in contradiction to the assumption of \autoref{lem:noV-main}.

		Let $\tilde{V}_1=\{\vec{0}\}$ and set $V_1 = V_{\MS}(\tilde{V}_1 ,Q_0)$. From the assumption in \autoref{lem:noV-main}, we have that $|\cQ_3^1 \cap \spn{Q_0,C[V_1]_2}| \leq 0.8 m_3$. 
	
	If every $B\in \cQ_2$ satisfies $B \in \cQ_2^1$ or, $\rank_s(Q_0) < \rkq$ and $B \in \ideal{V_1}$, then \autoref{lem:noV-main} trivially holds. Thus, from now on consider only  $B\in \cQ_2^2$ that does not satisfy the claim. We show that the existence of this $B$ leads to a contradiction.
	
	Consider the partition defined in Equation~\eqref{eq:defI^P} for $B$. We first analyze $\cI_1^{B} \cup \cI_3^B$ before proving the claim for $B$.
	%\autoref{obs:case3} implies that if $\rank_s(Q_0) \geq \rkq$ then $\cI_3^B = \emptyset$. 
If $\cI_1^{B} \cup \cI_3^B \neq \emptyset$ then there is a polynomial $L \in \spn{Q_0,B}$ such that $\rank_s (L) \leq 2$. Since $B$ does not satisfy the claim it must holds that $\rank_s (L) = 2$. 
Set $\tilde{V}_2= \tilde{V}_1+\MS(L)$ and set  $V_2 = V_\MS(\tilde{V}_2,Q_0)$.

\begin{claim}\label{cla:I1I3-CV}
	We have that \[\cI_1^{B} \cup \cI_3^B \subset \spn{Q_0, \C[V_2]_2}\;.\]
\end{claim}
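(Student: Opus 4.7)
The plan is to fix $C_j = Q_0 + a_j b_j$ in $\cI_1^B \cup \cI_3^B$ and prove $a_j, b_j \in V_2$; a short rank argument (using $\rank_s(Q_0) \geq 3$, which follows from the hypothesis of \autoref{cla:Q_2-strac-noV-rk3} that every nonzero combination of $B_0, Q_0$ has $\rank_s \geq 3$) shows this is equivalent to $C_j \in \spn{Q_0, \C[V_2]_2}$: when $\rank_s(Q_0) < \rkq$ we have $\MS(Q_0) \subseteq V_2$ so $Q_0 \in \C[V_2]_2$, while when $\rank_s(Q_0) \geq \rkq$ a rank comparison forces the coefficient of $Q_0$ in any such decomposition to equal $1$. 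Write $L = \mu Q_0 + \nu B$; since $\rank_s(L) = 2 < 3 \leq \rank_s(Q_0)$ we get $\nu \neq 0$ and may substitute $B = (L - \mu Q_0)/\nu$ throughout.

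For $C_j \in \cI_3^B$, the inclusion $C_j, B \in \ideal{c, d}$ gives $\rank_s(C_j) \leq 2$, which combined with $\rank_s(Q_0) \geq 3$ forces $\rank_s(Q_0) = 3$; from any minimal rank-$2$ representation $C_j = f_1 g_1 + f_2 g_2$, the expression $Q_0 = f_1 g_1 + f_2 g_2 - a_j b_j$ is a minimal rank-$3$ representation of $Q_0$, so \autoref{cla:irr-quad-span} yields $\spn{f_1, g_1, f_2, g_2, a_j, b_j} = \MS(Q_0) \subseteq V_2$, and in particular $a_j, b_j \in V_2$.

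For $C_j \in \cI_1^B$ in subcase (a), the identity $\alpha C_j + \beta B = cd$ with $\alpha, \beta \neq 0$ yields, after substitution, the key equation
\[cd - \alpha a_j b_j \;=\; (\alpha - \beta\mu/\nu)\, Q_0 + (\beta/\nu)\, L\;.\]
I would dichotomize on $\rank_s(cd - \alpha a_j b_j) \in \{0,1,2\}$. Rank $2$ is the main case: when $\rank_s(Q_0) \geq \rkq$, the LHS having rank $\leq 2 \ll \rkq$ forces $\alpha - \beta\mu/\nu = 0$, reducing to $cd - \alpha a_j b_j = (\beta/\nu) L$ whose $\MS$ equals $\MS(L) = V_2$; when $\rank_s(Q_0) < \rkq$, both $\MS(Q_0)$ and $\MS(L)$ sit inside $V_2$ so the RHS is directly in $\C[V_2]_2$. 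In either situation, \autoref{cla:irr-quad-span} applied to the two minimal rank-$2$ representations (one in $c, d, a_j, b_j$ and one in $V_2$-linear forms) yields $\spn{c, d, a_j, b_j} \subseteq V_2$. Rank $0$ is ruled out: $(\alpha - \beta\mu/\nu) Q_0 + (\beta/\nu) L = 0$ gives either $L = 0$ or $Q_0 \sim L$, both contradicting $\rank_s(L) = 2 < \rank_s(Q_0)$. Rank $1$, with $cd - \alpha a_j b_j = gh$, substitutes back into $L = \mu Q_0 + \nu B$ to produce the crucial collapse $gh = \alpha Q_0 + \beta B$; rewriting $B = (gh)/\beta + (-\alpha/\beta) Q_0$ exhibits $B$ in the form $ab + Q'$ with $Q' \in \spn{Q_0} \subseteq \spn{Q_0, B_0, \C[\MS(Q_0)]_2}$, so $B$ in fact satisfies the conclusion of \autoref{cla:Q_2-strac-noV-rk3}, contradicting our assumption that it does not. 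Subcase (b), with $cd = a^2$, is identical. For subcase (c), substituting $C_t = \alpha C_j + \beta B \in \cQ_3^1 \setminus \{C_j\}$ with $C_t = Q_0 + a_t b_t$ yields
\[a_t b_t - \alpha a_j b_j \;=\; -(1 - \alpha + \beta\mu/\nu)\, Q_0 + (\beta/\nu)\, L\;,\]
and the hypothesis $B \in \cQ_2^2$ (hence $B \notin \cQ_2^1$) forces $\rank_s(a_t b_t - \alpha a_j b_j) = 2$ exactly, so the same rank-$2$ argument gives $\spn{a_j, b_j, a_t, b_t} \subseteq V_2$.

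The main obstacle will be isolating the collapse $gh = \alpha Q_0 + \beta B$ in the rank-$1$ subcase of (a); once that identity is in hand, the hypothesis that $B$ fails the conclusion of \autoref{cla:Q_2-strac-noV-rk3} closes the case as an outright contradiction rather than forcing a further delicate structural analysis on $c, d, g, h, a_j, b_j$.
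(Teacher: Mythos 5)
Your proposal is correct and reaches the same intermediate goal as the paper ($a_j,b_j\in V_2$ for every $C_j=Q_0+a_jb_j$ in $\cI_1^B\cup\cI_3^B$), but by a noticeably more hands-on route. For $\cI_1^B$ the paper compresses the argument into a single appeal to \autoref{cla:lin-comb-V} (giving $\MS(a_jb_j+ab)\subseteq V_2$) and leaves the passage from that containment to $a_j,b_j\in V_2$ implicit; your rank-$\{0,1,2\}$ trichotomy makes that step explicit, with rank $0$ killed by linear independence of $Q_0,L$, rank $1$ killed by exhibiting $B$ as a $\cQ_2^1$ element or as reducible, and rank $2$ handled via the uniqueness of the minimal space. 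This is the same mechanism the paper is relying on, just unpacked. The genuine divergence is in $\cI_3^B$: the paper applies \autoref{obs:case3} to rule out the case $\rank_s(Q_0)\geq\rkq$ and then invokes \autoref{cla: Q_0,L,2} (with $B$ playing the role of $L$), using the irreducibility dichotomy inside that claim; you instead observe that $C_j\in\ideal{c,d}$ forces $\rank_s(C_j)=2$, hence $\rank_s(Q_0)\leq 3$, and combined with the standing hypothesis $\rank_s(Q_0)\geq 3$ this pins $\rank_s(Q_0)=3$ exactly, so $f_1g_1+f_2g_2-a_jb_j$ is a minimal representation and \autoref{cla:irr-quad-span} directly gives $a_j,b_j\in\MS(Q_0)\subseteq V_2$. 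Your argument is shorter and avoids the branching inside \autoref{cla: Q_0,L,2}, but it leans harder on the $B_0$ hypothesis ($\rank_s(Q_0)\geq 3$), so it would not port verbatim to \autoref{cla:Q_2-strac-noV-rk2}, where the paper reuses this claim without that hypothesis; the paper's route via \autoref{cla: Q_0,L,2} is more robust there since it only needs $Q_0$ irreducible.
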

\begin{proof}
Let $C_j= Q_0 + a_j b_j \in \cI_1^B$ (recall \autoref{eq:defQ^1}). By definition of $\cI_1^B$ it follows that there is a nontrivial linear combination of $B$ and $Q_0$ that is equal to $a_j b_j +ab$ for some linear forms $a$ and $b$. From \autoref{cla:lin-comb-V} it follows that $a_j,b_j \in V_2$ and hence $C_j\in  \spn{Q_0, \C[V_2]_2}$.

We now turn to  $\cI_3^B$. \autoref{obs:case3} implies that if $\rank_s(Q_0) \geq \rkq$ then $\cI_3^B = \emptyset$ so we only have to consider the case where  $\rank_s(Q_0)<\rkq$. In particular, $\MS(Q_0) \subseteq V_1$. Let $C_j \in \cI_3^B$.  Let $c$ and $d$ be linear forms such that $B, C_j \in \ideal{c,d}$. Thus, $c,d \in \MS(B)\subseteq V_2$. The conclusion of \autoref{cla: Q_0,L,2} (when applied to $Q_0$, $C_j=Q_0 + a_j b_j$, $B$, $c$ and $d$) combined with our assumption that  $B \notin \ideal{V_1}$ implies that $a_j,b_j \in \MS(Q_0)+ \MS(B) \subseteq V_2$.
\end{proof}
	
	By our construction we have that $\dim(\tilde{V}_2)\leq 4 <\dimV$. As we just proved that $\cI_1^B\cup \cI_3^B\subset \spn{Q_0,\C[V_2]_2}$, the assumption in \autoref{lem:noV-main} implies that  $n_1 \eqdef |\cI_1^B+ \cI_3^B| \leq 0.8 m_3$. Denote  $n_2 \eqdef |\cI_2^B| = |\cQ_3^1| - n_1 $. 
	
	Our assumption that every nonzero linear combination of $Q_0$ and $B_0$ is of $\rank_s\geq 3$ imply that $B_0$ can only satisfy Case~\ref{case:span} of \autoref{thm:structure} with polynomials in $\cQ_3^1$. 
	For every $C_k\in\cQ_3^1$ pick exactly one polynomial $A_k\in\spn{ B_0,C_k}\cap \cQ_1$ and let $\cA^{B_0}\subset \cQ_1$ be the set containing all these polynomials.  Note that by our assumptions, each such $A_k$ can be associated with at most one such $C_k$ (as otherwise $B_0$ and $Q_0$ will have a nonzero linear combination whose $\rank_s$ is at most $2$).
	Hence, there is a natural one-to-one correspondence $\pi_{B_0}: \cQ_3^1\rightarrow \cA^{B_0}$.
	As $|\cQ_3^1|\geq (1-2\delta)m_3 $ (recall Equation~\eqref{eq:size-Q3}) we get  that $|\cA^{B_0}|\geq  (1-2\delta)m_3$.

	 	A similar argument shows  that there are at least $n_2$ polynomials  in $\cQ_1$ that are of the form $A_j = \alpha_j B + \beta_j C_j$ for some $C_j \in \cI_2^B$. We similarly define the set  $\cA^{B}$ containing such polynomials, where for every $C_j\in\cI_2^B$ we pick exactly one polynomial $A_j\in\spn{ B,C_j}$. The definition of the set $\cI_2^B$ implies that each $A_j$ is associated with at most one $C_j$. Let  $\pi_{B}: \cI_2^B\rightarrow \cA^B$ be the natural one-to-one correspondence between the sets.
	 	  
	 	As $m_1\leq m_3$ we get that $|\cA^B \cap \cA^{B_0}| \geq n_2-2\delta m_3\geq (0.2-4\delta)m_3>0 $. In particular, $\cA^B \cap \cA^{B_0}\neq \emptyset$.

	Let $\tilde{V}_3'$ be the linear space guaranteed by \autoref{cla:lin-rank-r} for $B_0$, $Q_0$ and $r=4$. Recall that $\tilde{V}_3'$ has the property that if $P\in\spn{B_0,Q_0}$ is such that  $\rank_s(P)\leq 4$ then $\MS(P)\subseteq \tilde{V}_3'$.
	Set 	$\tilde{V}_3= \tilde{V}_2 + \tilde{V}_3'$ and $V_3 = V_\MS(\tilde{V}_3,Q_0)$. From \autoref{cla:lin-rank-r}  we get that $\dim(\tilde{V}_3') \leq 32$, and thus $\dim(\tilde{V}_3) \leq 36 <\dimV$.

	Let $A\in \cA^B \cap \cA^{B_0}$. Without loss of generality we can assume, from the definition of $\cA^B \cap \cA^{B_0}$, that $A$ can be represented as both $A = \alpha_1 B_0 + \beta_1 C_1 $ and $A = B + \beta_2 C_2$. Therefore: 
\begin{equation}\label{eq:B1}
	B = A- \beta_2 C_2=\alpha_1 B_0 + \beta_1 C_1 -\beta_2 C_2 = \alpha_1 B_0 + (\beta_1-\beta_2)Q_0 + \beta_1 a_1b_1 - \beta_2 a_2b_2.
\end{equation}	
	If $\rank_s(\beta_1 a_1b_1 - \beta_2 a_2b_2) = 1$ then the claim holds for $B$. Thus, assume that $\rank_s(\beta_1 a_1b_1- \beta_2 a_2b_2)=2$. 
	Let $\tilde{V}_4 = \tilde{V}_3 + \spn{a_1,b_1,a_2,b_2}$ and $V_4 = V_\MS(\tilde{V}_4,Q_0)$.
	\iffalse 
	We now add a few linear forms to $\tilde{V}_1$ that was defined at the beginning of the proof. 
	%An important point to keep in mind is that we only do it once and not for every $B\in \cQ_2^2$. 

	Let $\tilde{V}'_1 = \tilde{V}_1 + \spn{a_1,b_1,a_2,b_2}$. We now repeat the entire argument we had so far with $\tilde{V}'_1$ instead of $\tilde{V}_1$. Since the definition of $\tilde{V}'_1$ does not affect the question of whether $B$ satisfies the claim or not, we can repeat the entire argument. We  construct the respective spaces $V'_2$ and $V'_3$. A simple inspection shows that the dimension of each of these spaces increased by at most $4$. In particular, $V'_3=V_\MS(\tilde{V}'_1+\MS(L')+\tilde{V}'_3) $ and $\dim(\tilde{V}'_1 + \MS(L') + \tilde{V}'_3)\leq 36+4 < \dimV$.\footnote{Observe that $ \tilde{V}'_3= \tilde{V}_3$.} As before we construct the set $\cA^{B}$ and the one-to-one map $\pi_{B}:\cI_2^{B}\to \cA^{B}$.
\fi	
We are now done with  preparations and  ready to prove that $B$ does satisfy the claim. Specifically, we prove  that if $B$ is not of the form $\alpha B_0 + \beta Q_0 + \tilde{B}$, for some polynomial $\tilde{B}$ such that $\rank_s(\tilde{B})= 1$, then at least $n_2-2\delta m_3$ of the polynomials in $\cI_2^{B}$ (those in $\pi_{B}^{-1}(\cA^{B} \cap \cA^{B_0})$) belong to $\spn{Q_0,\C[V_4]_2}$. This  implies that 
\begin{align*}
|\spn{Q_0,\C[V_4]_2}\cap \cQ_3^1| &\geq |\cI_1^{B}\cup\cI_3^{B}|+|\pi_{B}^{-1}(\cA^{B} \cap \cA^{B_0})|\\ &\geq  n_1 + n_2-2\delta m_3 \geq (1-4\delta)m_3 >0.8m_3
\end{align*}
in contradiction to the assumption in \autoref{lem:noV-main} that such $\tilde{V}_4$ does not exist.%\footnote{This explain why we do not change $\tilde{V}_1$ over and over again.}
 
Let $A_j \in \cA^{B} \cap \cA^{B_0}$. As before, there is $C_k =Q_0 + a_k b_k \in \cI_2^{B}$ and $C_t=Q_0 + a_t b_t \in \cQ_3^1$ such that $A_j=\alpha_t B_0 + \beta_t C_t = B + \beta_k C_k$. Hence,
\begin{equation}\label{eq:B2}
B = \alpha_t B_0 + (\beta_t-\beta_k)Q_0 + \beta_t a_tb_t - \beta_k a_kb_k \;.
\end{equation}
As before, we can assume that $\rank_s(\beta_t a_tb_t - \beta_k a_kb_k)=2$.  Combining Equations~\eqref{eq:B1} and~\eqref{eq:B2} we get
	\[\alpha_1 B_0 + (\beta_1-\beta_2)Q_0 + \beta_1 a_1b_1 - \beta_2 a_2b_2 = B = \alpha_t B_0+ (\beta_t-\beta_k)Q_0 + \beta_t a_tb_t - \beta_k a_kb_k.\]
	Hence, 
	\[(\alpha_1-\alpha_t) B_0 + (\beta_1-\beta_2 -\beta_t+\beta_k)Q_0 = \beta_t a_tb_t - \beta_k a_kb_k -\beta_1 a_1b_1 + \beta_2 a_2b_2 .\]
	Since 
	\[\rank_s((\alpha_1-\alpha_t) B_0 + (\beta_1-\beta_2 -\beta_t+\beta_k)Q_0)=\rank_s( \beta_t a_tb_t - \beta_k a_kb_k -\beta_1 a_1b_1 + \beta_2 a_2b_2) \leq 4\;,
	\] 
	\autoref{cla:lin-rank-r} implies that $\MS((\alpha_1-\alpha_t) B_0 + (\beta_1-\beta_2 -\beta_t+\beta_k)Q_0) \subseteq \tilde{V}_3\subseteq V_4$. 
	As $a_1,b_1,a_2,b_2\in \tilde{V}_4$, we get that  
	\[\MS(\beta_t a_tb_t - \beta_k a_kb_k )=\MS((\alpha_1-\alpha_t) B_0 + (\beta_1-\beta_2 -\beta_t+\beta_k)Q_0 + \beta_1 a_1b_1 -\beta_2 a_2b_2)\subseteq V_4\;.
	\] 
	Since $\rank_s(\beta_t a_tb_t - \beta_k a_kb_k) = 2$ we conclude that $a_k,b_k,a_t,b_t  \in V_4$ and in particular that $C_k = Q_0 + a_kb_k \in \spn{Q_0,\C[V_4]_2}$. 
	
	As there are at least $n_2-2\delta m_3$ different such $A_j\in  \cA^B \cap \cA^{B_0}$, we conclude that $|\pi_{B}^{-1}( \cA^B \cap \cA^{B_0})| \geq n_2-2\delta m_3$, which leads to the desired contradiction. This completes the proof of  \autoref{cla:Q_2-strac-noV-rk3}.
\end{proof}

%\begin{claim}\label{cla:Q_2-strac-noV-rk2}
%	Let $\cup_{j\in[3]} \cT_j$ satisfy the conditions of \autoref{def:Q-dom-col}. Assume further that there for every linear space of linear forms $\tilde{U}$ such that $\dim(\tilde{U})\leq \dimV$, when we set $U= V_{\MS}(\tilde{U},Q_0)$
%	it holds that $|\cQ_3^1\cap \spn{Q_0,\C[U]_2}| < 0.8 m_3$.  Finally assume that $\cQ_2^2\neq \emptyset$ and for every $B_i\in \cQ_2^2$ there exists a linear combination of $B_i$ and $Q_0$  of $\rank_s=2$.
%	Then there is a linear space of linear forms, $\tilde{V}$ such that $\dim(\tilde{V}) \leq \dimV/ 2$ and when we set $V= V_{\MS}(\tilde{V},Q_0)$
%	Then for every $B_i\in \cQ_2$ on of the following holds:
%	\begin{enumerate}
%		\item $B_i = a_ib_i + B'_i$ where $B'_i \in \spn{Q_0, \C[V]_2}$, and $a_i,b_i$ are linear forms.
%		\item $\rank_s(Q_0) \leq \rkq$ and $B_i \in \ideal{V}$.
%	\end{enumerate}
%\end{claim}

We next handle the case that such $B_0$ does not exist.

\begin{proof}[Proof of \autoref{cla:Q_2-strac-noV-rk2}]
	 Set $\tilde{V}_0 = \spn{\vec{0}}$ and $V_0 = V_{\MS}(\tilde{V}_0,Q_0)$. The claim clearly holds for polynomials in $\cQ_2^1\cup \ideal{V_0}$. 
	 
We now describe an iterative process for constructing a linear space $\tilde{U}$ that will satisfy the requirements of the claim. 

Let $B_1 \in \cQ_2^2\setminus \ideal{V_0}$. By definition of $\cQ_2^2$ and the assumption in the claim we have $B_1=\gamma_1 Q_0 + L_1$, where $\rank_s(L)=2$. 
Set $\tilde{V}_1 = \MS(L_1)$ and ${V}_1 = V_0 + \tilde{V}_1$. As in the proof of \autoref{cla:Q_2-strac-noV-rk3}, we consider the partition $\cQ_3^1 = \cI_1^{B_1} \cup \cI_2^{B_1}\cup \cI_3^{B_1}$and conclude that $\cI_1^{B_1} \cup \cI_3^{B_1} \subset \spn{Q_0, \C[V_1]_2}$.
Consequently, $n_1^{B_1} \eqdef |\cI_1^{B_1} \cup \cI_3^{B_1} | \leq 0.8 m_3$ and  $n^{B_1}_2 \eqdef |\cI^{B_1}_2| = |\cQ_3^1| - n^{B_1}_1$. As before, the assumption in \autoref{lem:noV-main} implies that $n^{B_1}_2 > (0.2-2\delta) m_3$.	

As in the proof of \autoref{cla:Q_2-strac-noV-rk3}, it is not hard to see that ${B_1}$ satisfies Case~\ref{case:span} of \autoref{thm:structure} with every $C\in \cI^{B_1}_2$. Furthermore, if $A\in \spn{{B_1},C}\cap \spn{{B_1},C'}$, for $C,C'\in \cI^{B_1}_2$, then $C=C'$. Thus, we can find a set $\cA^{B_1}\subseteq \cQ_1$ of size $|\cA^{B_1}|=|\cI^{B_1}_2|$ such that for every $C\in\cI^{B_1}_2$ we have $\spn{{B_1},C}\cap \cA^{B_1}\neq \emptyset$. As before, there is a natural one-to-one correspondence $\pi_{B_1}: \cI^{B_1}_2\rightarrow \cA^{B_1}$. 
 
%We now deviate from the proof of \autoref{cla:Q_2-strac-noV-rk3} and describe an iterative process for constructing polynomials $B_2=\gamma_2 Q_0+L_2,\ldots,B_5=\gamma_5 Q_0+L_5$, such that $\rank_s(L_i)=2$, and the linear spaces $\tilde{V}_2=\MS(L_2), \ldots, \tilde{V}_5=\MS(L_5)$ satisfy that $\tilde{U}=\tilde{V}_1+\ldots +\tilde{V}_5$ is the space sought for in \autoref{lem:noV-main}.

Assume that we already found $B_1,\ldots,B_{i-1}$ and constructed $\tilde{V}_1,\ldots,\tilde{V}_{i-1}$, for $i\geq 2$. Denote $V_{i-1} = V_0+\tilde{V}_1+\ldots+\tilde{V}_{i-1}$. Consider a polynomial, $B_i \in \cQ_2^2\setminus \ideal{V_{0}}$ such that $B_i$ is not of the form $B_i=B'_i + a_ib_i$, where $B'_i \in \spn{Q_0,\C[V_{i-1}]_2}$. If no such $B_i$ exists then $\cQ_2^2$ satisfies \autoref{lem:noV-main} with $\tilde{U}=\tilde{V}_1+\ldots +\tilde{V}_{i-1}$. As we shall soon see, $\dim(\tilde{U})<\dimV/2$. This is clearly the case for $i=2$.

As before  denote $B_i=\gamma_i Q_0 + L_i$ with $\rank_s(L_i)=2$, and set $\tilde{V}_i = \MS(L_i)$ and $V_i=\tilde{V}_i + V_{i-1}$. Consider the partition $\cQ_3^1= \cI^{B_i}_1\cup\cI^{B_i}_2\cup \cI^{B_i}_3$, and, as above, define the set $\cA^{B_i}\subseteq \cQ_1$ and denote by $\pi_{B_i}:\cI_2^{B_i}\rightarrow \cA^{B_i}$ the natural bijection. We again conclude that $n^{B_i}_1 = |\cI_1^{B_i}\cup \cI_3^{B_i}| \leq 0.8 m_3$ and that $n^{B_i}_2 =|\cA^{B_i}|= |\cI^{B_i}_2| = |\cQ_3^1| - n^{B_i}_1 $. 
%	Our next goal is proving that  $\cA^{B_i}\setminus \cA_B > 0.2 m_3$. 
The next claim shows that $\cA^{B_i}$ is far from being contained in $ \cup_{\ell \leq i-1}\cA^{B_\ell}$.
\begin{claim}\label{cla:Bi-B}
	For $i<10$, if the process reached step $i$ then   $\cA^{B_i}\setminus \cup_{\ell \leq i-1}\cA^{B_\ell} \geq (0.2-2\delta) m_3$. 
\end{claim}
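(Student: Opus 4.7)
The plan is to argue by contradiction. Suppose $|\cA^{B_i}\setminus\cup_{\ell\le i-1}\cA^{B_\ell}| < (0.2-2\delta)m_3$, so the overlap $\cA^{B_i}\cap\cup_{\ell<i}\cA^{B_\ell}$ has strictly more than $n_2^{B_i}-(0.2-2\delta)m_3$ elements. Set $\tilde V = \tilde V_1+\cdots+\tilde V_i$; since each $\tilde V_j=\MS(L_j)$ with $\rank_s(L_j)=2$ we have $\dim(\tilde V)\le 4i<40<\dimV$, so $V_i=V_\MS(\tilde V,Q_0)$ is an admissible test space for the hypothesis of \autoref{lem:noV-main}. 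The goal will be to show that $|\cQ_3^1\cap\spn{Q_0,\C[V_i]_2}|$ exceeds $0.8\,m_3$, contradicting that hypothesis.

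First I would reprove the content of \autoref{cla:I1I3-CV} in the present setting to conclude $\cI_1^{B_i}\cup\cI_3^{B_i}\subseteq\spn{Q_0,\C[V_i]_2}$, contributing $n_1^{B_i}$ polynomials right away; this is essentially a transcription of the analogous step in the proof of \autoref{cla:Q_2-strac-noV-rk3}, using that $\MS(L_i)\subseteq V_i$ together with \autoref{cla:lin-comb-V} and (when $\rank_s(Q_0)<\rkq$) \autoref{cla: Q_0,L,2}.

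The heart of the argument will be to show that for every $A\in\cA^{B_i}\cap\cA^{B_\ell}$ with $\ell<i$, the corresponding $C_k=\pi_{B_i}^{-1}(A)$ already lies in $\spn{Q_0,\C[V_i]_2}$. Writing $A=\alpha_i B_i+\beta_i C_k=\alpha_\ell B_\ell+\beta_\ell C_t$ (pairwise linear independence forces all four scalars nonzero) and expanding using $B_i=\gamma_i Q_0+L_i$, $B_\ell=\gamma_\ell Q_0+L_\ell$, $C_k=Q_0+a_k b_k$, $C_t=Q_0+a_t b_t$, will yield
\[
\alpha_i L_i-\alpha_\ell L_\ell-\eta\, Q_0=\beta_\ell a_t b_t-\beta_i a_k b_k
\]
for some scalar $\eta$. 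If the right-hand side had $\rank_s\le 1$, rearranging would exhibit $B_i$ as a scalar multiple of $Q_0$ plus an element of $\C[V_{i-1}]_2$ (namely $(\alpha_\ell/\alpha_i)L_\ell$) plus a rank-one quadratic, contradicting the very reason $B_i$ was chosen at step $i$. Hence the right-hand side has rank exactly $2$, so $\MS(\beta_\ell a_t b_t-\beta_i a_k b_k)=\spn{a_t,b_t,a_k,b_k}$. A brief case split on $\rank_s(Q_0)$ then forces $\beta_\ell a_t b_t-\beta_i a_k b_k\in\C[V_i]_2$: when $\rank_s(Q_0)\ge\rkq$, a crude rank count on the displayed identity forces $\eta=0$ (else $Q_0$ would be expressible as a sum of quadratics of total rank at most six), and the identity becomes $\alpha_i L_i-\alpha_\ell L_\ell=\beta_\ell a_t b_t-\beta_i a_k b_k$, whose left-hand side already lies in $\C[V_i]_2$; when $\rank_s(Q_0)<\rkq$, we have $\MS(Q_0)\subseteq V_0\subseteq V_i$, so $Q_0\in\C[V_i]_2$ and the same conclusion is immediate. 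In both cases $a_k,b_k\in V_i$, hence $C_k\in\spn{Q_0,\C[V_i]_2}$ as desired.

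Combining the two contributions, the set $\spn{Q_0,\C[V_i]_2}\cap\cQ_3^1$ contains the disjoint pieces $\cI_1^{B_i}\cup\cI_3^{B_i}$ and $\pi_{B_i}^{-1}(\cA^{B_i}\cap\cup_{\ell<i}\cA^{B_\ell})$, of sizes at least $n_1^{B_i}$ and strictly more than $n_2^{B_i}-(0.2-2\delta)m_3$ respectively. Their union thus has size greater than $n_1^{B_i}+n_2^{B_i}-(0.2-2\delta)m_3=|\cQ_3^1|-(0.2-2\delta)m_3\ge(1-2\delta)m_3-(0.2-2\delta)m_3=0.8\,m_3$, contradicting the hypothesis of \autoref{lem:noV-main}. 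The main technical obstacle will be the case split on $\rank_s(Q_0)$ together with the careful bookkeeping that every produced linear form really sits inside $V_i$; both follow the same pattern already used in the proof of \autoref{cla:Q_2-strac-noV-rk3} and should not require any genuinely new ideas.
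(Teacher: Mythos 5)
Your proposal is correct and follows essentially the same route as the paper: the same key identity expressing $B_i$ via $B_\ell$, $C_t$, $C_k$, the same rank-$2$ argument forcing $a_t,b_t,a_k,b_k\in V_i$ (with the identical case split on $\rank_s(Q_0)$), and the same counting against the $0.8\,m_3$ threshold of \autoref{lem:noV-main}. The only cosmetic difference is that you phrase the final count as a proof by contradiction, whereas the paper argues directly that $|\cD_1\cup\cI_1^{B_i}\cup\cI_3^{B_i}|\le 0.8\,m_3$ and subtracts from $|\cQ_3^1|\ge(1-2\delta)m_3$.
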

\begin{proof}
	Consider $A_j \in \cA^{B_i}\cap (\cup_{\ell \leq i-1}\cA^{B_\ell})$.
	Assume $A_j\in \cA^{B_\ell}$, for $1\leq \ell\leq i-1$. Denote, without loss of generality, $A_j = \alpha_t {B_\ell} + \beta_t C_t = B_i + \beta_k C_k$ for $C_t\in \cI_2^{B_\ell}$ and $C_k\in \cI_2^{B_i}$. Using the notation of Equation~\eqref{eq:defQ^1} we get
	\[\gamma_i Q_0 + L_i=B_i = \alpha_t {B_\ell} + \beta_t C_t -\beta_k C_k = (\alpha_t +\beta_t - \beta_k)Q_0 +\alpha_t L_\ell + \beta_t a_tb_t -\beta_ka_kb_k.\]
	In particular,  $\rank_s(\beta_t a_tb_t -\alpha_ka_kb_k) = 2$, from the choice of $B_i$ (as $L_\ell\in\C[V_{i-1}]_2$). 
	
	If $\rank_s(Q_0)\geq \rkq$ then $\gamma_i-(\alpha_t +\beta_t - \beta_k) = 0$ and $L_i -\alpha_t L_\ell = \beta_t a_tb_t -\beta_ka_kb_k$, implying  $ a_t,b_t,a_k,b_k \in V_i$.
	
	If $\rank_s(Q_0)\leq \rkq$ then $\MS(Q_0)=V_0\subseteq V_i$ and  it again follows that  $ a_t,b_t,a_k,b_k \in V_i$. Thus, in either case $ a_t,b_t,a_k,b_k \in V_i$.
	
	To conclude, we just proved that every $C\in \cI^{B_i}_2$, such that  $\pi_{B_i}(C) \in \cA^{B_i}\cap(\cup_{\ell \leq i-1}\cA^{B_\ell})$, satisfies $C \in \spn{Q_0,\C[V_i]_2}$. Therefore, 
	 $\cI_1^{B_i} \cup \cI_3^{B_i} \cup \pi_{B_i}^{-1}(\cA^{B_i} \cap \cup_{\ell \leq i-1}\cA^{B_\ell}) \subseteq \spn{Q_0,\C[V_i]_2}$. 	 
	 As $V_i=V_\MS\left(\sum_{j=1}^{i}\tilde{V}_j,Q_0\right)$ and $\dim\left(\sum_{j=1}^{i}\tilde{V}_j \right)\leq 4i <40$, the assumption in  \autoref{lem:noV-main} implies that $|\cI_1^{B_i} \cup \cI_3^{B_i} \cup \pi_{B_i}^{-1}(\cA^{B_i} \cap (\cup_{\ell \leq i-1}\cA^{B_\ell}))|\leq 0.8 m_3$. 
	 
	 Denote $\cD_1 = \pi_{B_i}^{-1}(\cA^{B_i} \cap (\cup_{\ell \leq i-1}\cA^{B_\ell}))$ and $\cD_2 = \pi_{B_i}^{-1}(\cA^{B_i} \setminus (\cup_{\ell \leq i-1}\cA^{B_\ell}))$. Clearly $\cD_1 \cup \cD_2 = \cI^{B_i}_2$ and we have proved that $|\cD_1\cup\cI^{B_i}_1\cup \cI^{B_i}_3| \leq 0.8m_3$. Since $|\cD_2 \cup \cD_1\cup\cI^{B_i}_1\cup \cI^{B_i}_3| = |\cQ_3^1| \geq (1-2\delta)m_3$ we conclude that   
	 $|\cA^{B_i} \setminus \cup_{k \leq i-1}\cA^{B_k}|= |\cD_2|\geq (0.2-2\delta) m_3 $, as claimed.
\end{proof}

	It follows from the claim that by adding $\MS(L_i)$ to $V_{i-1}$ we covered at least $(0.2 -2\delta )m_3$ new polynomials in $\cQ_1$. That is, there are at least $(0.2-2\delta)m_3$ more polynomials in $\cup_{\ell \leq i}\cA^{B_\ell}$
	compared to $\cup_{\ell \leq i-1}\cA^{B_\ell}$ . Therefore, $|\cup_{\ell \leq i}\cA^{B_\ell}| \geq (0.2-2\delta)\cdot i \cdot m_3 \geq (0.2-2\delta)\cdot i \cdot m_1$. This implies that the process can run for at most $5$ steps (as $12\delta < 0.2$).
	
	When the process terminates we get a subspace $\tilde{U}=\left( \sum_{j=1}^{i}\tilde{V}_j\right)$ such that every $B_j$ satisfies   \autoref{lem:noV-main} with $U=V_\MS(\tilde{U},Q_0)$. Furthermore, as each $\dim(\tilde{V}_j) \leq 4$ it holds that $\dim(\tilde{U}) \leq 20 < \dimV/2$, as claimed. 
	\end{proof}

This also completes the proof of \autoref{lem:Q_2-strac-noV-gen}. 
We now prove that \autoref{lem:noV-main} holds for $\cQ_1$.

% We do so under the condition that *** which we later remove.

\begin{claim}\label{cla:Q_1-struc-noV}
	Let $\cup_{j\in[3]} \cT_j$, $\cQ_2^2$ and $\cQ_3^1$ be as in \autoref{lem:noV-main}. Then, the statement of \autoref{lem:noV-main} holds for every $A\in \cQ_1$.	
\end{claim}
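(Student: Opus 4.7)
The plan is to prove \autoref{cla:Q_1-struc-noV} by extending the argument of \autoref{lem:Q_2-strac-noV-gen} to $\cQ_1$, reusing the same polynomial $T_0$ and the same subspace $\tilde{U}$ that were produced when handling $\cQ_2$ (in particular $\dim(\tilde{U})\leq \dimV/2$, leaving room to add at most $\dimV/2$ extra linear forms without violating the constraint $\dim(\tilde{U})\leq \dimV$). The overall strategy is to leverage the already-established structure of $\cQ_2$ to constrain each $A\in \cQ_1$ via \autoref{thm:structure} applied to pairs $(A,Q_0)$ and, when needed, $(A,T_0)$ or $(A,B_j)$ for one of the auxiliary polynomials $B_j$ appearing in the iterative construction of \autoref{cla:Q_2-strac-noV-rk2}.

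First I would dispose of the routine cases. Fix $A\in \cQ_1$. If $\rank_s(Q_0)\leq \rkq$ and $A\in \ideal{U}$, then case~\ref{case:low-rank-Q} of \autoref{lem:noV-main} holds and we are done. Otherwise apply \autoref{thm:structure} to the pair $(A,Q_0)$. In \autoref{thm:structure}\ref{case:rk1} we get a nontrivial linear combination $\alpha A+\beta Q_0 = cd$, so $A = \gamma Q_0 + c'd'$ and $A$ is in the required form with $Q'_A = \gamma Q_0 \in \spn{Q_0,T_0,\C[U]_2}$. In \autoref{thm:structure}\ref{case:2}, both $A$ and $Q_0$ lie in some $\ideal{e,f}$; this forces $\rank_s(Q_0)\leq 2 \leq \rkq$, hence $\MS(Q_0)\subseteq U$, and a short rank argument in the spirit of \autoref{cla: Q_0,L,2} and \autoref{cla:P-Q_0-C[V]-a_i} places $A$ inside $\ideal{U}$, returning us to case~\ref{case:low-rank-Q}. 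In \autoref{thm:structure}\ref{case:span} we obtain $C\in \cT_3$ with $A\in\spn{Q_0,C}$. If $C\in \calL_3$ (so $C=c^2$) or $C\in \cQ_3^1$ (so $C=Q_0 + a_Cb_C$), then direct substitution gives $A = \alpha Q_0 + \beta a_C b_C$ (or $A = \alpha Q_0 + \beta c^2$), which is already of the required form.

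The only remaining subcase is \autoref{thm:structure}\ref{case:span} with $C\in \cQ_3^2$, so $C=\gamma Q_0 + L$ with $\rank_s(L)\geq 2$; then $A = (\alpha+\beta\gamma)Q_0 + \beta L$ and we must show that $\beta L$ decomposes as $Q'_A + a_Ab_A$ with $Q'_A\in \spn{Q_0,T_0,\C[U]_2}$ and $\rank_s(a_Ab_A)\leq 1$. To do this I would bring in a second polynomial $P\in \cQ_2$ and apply \autoref{thm:structure} to $(A,P)$: when $T_0 = B_0$ arises from \autoref{cla:Q_2-strac-noV-rk3}, the choice $P = B_0$ works, since every nonzero combination of $Q_0$ and $B_0$ has $\rank_s\geq 3$, and combining the two representations of $A$ (via $Q_0$ and via $B_0$) together with \autoref{cla:lin-rank-r} pushes the residual low-rank part of $\beta L$ into $U$; when $T_0 = 0$ as in \autoref{cla:Q_2-strac-noV-rk2}, the same strategy works with $P = B_j$ for one of the iteratively constructed $B_j\in \cQ_2^2$, whose $L_j$ already spans $\MS(L_j)\subseteq \tilde{U}$. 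Finally, the case where no useful $P\in \cQ_2$ pairs nicely with $A$ should be ruled out by an argument analogous to the ``many $A_j\in \cA^{B}\cap \cA^{B_0}$'' counting in \autoref{cla:Q_2-strac-noV-rk3}, using that otherwise too many polynomials of $\cQ_3^1$ would fall into $\spn{Q_0,\C[V]_2}$ for some small-dimensional $V$, contradicting the standing hypothesis of \autoref{lem:noV-main}. The main obstacle I anticipate is precisely this $C\in \cQ_3^2$ subcase, where one must juggle two distinct pairings simultaneously while keeping the growing subspace inside the $\dimV/2$ slack budget; the rank-additivity lemmas (\autoref{cla:ind-rank}, \autoref{cla:lin-rank-r}) and the already-derived description of $\cQ_2$ are the tools that make this book-keeping possible.
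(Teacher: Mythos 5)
Your proof plan correctly disposes of the routine cases and correctly identifies the hard subcase, namely when the pair $(A,Q_0)$ satisfies \autoref{thm:structure}\ref{case:span} with a witness $C\in\cQ_3^2$. However, the plan for that subcase has a genuine gap. You propose to pair $A$ with a polynomial $P\in\cQ_2$ (either $B_0$ when $T_0=B_0$, or one of the iteratively constructed $B_j$ when $T_0=0$) and argue that "combining the two representations" plus \autoref{cla:lin-rank-r} closes the case. The problem is that the pairing $(A,P)$ is subject to exactly the same failure mode: \autoref{thm:structure} applied to $(A,P)$ can again land in case~\ref{case:span} with a witness $C'\in\cQ_3^2$, in which case neither representation of $A$ has the desired form and no "combining" of the two helps. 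Nothing in your argument rules this out, and iterating the pairing with yet another $P'$ recreates the same problem. You acknowledge this at the end ("the case where no useful $P\in\cQ_2$ pairs nicely with $A$ should be ruled out by a counting argument"), but this is precisely the substance of the claim: without working it out, the plan does not constitute a proof.

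The paper avoids this recursion entirely by taking an aggregate, contradiction-based route. Rather than pairing $A$ with a single fixed polynomial, it supposes $A$ fails the conclusion and partitions $\cQ_3^1$ into $\cI_1^A\cup\cI_2^A\cup\cI_3^A$ according to which case of \autoref{thm:structure} each $C\in\cQ_3^1$ satisfies with $A$. It then exploits the one-to-one correspondence $\pi_A:\cI_2^A\to\cB^A\subseteq\cQ_2$ and the structure of $\cQ_2$ already established in \autoref{lem:Q_2-strac-noV-gen}, showing that the linear forms $d_1,d_2$ appearing in each $C_1\in\cI_2^A\setminus\pi_A^{-1}(\cB^A_U)$ land inside a subspace $V$ of dimension less than $\dimV$. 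This places more than $0.8m_3$ polynomials of $\cQ_3^1$ into $\spn{Q_0,\C[V]_2}$, contradicting the standing hypothesis of \autoref{lem:noV-main}. The counting you gesture at is thus not an optional "book-keeping" add-on; it is the engine of the proof, and it must be run over the whole of $\cI_2^A$ simultaneously, not per-pairing. You would also need the intermediate spaces $\tilde V_1\subseteq\tilde V_2\subseteq\tilde V_3$ (handling, respectively, $\cI_1^A\cup\cI_3^A$, the possible $\ideal{U}$-decomposition of $A$ when $\rank_s(Q_0)\leq\rkq$, and the anchor rank-$2$ part of one fixed $C\in\cI_2^A$) plus the \autoref{cla:lin-rank-r-U} space when $T_0\neq 0$, in order to stay within the $\dimV$ budget — these are the concrete manifestations of the "juggling" you anticipate but do not carry out.
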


\begin{proof}
	%From \autoref{lem:Q_2-strac-noV-gen} a similar structure holds for $\cQ_2$. 
	%Let $\tilde{V}$ be the linear space of linear forms obtained in %\autoref{lem:Q_2-strac-noV-gen},
	
%		From \autoref{lem:Q_2-strac-noV-gen} a similar structure holds for $\cQ_2$. 
%	Let $\tilde{U}$ be the space of linear forms guaranteed in \autoref{lem:Q_2-strac-noV-gen}, and set $U = V_\MS(\tilde{U},Q_0)$. 

	\autoref{lem:Q_2-strac-noV-gen} guarantees the existence of a  space of linear forms $\tilde{U}$, such that $\dim(\tilde{U})< \dimV/2$ and for $U = V_\MS(\tilde{U},Q_0)$, all the polynomials in $\cQ_2$ are either of the form  $B_j = B'_j + a_jb_j$ for $B_j \in \spn{Q_0, B_0, \C[U]_2}$ or, when $\rank_s(Q_0) \leq \rkq$, they can also satisfy $B_j\in \ideal{U}$. 
	
	 Let $\tilde{V_0}=\tilde{U}$ and set $V_0 = V_\MS(\tilde{V_0}, Q_0)=U$. The claim holds for polynomials in $\cQ_1^1$, and, if $\rank_s(Q_0)<\rkq$, then for polynomials that are in  $\ideal{V_0}$ as well.

	As in the proof of \autoref{cla:Q_2-strac-noV-rk3} 	we shall assume for a contradiction that there is a polynomial $A\in \cQ_1$ that does not satisfy it.  We shall construct a linear space of linear forms, $\tilde{V}$, such that $\dim(\tilde{V}) \leq \dimV$ and $V = V_\MS(\tilde{V}, Q_0)$ satisfies that $|\cQ_3^1 \cap \spn{Q_0,\C[V]_2}| > 0.8 m_3$ in contradiction to the assumption of \autoref{lem:noV-main}. 
As before, the construction of $V$ we depend on whether $\rank_s(Q_0) \leq \rkq$ or not.

	 So assume towards a contradiction that there exists $A\in \cQ_1$ that does not satisfy the conclusion of the claim. Similarly to the proof of \autoref{lem:Q_2-strac-noV-gen}, consider the partition $\cQ_3^1 = \cI^{A}_1 \cup \cI^{A}_2 \cup\cI^{A}_3$.
If $\cI^{A}_1\cup \cI^{A}_3\neq \emptyset$ then there is a polynomial $L \in \spn{Q_0,A}$ such that $\rank_s (L) = 2$. In this case we set $\tilde{V}_1= \tilde{U} +\MS(L)$ and $V_1 = V_\MS(\tilde{V}_1, Q_0)$. Thus, $\dim(\tilde{V}_1) < \dimV /2 + 4$. By the same arguments as  in the proof of \autoref{cla:I1I3-CV}, we deduce that  $\cI^{A}_1 \cup \cI^{A}_3  \subset \spn{Q_0,\C[V_1]_2}$. 
%Thus, the assumption of \autoref{cla:Q_1-struc-noV} promises that $n_1 := |\cI_1\cup \cI_3| < 0.8 m_3$, and $n_2 := |\cI_2| = |\cQ_3^1| - n_1 $.

If  $\rank_s(Q_0) \leq \rkq$ then we need to be more careful and consider the case where  there are linear forms $c$ and $d$, and a polynomial $A' \in \ideal{U}$, such that $A = A' + cd$. In this case we set $\tilde{V}_2 = \tilde{V}_1 + \spn{c, d}$ and  $V_2 = V_\MS(\tilde{V}_2, Q_0)$. Our assumption that $A\notin \ideal{V_0} = \ideal{U}$ implies that $c,d\notin U$.  Observe that if $A=A''+ef$ is another such representation of $A$ then $cd-ef\in\ideal{U}$ and \autoref{cla:rank-2-in-V} implies that $\spn{c,d}+ U = \spn{e,f}+ U $, and in particular, $V_2$ is well defined, regardless of which representation we chose.

If $\rank_s(Q_0) > \rkq$ or there are no such $c,d$ and $A'$ then we let $\tilde{V}_2 = \tilde{V}_1$ and $V_2 = V_1$. In either case we have that  $\dim(\tilde{V}_2) \leq \dimV/2 + 6$.
	
It is not hard to see that ${A}$ satisfies Case~\ref{case:span} of \autoref{thm:structure} with every $C\in \cI^{A}_2$. Furthermore, if there is a polynomial $B \in \cQ_2$, such that $ B\in \spn{{A},C}\cap \spn{{A},C'}$, for some $C,C'\in \cI^{A}_2$, then $C=C'$. Hence, there is a set $\cB^{A}\subseteq \cQ_2$, of size $|\cB^{A}|=|\cI^{A}_2|$, such that every $C\in\cI^{A}_2$ satisfies $\spn{{A},C}\cap \cB^{A}\neq \emptyset$. As before, there is a natural one-to-one correspondence $\pi_{A}: \cI^{A}_2\rightarrow \cB^{A}$.

	We now analyze the structure of $ \cI^{A}_2$ based on which case of \autoref{lem:noV-main} polynomials in $\cB^{A}$ satisfy (as we proved \autoref{lem:Q_2-strac-noV-gen} we know they satisfy \autoref{lem:noV-main}). 
	
	We first consider the case $\rank_s(Q_0)\leq \rkq$ (and in particular, $\MS(Q_0)\subseteq U$) and denote with $\cB^{A}_U \eqdef \cB^{A} \cap \ideal{U}$ the set of all polynomials in $\cB^{A}$ that satisfy Case~\ref{case:low-rank-Q} of \autoref{lem:noV-main}.	
	Next we show that $\pi_{A}^{-1}(\cB^{A}_U) \subseteq \spn{Q_0,\C[V_2]_2}$.
Let $C \in \cI^{A}_2$ be such  that $B\eqdef \pi_{A}(C)  \in \cB^{A}_U$. Denote $C= Q_0 + c_1 c_2$ and $B=\alpha A + \beta C$.
%This only happens when $\rank_s(Q_0)\leq \rkq$ and $\MS(Q_0)\subseteq U$.  
	Then, 
\[B=\alpha A + \beta  C = \alpha A + \beta Q_0 + \beta c_1 c_2 \in \ideal{U}\;.\] 
As $\MS(Q_0)\subseteq U$, it follows that  $A''\eqdef \alpha A + \beta c_1 c_2  \in \ideal{U}$. By definition of $V_2$, and as $V_2$ is well defined,  it follows that $c_1,c_2\in V_2$ and thus $C \in \spn{Q_0,\C[V_2]_2}$, as we wanted to prove.  Moreover, from the fact that $V_1\subseteq V_2$, we conclude that $\pi_{A}^{-1}(\cB^{A}_U) \cup \cI^{A}_1 \cup \cI^{A}_3 \subseteq \spn{Q_0,\C[V_2]_2}$.
Hence,  as $\dim(\tilde{V}_2) <\dimV$,  the assumption of \autoref{cla:Q_1-struc-noV} implies that $|\pi_{A}^{-1}(\cB^{A}_U) \cup \cI^{A}_1 \cup \cI^{A}_3| \leq 0.8 m_3 < |\cQ_3^1|$.

%	in this case, we added to $\tilde{V}_2$ two linear forms, $c$ and $d$, for which there is a polynomial $A'\in \ideal{U}$ satisfying $A = A'+cd$. Hence,
% \[\beta c_1 c_2 -\alpha cd=  (\alpha A + \beta c_1 c_2) -\alpha (A +cd) = A'' -\alpha A' \in \ideal{U}.\]

%As $c,d\notin U$, it follows that $c_1 ,c_2 \in \spn{U, c,d}\subseteq V_2$, and 
%We deduce that $\pi_{A}^{-1}(\cB^{A}_U) \subseteq \spn{Q_0,\C[V_2]_2}$. 

We next consider polynomials in $\cB^A$  satisfying Case~\ref{case:general-Q} of \autoref{lem:noV-main}. Let $C\in  \cI^{A}_2 \setminus \pi_{A}^{-1}(\cB^{A}_U)$. Let $\pi_{A}(C) = B$. Then $B$  satisfies Case~\ref{case:general-Q} of \autoref{lem:noV-main} and, by definition, there are linear forms $b_1$ and $b_2$, and a polynomial $B' \in\spn{Q_0,T_0, \C[U]_2}$ ($T_0$ is as in \autoref{lem:noV-main}, which we know holds for $\cQ_2$), such that $B= B' +b_1b_2$. Denote, without loss of generality, $B=  A + \beta C$, and $C = Q_0+c_1c_2$ (this holds as $C\in \cQ_3^1$). We have that
% then \[\alpha A + \beta  C = \pi_{A}(C_1) = B'_1 + c_1d_1, \]
%Assume without loss of generality that $\alpha_1=1$. Recall that  $C_1 = Q_0+a_1b_1$ and thus,
\begin{equation}\label{eq:A-C_1}
	 A  = B-\beta C=  (B' -\beta Q_0)+ b_1b_2 - \beta c_1c_2.
\end{equation}
Since $B' -\beta_1 Q_0 \in \spn{Q_0,T_0, \C[U]_2}$ and  we assumed that $A$ does not satisfy the conclusion of the claim, we conclude that  $\rank_s(b_1b_2 - \beta  c_1c_2) = 2$.

Set $\tilde{V}_3 = \tilde{V}_2 + \spn{b_1,b_2,c_1,c_2}$. Clearly, $\dim(\tilde{V}_3) \leq \dim(\tilde{V}_2)+4< \dimV/2 +10$.

If $T_0 \neq 0$ then let $V'$ be the linear space guaranteed by \autoref{cla:lin-rank-r-U} for $T_0$, $Q_0$, $U$ and $r=4$.
Set $\tilde{V}= V' + \tilde{V}_3$. From the bound in \autoref{cla:lin-rank-r-U}  we get that $\dim(V') \leq 32$, and thus $\dim(\tilde{V})  <\dimV/2 + 42 < \dimV$. Set $V = V_\MS(\tilde{V}, Q_0)$. If $T_0=0$ then we let $V=V_\MS(\tilde{V}_3, Q_0)$.

We next show that $Q_3^1 \subseteq \spn{Q_0,\C[V]_2}$ in contradiction to the assumption in \autoref{lem:noV-main}. As $V_2 \subseteq V$, it  suffices to  show that $\cI^{A}_2 \setminus \pi_{A}^{-1}(\cB^{A}_U)\subseteq \spn{Q_0,\C[V]_2}.$

Let $C_1 \in \cI^{A}_2 \setminus \pi_{A}^{-1}(\cB^{A}_U)$. As before we denote  $C_1 = Q_0+d_1d_2$ and let $B_1=\pi_A(C_1)$. By \autoref{lem:Q_2-strac-noV-gen}, there are linear forms $e_1$ and $e_2$, and a polynomial $B'_1 \in\spn{Q_0,T_0, \C[U]_2}$, such that $B_1 = B'_1 +e_1e_2$. Also denote, without loss of generality, $B_1 = A + \beta_1 C_1$. We have that,
\[ A  = B_1 -  \beta_1 C_1 = B'_1 + e_1e_2 - \beta_1 (Q_0+d_1d_2) =  (B'_1 -\beta_1 Q_0)+ e_1e_2 - \beta_1 d_1d_2 \;.\]
As before, since $A$ does not satisfy the conclusion of the claim, it follows that $\rank_s(e_1e_2 - \beta_1 d_1d_2) = 2$. Combining with \autoref{eq:A-C_1}, we obtain that
\[(B' -\beta Q_0)+ b_1b_2 - \beta c_1c_2= A  =(B'_1 -\beta_1 Q_0)+ e_1e_2 - \beta_1 d_1d_2,\]
and therefore,
\begin{equation}\label{eq:B,B'}
(B'- B'_1 +\beta_1  Q_0-\beta Q_0) =  e_1e_2 - \beta_1 d_1d_2 -(b_1b_2 - \beta c_1c_2)\;.
\end{equation}
Accordingly, $\rank_s(B'- B'_1 +\beta_1 Q_0-\beta Q_0)\leq 4$. We next show that $d_1,d_2\in V$, which implies $C_1\in \spn{Q_0, \C[V]_2}$, as we wanted to prove.
For this we first prove that $\MS(B'- B'_1 +\beta_1 Q_0-\beta Q_0) \subseteq V$.

Recall that in the case where $T_0 \neq 0$ we defined a subspace $V' \subseteq V$ using \autoref{cla:lin-rank-r-U}. As  $B'- B'_1 +\beta_1 Q_0-\beta Q_0\in \spn{Q_0,T_0,\C[U]_2}$, it follows that $\MS(B'- B'_1 +\beta_1 Q_0-\beta Q_0) \subseteq V' \subseteq V$.

If $T_0 =0$ then we consider two cases based on $\rank_s(Q_0)$. If 
$\rank_s(Q_0) > \rkq$ then, as $B',B'_1\in \spn{Q_0,\C[U]_2}$, Equation~\eqref{eq:B,B'} implies that that $B'- B'_1 +\beta_1 Q_0-\beta Q_0 \in \C[U]_2\subseteq \C[V]_2$.
Finally, if $T_0 =0$ and $\rank_s(Q_0) \leq \rkq$, then $\MS(Q_0)\subseteq V$ and thus $\MS(B'- B'_1 +\beta_1 Q_0-\beta Q_0)\subseteq V$. To conclude, in all possible cases it holds that 
\begin{equation}\label{eq:B-B'inV}
\MS(B'- B'_1 +\beta_1 Q_0-\beta Q_0) \subseteq V\;.
\end{equation}
As $b_1,b_2, c_1,c_2 \in V$, Equations~\ref{eq:B,B'} and \ref{eq:B-B'inV} imply that
 \[e_1e_2 - \beta_1 d_1d_2 = (B'- B'_1 +\beta_1 Q_0-\beta Q_0)  +(b_1b_2 - \beta c_1c_2) \in \C[V]_2 \;.\]
As $\rank_s(e_1e_2 - \beta_1 d_1d_2) = 2$,  we conclude that $d_1,d_2 \in V$. In particular, this proves that $C_1 \in \spn{Q_0, \C[V]_2}$, as we wanted to show. This completes the proof of \autoref{cla:Q_1-struc-noV}.
\end{proof}

So far we proved that $\cQ_1$ and $\cQ_2$ satisfy \autoref{lem:noV-main} and we now show it for $\cQ_3$. This will conclude the proof of  \autoref{lem:noV-main} and with it the last case left in the proof of \autoref{thm:main}.

\begin{claim}\label{cla:Q_3-struc-noV}
	Let $\cup_{j\in[3]} \cT_j$, $\cQ_2^2$ and $\cQ_3^1$ be as in \autoref{lem:noV-main}. Then, the statement of \autoref{lem:noV-main} holds for every $C\in \cQ_3$.	
\end{claim}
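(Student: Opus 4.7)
The plan is to follow the strategy of \autoref{cla:Q_3-strac} (which handled the analogous situation when a special $V$ existed), pairing each $C \in \cQ_3$ with $Q_0 \in \cQ_2$ and invoking \autoref{thm:structure} applied to the family $\cT_1$, for which $\prod_{A \in \cT_1} A \in \sqrt{\ideal{Q_0, C}}$ by the hypothesis of \autoref{thm:main}. Polynomials $C \in \cQ_3^1$ are handled immediately: by \autoref{eq:defQ^1}, $C = Q_0 + a_C b_C$, and since $Q_0 \in \spn{Q_0, T_0, \C[U]_2}$ this already realizes Case~\ref{case:general-Q} of \autoref{lem:noV-main}. The real work concerns $C \in \cQ_3^2$, for which $C - \gamma Q_0$ has $\rank_s \geq 2$ for every $\gamma \in \C$.

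For such $C$, \autoref{thm:structure} yields one of three outcomes. In Case~\ref{case:rk1} a nontrivial combination $\alpha Q_0 + \beta C = cd$; irreducibility of $Q_0$ and $C$ forces $\alpha, \beta \neq 0$, so after the rescaling of \autoref{rem:gamma=1} we would have $C \in \cQ_3^1$, contradicting $C \in \cQ_3^2$. In Case~\ref{case:span} there is $A \in \cT_1$ with $C = \alpha Q_0 + \beta A$; if $A \in \calL_1$ the same rescaling argument again places $C$ in $\cQ_3^1$, and otherwise $A \in \cQ_1$ has the structure guaranteed by \autoref{cla:Q_1-struc-noV}. In the subcase $A = a_A b_A + A'$ with $A' \in \spn{Q_0, T_0, \C[U]_2}$, substitution gives $C = \beta a_A b_A + (\alpha Q_0 + \beta A')$, which is Case~\ref{case:general-Q}; in the subcase $\rank_s(Q_0) \leq \rkq$ and $A \in \ideal{U}$, \autoref{def:ranksV} gives $\MS(Q_0) \subseteq U$, so $Q_0 \in \ideal{U}$ and hence $C = \alpha Q_0 + \beta A \in \ideal{U}$, which is Case~\ref{case:low-rank-Q}.

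The main obstacle is Case~\ref{case:2}, where $Q_0, C \in \ideal{c, d}$ for some linear forms $c, d$. Writing $Q_0 = cp + dq$ shows $\rank_s(Q_0) \leq 2$, and irreducibility of $Q_0$ then forces $\rank_s(Q_0) = 2 < \rkq$, whence $\MS(Q_0) \subseteq U$ by \autoref{def:ranksV}. The crux is to deduce $c, d \in \MS(Q_0)$. I plan to split on $\dim \spn{c, p, d, q}$: the $4$-dimensional case is immediate since \autoref{cla:irr-quad-span} then identifies $\MS(Q_0) = \spn{c, p, d, q}$; in the $3$-dimensional case one rewrites $Q_0 = c\tilde{p} + d\tilde{q}$ with $\tilde{p}, \tilde{q}$ in the common $3$-dimensional span, and \autoref{cla:irr-quad-span} identifies $\MS(Q_0)$ with that span, which visibly contains $c$ and $d$; the $2$-dimensional case would make $Q_0$ a binary quadratic, hence reducible over $\C$, a contradiction. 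Thus $c, d \in \MS(Q_0) \subseteq U$ and $C \in \ideal{c, d} \subseteq \ideal{U}$, again matching Case~\ref{case:low-rank-Q}.

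Combining the three cases exhausts all possibilities for $C \in \cQ_3^2$, so the same $T_0$ and $\tilde{U}$ already produced by \autoref{lem:Q_2-strac-noV-gen} and \autoref{cla:Q_1-struc-noV} witness \autoref{lem:noV-main} for $\cQ_3$ with no further enlargement. I expect Case~\ref{case:2} to be the sole point requiring a careful algebraic argument about $\MS(Q_0)$; the remaining cases are mechanical, either inheriting structure from the established description of $\cQ_1$ or collapsing to direct contradictions with $C \in \cQ_3^2$.
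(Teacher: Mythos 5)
Your proposal is correct and follows exactly the route the paper intends: the paper's own proof of this claim is a one-sentence deferral ("it is not hard to verify that no matter which case of \autoref{thm:structure} $C$ and $Q_0$ satisfy, $C$ has the form stated in the claim"), and your case analysis is a correct filling-in of that verification. The only cosmetic remark is that in Case~\ref{case:2} the dimension split is unnecessary: since $Q_0=cp+dq$ is a two-term representation of a quadratic with $\rank_s(Q_0)=2$, it is automatically minimal, so \autoref{cla:irr-quad-span} gives $\MS(Q_0)=\spn{c,p,d,q}\ni c,d$ in one step.
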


\begin{proof}
	From \autoref{lem:Q_2-strac-noV-gen} and 	\autoref{cla:Q_1-struc-noV}, there is a linear space of linear forms, $U$ such that the statement of \autoref{lem:noV-main} holds for $\cQ_2$ and $\cQ_1$. Let $C \in \cQ_3$. It is not hard to verify that no matter which case of \autoref{thm:structure} $C$ and $Q_0$  satisfy, $C$ has the form stated in the claim.
\end{proof}

\begin{proof}[Proof of \autoref{lem:noV-main}]
	The lemma follows immediately from \autoref{lem:Q_2-strac-noV-gen}, \autoref{cla:Q_1-struc-noV}, and \autoref{cla:Q_3-struc-noV}. 
\end{proof}

This concludes the proof of \autoref{thm:main}.

\fi

\section{Missing proofs from Section~\ref{sec:robust-EK}}\label{sec:EK-proofs}

We first give the proof of \autoref{thm:ek-not-disjoint}. The proof follows the lines of the proof in \cite{EdelsteinKelly66}.

We shall use the following notation.

\begin{definition}\label{def:line}
For two points $p_1\neq p_2\in \C^n$ we denote by $\line{p_1,p_2}$ the set of points on the unique line that passes though $p_1$ and $p_2$. Similarly, for three non colinear points $p_1\neq p_2\neq p_3 \in \C^n$ we denote by $\plane{p_1,p_2,p_3} = \plane{\line{p_1,p_2},p_3}$ the set of points on the unique plane determined by $p_1$, $p_2$ and $p_3$. 
\end{definition}

For the proof of \autoref{thm:ek-not-disjoint} we require the following simple corollary of the Sylvester-Gallai theorem.

	\begin{lemma}\label{lem:pen-of-lines}
		Let $\{\MVar{\ell}{k}\}$ be a set of lines, passing through the same point $p$, that are not all on the same plane. Then, there is a plane that contains exactly two of the lines.
	\end{lemma}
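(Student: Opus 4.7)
The plan is to convert the incidence statement about a pencil of lines through $p$ into a Sylvester--Gallai--type statement about a set of points obtained by slicing the pencil with an affine hyperplane, and then apply Kelly's theorem (\autoref{thm:Kelly}).

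First I will choose a generic affine hyperplane $H \subset \C^n$ with $p \notin H$ such that each $\ell_i$ meets $H$ transversally; setting $q_i = \ell_i \cap H$, the points $q_1,\ldots,q_k$ are pairwise distinct because distinct lines through $p$ can agree only at $p \notin H$. I will then establish the following dictionary: for three indices $i,j,t$, the lines $\ell_i,\ell_j,\ell_t$ lie in a common affine $2$-plane through $p$ if and only if $q_i,q_j,q_t$ are collinear in $H$. The forward direction uses that the plane $\Sigma = \plane{\ell_i,\ell_j}$ passes through $p \notin H$ and is therefore not contained in $H$, so $\Sigma \cap H$ is an affine line containing $q_i,q_j$ (and $q_t$, if $\ell_t \subset \Sigma$); the reverse is immediate, since the $2$-plane spanned by $p$ and the line $\overline{q_iq_j} \subset H$ contains the three lines $\overline{p q_i},\overline{p q_j},\overline{p q_t}$. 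Under this dictionary the desired conclusion — the existence of a $2$-plane through $p$ containing \emph{exactly} two of the $\ell_i$ — translates to the existence of an ordinary line for the finite point set $\{q_1,\ldots,q_k\} \subset H$.

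Second, I will check the non-coplanarity hypothesis required to apply Kelly's theorem to $\{q_1,\ldots,q_k\}$ inside $H \cong \C^{n-1}$. If the $q_i$ were contained in an affine $2$-plane $\Pi \subset H$, then every line $\ell_i = \overline{p\,q_i}$ would lie in the affine subspace $\plane{p,\Pi}$. Invoking Kelly's theorem in this non-degenerate range directly yields an ordinary line $\overline{q_iq_j}$, and the dictionary then provides the desired $2$-plane $\plane{p,q_i,q_j}$ meeting the given pencil only in $\ell_i,\ell_j$, which finishes the proof.

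The main subtle point I anticipate is the borderline situation in which the $q_i$ happen to be coplanar inside $H$, which corresponds precisely to the case in which the lines $\ell_1,\ldots,\ell_k$ lie in a common affine $3$-space through $p$ while still not being contained in any single $2$-plane (so the hypothesis is satisfied). To handle this I plan to apply Kelly's theorem not to $\{q_1,\ldots,q_k\}$ but to the enlarged set $\{p,q_1,\ldots,q_k\}$; the hypothesis that the $\ell_i$ are not coplanar immediately implies that this enlarged set is not coplanar in $\C^n$, so Kelly produces an ordinary line $L$. The key case analysis is then to discard the trivial possibility that $L = \overline{p\,q_i}$ (which is automatic, since distinct lines through $p$ share only the point $p$ and so every such line is tautologically ``ordinary''), leaving the informative case $L = \overline{q_iq_j}$ for $i\ne j$, from which the conclusion follows by the dictionary. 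Verifying that the Kelly-ordinary line can always be chosen of this second form — and not only of the trivial first form — is the technical heart of the argument, and this is where the hypothesis that the $\ell_i$ are not all in a single $2$-plane will be used in an essential way.
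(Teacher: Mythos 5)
You follow the same core strategy as the paper's own proof: slice the pencil with a generic affine hyperplane $H$ avoiding $p$, pass to the slice points $q_i=\ell_i\cap H$, observe the dictionary between coplanar triples of lines and collinear triples of slice points, and invoke a Sylvester--Gallai/Kelly theorem. You have also put your finger on a gap that the paper itself glosses over: the paper passes directly from ``the $p_i$ are not collinear'' to ``there is an ordinary line,'' citing ``the Sylvester--Gallai theorem,'' but the ambient field is $\C$, so the relevant statement is Kelly's theorem (\autoref{thm:Kelly}), which needs the slice points to be \emph{non-coplanar}. Non-coplanarity of the lines through $p$ only gives non-collinearity of the $q_i$; the borderline case in which the $q_i$ are coplanar but not collinear (i.e.\ the lines span exactly a $3$-space) is left unaddressed by the paper, and you are right to flag it.

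Your proposed fix, however, does not close the gap, and in fact no fix can. Applying Kelly's theorem to the augmented set $\{p,q_1,\ldots,q_k\}$ gives no information: each line $\overline{p\,q_i}=\ell_i$ is already an ordinary line for that set (it meets the set in exactly $p$ and $q_i$, since distinct lines through $p$ meet only at $p$), so Kelly's guaranteed ordinary line may perfectly well be one of these trivial lines --- Kelly asserts \emph{existence} of an ordinary line, and you cannot choose which one it produces, so you cannot ``discard'' the trivial case. Worse, the lemma as written is genuinely \emph{false} over $\C$ in the borderline case: take the nine inflection points of a smooth plane cubic (the Hesse configuration) inside an affine $2$-plane $\Pi\subset\C^3$, and cone them from a point $p\notin\Pi$. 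The nine resulting lines through $p$ span $\C^3$, hence are not coplanar; yet every line of $\Pi$ through two Hesse points contains a third, so every $2$-plane through $p$ containing two of the nine lines contains a third. The statement is correct only under the stronger hypothesis that the lines span a space of dimension at least $4$ --- equivalently, that the $q_i$ are non-coplanar, so that Kelly's theorem applies directly to $\{q_1,\ldots,q_k\}$ --- which is in fact the form in which the lemma should be stated and used.
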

\begin{proof}
	Let $H$ be an hyperplane in general position. I.e. $H$ does not contain $p$ and every line from $\MVar{\ell}{k}$ intersects $H$ at a single point, which we denote $p_i = H \cap \ell_i$. By our assumption the points $\{p_i\}_1^k$ are not colinear. Thus, by the Sylvester-Gallai theorem, there is an ordinary line that passes through exactly two of the points, without loss of generality $p_1$ and $p_2$.  It follows that the plane through $\ell_1,\ell_2$ does not contain any other line.
\end{proof}

\begin{proof}[Proof of \autoref{thm:ek-not-disjoint}]
Assume towards a contradiction that the dimension of $\cup_{i\in[k]} \cS_i$ is at least $4$. Without loss of generality, let $p_1\in \cS_1$ and $p_2\in \cS_2$ be such that $p_1\neq p_2$. For every $p \in \cup_{i\in[k]} \cS_i \setminus \line{p_1,p_2}$ let $H_p = \plane{p,\line{p_1,p_2}}$. By the assumption in the statement of the theorem, the set $H_p \setminus \line{p_1,p_2}$ contains points from at least two different sets. 

By the assumption on the dimension of $\cup_{i\in[k]} \cS_i$ and \autoref{lem:pen-of-lines} it follows that there is an affine subspace of dimension $3$, $\Gamma$, such that $\Gamma$ contains exactly two of the planes in $\{H_p\}$ (indeed, we can achieve this by applying the lemma on the intersection of the planes with an hyperplane in general position). Denote these planes by $H_1$ and $H_2$. Thus, $\Gamma \cap (\cup_{i\in[k]} \cS_i )\subset H_1\cup H_2$. 

Let $p \in H_1 \setminus \line{p_1,p_2}$ and $p'\in H_2 \setminus \line{p_1,p_2}$ be such that $p$ and $p'$ are from different sets. Such points exist as each set $H_p \setminus \line{p_1,p_2}$ contains points from two different sets.
Observe that $\line{p,p'}\subset \Gamma$ but $\line{p,p'} \cap H_1 = p$ and $\line{p,p'} \cap H_2 = p'$ and thus $\line{p,p'}$ does not contain a third point different from $p$ and $p'$, in contradiction to our assumption on $\MVar{\cS}{k}$. 
\end{proof}

Next we prove \autoref{thm:partial-EK-robust}.
The proof is almost identical to the proof of Theorem 1.9 in \cite{DBLP:conf/stoc/Shpilka19}.

In this proof we use following version of Chernoff bound. See e.g. Theorem 4.5 in \cite{MU05-book}.

\begin{theorem}[Chernoff bound]\label{thm:chernoff}
	Suppose $X_1,\ldots, X_n$ are independent indicator random variables. Let $\mu = E[X_i]$ be the expectation of $X_i$. Then,
	$$\Pr\left[\sum_{i=1}^{n}X_i < \frac{1}{2}n\mu\right] < \exp(-\frac{1}{8}n\mu).$$
\end{theorem}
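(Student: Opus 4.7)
The plan is to use the standard exponential moment method (the Chernoff/Bernstein trick). First I would convert the lower-tail event into an upper-tail event for a nonnegative random variable: for any $t > 0$,
\[
\Pr\!\left[\sum_{i=1}^{n} X_i < \tfrac{1}{2}n\mu\right] = \Pr\!\left[ e^{-t\sum_i X_i} > e^{-tn\mu/2}\right],
\]
and then apply Markov's inequality to obtain the bound
\[
\Pr\!\left[\sum_{i=1}^{n} X_i < \tfrac{1}{2}n\mu\right] \leq e^{tn\mu/2}\, E\!\left[e^{-t\sum_i X_i}\right].
\]

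Next I would exploit independence to factor the moment generating function as $E[e^{-t\sum_i X_i}] = \prod_{i=1}^n E[e^{-tX_i}]$. For an indicator $X_i$ with $E[X_i] = \mu$, a direct computation gives $E[e^{-tX_i}] = 1 + \mu(e^{-t} - 1)$, and using the elementary inequality $1 + x \leq e^x$ this is at most $\exp(\mu(e^{-t} - 1))$. Multiplying over $i$ yields
\[
E\!\left[e^{-t\sum_i X_i}\right] \leq \exp\!\bigl(n\mu(e^{-t}-1)\bigr).
\]
Combining with the Markov step, the tail is bounded by $\exp\!\bigl(n\mu(e^{-t}-1) + tn\mu/2\bigr)$ for every $t > 0$.

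Finally I would optimize the free parameter $t$. The exponent $n\mu\bigl(e^{-t}-1 + t/2\bigr)$ is minimized at $t = \ln 2$, giving value $n\mu(1/2 - 1 + \tfrac{1}{2}\ln 2) = n\mu\bigl(\tfrac{1}{2}\ln 2 - \tfrac{1}{2}\bigr) \approx -0.153\, n\mu$. Since $\tfrac{1}{2} - \tfrac{1}{2}\ln 2 > \tfrac{1}{8}$, this already yields the claimed bound $\exp(-n\mu/8)$. The only real obstacle is aesthetic: one could instead use the slightly loose but textbook-standard route of invoking the general multiplicative Chernoff bound $\Pr[X < (1-\delta)\mu_{\text{tot}}] \leq \exp(-\mu_{\text{tot}}\delta^2/2)$ with $\delta = 1/2$ and $\mu_{\text{tot}} = n\mu$, which gives exactly $\exp(-n\mu/8)$ and avoids the numerical optimization. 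Either way the proof is a few lines; the main ``obstacle'' is just choosing which clean packaging to present, and verifying that the constant $1/8$ (rather than a slightly better one) is what the stated inequality asks for.
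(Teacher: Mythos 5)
Your proof is correct: the exponential-moment argument with $t=\ln 2$ gives exponent $n\mu(\tfrac{1}{2}\ln 2-\tfrac{1}{2})\approx -0.153\,n\mu$, which indeed beats $-n\mu/8$, and the shortcut via the multiplicative bound $\Pr[X<(1-\delta)\mu_{\mathrm{tot}}]\leq e^{-\mu_{\mathrm{tot}}\delta^2/2}$ with $\delta=1/2$ recovers the constant $1/8$ exactly. The paper does not prove this statement at all — it simply cites Theorem 4.5 of Mitzenmacher–Upfal — so your derivation is a valid self-contained substitute for that citation.
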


\begin{proof}[Proof of \autoref{thm:partial-EK-robust}]
	Denote $|\cT_i|=m_i$. Assume w.l.o.g. that $|\cT_1| \geq   |\cT_2| \geq |\cT_3|$. The proof distinguishes two cases. The first is when  $|\cT_3|$ is not too small and the second case is when it is much smaller than the largest set. 
	
	\begin{enumerate}
		\item {\bf Case $m_3 > m_1^{1/3}$: } \hfill
		%\paragraph{Case $m_3 > m_1^{1/3}$: } 
		
		Let $\cT'_1\subset \cT_1$ be a random subset, where each element is sampled with probability $m_2/m_1 = |\cT_2|/|\cT_1|$. By the Chernoff bound (\autoref{thm:chernoff}) we get that, w.h.p., the size of the set is at most, say, $2m_2$. Further, the Chernoff bound also implies that for every $p\in \cT_2$ there are at least $(\delta/2)\cdot m_2$ points in $\cT'_1$ that together with $p$ span a point in $\cT_3$. Similarly, for every $p\in \cT_3$ there are at least $(\delta/2)\cdot m_2$ points in $\cT'_1$ that together with $p$ span a point in $\cT_2$. Clearly, we also have that for every point $p\in\cT'_1$ there are $\delta m_2$ points in $\cT_2$ that together with $p$ span a point in $\cT_3$. Thus, the set $\cT'_1\cup \cT_2\cup \cT_3$ is a $(\delta/8)$-SG configuration and hence has dimension $O(1/\delta)$ by \autoref{thm:robustSG}. 
		
		Let $V$ be a subspace of dimension $O(1/\delta)$ containing all these points. Note that in particular, $\cT_2,\cT_3\subset V$. As every point $p\in \cT_1$ is a linear combination of points in $\cT_2\cup\cT_3$ it follows that the whole set has dimension $O(1/\delta)$.
	
		\item {\bf Case $  m_3  \leq m_1^{1/3}$:}\hfill
		%\paragraph{Case $  |\cT_3|  \leq m^{1/3}$:}
		
		In this case we may not be able to use the sampling approach from earlier as $m_2$ can be too small and the Chernoff argument from above will not hold. 
		
		We say that a point $p_1\in \cT_1$ is a neighbor of a point $p\in \cT_2\cup \cT_3$ if the space spaned by $p$ and $p_1$ intersects the third set. Denote with  $\Gamma_1(p)$ the neighborhood of  a point $p\in\cT_2\cup\cT_3$ in $\cT_1$.

		\sloppy
		Every two points $p\in\cT_2$  and $q\in\cT_3$ define a two-dimensional space that we denote  $V(p,q)=\spn{p,q}$.\footnote{We can assume without loss of generality that $\vec{0}$ is not one of our points.} 
		
		Fix $p\in \cT_2$ and consider those spaces $V(p,q)$ that contain points from $\cT_1$. Clearly there are at most $|\cT_3|$ such spaces. Any two different subspaces $V(p,q_1)$ and $V(p,q_2)$ have intersection of dimension $1$ (it is $\spn{p}$) and by the assumption in the theorem the union $\cup_{q\in\cT_3}V(p,q)$ covers at least $\delta m_1$ points of $\cT_1$.  Indeed, $\delta m_1$ points  $q_1\in \cT_1$ span a point in $\cT_3$ together with $p$. As our points are pairwise independent, it is not hard to see that if $q_3 \in \spn{p,q_1}$ then $q_1 \in \spn{p,q_3}=V(p,q_3)$
		
		For each  subspace $V(p,q)$ consider the set $V(p,q)_1 = V(p,q)  \cap \cT_1$. 
		
		\begin{claim}\label{cla:T1-intersect}
			Any two such spaces $V(p,q_1)$ and $V(p,q_2)$ satisfy that either $V(p,q_1)_1= V(p,q_2)_1$ or $V(p,q_1)_1\cap V(p,q_2)_1=\emptyset$. 
		\end{claim}
		
		\begin{proof}
			If there was a point $p'\in V(p,q_1)_1\cap V(p,q_2)_1$ then both $V(p,q_1)$ and $V(p,q_2)$ would contain $p,p'$ and as $p$ and $p'$ are linearly independent (since they belong to  $\cT_i$'s they are not the same point) we get that $\spn{p,p'}=V(p,q_1)=V(p,q_2)$. In particular, $V(p,q_1)_1= V(p,q_2)_1$.
		\end{proof}

		As a conclusion we see that at most $100/\delta^2$ different spaces $\{V(p,q)\}_q$ have intersection of size at least $\delta^2/100 \cdot m_1$ with $\cT_1$. Let $\cI$ contain $p$ and a point from each of the sets $\{V(p,q)_1\}$ that have size at least $\delta^2/100 \cdot m_1$. Clearly $|\cI| = O(1/\delta^2)$. We now repeat the following process. As long as $\cT_2 \not\subset \spn{\cI}$ we pick a point  $p'\in \cT_2 \setminus \spn{\cI}$ and add it to $\cI$ along with a point from every large set $V(p',q)_1$. I.e., we add a point, different from $p'$, from each subset satisfying $|V(p',q)_1|\geq \delta^2/100\cdot m_1$. We repeat this process until no such $p'$ exists.
		
		We next show that this process must terminate after $O(1/\delta)$ steps and that at the end $|\cI| = O(1/\delta^3)$. To show that the process terminates quickly we prove that if $p_k\in \cT_2$ is the point that was picked at the $k$'th step then $|\Gamma_1(p_k)\setminus \cup_{i\in[k-1]} \Gamma_1(p_i)| \geq (\delta/2)m_1$. Thus, every step covers at least $\delta/2$ fraction of new points in $\cT_1$ and thus the process must end after at most $O(1/\delta)$ steps. 
		
		\begin{claim}\label{cla:one-large-intersect}
			Let $p_i\in\cT_2$, for $i\in [k-1]$ be the point picked at the $i$th step. If  the intersection of $V(p_k,q)_1$ with $V(p_i,q')_1$, for any $q,q'\in \cT_3$, has size larger than $1$ then $V(p_k,q)= V(p_i,q')$ (and in particular, $V(p_k,q)_1= V(p_i,q')_1$) and $|V(p_k,q)_1| \leq \delta^2/100 \cdot m_1$.
			
			Moreover, if there is another pair of points  $q'',q'''\in\cT^3$ satisfying  $|V(p_k,q'')_1\cap  V(p_i,q''')_1|> 1$ then it must be the case that $V(p_i,q')=V(p_i,q''')$.
		\end{claim}
		
		\begin{proof}
			If the intersection of $V(p_k,q)_1$ with $V(p_i,q')_1$ has size at least $2$ then by an argument similar to the proof of \autoref{cla:T1-intersect} we would get that $V(p_k,q) = V(p_i,q')$. To see that in this case the size of $V(p_i,q')_1$ is not too large we note that by our process, if $|V(p_i,q')_1|\geq \delta^2/100 \cdot m_1$ then $\cI$ contains at least two points from $V(p_i,q')_1$. Hence, $p_k\in V(p_i,q')\subset \spn{\cI}$ in contradiction to the choice of $p_k$.
			
			To prove the moreover part we note that in the case of large intersection, since $V(p_k,q) = V(p_i,q')$, we have that $p_k,p_i\in V(p_i,q')$. If there was another pair $(q'',q''')$ so that $|V(p_k,q'')_1 \cap V(p_i,q''')_1|>1$ then we would similarly get that $p_k,p_i\in V(p_i,q''')$. By pairwise linear independence of the points in our sets this implies that $V(p_i,q')=V(p_i,q''')$.
		\end{proof}

		%Indeed, consider the intersection of $V(p_k,q)_1$ with $V(p_i,q')_1$ for $i\in [k-1]$. If this intersection has size larger than $1$ then by an argument similar to one before we conclude that $V(p_k,q)_1= V(p_i,q')_1$. Note that if this is indeed the case then $|V(p_i,q)_1| \leq \delta^2/100 \cdot m_1$ as otherwise $p_k$ would be in the span of $\cI$ and we would not have picked it. 

		%Furthermore, in this case we have that both $p_i,p_k\in V(p_i,q')_1$ and hence it cannot be the case that there is another set $V(p_i,q'')_1$ that have intersection larger than $1$ with any set $V(p_k,q''')$. Indeed, if there were such $q'',q'''$ then we would have $\{p_k,p_i\}\in V(p_i,q''),V(p_i,q')$ and by independence we would get $ V(p_i,q'')=V(p_i,q')$. 
		
		\begin{corollary}\label{cor:neighbor-grow}
			Let $i\in[k-1]$ then 
			$$|\Gamma_1(p_k)\cap \Gamma_1(p_i)|\leq \delta^2/100 \cdot m_1 + m_3^2.$$ 
		\end{corollary}
		
		\begin{proof}
			The proof follows immediately from \autoref{cla:one-large-intersect}. Indeed, the claim assures that there is at most one subspace $V(p_k,q)$ that has intersection of size larger than $1$ with any $V(p_i,q')_1$ (and that there is at most one such subspace  $V(p_i,q')$) and that whenever the intersection size is larger than $1$ it is upper bounded by $\delta^2/100 \cdot m_1$. As there are at most $m_3^2$ pairs $(q,q')\in\cT_3^2$ the claim follows.
		\end{proof}
		
		%\autoref{cla:one-large-intersect} implies that $$|\Gamma(p_k)\cap \Gamma(p_i)|\leq \delta^2/100 \cdot m_1 + m_3^2,$$ 
		%where the first summand comes from a possible intersection between some $V(p_k,q)$ and $V(p_i,q')$ that has size more than $1$ (and there can be one such large intersection) and the second summand is an upper bound on all $m_3^2$ pairs that have intersection at most $1$.
		
		The corollary implies that
		$$|\Gamma_1(p_k)\cap \left( \cup_{i\in[k-1]} \Gamma_1(p_i)\right) | \leq k((\delta^2/100)m_1 + m_3^2) < (\delta/2)\cdot m_1,$$ 
		where the last inequality holds for, say, $k<10/\delta$.\footnote{It is here that we use the fact that we are in the case $  m_3  \leq m_1^{1/3}$.}
		As $|\Gamma_1(p_k)|\geq \delta \cdot m_1$, for each $k$, it follows that after $k<10/\delta$ steps 
		$$|\cup_{i\in[k]} \Gamma_1(p_i)| > k(\delta/2)m_1.$$
		In particular, the process must end after at most $2/\delta$ steps. 
		
		As each steps adds to $\cI$ at most $O(1/\delta^2)$ vectors, at the end we have that $|\cI| = O(1/\delta^3)$ and every $p\in\cT_2$ is in the span of $\cI$. 
		
		Now that we have proved that $\cT_2$ has small dimension we conclude as follows. We find a maximal subset of $\cT_3$ whose neighborhoods inside $\cT_1$ are disjoint. As each neighborhood has size at least $ \delta \cdot m_1$ it follows there the size of the subset is at most $O(1/\delta)$. We add those $O(1/\delta)$ points to $\cI$ and let $V=\spn{\cI}$. Clearly $\dim(V) = O(1/\delta^3)$.
		
		\begin{claim}
			$\cup_i \cT_i \subset V$.
		\end{claim}
		
		\begin{proof}
			We first note that if $p\in \cT_1$ is in the neighborhood of some $p'\in\cI\cap \cT_3$ then $p\in V$. Indeed, the subspace spanned by $p'$ and $p$ intersects $\cT_2$. I.e. there is $q\in \cT_2$ that is equal to $\alpha p + \beta p'$, where from pairwise independence both $\alpha\neq0$ and $\beta\neq 0$. As both $p'\in V$ and $\cT_2\subset V$ we get that $p\in V$ as well.
			
			We now have that the neighborhood of every $p\in \cT_3\setminus \cI$ intersects the neighborhood of some $p'\in\cI\cap \cT_3$. Thus, there is some point $q\in \cT_1$ that is in $V$ (by the argument above as it is a neighbor of $p'$) and is also a neighbor of $p$. It follows that also $p\in V$ as the subspace spanned by $q$ and $p$ contains some point in $\cT_2$ and both $\{q\},\cT_2\subset V$ (and we use pairwise independence again). Hence all the points in $\cT_3$ are in $V$. As $\cT_2\cup\cT_3 \subset V$ it follows that also $\cT_1\subset V$.
		\end{proof}
		This concludes the proof of the case $ m_3  \leq m_1^{1/3}$.
	\end{enumerate}
\end{proof}

%\begin{proof}[Sketch of Proof of \autoref{thm:EK-robust}]
%	The proof is similar to the proof of \autoref{thm:partial-EK-robust} so we just explain how to modify it.
%	\begin{enumerate}
%		\item {\bf Case  $m_3>m_1^{1/3}$:} Here too we repeat the sampling argument and note that the sampled set give rise to an $\Omega(\delta/2^{c})$-SG configuration. Adding the $c$ '``bad'' elements to the subspace $V$ gives a subspace of dimension $O_c(1/\delta)$ spanning $\cT_2\cup\cT_3$. The rest of the proof is the same.
		
%		\item {\bf Case  $m_3\geq m_1^{1/3}$:} We repeat the covering argument only now we initialize $\cI$ with the $c$ '``bad'' elements. It is not hard to see that the rest of the proof gives the desired result.
%	\end{enumerate}
%\end{proof}

%\newpage

\bibliographystyle{customurlbst/alphaurlpp}
\bibliography{main}

%\appendix

\end{document}

\appendix
\section{Useful Claims Not used}

\begin{claim}\label{cla:multi-line-rad}
Let $Q, Q_1,Q_2$ be homogeneous quadratic polynomials, where $rank(Q)\geq 100$, then for any linear forms $\MVar{\ell}{c}$ if
\begin{equation}\label{eq:F-two-r}
Q_1\cdot Q_2 \in \sqrt{\{Q+\ell_i\ell'_i\}_{i=1}^c}.
\end{equation}
Then,  without loss of generality $Q_1 = \alpha Q_o+ \sum\limits_{i=1}^c\ell_ib_i$ for some linear functions $b_i$.
\end{claim}
\begin{proof}
Denote $I = \{Q+\ell_i\ell'_i\}_{i=1}^c $ and $J = \ideal{\ell_1,\ldots,\ell_c}$ Then $Q_1\cdot Q_2/J \in \sqrt{I}/J \subseteq \sqrt{\ideal{Q/J}}$ the last inclusion is due to the fact that $a \in \sqrt{I}/J \rightarrow a = z + J$,where $z^k \in I$ for some $k\in \mathbb{N}$. $a^k = (z+J)^k= z^k +J \in \ideal{Q/J}$, and so $a \in \sqrt{\ideal{Q/J}}$.
as $Q_o$ is homogeneous of $rank \geq 100$ we know that ${Q/J}$ is an irreducible in $\C[x_1,...,x_n]/J$ and so if $(Q_1\cdot Q_2)/J \in \sqrt{\ideal{Q/J}}$ then either $Q_1/J \in \sqrt{\ideal{Q/J}}$ or $Q_2/J \in \sqrt{\ideal{Q/J}}$,  without loss of generality, $Q_1/J \in \sqrt{\ideal{Q/J}}$. Because both $Q_o$ and $Q_1$ are quadratics, we can obtain that $Q_1/J = \alpha Q/J$ for some $\alpha \in \C$. we get that for some $b \in J$
$Q_1 = \alpha Q_o+ b$, and again, because $Q,Q_1$ are homogeneous, we can deduce that $b = \sum\limits_{i=1}^c\ell_ib_i$, for some linear functions $b_i$.
\end{proof}

\begin{claim}\label{cla:alg-ind-lines}

Let $\MVar{\ell}{k}$ be linearly independent linear forms in $\CRing{x}{n}$ then $\MVar{\ell}{k}$ are algebraically independent. 

\end{claim}
\begin{proof}
Assume toward a contradiction that there is a nonzero polynomial $0 \neq p(\MVar{y}{k})\in \CRing{y}{k}$ s.t. $p(\MVar{\ell}{k})=0$. From Schwartz-Zippel we know that there are $\MVar{a}{k}$ s.t. $p(\MVar{a}{k}) \neq 0$.
As $\MVar{\ell}{k}$ are linearly independent, they induce an on-to map, i.e, there are $\MVar{b}{n}$ s.t $\ell_1(\MVar{b}{n}),\ldots,\ell_k(\MVar{b}{n}) = \MVar{a}{k}$, and so $p$ does not vanish on $\MVar{\ell}{k}$. 
\end{proof}

\begin{claim}\label{cla:sp-rank-2}

Let $\MVar{\ell}{6}$ be linear forms  such that $\ell_1\ell_2+\ell_2\ell_4+\ell_5\ell_6=0$ then $\dim(\spn{\MVar{\ell}{6}}) \leq 3$. 

\end{claim}
\begin{proof}
By \ref{cla:alg-ind-lines} we know that $\MVar{\ell}{6}$ are not linearly independent. We will prove that $dim(span\{\MVar{\ell}{6}\}) \neq 4$. First assume toward a contradiction that $\ell_5,\ell_6\in span\{\{\MVar{\ell}{4}\}$. As so, there is an invertible  linear transformation $T$ s.t $T(\MVar{\ell}{4})=(\MVar{x}{4})$. We get that 
$$T(\ell_5)T(\ell_6)=T(\ell_1)T(\ell_2)+T(\ell_3)T(\ell_4) = x_1x_2+x_3x_4$$ but the R.H.S is irreducible, in contradiction.
In the second case if we assume that $\ell_4, \ell_6$ are spanned by the others, we obtain a nonzero poly that vanishes on them, which is again, a contradiction. 
\end{proof}

\begin{claim}\label{cla:hom-prime}

Let $a_1,a_2\in \CRing{x}{n}$ be linear forms , then $\ideal{a_1,a_2}$ is a prime ideal. 

\end{claim}
\begin{proof}
$\CRing{x}{n}/\ideal{a_1,a_2} \cong \CRing{y}{n-2}$ which is an Integral domain, and so $\ideal{a_1,a_2}$ is prime.
\end{proof}

\begin{claim}\label{cla:lines-ESG}

Let $\MVar{a}{m_1},\MVar{b}{m_1}$ be linearly independent linear forms  such that for any $i\neq j \in [m]$ there are  $l\neq k\neq i,j \in [m]$ s.t  $a_kb_k\in \ideal{a_i,a_j}$  $a_lb_l\in \ideal{b_i,b_j}$ then $dim(span\{\MVar{a}{m_1},\MVar{b}{m_1}\}) =O(1)$. 

\end{claim}
\begin{proof}
By \ref{cla:hom-prime} we know that if $a_kb_k\in \ideal{a_i,a_j}$ then $ a_k\text{ or }b_k \in \ideal{a_i,a_j}$ as it is a prime ideal. This means, that when ever ${a_i,a_j}$ vanish so does $b_k$. Therefore $\MVar{a}{m_1},\MVar{b}{m_1}$ are $\delta$-SG for $\delta=\frac{1}{2}$, and $dim(span\{\MVar{a}{m_1},\MVar{b}{m_1}\}) =O(1)$.
\end{proof}

\begin{claim}\label{cla:strong-3}
Let $Q_3Q_4 \in \sqrt{\ideal{Q_1,Q_2}}$, and assume $Q_1,Q_2$ satisfy only case 3 of the structure theorem. Then atleast one of $Q_3,Q_4$ relays only on $\MS(Q_1,Q_2)$
\end{claim}

\begin{proof} TODO
\end{proof}

\end{document}

\section{Sylvester-Gallai for quadratics}\label{sec:quad-SG-r}

\subsection{Setting}

The setting is the following. There are $m$ irreducible quadratic polynomials (or linear functions) $Q_1,\ldots,Q_m \in \C[\vx]$ such that no two of them are multiple of each other (i.e., every pair is linearly independent) and for every $i\neq j \in [m]$ there exists $k\in [m]\setminus \{i,j\}$ so that whenever $Q_i$ and $Q_j$ vanish also $Q_k$ vanishes. In other words, $Q_k\in\sqrt{(Q_i,Q_j)}$. 

In \cite{Gupta14}, Gupta conjectured that in the setting above it must hold that the algebraic rank of $\{Q_i\}$ is $O(1)$. Here we prove a stronger statement, that the linear rank of $\{Q_i\}$ is $O(1)$.

\begin{theorem}\label{thm:main-sg-r}
Let $\{Q_i\}_{i\in [m]}$ be homogeneous quadratic polynomials such that each $Q_i$ is either irreducible or a  square of a linear function. Assume further that for every $i\neq j$ there exists $k,k'\not\in \{i,j\}$ such that $Q_k\cdot Q_{k'}\in\sqrt{(Q_i,Q_j)}$. Then the linear span of the $Q_i$'s has dimension $O(1)$.
\end{theorem}

This settles conjecture ** of \cite{Gupta14} for degree $r=2$.

\begin{remark}
The requirement that the polynomials are homogeneous is not essential as homogenization does not affect the property $Q_k\in\sqrt{(Q_i,Q_j)}$.
\end{remark}

The proof will follow essentially the same line as the proof of \autoref{thm:quad-SG}.

\subsection{Tools}

\begin{lemma}\label{lem:rk1-sp}
Let $a_1,a_2,b_1,b_2,\ell_1,\ell_2$ be linear functions satisfying $a_1\cdot a_2 - b_1 \cdot b_2 = \ell_1\cdot \ell_2$. Then,
$$\spn{\ell_1,\ell_2}\subseteq \spn{a_1,a_2,b_1,b_2}$$
and
$\dim(\{a_1,a_2,b_1,b_2\}) \leq \dim(\spn{\ell_1,\ell_2})+1$.
\end{lemma}

\begin{remark}
The claim is best possible in view of the following two examples:
If $\ell_1\sim\ell_2$ then the following example is tight for the lemma:
$$\ell^2 = a^2 - (a+\ell)(a-\ell).$$
%$$(a+1.5\ell)\cdot a - (a+2\ell)(a-0.5 \ell) = \ell^2.$$
If  $\ell_1\not \sim\ell_2$ then the following example is tight for the lemma:
$$\ell_1\cdot \ell_1 = a(a+\ell_1-\ell_2) - (a+\ell_1)(a-\ell_2)$$

\end{remark}

\begin{proof}
The proof follows by noting that modulo $\ell_1$ we have that $a_1\cdot a_2 = b_1 \cdot b_2$. Thus,  without loss of generality (and after rescaling) $b_1 =  a_1 + \alpha_1 \ell_1$ and $b_2 = a_2 +\alpha_2 \ell_1$. We thus get
$$\ell_1\cdot \ell_2 = a_1\cdot a_2 - b_1 \cdot b_2 = a_1\cdot a_2 - (a_1 + \alpha_1 \ell_1) \cdot (a_2 + \alpha_2 \ell_1) = \ell_1\cdot (-\alpha_2 a_1 - \alpha_1 a_2 -\alpha_1\alpha_2 \ell_1).$$
In particular, 
$$-\alpha_2 a_1 - \alpha_1 a_2 -\alpha_1\alpha_2 \ell_1 = \ell_2.$$
As it cannot be the case that $\alpha_1=\alpha_2=0$ it follows that 
$$\dim(\spn{a_1,a_2}\cap \spn{\ell_1,\ell_2})\geq 1$$
and the lemma follows.
\end{proof}

\subsection{The proof}
%In this section we prove that if no $Q_i$ is reducible then the linear rank of $\{Q_i\}$ is $O(1)$. Note that here we allow some of the $Q_i$'s to be linear functions. In other words we allow $Q=\ell^2$ but not $Q=\ell_1\cdot \ell_2$ for linearly independent $\ell_1,\ell_2$,

In this section we prove \autoref{thm:main-sg-r}. As before, we start by looking at some implications of \autoref{thm:structure-r}. Our first claim is analogous to .

\begin{claim}\label{cla:2-3-r}
Let $Q_0,Q_1,\ldots,Q_{9}$ be pairwise linearly independent, irreducible quadratic polynomials and let $F_1,\ldots,F_m$ be such that for every $j$ there exists linear functions $b_{i,j},a_{i,j}$ and nonzero scalars $\beta_{i,j}$ such that $F_j = \beta_{i,j}Q_i + b_{i,j}\cdot a_{i,j}$. Then, there exists a $**$-dimensional space $V$ such that $\{b_{i,j},a_{i,j}\}_{i,j}\subset  V$.
\end{claim}

To prove this claim we shall need the following observation.

\begin{claim}\label{cla:rank-1-span}
Let $Q_0,Q_1,\ldots,Q_{9}$ be pairwise linearly independent, irreducible quadratic polynomials such that any two of them span a reducible quadratic. Then, either there are two linear functions $\ell_1,\ell_2$ so that for each $i$, $Q_i \in \sqrt{(\ell_1,\ell_2)}$ or there are linear function $\ell,\ell_2,\ldots,\ell_9$ so that, after rescaling, $Q_i = Q_1 + \ell \cdot \ell_i$.
\end{claim}

\begin{proof}
Assume  without loss of generality let $a_i,b_i$ be such that $Q_i = Q_0 + b_i\cdot a_i$.
As $Q_1$ and $Q_2$ also span a reducible polynomial, there must be $c_1,c_2$ so that 
$$\alpha(Q_0 + b_1\cdot a_1) + \beta (Q_0 + b_2\cdot a_2) = c_1\cdot c_2.$$
I.e., $\rank_s(Q_0)\leq 3$. 
\end{proof}

\newpage
\appendix

\section{Reducible case}
We now take a closer look at \autoref{thm:structure}\ref{case:2} of \autoref{thm:structure} when $Q_1$ is a reducible polynomial of the form $Q_1=x\cdot y$.

\begin{lemma}\label{lem:case:2-reducible}
Let $Q,Q_1,Q_2$ be such that $Q_o\in \sqrt{Q_1,Q_2}$, $Q_1=x\cdot y$ for two linearly independent linear functions and $Q,Q_1,Q_2$ do not satisfy \autoref{case:span} and \autoref{thm:structure}\ref{case:rk1} of \autoref{thm:structure}. Then ...
\end{lemma}

\begin{proof}
As $Q,Q_1,Q_2$ must satisfy \autoref{thm:structure}\ref{case:2} of \autoref{thm:structure} we can assume  without loss of generality that the two linear functions are $x$ and $z$.

Let us first study the case where $z$ is equal to $x$. In this case we have that $Q_2=xa$ and $Q=xc$. Observe that we must have that $x\in\spn{y,a}$ or $c\in\spn{y,a}$. In the latter case we get that $Q_o\in\spn{Q_1,Q_2}$ in contradiction to the assumption in the statement of the lemma. Thus, it must be the case that $a=\alpha x+\beta y$. In this case $Q_2 -\beta Q_1 = \alpha x^2$ and $Q_1,Q_2,Q$ satisfy \autoref{thm:structure}\ref{case:rk1} of \autoref{thm:structure} in contradiction.

From now on we assume that $x,z$ are linearly independent. Let $Q_2 = xa+zb$ and $Q=xc+zd$. By rearranging and perhaps modifying the linear functions $a$ and $c$ we can assume that $x$ does not appear in $b$ nor in $d$ (we can think of $x$ as a variable). Setting $x=0$ we get that whenever $zb=0$ then also $zd=0$. Indeed, $zd,zb$ remain the same when setting $x=0$. Thus it must holds that either $b$ is a multiple of $z$ or $d=\alpha b$.

\paragraph{Consider the case $b=\alpha z$.} We have that $Q_2 = xa+z^2$. Setting $y=0$ we see that modulo $y$ we have that $Q_o$ is a multiple of $Q_2$. Thus, $Q=\alpha Q_2 + ye$. In other words, after rescaling, we get
$$xc+zd = xa+z^2+ye.$$
It follows that $e=\alpha x + \beta z$. Thus, 
$$x(c-a-\alpha y) = z(z-d-\beta y).$$ I.e., for some $\gamma$, 
$$c-a-\alpha y =\gamma z$$
and 
$$z-d-\beta y = 1/\gamma x.$$ 
This implies
$$Q_o= xc+zd = x(a+\alpha y + \gamma z) + z(z-\beta y - 1/\gamma x)=Q_2 + \alpha Q_1 + z(\gamma' x - \beta y).$$
Hence, whenever $Q_1=Q_2=0$ also $z(\gamma' x - \beta y)=0$. We now show $\gamma'=0$. If this is not the case then setting $y=0$ we get that $Q_2$ is a nonzero multiple of $Q_1$ (modulo $y$), which, as before implies that for some $f$ (after rescaling)
$$xa+z^2+yf=z(\gamma' x - \beta y).$$
This is not possible however as evident by the substitution $x=y=0$ (assuming $z$ is independent of $x,y$).
Thus, $Q=zy$. We therefore proved that after removing a linear combination of $Q_1,Q_2$, $Q_o$ is equal to a multiple of $yz$.

\paragraph{Consider the remaining case $d=\alpha b$.} After removing a multiple of $Q_2$ from $Q_o$ it is of the form $xc'$. Again, working modulo $y$ we get that 
$$xa+zb = xc' +ye.$$
Thus, $e=\alpha x + \beta z$ and $b=\gamma x + \delta y$. We get that 
$$xa+\gamma zx + \delta zy = xc' + \alpha xy + \beta zy$$
Hence, $\delta=\beta$ and 
$$xc' = xa+\gamma zx - \alpha xy = Q_2 - \alpha Q_1 - \delta yz,$$
so again, after removing a linear combination  of $Q_1,Q_2$, $Q_o$ is equal to a multiple of $yz$.

\end{proof}

\section{Rank $1$ lemmas}

\begin{lemma}[Radical of Rank-$1$ quadratics]
Let $Q_1=b_1\cdot b_2$ and $Q_2=b_3\cdot b_4$, such that $\rank_s\{\{b_i\}\}=4$. If $Q_o$ is a quadratic polynomial in $\sqrt{Q_1,Q_2}$ then it is in their linear span.
\end{lemma}
\begin{proof}
Wlog $Q_1=x\cdot y$ and $Q_2=z\cdot w$.
Thus $Q_o\equiv 0$ modulo $x,z$. Hence every monomial of $Q_o$ contains either $x$ or $z$. Similarly, $Q_o\equiv 0$ modulo $y,z$ and so each of its monomials contains either $y$ or $z$. Assume the coefficient of $xy$ in $Q_o$ it is $\alpha$. It follows that each monomial in $Q-\alpha xy$ is divisible by $z$. Indeed, a monomial of the form $yw$ where $w\not \in \{x,z\}$ is not divisible by $x$ nor $z$. Thus $Q=\alpha xy + z\cdot L$. We now note that $Q_o$ vanishes when we set $x=w=0$. Thus, $L\in \spn \{x,w\}$. Similarly we get that $L\in \spn \{y,w\}$. Thus, $L$ is a multiple of $w$ and hence $Q=\alpha xy + \beta zw$ as claimed.

\end{proof}

We note that the assumption that the rank is $4$ is essential as the following example shows: $Q_1=xy, Q_2=z(x+z)$ and $Q=zy$.

\begin{lemma}[Radical of Rank-$1$ quadratics]
Let $Q_1=z\cdot b_1$ and $Q_2=z\cdot b_2$ such that $z\not\in \spn{b_1,b_2}$. If $Q_o$ is a quadratic polynomial in $\sqrt{Q_1,Q_2}$ then it is in their linear span.
\end{lemma}

\begin{proof}
As $Q_o$ vanishes when we set $z=0$ it is of the form $zL$. Now, $Q_o$ also vanishes when we set $b_1=b_2=0$. As $z\not\in \spn{b_1,b_2}$ we must have $L\in \spn{b_1,b_2}$ and the claim follows.
\end{proof}

\begin{lemma}[Intersecting $2$-dimensional spaces]
Let $V_1,\ldots,V_k$ be two-dimensional spaces so that for every $i\neq j$ $\dim(V_i\cap V_j)=1$. Then either $\dim(\cap_i V_i)=1$ or $\dim(\cup_i V_i)=3$.
\end{lemma}

\begin{proof}
Assume that no vector is in the intersection of all $V_i$. Let $0\neq x\in V_1\cap V_2$. Assume  without loss of generality that $x\not\in V_3$. Let $0\neq y\in V_1\cap V_3$ and $0\neq z \in V_2\cap V_3$. We thus have $$V_1 = \spn{x,y},$$ $$V_2 = \spn{x,z}$$ and $$V_3 = \spn{z,y}.$$ Indeed it is clear that $x$ and $y$ are linearly independent and so are $x$ and $z$. also, if $y$ and $z$ were linearly dependent then we would have $\dim(V_1\cap V_2)=2$ in contradiction.

Consider any subspace $V_4$. Let $0\neq v=\alpha x+\beta y \in V_1\cap V_4$. If $\alpha\neq 0$ then $0\neq w \in V_4\cap V_3$ is linearly independent of $v$ and thus $V_4 = \spn{v,w}\subset \spn{V_1\cup V_2 \cup V_3}$. Similarly, if $\alpha=0$ then $0\neq w \in V_4\cap V_2$ is linearly independent of $v$ and we again get that 
$V_4 = \spn{v,w}\subset \spn{V_1\cup V_2 \cup V_3}$.
\end{proof}

\subsection{Properties of quadratics}

\begin{claim}
Assume $Q=b_1\cdot b_2 + b_3\cdot b_4$ is irreducible. Then, if we also have $Q_o= \ell_1\cdot \ell_2 + \ell_3\cdot \ell_4$.  Then each $\ell_i$ is in $\spn{b_i}$.
\end{claim}

\section{Using Theorem~\ref{thm:structure}}

\begin{claim}[Reduction to \cite{barak2013fractional}]
If for every $Q_i$ case $1$ of \autoref{thm:structure} holds for at least a fraction $\alpha$ of the other $Q_j$s then by \cite{barak2013fractional} the rank of all the $Q_o$s is $\poly(1/\alpha)$.
\end{claim}

\begin{proof}
Immediate from \cite{barak2013fractional}.
\end{proof}

For any $i\in [m]$ denote with $2_i$ the set of indices $j$ such that $Q_j$ satisfies case $2$ with $Q_i$. Similarly define $3_i$.

Let $I$ be the set of all $i$ such that $Q_i$ satisfy case $1$ of \autoref{thm:structure} for at most a fraction $\alpha$ of the other $Q_j$'s.

For such $i\in I$ we have that $|2_i|+|3_i| \geq (1-\alpha)m$.

\begin{claim}[Structure for $I$]\label{cla:rank>3}
Let $i\in I$. Denote $Q=Q_i$. If $\rank_s(Q)>2$ then $1-\alpha$ of the $Q_j$s satisfy that they are of the form $Q_j = Q_o+ \ell^2$ for some linear function $\ell$. In other words, $|2_i| \geq (1-\alpha)m$.
\end{claim}

\begin{proof}
As $\rank_s(Q)>2$ it does not satisfy case $3$ of the theorem with any of the $1-\alpha$ remaining $Q_j$s. Thus, they all satisfy case $2$.
\end{proof}

\begin{claim}[Case of rank $2$]
Let $i\in I$ be such that $\rank_s(Q_i)=2$. Let $\cL$ be the span of the linear functions appearing in $Q_i$. Let $j \in 2_i \cup 3_i$ be such that $Q_j=\ell_j^2$. Then $\ell_j\in \cL$.
\end{claim}

\begin{proof}
If $j\in 3_i$ then it follows by definition of that case. Other wise 
\end{proof}

\begin{claim}[Low rank in $I$]
Assume there is some $i\in I$ such that $1<\rank_s(Q_i)\leq r$, then, the rank of all $Q_j$ for $j\in I$, is at most $r+2$.
\end{claim}

\begin{proof}
Without loss of generality let $Q=Q_1$ be the rank $r$ polynomial in the assumption and $P=Q_2$ another polynomial whose index is in $I$ and whose rank is larger than $3$. As $|2_1 \cup 3_1|, |2_2| \geq (1-\alpha)m$ there are $(1-2\alpha)$ indices in the intersection of the two sets. If there is some index $j$ in the intersection such that $j\in 2_1$ then $Q_j$ is of rank $3$ but also, by \autoref{cla:rank>3} $Q_j = P + \ell^2$. Thus $\rank_s(P)\leq 4$. 

We are now left with the case that $(2_1\cup 3_1) \cap 2_2 = 2_1 \cap 2_2$. Consider only those indices $j$ in $2_1 \cap 2_2$ such that $Q_j$ is not a square.

By rescaling we can assume wlog that for any such index $j$ there exist $\ell_j$ and ${\ell'}_j$ linear functions and a constant $c_j$ such that $Q+\ell_j^2 = Q_j = c_j P + {\ell'}_j^2$. Indeed, there is a linear combination of $Q_o$ and $Q_j$ that gives a square, $b^2$, and since $Q_o$ is not a square it self, this linear combination is not just a multiple of $Q_o$. It is also not just a multiple of $Q_j$ by assumption. Similarly we argue about $P$. 

Hence, $c_j P = Q+\ell_j^2 -  {\ell'}_j^2$ and thus $P$ is of rank at most $5$.

\end{proof}

%If there exist two such indices $j,k$ for which $c_j\neq c_k$ then we have that 
%\begin{eqnarray*}
%Q_o- c_j P &=&  {\ell'}_j^2 - \ell_j^2 \\
%Q_o- c_k P &=& {\ell'}_k^2 - \ell_k^2 
%\end{eqnarray*}
%There is a linear combination of the RHS that gives $P$. Thus $\rank_s(P)\leq 4$.
%
%We are only left with the case where all the $c_j$ are equal. By rescaling $P$ assume wlog that $c_j=1$. We thus have that for every $j$, 
%$$Q_o- P =  {\ell'}_j^2 - \ell_j^2 = ({\ell'}_j-\ell_j)({\ell'}_j+\ell_j).$$
%Since we get the same equation for any coordinate in the intersection we have for $k\in 2_1 \cap 2_2$
%$$({\ell'}_j-\ell_j)({\ell'}_j+\ell_j) = ({\ell'}_k-\ell_k)({\ell'}_k+\ell_k).$$
%It follows that $\ell_k,\ell'_k \in \spn{\al\ell'_j,\ell_j}$. It follows that there is a two dimensional space such that all the linear functions $\ell_k$ belong to. As $\rank_s(Q)=3$ it follows that there are at most $8$ linear functions such that any polynomial $j \in  2_1 \cap 2_2$ such that $Q_j$ is not a square, is a polynomial in those $8$ linear functions. It now follows that $\rank_s(P)\leq 

\paragraph{There are at least two bad polynomials in some $\cQ_i$:}

\begin{claim}[At least two bad polynomials]\label{cla:2-bad-color}
If $\cQ_1$ contains at least two bad polynomials $Q_1,Q_2$ then there is a space $V$ of linear functions of dimension $O(1)$ so that every polynomial in $\cQ_2\cup \cQ_3$ is a linear combination of $Q_1$ and a quadratic over $V$.
\end{claim}

\begin{proof}
Notice that for  $Q_1$ there are $0.98 m_2$ polynomials in $Q'\in \cQ_2$ that even together with $Q_1$ do not span any other polynomial in $\cQ_3$. Similarly, there are $0.98 m_3$ polynomials in $Q'\in \cQ_3$ that even together with $Q_1$ do not span any other polynomial in $\cQ_2$. The same holds for $Q_2$. Consider a polynomial $Q'\in\cQ_2$ so that $Q_1$ and $Q'$ do not span any other polynomial in $\cQ_3$. We conclude that  $Q_1,Q'$ satisfy \autoref{thm:structure}\ref{case:rk1} or \autoref{thm:structure}\ref{case:2} of \autoref{thm:structure}. Indeed, if $Q_1$ and $Q'$ satisfy  \autoref{case:span} of \autoref{thm:structure} then they span some polynomial in $\cL_3$ and in particular they span a  square of a linear function, but this means that they also satisfy \autoref{thm:structure}\ref{case:rk1}  of \autoref{thm:structure}. \\

%From the discussion above it follows that we can divide the polynomials in $\cQ\setminus \{Q_1,Q_2\}$ to sets according to the case they satisfy. 

From the discussion above it follows that there are at least $0.98 m_2$ ($0.98 m_3$) polynomials in $\cQ_2$ ($\cQ_3$) satisfying \autoref{thm:structure}\ref{case:rk1} or \autoref{thm:structure}\ref{case:2} of the theorem with $Q_1$ and $Q_2$. Let $\cF_2$ be the set of these polynomials in $\cQ_2$ and similarly define $\cF_3$. We can divide $\cF_i$, for $i\in\{2,3\}$, to three sets $\cI_i,\cJ_i,\cK_i$ so that those polynomials in $\cI_i$ satisfy \autoref{thm:structure}\ref{case:2} of \autoref{thm:structure} with  with $Q_1$, those in $\cJ_i$ satisfy  \autoref{thm:structure}\ref{case:2} of \autoref{thm:structure}   with $Q_2$ and those in $\cK_i$ satisfy  \autoref{thm:structure}\ref{case:rk1} of \autoref{thm:structure} with both $Q_1$ and $Q_2$. As before we would like to apply \autoref{ to-add} and   to conclude that  there is a an $O(1)$-dimensional space $V'$ of linear functions such that all those $0.98(m_2+m_3)$ polynomials  are in the linear span of quadratics over $V'$ and $Q_1$. The only problem is that the proof of   should be tailored to the colored case, which is what we do next.

\begin{claim}\label{cla:3-color}
Let $\cI_2,\cI_3$ be two sets of quadratics  that satisfy \autoref{thm:structure}\ref{case:2} of \autoref{thm:structure} with an irreducible $Q_o\in\cQ_1$. Then there exists an $O(1)$-dimensional space $V$ such that all polynomials in $\cI_2\cup\cI_3$ are quadratic polynomials in the linear functions in $V$.
\end{claim}

\begin{proof}
Let $\cI_2 = \{F_i\}$ and $\cI_3=\{G_i\}$. 
As before we take $V'$ to be the space spanned by the linear functions in a minimal representation of $Q_o$. Clearly $\dim(V')\leq 4$. Let $z$ be a new variable. Set each basis element of $V'$ to a random multiple of $z$. Each $F_i,G_i$ now becomes $z\cdot b_i$ for some nonzero $b_i$. Indeed, if we further set $z=0$ then all linear functions in the representation of $Q_o$ vanish and hence also $F_i$ and $G_i$ also vanish.\footnote{Here too we use the fact that $Q_o$ is irreducible and hence the two linear functions that make $F_i$ (or $G_i$) vanish appear in $V'$.} Further, $b_i\neq 0$ as we mapped the basis elements to a random multiple of $z$. 
%We next show that unless all linear functions in the minimal representation of $F_i,F_j$ are in $V$ then they remain linearly independent.

Let $\cI_1$ be the set of quadratics in $\cQ_1$ that after making the restriction become quadratics of the form $z\cdot b$. Clearly $Q_1$ is such a polynomial. 

We next show that the linear functions $\{b_i\}_i\cup \{z\}$, where the $b_i$ are the linear functions $\cI_1\cup\cI_2\cup\cI_3$, satisfy the usual Sylvester-Gallai condition and conclude by \autoref{thm:bdwy}  that their rank is $O(1)$.
 
%%In fact, we will add to this set all quadratics in our set that are now of the form $z\cdot \ell$ for a linear $\ell$. 
%
%\begin{claim}\label{cla:both-sets}
%If some polynomial in $\cI_2$ ($\cI_3$) is projected to $z\cdot b$ where $b$ is linearly independent of $z$ then there is some polynomial in $\cI_3$ ($\cI_2$) that is projected to $z\cdot c$ for some $c$ linearly independent of $z$.
%\end{claim}
%
%\begin{proof}
%Consider any polynomial $Q'\in \cI_2$ that was projected to $zb$,  where $b$ is linearly independent of $z$, and let $Q''\in\cI_3$ be in $\sqrt{(Q_1,Q')}$. Assume for a  contradiction that $Q''$ was projected to $z^2$. \autoref{cla:still-indep} implies that if this is the case then all linear functions in a minimal representation of $Q''$ belong to $V'$.
% 
%To ease notation assume wl.o.g. $V'=\spn{x,y,u,v}$. Thus, $Q_1=Q_1(x,y,u,v)$, $Q' = A(x,y,u,v) + b\cdot \ell(x,y,u,v)$ and $Q''=Q''(x,y,u,v)$. We next show that $Q''\in \sqrt{Q_1}$, which implies $Q''$ is a multiple of $Q_1$ in contradiction. 
%
%Consider any assignment to $x,y,u,v$ that makes $Q_1=0$. If it also satisfies $\ell(x,y,u,v)= 0$ then both $Q_1$ and $Q'$ vanish and hence also $Q''$. If, on the other hand, $\ell(x,y,u,v)\neq 0$, then it is clear there is some assignment to $b$ that makes $Q'$ vanish and hence  $Q''$ vanishes as well. Since $Q''$ only depends on $x,y,u,v$ it must be the case that it vanishes regardless of the assignment of $b$. In conclusion, any assignment that makes $Q_1$ vanish also makes $Q''$ vanish and we reached the desired contradiction.
%
%\end{proof}

We continue with the proof of \autoref{cla:3-color}. \autoref{cla:both-sets} establishes that either all polynomials in $\cI_2$ were projected to $z^2$ or that both $\cI_2$ and $\cI_3$ contain polynomials that were projected to quadratics of the form $z\cdot b$ where $b$ is linearly independent of $z$. 

We are now ready to show that the linear functions $\{b_i\}_i\cup \{z\}$, where $b_i$ are the linear functions in $\cI_1\cup\cI_2\cup\cI_3$, satisfy the usual Sylvester-Gallai condition.

Consider any two quadratics $A_2=zb_2\in\cI_2,A_3'=zb_3\in\cI_3$ so that  neither $b_2$ nor $b_3$ is a multiple of $z$. Assume first that $\{b_1,b_2\}$ do not span $z$. Let $A_1$ vanish when $A_2,A_3$ vanish. Then clearly $z$ divides $A_1$. Thus $A_1=zb_1$ is in $\cI_1$ and so $b_1$ is in our set. Further, when we set $b_2=b_3=0$ both $A_2,A_3$ vanish and hence also $A_1$ vanishes. Since $z\not \in \spn{b_1,b_2}$ this implies that $b_1 \in  \spn{b_2,b_3}$ and so in this case $b_2$ and $b_3$ span a third linear function in our set. Note also that by \autoref{cla:still-indep} $b_1$ is not a multiple of $b_2$ nor of $b_3$ as this would imply that $A_1$ and $A_2$ (or $A_3$) are linearly dependent in contradiction to our assumption. 

The argument above implies that unless $\dim(\{b_i\}_i\cup \{z\})=2$, $\cI_1$ also contains polynomials of the form $z\cdot b$, for $b$ independent of $z$. 

By repeating the same argument we conclude that  any two functions $b_i$ and $b_j$, coming from two different sets, must either span $z$ or a  linear function from the third set.   \autoref{cor:bdwy} now implies the dimension of all those linear functions is $O(1)$. This completes the proof of \autoref{cla:3-color}.

\end{proof}

From \autoref{ to-add} and \autoref{cla:3-color} we conclude that there is a an $O(1)$-dimensional space $V'$ of linear functions such that all those $0.98(m_2+m_3)$ polynomials  are in the linear span of quadratics over $V'$ and $Q_1$. \\

**************\\

We abuse notation and denote with $\cF_2$ the set of all polynomials in $\cQ_2$ that are in the linear span of $Q_1$ and polynomials over $V'$. Let $\cF_2^c = \cT_2\setminus \cF_2$. We define $\cF_3$ and $\cF_3^c$ in a similar way. 

As before we divide the polynomials in $\cF_2^c$ and $\cF_3^c$ to two sets.

% Let $\cF$ be the set of all polynomials in $\cQ_2\cup\cQ_3$ that are in the linear span of $Q_1$ and polynomials over $V'\cup U'$. Let $\cF^c=\cT_2\cup \cT_3 \setminus \cF$.

%We now consider the remaining $0.02n_2+0.0.2n_3$ polynomials in $\cQ_2\cup\cQ_3$. We also divide these polynomials to two sets according to the case they satisfy in the next claim.

\begin{claim}
For each $Q_o\in \cF_2^c$ there are at least $0.96n_1$ polynomials in $\cF$ that satisfy either \autoref{thm:structure}\ref{case:rk1} or \autoref{thm:structure}\ref{case:2} with $Q_o$.
\end{claim}

\begin{proof}
If $Q_o$ and $F\in \cF$ span a polynomial in $\cL$ then we say that $Q_o$ satisfies \autoref{thm:structure}\ref{case:rk1} with $F$.  Thus, if $Q_o$ and $F\in \cF$ satisfy \autoref{case:span}  of \autoref{thm:structure} then the third polynomial is not in $\cF$ (as by switching sides we will get that $Q_o$ is also in $\cF$). Hence, this polynomial must be in $\cF^c$. Assume that $Q'$ is this polynomial. Notice that there is no other $F'\in\cF$ that together with $Q_o$ spans $Q'$ as in such s case $Q_o$ would be in $\cF$. Indeed, let $\alpha_1Q+F=Q'$ and $\alpha_2Q+F'=Q'$. Since $F\not\approx F'$ we get that $0\neq (\alpha_1-\alpha_2)Q=F'-F$ in contradiction to the assumption that $Q_o$ is in $\cF^c$. Thus, $Q_o$ can satisfy \autoref{case:span}  of \autoref{thm:structure} with at most $|\cF^c|\leq 0.02m_1$ polynomials. 
It follows that there are at least $0.96m_1$ polynomials in $\cF$ that satisfy either \autoref{thm:structure}\ref{case:rk1} or \autoref{thm:structure}\ref{case:2} with $Q_o$.
\end{proof}

Let $\cI'$ be the set of all $Q_o\in \cF^c$ that satisfy \autoref{thm:structure}\ref{case:2}  of \autoref{thm:structure} with any polynomial in $\cF$. Let  $\cJ'$ be the remaining polynomials in $\cF^c$. By an argument similar to the proof of   it follows that there is an $O(1)$-dimensional space of linear functions, $V''$ such that all polynomials in $\cJ$ are quadratics over $V''$.\footnote{We send $V'$ to a random multiple of $z$ and then all polynomials in $\cJ'$ become $zb$ and we conclude using \cite{barak2013fractional}.} Let $V$ be the span of $V'' \cup V'$.\\

We now deal with the polynomials in $\cI'$. All of them satisfy \autoref{thm:structure}\ref{case:rk1} with at least $0.96m_1$ polynomials in $\cF$. Consider such $Q_o\in \cI'$. By normalizing, each  $F_i$ among those $0.96m_1$ polynomials is equal to $Q+b_i^2$ and to $\alpha_i Q_1+G_i$ where $G_i$ is defined over $V$. Consider $F_1$ and $F_2$. We again have two cases.

\paragraph{Case $\alpha_1=\alpha_2$:} In this case we get that $b_1^2-b_2^2=G_1-G_2$. Thus $(b_1-b_2)(b_1+b_2)$ is equal to a polynomial over $V$ and hence $b_1-b_2,b_1+b_2$ are spanned by $V$ and hence also $b_1,b_2$. It follows that $Q=\alpha_1 Q_1 + G_1 - b_1^2$ is spanned by $Q_1$ and a quadratic over $V$.

\paragraph{Case $\alpha_1\neq \alpha_2$:} In this case we get that $(\alpha_1-\alpha_2)Q_1 = b_1^2-b_2^2 - G_1+G_2$. Thus by adding two linear functions to $V$ we get that $Q_1$ is also defined over $V$. We will not need to repeat this for other $Q_o\in  I$ as we don't need to span $Q_1$ again. Continuing we get, as before, by considering the equations for $F_1,F_2$ and subtracting them that $b_1,b_2\in V$ and hence also $Q_o$ is defined over $V$ and we are done.\footnote{Need to rewrite so that it is clear we first examine whether $Q_1$ is of small rank and then conclude what's needed for all polynomials in $I$.}

Thus, in either cases we see that $Q_o\in \cI'$ is spanned by $Q_1$ and quadratics over $V$. This concludes the proof of \autoref{cla:2-bad}

It remains to bound the dimension of $\cL$. This is done as in the previous case: First, we apply a random projection to the linear functions in $V$ so that they are all equal to some multiple of $z$. We define $\cL' = \cL\setminus V$ and show that $\cL' \cup \{z\}$ satisfy the Sylvester-Gallai condition and hence its dimension is $O(1)$ as needed.

Let $x,y\in \cL'$. Let $Q_o$ be such that $Q_o\in \sqrt{(x,y)}$. If $Q_o\in \cQ$ then $Q=\alpha Q_1 + G(z)$. If $\alpha\neq 0$ then $Q_1$ is of small rank and we can add the relevant linear functions to $V$ to begin with. So let us assume that $Q=G(z)$. It follows that $z\in \spn{x,y}$ and so $x,y,z$ are linearly dependent as required. If, on the other hand, $Q_o\in \cL$ then $Q=\ell^2$ and it follows that $\ell \in \spn{x,y}$. In either case, there is a third linear function in $\cL' \cup \{z\}$ that is spanned by $x,y$ as claimed.

This concludes the proof of \autoref{thm:main-sg-intro}.
\end{proof}

\iffalse
\subsection{Proof Flowchart}

\begin{tikzpicture}[scale=0.3,
node distance = 0.7cm,
auto,
block/.style={rectangle, draw, fill=blue!20, 
	text width=5em, align=center, rounded corners, minimum height=3em},
line/.style={draw, -latex'},
%  on grid
]
% Place nodes
\node [block] (Q_2) at (0,0) {$\cQ_2^2 = \emptyset ?$};

\node [block]  (Q_1) at (-7,-7) {$\cQ_1^2 = \emptyset ? $};

\node [block] at (7,-7) (Q_3)  {$\cQ_3^2 = \emptyset ? $};

\node [block, below right=0.5cm and 1.5cm of Q_2] (rank3) {$\exists P \in \cQ_2^2$ s.t $dist(P,Q_0)\geq 3?$};

\node [block, below left=of Q_3]  (EK1)  {E.K proof 1 };

\node [block, below right=of Q_3]  (EK2)  {E.K proof 2 };

\node [block, below right=1cm and 0.01cm of Q_1] (exists_V) {$\dim$ of $0.8$ of $a_i\leq 50 \rightarrow EK3$ };

\node [block, below right=of rank3]  (rank2)  {$\cQ_3^2 = \emptyset ? $};

\node [block, below left=0.5cm and 0.01cm of rank3] (exists_V?) {opt $V$ exsits? };

\node [block, below left=of rank2]  (EK3)  {E.K proof 1 };

\node [block, below right=of rank2]  (EK4)  {E.K proof 2 };

% Draw edges
\path [line] (Q_2) -- node[swap] {yes} (Q_1);
\path [line] (Q_2) -- node {no} (rank3);

\path [line] (Q_1) -- node[swap] {yes} (Q_3);
\path [line] (Q_1) -- node {no} (exists_V);

\path [line] (Q_3) -- node[swap] {yes} (EK1);
\path [line] (Q_3) -- node[swap] {no} (EK2);

\path [line] (rank3) -- node[swap] {yes} (exists_V?);
\path [line] (rank3) -- node{no} (rank2);

\path [line] (exists_V?) -- node[swap] {yes} (exists_V);

\end{tikzpicture}
\fi

\end{document}